\newif\ifeprint
\newcommand{\cmark}{\ding{51}}%
\newcommand{\xmark}{\ding{55}}%
\newtheorem*{rep@theorem}{\rep@title}
\newcommand{\newreptheorem}[2]{%
\newenvironment{rep#1}[1]{%
 \def\rep@title{#2 \ref{##1}}%
 \begin{rep@theorem}}%
 {\end{rep@theorem}}}
\newtheorem*{rep@definition}{\rep@title}
\newcommand{\newrepdefinition}[2]{%
\newenvironment{rep#1}[1]{%
 \def\rep@title{#2 \ref{##1}}%
 \begin{rep@definition}}%
 {\end{rep@definition}}}
\definecolor{ceil}{rgb}{0.57, 0.63, 0.81}
\newcommand{\eps}{\varepsilon}
\renewcommand{\epsilon}{\varepsilon}
\newcommand{\F}{\mathbb{F}}
\newcommand{\N}{\mathbb{N}}
\newcommand{\Z}{\mathbb{Z}}
\newcommand{\R}{\mathbb{R}}
\newcommand{\from}{\gets}
\newcommand{\LPN}{\mathsf{LPN}}
\renewcommand{\XOR}{\mathsf{XOR}}
\newcommand{\E}{\mathop{{}\mathbb{E}}}
\DeclareMathOperator{\calA}{\mathcal{A}}
\DeclareMathOperator{\calC}{\mathcal{C}}
\DeclareMathOperator{\calD}{\mathcal{D}}
\DeclareMathOperator{\calE}{\mathcal{E}}
\DeclareMathOperator{\calF}{\mathcal{F}}
\DeclareMathOperator{\calP}{\mathcal{P}}
\DeclareMathOperator{\calO}{\mathcal{O}}
\DeclareMathOperator{\calU}{\mathcal{U}}
\DeclareMathOperator{\calS}{\mathcal{S}}
\DeclareMathOperator{\calX}{\mathcal{X}}
\DeclareMathOperator{\calR}{\mathcal{R}}
\DeclareMathOperator{\calM}{\mathcal{M}}
\DeclareMathOperator{\calQ}{\mathcal{Q}}
\DeclareMathOperator{\calW}{\mathcal{W}}
\DeclareMathOperator{\calT}{\mathcal{T}}
\DeclareMathOperator{\rank}{\mathrm{rank}}
\DeclareMathOperator{\Ber}{\mathrm{Ber}}
\DeclareMathOperator{\Hyp}{\mathrm{Hyp}}
\DeclareMathOperator{\wt}{\mathrm{wt}}
\DeclareMathOperator{\Maj}{\mathrm{Maj}}
\DeclareMathOperator{\len}{\mathrm{len}}
\newcommand{\Model}{\mathsf{Model}}
\newcommand{\RModel}{\overline{\mathsf{Model}}}
\newcommand{\Detect}{\mathsf{Detect}}
\newcommand{\prompt}{\textsc{prompt}}
\newcommand{\Wat}{\mathsf{Wat}}
\newcommand{\Setup}{\mathsf{Setup}}
\newcommand{\done}{\texttt{done}}
\newcommand{\empH}{H_e}
\newcommand{\empHtr}{\bar{H_e}}
\newcommand{\Red}{\mathsf{Red}}
\newcommand{\KeyGen}{\mathsf{KeyGen}}
\newcommand{\Encode}{\mathsf{Encode}}
\newcommand{\Decode}{\mathsf{Decode}}
\newcommand{\Enc}{\mathsf{Enc}}
\newcommand{\Dec}{\mathsf{Dec}}
\newcommand{\m}{{\sf m}}
\newcommand{\p}{{\sf p}}
\newcommand{\q}{{\sf q}}
\renewcommand{\t}{{\sf t}}
\renewcommand{\b}{{\sf b}}
\newcommand{\KeyGendel}{\mathsf{KeyGen}_{\sf del}}
\newcommand{\Encodedel}{\mathsf{Encode}_{\sf del}}
\newcommand{\Decodedel}{\mathsf{Decode}_{\sf del}}
\newcommand{\Generate}{\mathsf{Generate}}
\newcommand{\MajEncode}{\mathsf{MajEnc}}
\newcommand{\RepEncode}{\mathsf{RepEnc}}
\newcommand{\MajDecode}{\mathsf{MajDec}}
\newcommand{\PRG}{\mathsf{PRG}}
\newcommand{\PRC}{\mathsf{PRC}}
\newcommand{\LDPC}{\mathsf{LDPC}}
\newcommand{\LDPCPRC}{\mathsf{LDPC\text-PRC}}
\newcommand{\PermPRC}{\mathsf{Permuted\text-PRC}}
\newcommand{\PRCdel}{\mathsf{PRC}_{\sf del}}
\newcommand{\Gen}{\mathsf{Gen}}
\newcommand{\Sign}{\mathsf{Sign}}
\newcommand{\Vrfy}{\mathsf{Vrfy}}
\newcommand{\SigForge}{\mathsf{SigForge}}
\newcommand{\AttrForge}{\mathsf{AttrForge}}
\newcommand{\ForgeDetect}{\mathsf{AttrText}}
\newcommand{\Watt}{\calW_{\sf att}}
\newcommand{\sigsk}{\mathsf{Sig.sk}}
\newcommand{\sigpk}{\mathsf{Sig.pk}}
\newcommand{\prcsk}{\mathsf{PRC.sk}}
\newcommand{\Sig}{\mathsf{Sig}}
\newcommand{\Steg}{\mathsf{Steg}}
\newcommand{\BDC}{\text{BDC}}
\newcommand{\BSC}{\text{BSC}}
\newcommand{\Eemb}{\calE_{\mathsf{Emb}}}
\newcommand{\Eadv}{\calE_{\sf adv}}
\newtheorem{theorem}{Theorem}
\newtheorem*{theorem*}{Theorem}
\newtheorem*{corollary*}{Corollary}
\newtheorem{lemma}{Lemma}
\newtheorem{numberedclaim}[lemma]{Claim}
\newtheorem*{remark*}{Remark}
\theoremstyle{definition}
\newtheorem{definition}{Definition}
\newtheorem*{definition*}{Definition}
\newtheorem{construction}{Construction}
\newtheorem{assumption}{Assumption}
\newtheorem*{assumption*}{Assumption}
\crefname{construction}{construction}{constructions}
\crefname{numberedclaim}{claim}{claims}
\DeclareMathOperator{\Tr}{Tr}
\DeclareMathOperator{\img}{Im}
\DeclareMathOperator{\sgn}{sgn}
\newcommand{\SD}{\mathsf{SD}}
\newcommand{\gamesfontsize}{\small}
\newcommand{\hfpages}[2]{\framebox{\begin{minipage}[t]{#1\textwidth}\gamesfontsize #2 \end{minipage}}}
\title{Pseudorandom Error-Correcting Codes}
\author{Miranda Christ\thanks{Equal contribution. Email addresses: \texttt{mchrist@cs.columbia.edu}, \texttt{gunn@berkeley.edu}}\\ 
        \small Columbia University 
        \and 
        Sam Gunn$^{*}$\thanks{Supported by a Google PhD Fellowship.}\\ 
        \small UC Berkeley}
\begin{document}
\date{}

\maketitle

\begin{abstract}
    We construct \emph{pseudorandom error-correcting codes} (or simply \emph{pseudorandom codes}), which are error-correcting codes with the property that any polynomial number of codewords are pseudorandom to any computationally-bounded adversary.
    Efficient decoding of corrupted codewords is possible with the help of a decoding key.

    We build pseudorandom codes that are robust to substitution and deletion errors, where pseudorandomness rests on standard cryptographic assumptions.
    Specifically, pseudorandomness is based on either $2^{O(\sqrt{n})}$-hardness of LPN, or polynomial hardness of LPN and the planted XOR problem at low density. 

    As our primary application of pseudorandom codes, we present an undetectable watermarking scheme for outputs of language models that is robust to cropping and a constant rate of random substitutions and deletions.
    The watermark is undetectable in the sense that any number of samples of watermarked text are computationally indistinguishable from text output by the original model.
    This is the first undetectable watermarking scheme that can tolerate a constant rate of errors.

    Our second application is to steganography, where a secret message is hidden in innocent-looking content.
    We present a constant-rate stateless steganography scheme with robustness to a constant rate of substitutions.
    Ours is the first stateless steganography scheme with provable steganographic security and \emph{any} robustness to errors.
\end{abstract}

\thispagestyle{empty}
\newpage
{\small\tableofcontents}
\newpage

\section{Introduction} \label{sec:intro}
The proliferation of AI-generated content is one of the biggest issues facing the internet today.
Recent breakthroughs in large language models have made it increasingly difficult to distinguish this influx of AI-generated content from human-generated content.

A promising solution for detecting AI-generated content is \emph{watermarking}, where a hidden signal is embedded in the content.
Several major companies, including OpenAI and Google, have pledged to embed watermarks in content output by their models \cite{BH23}.
Despite this explosion of interest in watermarking, there are very few techniques for building watermarking schemes that do not noticeably alter the generated content.
Existing schemes incur trade-offs between the quality of generated content, the robustness of the watermark, and the computational complexity of detection.

In this work, we take a new cryptographic approach to this problem that allows us to avoid some of these trade-offs.
Our approach is based on a new cryptographic primitive that we call a pseudorandom error-correcting code, or simply a pseudorandom code (PRC).
A PRC is an error-correcting code that is parameterized by a decoding key.
The pseudorandomness property states that, without this decoding key, any polynomial number of codewords are pseudorandom.

We find that the problem of building robust, quality-preserving watermarks reduces to the problem of building PRCs. Essentially, the watermarking strategy is to replace some of the randomness used by the generative algorithm with outputs from a PRC.

Building PRCs is challenging: Error-correcting codes are typically highly structured, while pseudorandomness implies a lack of discernible structure.
Indeed, a priori it is not clear that such objects should exist.
Nonetheless, we construct PRCs from standard (subexponential) cryptographic assumptions.
Our constructions are related to low-density parity-check codes, and we base pseudorandomness on the Learning Parity with Noise assumption.
We construct PRCs with strong robustness properties, including robustness to any constant rate of substitution and deletion errors.

Applying these PRCs to watermarking for language models, we obtain the first quality-preserving language model watermarking schemes that are robust to cropping and any constant rate of substitution and deletion errors.
That is, the watermark detector will work as long as it is provided any sufficiently-long sequence of text, even if that text is subjected to any constant (less than $1/2$) rate of random substitutions and any constant (less than $1$) rate of random deletions.

\subsection{An approach to watermarking}
\label{subsec:how-to-water}
In this subsection we present a simple, general template for watermarking AI-generated content.
The template described here can in principle be used to watermark arbitrary media, but we only present concrete instantiations in certain contexts (\Cref{sec:water,sec:stego}).

In all generative AI settings there is a generative algorithm, $\Generate$, that defines the behavior of the AI.
A user provides a prompt, and $\Generate$ outputs some randomly-generated content.
A watermarking scheme modifies $\Generate$ so that the generated content contains a hidden pattern, called a watermark.
There are two essential requirements that any watermarking scheme should satisfy:
\begin{itemize}
    \item \textit{Quality}: embedding the watermark should not reduce the quality of generative algorithm; and
    \item \textit{Robustness}: the watermark should be detectable in generated content, even if this content is corrupted.
\end{itemize}
Achieving both of these properties simultaneously is the central challenge of watermarking. Quality means that the watermark should not significantly alter the generated content, while robustness seems to require the watermark to change the content a great deal.

In this work, we propose a new strategy for watermarking: \emph{replacing the randomness used by $\Generate$ with codewords from a pseudorandom error-correcting code}.
A pseudorandom error-correcting code, or simply pseudorandom code (PRC), is a new cryptographic primitive that we introduce in this work.
A PRC is defined by algorithms $\Encode, \Decode$ satisfying two properties:
\begin{itemize}
    \item \emph{Pseudorandomness}: Any efficient adversary, without knowledge of the decoding key, cannot distinguish between oracle access to $\Encode$ and an oracle that always outputs a fresh random string; and
    \item \emph{Error correction} or \emph{robustness}: For any message $\m$, if $x \from \Encode(\m)$ and $x'$ is a ``corrupted'' version of $x$ where the amount of corruption is bounded, then $\Decode(x') = \m$.
\end{itemize}
For watermarking the message can be simply $\m = 1$, indicating the presence of a watermark.\footnote{By instead encoding a longer message in the PRC, this technique extends to steganography --- where messages are secretly communicated in innocent-looking content --- as well.}

In order to make detection possible, we specify $\Generate$ in such a way that the detector can approximate the randomness used to produce any given content.
To test for the presence of a watermark, the detector computes this approximation and then applies $\Decode$ to the result.
If the content is watermarked and the approximation is close enough to the true randomness, then robustness of the PRC ensures that $\Decode$ returns 1. This indicates to the detector that the watermark is present.
Stronger robustness of the PRC translates to stronger robustness of the watermark.

Pseudorandomness of the PRC guarantees that, without the decoding key, watermarked content is indistinguishable from content produced by the original generative algorithm --- even if one is allowed to see many outputs.
In particular, the quality of the generative algorithm is not deteriorated by the watermark.
In \cite{CGZ23}, such a watermark is referred to as \emph{undetectable}.

Therefore, the problem of building robust, quality-preserving watermarks reduces to building PRCs (and appropriately specifying $\Generate$).
Below, we describe the application of this template to watermarking for language models.

\paragraph{Watermarks for language models.}
A generative language model is a randomized algorithm that takes as input a prompt, and samples text constituting a response.
This text consists of \emph{tokens}.
For simplicity, we assume here that tokens are binary and that the full response is of a fixed length $n$.
Neither of these assumptions is important for our results, as discussed in \Cref{sec:water}.

Given any generative language model, it is not difficult to define an algorithm $\Generate$ that takes a prompt and a random seed $x \in \{0,1\}^n$, and samples a response $\t \in \{0,1\}^n$ such that
\begin{itemize}
    \item if $x$ is uniformly random, then $\t$ is distributed identically to the original model, and
    \item each bit of $\t$ is correlated with the corresponding bit of $x$.
\end{itemize}
See \Cref{subsec:techo-water} for an example of such an algorithm.
Now it is easy to recover an approximation of the randomness $x$ that was used to produce a given text: just use the text $\t$ itself.

One natural watermarking strategy is to use the \emph{same} random seed $x \in \{0,1\}^n$ for every call to $\Generate$, storing $x$ itself as the watermarking key.
This is essentially the strategy used by \cite{KTHL23}, and it yields a highly robust watermark because the detector can compute the edit distance between the given text and $x$.
The resulting quality guarantee is that a single response from the watermarked model is distributed identically to a single response from the original model.

However, this strategy results in redundancy of responses because the $i^{\text{th}}$ token of every response is biased towards the $i^{\text{th}}$ bit of $x$.
This is problematic as it limits the variability of text that the language model generates.
For instance, it should not be the case that certain words are preferred as the first word of every response.
One can mitigate this issue by storing a family of seeds $x_1, \dots, x_\ell \in \{0,1\}^n$ and randomly choosing one such seed for each response.
Increasing $\ell$ improves the variability, but comes at the cost of a corresponding increase in both watermark key length \emph{and} detector runtime.
Now, the detector must compute the edit distance for each seed, resulting in a runtime of $O(n^2 \ell)$.
In particular, for any polynomial-time watermarking scheme using this approach, there exists a polynomial-time adversary that can distinguish watermarked content from un-watermarked content without needing the watermark detection key.

A PRC is exactly the object needed to avoid this tradeoff between response variability and efficiency. 
Our watermark detection key will be the decoding key for a PRC, and we will embed the watermark by sampling a fresh pseudorandom seed $x \from \Encode(1)$ for each query to $\Generate$.
This results in no observable correlations between responses, regardless of the number of queries --- i.e., the watermark is undetectable.
Since the same hidden structure is present in every sample from $\Encode(1)$, our detector can simply apply $\Decode$ to check for this structure.
Now, the detector's runtime has no dependence on the response variability. 
Using PRCs that we construct in \Cref{sec:ldpc-prcs,sec:improved-prcs}, we also find that our watermarking scheme can be made robust to a constant fraction of random substitution and deletion errors.

\paragraph{A note on undetectability.}
Undetectability is a strong quality guarantee for watermarking.
Outputs of the watermarked model must be \emph{computationally indistinguishable} from outputs of the original model, even to a user who is allowed to make many queries.
While this is imperative for steganography, its necessity in watermarking is less clear, as some noticeable changes to the model may be permissible as long as the outputs remain ``high-quality.''

However, measuring the quality of watermarked content is often challenging or impossible --- especially when the content is used in a wide range of applications.
Computational indistinguishability is a strong quality guarantee that applies uniformly to every application: it implies that the watermark causes no observable loss in \emph{any} efficiently-computable quality metric.
Without such a guarantee, it is impossible to verify that a watermark is quality-preserving across all applications.
We therefore focus on undetectability in this work.

\subsection{Our contributions}
\label{subsec:contribs}

\paragraph{Pseudorandom codes (\Cref{sec:prc-basics,sec:ldpc-prcs,sec:improved-prcs}).}
Our first contribution is to identify PRCs as an interesting cryptographic primitive, with applications to robustly hiding messages in AI-generated content.
Very roughly, the definition is as follows --- for more details, see either the technical overview (\Cref{subsec:techo-prc-basics}) or the formal definitions (\Cref{subsec:prc-defs}).

\begin{definition*}[\Cref{def:skPRC,def:pkPRC}]
    A \emph{pseudorandom code} (PRC) is an error-correcting code where codewords are pseudorandom to any computationally-bounded adversary who doesn't hold the decoding key.
\end{definition*}

We consider both public- and secret-key variants of PRCs.
For a public-key PRC, the encoding algorithm can be made public without sacrificing pseudorandomness.
When the message space consists of only a single message, we call it a \emph{zero-bit} PRC.
Zero-bit PRCs can also be viewed as robust \emph{backdoored (or trapdoor) pseudorandom generators} \cite{trapdoor,dodis2015formal}, and they are sufficient for our applications to watermarking.

We show how to build zero-bit public-key PRCs related to low-density parity-check (LDPC) codes, where pseudorandomness rests on standard (subexponential) cryptographic assumptions.
All of the PRCs we construct are over a binary alphabet.
Depending on the parameter choices, the pseudorandomness of these LDPC-related codes is based on either of two assumptions, which we state together as \Cref{assumption:combined}. \ifeprint \else \\
\\
\fi

\begin{assumption*}[\Cref{assumption:combined}]
    Either
    \begin{itemize}
        \item LPN is hard for any $2^{O(\sqrt{n})}$-time adversary, or
        \item LPN and planted XOR are both hard for any polynomial-time adversary.
    \end{itemize}
\end{assumption*}

For descriptions of the LPN and planted XOR assumptions, see either \Cref{subsec:techo-ldpc-prcs} or \Cref{sec:ldpc-prcs}. Under \Cref{assumption:combined}, we prove that there exist PRCs with robustness to channels that introduce a bounded number of substitution errors. We say that any channel that introduces at most a $p$ fraction of substitution errors (and no other types of errors) is $p$-\emph{bounded}.

\begin{theorem*}[\Cref{theorem:ldpc-prc-lpn,theorem:ldpc-prc-xor}]
    Let $p \in (0,1/2)$ be any constant. Under \Cref{assumption:combined}, there exists a zero-bit public-key PRC that is robust to every $p$-bounded channel.
\end{theorem*}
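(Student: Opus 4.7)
The plan is to build the PRC in two steps: a base LDPC-style PRC that tolerates i.i.d.\ (Bernoulli) substitution noise (the focus of \Cref{sec:ldpc-prcs}), and then an outer random coordinate permutation that upgrades to arbitrary $p$-bounded adversarial substitution channels (the focus of \Cref{sec:improved-prcs}).

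For the base construction, the public encoding key is a matrix $G \in \F_2^{n \times k}$, and the secret decoding key is a sparse parity-check matrix $H \in \F_2^{m \times n}$ with $HG = 0$. To encode the single message $1$, sample $s \sampleto \F_2^k$ uniformly and $e$ from $\Ber(\eta)^n$ for an appropriate $\eta$, and output $x = Gs + e$. To decode a received string $x'$, compute the syndrome $y = H x'$ and accept iff its Hamming weight lies below a threshold $\tau$. Pseudorandomness of the encoder's output distribution reduces directly to the (dual form of the) LPN assumption on $G$. The two branches of \Cref{assumption:combined} correspond to two ways of producing the $(G,H)$ pair: under $2^{O(\sqrt{n})}$-hardness of LPN one can afford a structured sparse $H$ whose dual $G$ is still LPN-pseudorandom by a subexponential security-loss argument, while under polynomial LPN together with planted XOR one hides a sparse $H$ inside an otherwise LPN-hard $G$ by invoking the planted XOR assumption. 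Decoding correctness on an honest codeword follows because each row of $H$ XORs only a few bits of the sparse $e$, so every coordinate of $He$ is biased toward $0$; a Chernoff/concentration argument then places $|He|$ well below $m/2$, while for uniform $x$ the syndrome is uniform and concentrates around $m/2$.

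To lift this to adversarial $p$-bounded channels, I append a uniformly random permutation $\pi$ of $[n]$ to the decoding key, have the encoder output $\pi(x)$ for a base codeword $x$, and have the decoder apply $\pi^{-1}$ before running the base decoder. Any adversarial error pattern $f$ with $|f| \le pn$, after being passed through $\pi^{-1}$, is distributed as a uniformly random pattern of weight at most $pn$; this can be coupled (up to a negligible tail) with an i.i.d.\ $\Ber(p)$ vector, so the base decoder's random-error analysis applies verbatim to the syndrome $H \pi^{-1}(\pi(x) + f) = H(x + \pi^{-1}(f))$. Pseudorandomness is preserved by a trivial hybrid: a coordinate permutation of a pseudorandom distribution on $\F_2^n$ remains pseudorandom, and the permutation is independent and hidden from the adversary.

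The main obstacle I expect is the joint parameter choice in the base construction. The noise rate $\eta$ must be large enough for LPN to be hard, yet small enough that $H(e + \pi^{-1}(f))$ is still reliably biased when $|\pi^{-1}(f)| \le pn$; and each row of $H$ must have weight $t$ small enough that each coordinate of $H \cdot \pi^{-1}(f)$ equals $1$ with probability at most $\tfrac{1}{2}\bigl(1-(1-2p)^t\bigr)$, bounded away from $\tfrac{1}{2}$. Simultaneously $k$ must be large enough for LPN to apply. Reconciling these competing constraints, while also justifying the existence of a suitable pseudorandom $(G,H)$ pair under either branch of \Cref{assumption:combined}, is the main technical burden; the rest of the argument is concentration and a standard hybrid.
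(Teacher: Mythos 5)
Your base LDPC construction (sparse parity-check $H$, generator $G$ with $HG=0$, encode $Gs+e$, decode by syndrome weight) and the two branches of \Cref{assumption:combined} for showing $G$ is (pseudo)random match the paper's \Cref{const:zero-bit-ldpc-prc} and \Cref{theorem:ldpc-prc-lpn,theorem:ldpc-prc-xor}. The gap is in the lift from Bernoulli noise to arbitrary $p$-bounded channels via a secret permutation.

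First, a syntactic problem: you ``append $\pi$ to the decoding key'' but also ``have the encoder output $\pi(x)$.'' For a \emph{public-key} PRC the encoder holds only the public key, so $\pi$ must be public --- and a public permutation gives the channel no additional blindness. Second, and more fundamentally, the claim that ``any adversarial error pattern $f$ with $|f|\le pn$, after being passed through $\pi^{-1}$, is distributed as a uniformly random pattern of weight at most $pn$'' is false for the channels the theorem quantifies over. A $p$-bounded channel may compute $f$ as a function of its input $\pi(x)$, and therefore $\pi^{-1}(f(\pi(x)))$ is correlated with $x$ itself. For example, the channel ``flip the first $pn$ ones of the received string'' is $p$-bounded, yet $\pi^{-1}(f)$ always lands entirely on the support of $x=Gs+e$, which is correlated with $G$ and hence with the secret parity checks. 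So you cannot couple $\pi^{-1}(f)$ with an i.i.d.\ $\Ber(p)$ vector, nor can you apply a syndrome-weight concentration bound that assumes the error is independent of the secret $H$.

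The paper resolves both issues with a public \emph{one-time pad} $z\in\F_2^n$ rather than a permutation: encode as $Gu\oplus z\oplus e$, decode by $\wt(P(x'\oplus z))$. Because $z$ is uniform and independent of $(P,G,u,e)$, the transmitted word $c=Gu\oplus z\oplus e$ is \emph{exactly} uniform and independent of $(P,G,u,e)$, so the channel error $e'=\calE(c)\oplus c$ is likewise independent of $(P,G,u,e)$. After the decoder XORs out $z$, the effective error is $e\oplus e'$ with $e\perp e'$ and both independent of $P$, and the key syndrome-weight bound (\Cref{lemma:codeword-detection}) applies: it holds for \emph{any} fixed vector of weight bounded away from $n/2$, so no random-permutation lift is needed --- the base LDPC scheme (with the pad) is already robust to every $p$-bounded channel, not merely to i.i.d.\ Bernoulli noise. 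The pad also fixes a soundness bug your scheme would inherit (the all-zero string always has zero syndrome under any $P$, and $\pi(0^n)=0^n$, so without the pad it always decodes to~$1$).
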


For some applications it will be useful to have multi-bit PRCs.
Any such construction should ideally have a high \emph{rate}, which is the ratio of the number of message bits to the number of codeword bits.
We prove that any zero-bit PRC can be combined with any error correcting code to give a multi-bit PRC.

\begin{theorem*}[\Cref{theorem:constant-rate-prcs}]
    Suppose that there exists a zero-bit (public-key) PRC and a rate-$R$ error-correcting code, that are both robust to every $p$-bounded channel. Then there exists a (public-key) PRC of rate $R-o(1)$ that is robust to every $(p-\varepsilon)$-bounded channel, for every constant $\varepsilon > 0$.
\end{theorem*}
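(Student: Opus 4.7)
My plan is to combine the zero-bit PRC and the ECC as follows: use the zero-bit PRC to transmit a short pseudorandom seed $s$, expand $s$ through a PRG into a one-time pad of length equal to the ECC codeword length, mask the ECC codeword of $\m$ with this pad, and finally apply a secret random permutation to the concatenated output so that channel errors cannot be concentrated in either component. Concretely, let the zero-bit PRC $(\Encode_0, \Decode_0)$ have codeword length $L$, let the ECC encode $k$-bit messages into $N = k/R$ bits, and let $G \colon \{0,1\}^\lambda \to \{0,1\}^N$ be a PRG for a mildly superlogarithmic security parameter $\lambda$ (e.g., $\lambda = \log^2 n$). The multi-bit PRC key holds the zero-bit PRC key together with a uniformly random permutation $\pi$ on $[\lambda L + N]$. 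To encode $\m$, sample $s \from \{0,1\}^\lambda$; for each $i \in [\lambda]$, set $x_i \from \Encode_0(1)$ if $s_i = 1$ and $x_i \from \{0,1\}^L$ uniformly if $s_i = 0$; then output $\pi(x_1 \| \cdots \| x_\lambda \| (c \oplus G(s)))$, where $c$ is the ECC encoding of $\m$. Decoding inverts $\pi$, recovers each $s_i$ from $\Decode_0(x_i')$, XORs $G(s)$ off the last $N$-bit block, and runs the ECC decoder.

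Pseudorandomness follows from a standard hybrid. First, invoke the zero-bit PRC's pseudorandomness $\lambda$ times to replace every $x_i$ that was produced by $\Encode_0(1)$ with a uniformly random block; after this step the $x_i$'s are uniform and $s$ is information-theoretically independent of them. Second, invoke PRG security to replace $G(s)$ with a uniform $N$-bit string, at which point the masked ECC block $c \oplus G(s)$ becomes uniform. A permutation of a uniform string is uniform, so the output of a single encoding is indistinguishable from uniform; the multi-query case follows by a further hybrid since each encoding draws fresh randomness.

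The main obstacle is robustness. A $(p-\varepsilon)$-bounded channel chooses an error set $F \subseteq [\lambda L + N]$ with $|F| \leq (p-\varepsilon)(\lambda L + N)$ as a function of the permuted codeword $\pi(y)$; after $\pi^{-1}$ is applied, the errors land at positions $\pi^{-1}(F)$. I would argue by conditioning on $z = \pi(y)$ that the residual posterior on $\pi$ is uniform over bijections sending $\{i : y_i = b\}$ to $\{j : z_j = b\}$ for each $b \in \{0,1\}$; a hypergeometric concentration inequality then shows that, with probability $1 - \negl(\lambda)$, each of the $\lambda$ length-$L$ blocks and the final length-$N$ block receives at most a $p - \varepsilon/2$ fraction of errors. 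Under this good event, each $\Decode_0$ call correctly recovers $s_i$ by $p$-robustness of the zero-bit PRC, and the ECC decoder recovers $\m$ from the unmasked final block by $p$-robustness of the ECC. The delicate aspect is that $F$ depends implicitly on $\pi$ through $\pi(y)$, so the conditioning step must be carried out carefully; the per-block concentration additionally requires $L = \omega(\log n)$, which holds automatically since $L$ is polynomial in the security parameter.

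Finally, the total codeword length is $\lambda L + k/R$, so the rate is $R / (1 + R \lambda L / k)$. Since $L$ is polynomial in the security parameter and $\lambda$ can be chosen sub-polynomial in $k$, the overhead $\lambda L$ is $o(k)$ and the rate is $R - o(1)$.
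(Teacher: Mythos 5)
Your proposal essentially reproduces the paper's two-step construction (Constructions 5.1/5.2 and 5.3) collapsed into a single layer: you inline the paper's ``$\lambda$-bit PRC from a zero-bit PRC'' step directly into the final scheme rather than using it as a standalone intermediate primitive, and you use one permutation over the whole $(\lambda L + N)$-bit string instead of two nested permutations. The structure, the two-hybrid pseudorandomness argument, the one-time-pad trick via $\PRG(s)$, and the hypergeometric error-spreading argument are all the same as the paper's, and the more careful remark about conditioning on $z = \pi(y)$ before invoking hypergeometric concentration is, if anything, a slightly cleaner justification than the paper gives.

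There is, however, a genuine gap: your scheme has no way to output $\bot$, so it cannot satisfy the \emph{soundness} requirement in \Cref{def:skPRC,def:pkPRC}. Soundness demands that for any fixed string $c$, $\Decode(\sk, c) = \bot$ with overwhelming probability over $\sk$, and the theorem's conclusion is that the resulting object is a PRC, which includes this condition. In your decoder, feeding in an unrelated string causes each $\Decode_0$ call to return $\bot$ (by soundness of the zero-bit PRC), so you recover $s = 0^\lambda$, unmask with $G(0^\lambda)$, and run $\Dec$ on the result --- which outputs \emph{some} message, never $\bot$. The paper handles this by appending one extra block $x_{\lambda+1} \from \Encode_0(1)$ and having the decoder output $\bot$ whenever $\Decode_0$ fails on that block; since that block is a zero-bit codeword with overwhelming probability during honest encoding, this costs nothing in completeness, while on a string unrelated to the key the extra block decodes to $\bot$ and soundness follows. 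You would need to add this extra sentinel block (or something equivalent) to make the claim go through; the rate remains $R - o(1)$ since you only add one more length-$L$ block.
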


Applying this theorem with our zero-bit LDPC-based PRCs and the binary error-correcting codes of \cite{ABN+92,NN93,TaShma17}, we have the following corollary.

\begin{corollary*}
    Let $p \in (0,1/2)$ be any constant. Under \Cref{assumption:combined}, there exists a constant-rate PRC that is robust to every $p$-bounded channel.
\end{corollary*}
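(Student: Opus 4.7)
The plan is to chain the three ingredients named in the statement: the zero-bit public-key PRC from the previous theorem, the generic combiner of Theorem~\ref{theorem:constant-rate-prcs}, and an explicit constant-rate binary ECC correcting a constant fraction of substitution errors as in \cite{ABN+92,NN93,TaShma17}. Concretely, fix the constant $p \in (0,1/2)$ from the statement, pick any constant $\varepsilon > 0$ small enough that $p + \varepsilon < 1/2$, and set $p' := p + \varepsilon$.

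First, instantiate the zero-bit PRC. Under \Cref{assumption:combined}, apply \Cref{theorem:ldpc-prc-lpn,theorem:ldpc-prc-xor} with error rate $p'$ to obtain a zero-bit public-key PRC that is robust to every $p'$-bounded channel. Second, instantiate the ECC. The cited binary codes achieve a positive constant rate $R > 0$ while correcting a $p'$ fraction of substitution errors (this is possible since $p' < 1/2$), so there exists a rate-$R$ binary ECC robust to every $p'$-bounded channel. Third, feed both objects into \Cref{theorem:constant-rate-prcs}: it produces a public-key PRC of rate $R - o(1)$ that is robust to every $(p' - \varepsilon)$-bounded channel, i.e., every $p$-bounded channel. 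Since $R > 0$ is a constant, $R - o(1)$ is still a positive constant for sufficiently large block length, yielding a constant-rate PRC with the desired robustness.

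There is essentially no hard step here; the main thing to verify is the parameter bookkeeping, in particular that the slack $\varepsilon$ in \Cref{theorem:constant-rate-prcs} can be absorbed without pushing the required error-tolerance of the two underlying objects above $1/2$. This is immediate because $p < 1/2$ lets us choose $\varepsilon$ so that $p' = p + \varepsilon$ still lies strictly below $1/2$, keeping both the ECC construction and the LDPC-based zero-bit PRC within their stated regimes of applicability.
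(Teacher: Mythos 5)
Your proposal is correct and follows essentially the same chain as the paper: instantiate the zero-bit LDPC PRC from \Cref{theorem:ldpc-prc-lpn} or \Cref{theorem:ldpc-prc-xor} at a slightly higher error tolerance $p' = p + \varepsilon < 1/2$, pair it with a constant-rate binary ECC robust to the same error rate, and feed both into the combiner \Cref{theorem:constant-rate-prcs} to lose exactly the $\varepsilon$ of slack you budgeted. One small remark: you invoke the introduction's informal restatement of \Cref{theorem:constant-rate-prcs}, which takes a \emph{zero-bit} PRC as input, whereas the formal theorem in \Cref{subsec:constant-rate-prcs} takes a $\secpar$-\emph{bit} PRC. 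The missing link is \Cref{const:multi-bit-pk} together with \Cref{claim:many-bit-from-zero}, which converts a zero-bit PRC robust to $p$-bounded channels into a $\secpar$-bit one robust to $(p-\varepsilon')$-bounded channels; the paper chains through this explicitly in the paragraph after \Cref{theorem:constant-rate-prcs}. This costs one additional arbitrarily small slack term, which is absorbed in exactly the way you already handle $\varepsilon$, so the argument goes through unchanged.
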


None of the PRCs mentioned so far can handle deletions.
Deletions are a particularly important type of edit for text watermarks, because an adversary may try to remove the watermark by simply deleting some of the words.

Deletions are notoriously difficult to handle in error correction, and standard techniques involve significant structure --- thus violating pseudorandomness.
Nonetheless, we show that if a PRC has sufficiently strong robustness to substitutions, then it can be converted to a PRC with robustness to deletions (at the cost of a decreased rate).

Let $\BSC_p$ be the binary symmetric channel with error rate $p$, and $\BDC_q$ be the binary deletion channel with deletion rate $q$. That is, $\BSC_p$ randomly flips each bit with probability $p$ and $\BDC_q$ randomly deletes each bit with probability $q$.

\begin{theorem*}[\Cref{theorem:deletion-code}]
    For any constants $p \in (0,1/2)$ and $q \in (0,1)$, there exists $p' \in (0,1/2)$ such that the following holds.
    If there exists a zero-bit (public-key) PRC with robustness to every $p'$-bounded channel, then there exists a zero-bit (public-key) PRC that is robust to the channel $\BSC_p \circ \BDC_q$. 
\end{theorem*}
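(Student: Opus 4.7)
The strategy is to reduce $\BSC_p \circ \BDC_q$ to a substitution channel at the \emph{block} level, so that the underlying substitution-robust PRC can finish the decoding. Given the assumed zero-bit PRC $(\Encode', \Decode')$ of block length $n'$ with robustness to $p'$-bounded channels, for a constant $p' < 1/2$ to be chosen sufficiently close to $1/2$ in terms of $p$ and $q$, I would fix a block length $k = \mathrm{poly}(n')$ with $k \gg \sqrt{n'}$ and new codeword length $n = kn'$. Using a PRG seeded by a key $s$ stored in the decoding key, the new encoder samples $x \from \Encode'(\m)$ and outputs $c := B_1 \| \cdots \| B_{n'}$ where $B_i := \PRG(s, i, x_i) \in \{0,1\}^k$. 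Pseudorandomness of $c$ follows by a standard hybrid: since $x$ is pseudorandom, each $B_i$ is indistinguishable from an independent pseudorandom string, and then PRG security gives indistinguishability from uniform.

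On input $y \in \{0,1\}^{\ell}$ with $\ell \approx (1-q)n$, the decoder partitions $y$ into $n'$ consecutive windows, the $i$-th of length $\approx k(1-q)$ centered around the expected location of block $i$'s surviving bits. Concentration of $\BDC_q$ (a CLT argument for the number of surviving bits among the first $ik$ positions of $c$) ensures the alignment drift between expected and actual boundaries is $O(\sqrt{n}) = o(k)$, so only a vanishing fraction of bits in each window come from adjacent blocks. For each $i$, the decoder computes both candidate blocks $B_0^{(i)} := \PRG(s, i, 0)$ and $B_1^{(i)} := \PRG(s, i, 1)$, sets $\tilde x_i \in \{0,1\}$ to the bit $b$ whose candidate is a ``closer match'' to the $i$-th window under an appropriate distance measure (e.g., maximum Hamming agreement over valid deletion-pattern alignments, computable via standard dynamic programming), and outputs $\Decode'(\tilde x)$.

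\textbf{Main obstacle.} The core of the argument is showing that $\tilde x_i = x_i$ with probability at least $1/2 + \Omega_{p, q}(1)$ for each $i$, so that a Chernoff bound across the $n'$ independent windows gives $\Delta(\tilde x, x) \leq p' n'$ with overwhelming probability and the underlying PRC recovers $\m$. The correct candidate $B_{x_i}^{(i)}$ admits an alignment to the $i$-th window with Hamming agreement $\approx (1-p)(1-q)k$ (namely, the alignment induced by the actual deletion pattern composed with $\BSC_p$), whereas the incorrect candidate $B_{1-x_i}^{(i)}$ is pseudorandomly independent of the window and its best-alignment agreement is concentrated strictly lower. Quantifying this gap --- essentially a high-probability bound on the maximum Hamming agreement achievable between an arbitrary string and an independent pseudorandom string over all $(1-q)$-subsequence alignments, combined with bookkeeping of the $o(1)$-fraction of spillover bits between adjacent windows --- is the principal technical step. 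Choosing $p'$ close enough to $1/2$ to absorb the residual per-block error rate then completes the proof.
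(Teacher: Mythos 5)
Your construction takes a genuinely different route from the paper's, and it has two serious problems.

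\paragraph{The public-key case fails.} The theorem claims a \emph{public-key} transformation. In your scheme the encoder must evaluate $\PRG(s, i, x_i)$, so $s$ has to be part of the encoding key, hence public. But then an adversary holding the public key can simply check whether the first block $B_1$ of a candidate codeword equals one of $\PRG(s,1,0)$ or $\PRG(s,1,1)$ --- this is always true for real codewords and holds with probability $2/2^k = \negl$ for uniform strings. So pseudorandomness fails outright in the public-key setting. Your hybrid argument implicitly assumes $s$ is hidden from the distinguisher, which only makes sense for secret-key schemes. The paper avoids this entirely by using the \emph{majority code}: each bit $x_i$ is expanded to a uniformly random $m$-bit string conditioned on its majority equaling $x_i$. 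Because exactly half of $\{0,1\}^m$ has majority $0$ and half has majority $1$ (for odd $m$), the encoder needs no secret state at all; pseudorandomness of the output reduces immediately to pseudorandomness of $x$.

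\paragraph{The per-block decoding bias is not established and likely requires a constraint on $(p,q)$.} You correctly flag the ``main obstacle'' but leave it open, and it is not clear it can be closed for all $p \in (0,1/2)$, $q \in (0,1)$. Your decoder maximizes Hamming agreement over all deletion-pattern alignments of the window against each candidate. For the \emph{wrong} candidate, which looks independent and uniform, you must union-bound over roughly $\binom{k}{(1-q)k} \approx 2^{H(q)k}$ alignments. Each has agreement concentrated at $(1-q)k/2 \pm O(\sqrt{(1-q)k})$, so the maximum over all alignments is pushed up to roughly $(1-q)k\left(1/2 + \Theta\bigl(\sqrt{H(q)/(1-q)}\bigr)\right)$. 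To beat this, the correct candidate's agreement $(1-p)(1-q)k$ must exceed that bound, which imposes a relation between $p$ and $q$ that does not hold for arbitrary constants in the allowed ranges. (Dropping the alignment search and comparing positions directly also fails: the $O(\sqrt{n})$ boundary drift you compute means the window and candidate are misaligned by a nonzero shift, and position-by-position agreement of two independent uniform strings at shifted positions is $\approx 1/2$, giving no signal.) The paper sidesteps both difficulties by having the decoder simply partition the received string into $n$ equal-length blocks and take the \emph{majority} of each. Because the majority is a symmetric function, it is insensitive to the $O(\sqrt{nm}\log n) = o(m)$ boundary drift; \Cref{lemma:deletion-channel} then shows each decoded bit agrees with the original with probability $1/2 + \Omega(1)$ for \emph{all} constant $p,q$, with no alignment search and no union bound over alignments.
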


Together with our LDPC-based PRCs, we obtain the following result.

\begin{corollary*}
    Let $p \in (0,1/2)$ and $q \in (0,1)$ be any constants. 
    Under \Cref{assumption:combined}, there exists a zero-bit public-key PRC that is robust to the channel $\BSC_p \circ \BDC_q$.
\end{corollary*}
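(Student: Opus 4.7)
The plan is to obtain this corollary purely by chaining together the two tools we already have in hand: the zero-bit LDPC-based public-key PRCs with robustness to arbitrary constant substitution rates (\Cref{theorem:ldpc-prc-lpn,theorem:ldpc-prc-xor}), and the generic substitution-to-deletion conversion (\Cref{theorem:deletion-code}). No new construction is required; the only content is verifying that the parameters line up correctly.

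Concretely, I would proceed as follows. Fix constants $p \in (0,1/2)$ and $q \in (0,1)$. First, invoke \Cref{theorem:deletion-code} with these $p,q$; this hands us a constant $p' \in (0,1/2)$ (depending only on $p$ and $q$) such that any zero-bit public-key PRC which is robust to every $p'$-bounded substitution channel can be transformed into a zero-bit public-key PRC robust to the composed channel $\BSC_p \circ \BDC_q$. Next, apply whichever branch of \Cref{assumption:combined} we are working under: by \Cref{theorem:ldpc-prc-lpn} (under $2^{O(\sqrt{n})}$-hardness of $\LPN$) or \Cref{theorem:ldpc-prc-xor} (under polynomial hardness of $\LPN$ and planted $\XOR$), instantiated with the parameter $p'$ produced in the previous step, we obtain a zero-bit public-key PRC robust to every $p'$-bounded channel. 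Feeding this PRC into the transformation of \Cref{theorem:deletion-code} yields the desired zero-bit public-key PRC robust to $\BSC_p \circ \BDC_q$.

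There is essentially no obstacle in the corollary itself beyond bookkeeping: one must check that the $p'$ produced by \Cref{theorem:deletion-code} is a constant in $(0,1/2)$ (so that the LDPC-based PRCs are applicable for this $p'$) and that the ``public-key'' qualifier is preserved by the reduction of \Cref{theorem:deletion-code}, both of which are asserted directly by the theorem statements. The real difficulty is pushed into those two theorems: building a code that is simultaneously an LDPC-style decodable object and pseudorandom under LPN/planted XOR (\Cref{theorem:ldpc-prc-lpn,theorem:ldpc-prc-xor}), and converting substitution robustness into deletion robustness without using structure that would break pseudorandomness (\Cref{theorem:deletion-code}). Once those are in place, the corollary is a one-line composition.
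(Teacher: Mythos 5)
Your proposal matches the paper's intended derivation exactly: the corollary is stated immediately after \Cref{theorem:deletion-code} with the remark ``Together with our LDPC-based PRCs, we obtain the following result,'' and the argument is precisely the composition you describe — invoke \Cref{theorem:deletion-code} to obtain the threshold $p' \in (0,1/2)$, instantiate \Cref{theorem:ldpc-prc-lpn} or \Cref{theorem:ldpc-prc-xor} (depending on which branch of \Cref{assumption:combined} holds) at that $p'$, and feed the resulting zero-bit public-key PRC back through \Cref{theorem:deletion-code}. Nothing is missing.
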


\paragraph{Watermarking and steganography for language models (\Cref{sec:water}).}
We apply our zero-bit PRCs to build the first undetectable watermarking scheme for language models with robustness to a constant rate of random substitutions and deletions.
In this section, we assume that text output by the language model is represented as a bitstring.
Arbitrary text can be mapped to a bitstring by either randomly assigning a single bit to each token, or by expanding the tokens into a binary representation.

\begin{theorem*}[\Cref{theorem:robust-water}]
    Let $p \in (0,1/2)$ be any constant. Under \Cref{assumption:combined}, there exists an undetectable watermarking scheme $\calW$ such that the watermark appears in any sufficiently high-entropy text, even if the text is subjected to the channel $\BSC_p$.
\end{theorem*}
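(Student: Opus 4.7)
The plan is to instantiate the template from \Cref{subsec:how-to-water} using a zero-bit public-key PRC $(\Encode,\Decode)$ that is robust to every $p'$-bounded channel, for a constant $p' \in (0,1/2)$ chosen slightly larger than $p$ plus a small slack; such a PRC exists under \Cref{assumption:combined} by the corollary following \Cref{theorem:ldpc-prc-lpn,theorem:ldpc-prc-xor}. I would define $\Generate$ from the underlying language model $\Model$ in the standard inverse-CDF fashion: at step $i$, compute $\Model$'s conditional probability $q_i$ that the next token is $1$ given the prompt and previously emitted tokens, and output $\t_i = 1$ iff a uniform value $u_i \in [0,1)$ derived from the seed $x$ satisfies $u_i < q_i$. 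If $x$ is uniform, the joint distribution of $(\prompt,\t)$ matches $\Model$ exactly, and each $\t_i$ is correlated with the seed bit used to produce $u_i$. The watermarked generator replaces this uniform seed with $x \gets \Encode(1)$ on every query.

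Undetectability follows from a black-box reduction to pseudorandomness of the PRC: any distinguisher against the watermark can be converted into a distinguisher that, given oracle access to either $\Encode(1)$ or uniform $\{0,1\}^n$, runs the deterministic $\Generate(\prompt;x)$ on each fresh oracle sample; the two cases exactly reproduce watermarked generation and $\Model$, respectively, so the distinguishing advantage is preserved.

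For robustness, the detector $\Detect$ receives a possibly corrupted text $\t'$, recovers a candidate seed $\hat{x}$ by inverting the sampling step bitwise (setting $\hat{x}_i$ to the seed bit consistent with $\t'_i$), and returns $\Decode(\hat{x})$. Conditioned on previous tokens, the probability that $\hat{x}_i \neq x_i$ on uncorrupted text equals $\min(q_i, 1 - q_i)$, so I would formalize ``sufficiently high entropy'' as the empirical average of $\min(q_i, 1 - q_i)$ being bounded by a constant $\alpha < 1/2 - p$. Then Azuma's inequality applied to the indicators $\mathbbm{1}[\hat{x}_i \neq x_i]$ shows that, except with negligible probability, $\hat{x}$ disagrees with $x$ in at most an $\alpha + o(1)$ fraction of positions before corruption; $\BSC_p$ flips at most an additional $p + o(1)$ fraction of bits by a Chernoff bound, so the induced channel from $x$ to $\hat{x}$ is $p'$-bounded for $p' = \alpha + p + o(1) < 1/2$, and PRC robustness yields $\Decode(\hat{x}) = 1$.

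The main obstacle is the concentration argument for robustness: the generation-phase ``noise'' is adaptive, since $q_i$ depends on previously sampled tokens, so no i.i.d.\ Chernoff bound applies directly. I would handle this via a martingale bound (Azuma/McDiarmid) on the disagreement sequence, carefully separating the randomness of the seed from the internal randomness of $\Model$ (which in particular is fixed once the text is fixed, so the only remaining randomness is that of $x$ itself), and choosing the slack $\varepsilon$ so that $p' = \alpha + p + \varepsilon$ remains strictly below $1/2$. All remaining pieces are essentially black box: undetectability from PRC pseudorandomness, and decoding from robustness to arbitrary $p'$-bounded channels.
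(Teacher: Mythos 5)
Your high-level approach matches the paper's: instantiate the $\Generate$ template with a zero-bit PRC, reduce undetectability to PRC pseudorandomness by a black-box oracle swap, and reduce robustness to PRC error correction by viewing generation as an ``embedding channel'' whose error rate plus the $\BSC_p$ error rate stays below $1/2$. The undetectability reduction you sketch is essentially the paper's Lemma on undetectability. There are, however, two issues on the robustness side.

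First, a local error. For the kind of sampler you describe (and the one the paper uses, $\t_i \from \Ber(\hat{\p}_i - (-1)^{x_j}\min\{\hat{\p}_i, 1-\hat{\p}_i\})$ with $\hat{x}_i := \t_i$), the per-position disagreement probability conditioned on the history is $\tfrac12 - \min(q_i, 1-q_i)$, not $\min(q_i,1-q_i)$. Consequently your ``sufficiently high entropy'' condition has to be a \emph{lower} bound on the empirical average of $\min(q_i,1-q_i)$ (equivalently, an upper bound on $\sum_i |1/2 - q_i|$), not an upper bound as stated. The paper instead takes the condition to be a lower bound on the empirical Shannon entropy of the detected substring and converts this, via a Cauchy--Schwarz/Azuma argument (\Cref{lemma:truncated-surprisal}), into a lower bound on the truncated empirical entropy $\sum_i \min\{1,\empH^i\}$, which in turn lower-bounds the average correlation; your quantity is closer in spirit to this truncated entropy, which is a reasonable alternative choice, but the sign has to be fixed.

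Second, and more substantively, you omit the mechanism that makes the robustness claim actually go through for the formal statement being proved. \Cref{theorem:robust-water} is a \emph{substring}-robustness claim: the response may be much longer than the PRC block length $n$, it is generated by stringing together many independent PRC codewords, and detection must work on any sufficiently high-entropy contiguous window. But the paper's definition of a $p$-bounded channel requires the error to be (nearly) independent of the PRC key. In a multi-block response the ``error'' the detector sees on block $\ell$ is partly the model's own sampling noise, which is conditioned on the tokens of all earlier blocks --- which themselves depend on the PRC codewords embedded there, and hence on the PRC key. Without an extra idea, the induced channel on block $\ell$ is \emph{not} a $p$-bounded channel in the sense required, so PRC robustness cannot be invoked. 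The paper fixes this by XORing each embedded codeword with an independent fresh one-time pad $a_\ell$ drawn at key-generation time and stored in the watermark key (see \Cref{alg:watermarking-gen,alg:watermarking-det}); the pad decouples the embedding noise on each block from the PRC key, so that the channel seen by $\PRC.\Decode$ is honestly $p'$-bounded. This is exactly the subtlety the authors flag in the ``Differences from a previous version'' section as having been missed earlier, and your proposal reproduces the same gap. As written, your argument is only valid for a single-query, single-block toy version of the theorem; to prove the actual statement you need to add the one-time-pad layer (or some equivalent decoupling device) and then argue block-by-block, together with the averaging argument over blocks used in \Cref{lemma:substring-completeness} to locate at least one full block of sufficient entropy inside any high-entropy window.
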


Under an extra assumption about the generated text, which roughly corresponds to the text having few repeated words, we can strengthen this to handle deletions as well.

\begin{theorem*}[\Cref{theorem:deletion-robust-water}]
    Let $p \in (0,1/2)$ and $q \in (0,1)$ be any constants. Under \Cref{assumption:combined}, there exists an undetectable watermarking scheme $\calW$ such that the watermark appears in any sufficiently high-entropy and ``variable'' text, even if the text is subjected to the channel $\BSC_p \circ \BDC_q$.
\end{theorem*}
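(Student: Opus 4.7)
The plan is to instantiate the general PRC-based watermarking template from \Cref{subsec:how-to-water} with the zero-bit public-key PRC provided by the corollary following \Cref{theorem:deletion-code}, which is robust to the channel $\BSC_{p'} \circ \BDC_{q'}$ for any constants $p' < 1/2$ and $q' < 1$. The watermarked generator samples a fresh codeword $x \gets \Encode(1)$ on each query and runs $\Generate(\prompt, x)$; the detector, on input a (possibly corrupted) text, extracts a binary string $x'$ from the text using the token-to-bit rule prescribed in \Cref{sec:water}, and accepts if and only if $\Decode(x') = 1$.

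Undetectability is immediate from the pseudorandomness of the PRC: any distinguisher against the watermark that issues $k$ queries yields, by a standard hybrid argument, a distinguisher between $k$ independent samples of $\Encode(1)$ and $k$ uniform strings, contradicting pseudorandomness. Since $\Generate(\prompt, x)$ on uniform $x$ is distributionally identical to the original model, no polynomially-bounded adversary can separate the watermarked model from the original with non-negligible advantage.

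The core technical step is to argue that, when the generated text is subjected to $\BSC_p \circ \BDC_q$, the seed $x'$ extracted by the detector is distributed as $x$ passed through a channel that the PRC can correct. First I would observe that, under the high-entropy condition on the text, the mapping from $x$ to the extracted bits of the uncorrupted text behaves like a binary symmetric channel $\BSC_{p_0}$ for some $p_0 < 1/2$ determined by the per-token entropy (via the coupling between each bit of $x$ and the corresponding token sampled by $\Generate$). The variability assumption --- essentially that no token is repeated too often within a small window --- is then used to translate a token deletion inflicted by $\BDC_q$ into the deletion of a single corresponding bit of $x$ rather than a cascade of errors caused by misalignment, and similarly to translate a token substitution into a bounded number of bit substitutions. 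Composing these, the induced channel from $x$ to $x'$ is dominated by $\BSC_{p^*} \circ \BDC_{q^*}$ for some constants $p^* < 1/2$ and $q^* < 1$; invoking the PRC's robustness with $(p', q')$ chosen at least as large as $(p^*, q^*)$ then yields $\Decode(x') = 1$ with overwhelming probability.

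The main obstacle will be this reduction of token-level deletions to bit-level deletions. A single deleted token does not by itself reveal which codeword bit was deleted, and if the same token appears in multiple positions the alignment between extracted bits and $x$ becomes ambiguous, producing a burst of errors rather than a clean single-bit deletion. This is precisely why the \emph{variability} hypothesis is needed: by ruling out heavily-repeated tokens, one can couple the token-level deletion channel to an effective bit-level deletion channel with only a controlled blowup in error rate. Carefully formalizing this coupling, and verifying that the tolerated PRC parameters $(p', q')$ can always be chosen large enough to absorb the blowup while remaining in the admissible ranges $p' < 1/2$ and $q' < 1$, is the step that requires the most care.
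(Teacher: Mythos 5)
Your overall plan matches the paper's — instantiate the general template with the $\PRCdel$ from \Cref{theorem:deletion-code}, get undetectability from pseudorandomness, and prove robustness by arguing that the channel from the PRC codeword $x$ to the extracted string $x'$ is one that $\PRCdel$ tolerates. However, your account of what the ``variability'' hypothesis is needed for, and how token-level deletions relate to bit-level deletions, does not match the paper and I think misidentifies the actual obstruction.

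The paper's robustness argument works over a \emph{binary} token alphabet, and the remark at the end of \Cref{subsec:water-prelims} explicitly says that for the robustness results one should use a single bit per token (rather than a multi-bit prefix-free encoding). Under that convention, a deleted token \emph{is} a deleted bit; there is no ``cascade of errors caused by misalignment'' from a token deletion, and no coupling between token-level and bit-level deletions is needed. What the variability hypothesis — formalized as \Cref{assumption:embedding} — actually buys is something else: it upgrades the embedding channel $\Eemb$ (the noise introduced between the seed $x$ and the \emph{uncorrupted} watermarked response) from merely ``bounded weight'' (which \Cref{lemma:encoding-errors} gives) to a genuine i.i.d.\ $\BSC_{\alpha(\varepsilon)}$. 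This matters because the majority-code construction underlying $\PRCdel$ requires the errors to be random and independent, not just few in number; if the text repeats tokens, the embedding noise can be correlated across positions in a way that survives the deletion channel and defeats majority decoding. Once $\Eemb = \BSC_{\alpha(\varepsilon)}$, \Cref{lemma:watermark-robustness-deletions} just composes channels: the overall noise is $\BDC_p \circ \BSC_q \circ \BSC_{\alpha(\varepsilon)} = \BDC_p \circ \BSC_{q+\alpha(\varepsilon)-q\alpha(\varepsilon)}$, and $\PRCdel$ is chosen to be robust to exactly that. Your proposed coupling of token deletions to bit deletions is solving a problem that does not arise in the paper's setting, while the problem the paper's hypothesis actually addresses (non-i.i.d.\ embedding noise) is left untreated in your sketch.

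A secondary omission: the paper's $\Wat_\sk$ does not embed a single codeword per response; it tiles the response with fresh PRC codewords (one per length-$n$ block) and, crucially, XORs each block with an independent one-time pad $a_\ell$ included in the key (\Cref{alg:watermarking-gen}). The paper's ``Differences from a previous version'' section flags this pad as essential: without it, the error channel acting on block $\ell$ can depend on the PRC codewords embedded in blocks $1,\dots,\ell-1$, which violates the boundedness definition that the PRC's robustness guarantee requires. Your construction as stated (a single $x\gets\Encode(1)$ per query, no pads) would re-introduce exactly this gap.
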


In all of our theorems the text can be cropped, as long as the remaining text is sufficiently high-entropy.

We also construct undetectable watermarking schemes with \emph{unforgeable public attribution} and the same robustness as $\calW$.
Public attribution means that there is a public algorithm to identify which portion of a given text was output by the model.
Unforgeability means that no efficient user can produce text that the attribution algorithm identifies as model-generated, but that was not output by the model.
Interestingly, our schemes retain the standard robust secret-key detector in addition to this public attribution algorithm.

\begin{theorem*}[\Cref{theorem:watermark-att}]
    Under \Cref{assumption:combined}, there exists a watermarking scheme $\Watt$ that retains all properties of $\calW$ from \Cref{theorem:robust-water} or \Cref{theorem:deletion-robust-water}, and additionally satisfies unforgeable public attribution.
\end{theorem*}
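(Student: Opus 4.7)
The plan is to augment the watermarking scheme $\calW$ with an EUF-CMA-secure signature scheme, embedding both the watermark signal and a digital signature into each PRC codeword. First I would replace the zero-bit PRC underlying $\calW$ with the constant-rate multi-bit PRC guaranteed by \Cref{theorem:constant-rate-prcs} under \Cref{assumption:combined}. During generation, for each response (or each chunk thereof), the scheme samples $\sigma \gets \Sign_{\sigsk}(m)$ where $m$ is public contextual data bound to the output---say, a chunk index concatenated with a hash of the previously-produced tokens---and uses a PRC codeword encoding $(m, \sigma)$ as the pseudorandom seed fed to $\Generate$ exactly as in $\calW$. The public attribution key is $\sigpk$ together with the PRC material required for decoding; on input (possibly corrupted) text, $\AttrText$ decodes the PRC and accepts the portion whose recovered payload passes $\Vrfy_{\sigpk}$.

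Inheriting the properties of $\calW$ is the easier half. Robustness to $\BSC_p$ (and to $\BSC_p \circ \BDC_q$ in the deletion setting) carries over directly from the robustness of the underlying PRC, since decoding still yields the signed payload in the same way. Undetectability reduces to PRC pseudorandomness via a standard hybrid that replaces each codeword by a uniformly random string; because PRC pseudorandomness holds against any polynomial number of encoding queries, the distinguishing advantage is negligible no matter which payloads get signed. The standard secret-key detector from $\calW$ is retained: it simply ignores the signature field and checks the watermark indicator bit, just as before.

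The interesting direction is unforgeability, which I would prove by reduction to EUF-CMA security of the signature scheme. Suppose a polynomial-time forger $\calA$ produces text $\t^*$ that $\AttrText$ labels as model-generated, even though $\t^*$ was never output by the watermarking generator. Running $\AttrText$ on $\t^*$ recovers from the PRC some $(m^*, \sigma^*)$ with $\Vrfy_{\sigpk}(m^*, \sigma^*) = 1$. Because $m$ is bound to the output text itself through the hash, any $m^*$ recoverable from $\t^*$ must differ from every $m_i$ the model signed on $\calA$'s generation queries---otherwise the corresponding prior tokens would have to coincide with $\calA$'s output, contradicting $\t^* \neq$ any queried response. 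Hence $(m^*, \sigma^*)$ is a fresh forgery, obtained by a reduction that simulates $\calA$'s view using its own signing oracle together with honest PRC encodings.

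The main obstacle I anticipate is arranging the hash-binding so that it is simultaneously compatible with the PRC's chunked encoding, with the language-model sampler, and with the deletion-robust regime of \Cref{theorem:deletion-robust-water}. In particular, if $m$ hashes previously-produced tokens but random deletions shift chunk boundaries, $\AttrText$ must still reconstruct the same $m$ that was originally signed. The natural fix is to reuse the chunk-delimiting and variability assumption already employed by the deletion-robust watermark, so that $m$ recovery piggybacks on the deletion-robust PRC decoding; verifying that this composition preserves both undetectability (the chunk structure must not leak) and the full $p,q$ range of the theorem is the only real technical subtlety.
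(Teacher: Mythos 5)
The central design decision you've made---trying to make the public attribution algorithm robust to the same corruptions as the watermark detector---is exactly the decision the paper goes out of its way to reject. The paper's $\ForgeDetect$ is \emph{intentionally not robust}: robustness is argued to be a liability for attribution, since an adversary could take a benign watermarked response, edit a few tokens to make it offensive, and a robust attributor would still label the modified text as model-output. This is reflected directly in the unforgeability game ($\AttrForge$): the adversary wins precisely when $\ForgeDetect$ returns $\true$ on a text whose attributed portion is not a \emph{verbatim} contiguous substring of any queried response. Any attribution algorithm that tolerates $\BSC_p$ or $\BDC_q$ on the attributed portion would be forgeable under this definition essentially by construction. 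Accordingly, your final paragraph---worrying about how to reconstruct the signed $m$ through deletions and proposing to piggyback on the deletion-robust PRC decoding---is chasing a property the theorem statement does not (and cannot) assert. The paper resolves this cleanly by giving $\Watt$ \emph{two} detectors: the robust $\Detect$ inherited verbatim from $\calW$, and a deliberately brittle $\ForgeDetect$.

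Given that, the paper's construction is also simpler than yours in a way that sidesteps your other issues. Rather than signing a chunk index plus a hash of prior tokens (which silently imports a collision-resistance assumption, and forces $\ForgeDetect$ to recompute and compare hashes), the paper has the generator sign the \emph{entire prefix $\t_1,\dots,\t_{i-1}$ directly} and embed that signature in the next PRC block. $\ForgeDetect$ then decodes the block, extracts the signature, and verifies it against the preceding verbatim prefix; unforgeability reduces to EUF-CMA with no auxiliary hash and no binding subtleties, since every signing-oracle query the reduction makes is literally a substring of some returned response. Your reductions for undetectability and for retaining $\Detect$'s robustness are correct and match the paper's (Lemmas 7.7, 7.9, 7.10, 7.11 are noted to hold for any message fed to $\PRC.\Encode$), but the unforgeability argument as you've sketched it needs rework: once you drop the requirement that $\AttrText$ be robust, you should also drop the hash and sign the prefix directly, which is what makes the paper's \Cref{lemma:unforgeability,lemma:public-attribution} go through.
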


Finally, using PRCs with constant rate (\Cref{theorem:constant-rate-prcs}), we also obtain the first \emph{robust} language model steganography scheme.\footnote{Since this scheme doesn't rely on the decoder having access to the prompt, it can also be seen as an undetectable ``multi-bit watermarking scheme.''}

\begin{theorem*}[\Cref{theorem:language-model-stego}]
    Let $p \in (0,1/2)$ be any constant. Under \Cref{assumption:combined}, there exists a language model steganography scheme with constant information rate, such that the message can be recovered from any sufficiently high-entropy text, even if the text is subjected to the channel $\BSC_p$.
\end{theorem*}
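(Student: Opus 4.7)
The plan is to instantiate the watermarking template from \Cref{subsec:how-to-water} using the \emph{constant-rate} PRC from the corollary following \Cref{theorem:constant-rate-prcs}, but with the PRC encoding the actual secret message $\m$ rather than the trivial bit $1$. The steganography scheme $\Steg$ consists of: (i) $\Steg.\Setup$ runs the PRC's key generation to produce encoding/decoding keys; (ii) $\Steg.\Encode(\prompt, \m)$ samples $x \from \Encode(\m) \in \{0,1\}^n$ and then runs the language model sampler $\Generate(\prompt, x)$ (from \Cref{subsec:how-to-water}) to produce text $\t \in \{0,1\}^n$; (iii) $\Steg.\Decode(\t')$ recovers an approximation $x'$ of $x$ from the (possibly corrupted) text $\t'$ and outputs $\Decode(x')$. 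For the approximation step, I would use the natural $\t$-to-$x$ map supplied by the bit-by-bit sampler of \Cref{subsec:techo-water}, namely $x'_i = \t'_i$.

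For correctness I need to show that the effective channel from $x$ to $x'$ is $p''$-bounded for some $p'' \in (0,1/2)$ strictly less than the PRC's robustness threshold, since then $\Decode(x') = \m$ with probability $1$ (or overwhelming probability, depending on the PRC's correctness guarantee). The channel is a composition of two pieces: the embedding channel $\Eemb$ from $x$ to $\t$ and the adversarial channel $\BSC_p$ from $\t$ to $\t'$. Under the ``sufficiently high-entropy'' hypothesis on the prompt, I would show that $\Eemb$ has bit-flip rate at most some constant $q < 1/2$ — indeed, for the standard sampler, $\Pr[\t_i \ne x_i]$ is determined by the token-level entropy, and a high-entropy condition makes this probability bounded away from $1/2$ on average (with concentration across positions by a Chernoff-style argument over the $n$ bits). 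Composing with $\BSC_p$ yields a channel with crossover probability $q \star p = q(1-p) + (1-q)p$, which is still strictly below $1/2$; by instantiating the constant-rate PRC with robustness parameter $p'' := q \star p + \eps$ for small $\eps > 0$, correctness follows from \Cref{theorem:constant-rate-prcs} and the corollary. Finally, I would appeal to the fact (implicit in the watermarking template, but needing care here) that the induced channel from $x$ to $x'$ is indeed $p''$-bounded in the sense required by the PRC, possibly via a concentration argument to convert average bit-flip probabilities into a worst-case bound on the number of flips except with negligible probability.

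Steganographic security (undetectability of the hidden message, and in particular indistinguishability of $\Steg.\Encode(\prompt,\m)$ from the original $\Generate$) is a direct hybrid argument: replacing $x \from \Encode(\m)$ with a uniformly random $x$ is indistinguishable by pseudorandomness of the PRC (even across polynomially many queries with possibly different messages), and once $x$ is uniform the sampler produces text distributed identically to the original model by construction. The information rate is $\lvert \m \rvert / n = R - o(1)$, which is a constant since the PRC from the corollary is constant-rate, giving a constant number of message bits per text bit.

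The main obstacle I expect is the correctness argument, specifically translating the informal ``sufficiently high-entropy'' hypothesis into a concrete bound on the bit-flip rate of $\Eemb$, and then showing that the composition with $\BSC_p$ results in a channel that is $p''$-bounded in the adversarial (worst-case fractional-error) sense that the PRC requires — not merely in expectation. This will likely follow by a standard Chernoff/Hoeffding argument over the $n$ bit positions, conditional on the per-position entropy being bounded below, but it requires formalizing the entropy condition at the granularity of individual token distributions so that the per-bit crossover probabilities can be meaningfully averaged and concentrated.
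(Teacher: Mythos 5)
Your proposal is essentially the same approach as the paper's. The paper observes that the watermarking scheme $\calW[\PRC]$ of \Cref{subsec:water-simple} uses the PRC only as an oracle, so all the completeness/robustness proofs (\Cref{lemma:substring-completeness}, \Cref{lemma:watermark-robustness}) carry over verbatim when the zero-bit PRC is replaced by a constant-rate multi-bit PRC encoding the secret message, and steganographic secrecy is just undetectability; your proposal reconstructs precisely this, with the "obstacle" you flag (turning the entropy hypothesis into a worst-case fractional error bound for $\Eemb$, via Cauchy--Schwarz/Azuma and a Chernoff step) being exactly the content of \Cref{lemma:truncated-surprisal} and \Cref{lemma:encoding-errors}. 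One detail worth adding if you formalize this: the paper's $\Wat_\sk$ XORs each embedded PRC codeword with a fresh one-time pad $a_\ell$ (\Cref{alg:watermarking-gen,alg:watermarking-det}), which is needed so that the effective error channel on one block does not depend on the PRC key through tokens generated in earlier blocks; in a single-codeword variant of your scheme this does not arise, but if you want the paper's full substring-robustness across multiple blocks you need that pad.
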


\paragraph{Universal steganography (\Cref{sec:stego}).}
In \Cref{sec:stego} we show that PRCs can be used to solve a long-standing open question in steganography: A simple application of PRCs yields the first robust, stateless universal steganography scheme.
Universal steganography can be used for language model steganography, but it is more general \cite{vAH04}.
We take the rate of the steganography scheme to be the ratio of the number of stegotext symbols to the number of bits in the message being encoded.

\begin{theorem*}[\Cref{theorem:stego}]
    Suppose there is a hash function that is unbiased over the covertext channel.
    If $\PRC$ is any PRC, then there exists a stateless, public-key universal steganography scheme with the same rate and robustness as $\PRC$.
\end{theorem*}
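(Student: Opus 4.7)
The plan is to construct the scheme by applying $\PRC.\Encode$ and then embedding the resulting codeword into the covertext channel bit-by-bit by rejection sampling against the hash $h$. Key generation is inherited directly from $\PRC$, producing $(\ek, \dk)$. To embed a message $\m$, first compute $x \gets \PRC.\Encode(\ek, \m) \in \{0,1\}^n$; then produce stegotext symbols $s_1, \ldots, s_n$ sequentially, where for each $i$ we repeatedly draw $s_i$ from the covertext channel conditioned on the history $s_1, \ldots, s_{i-1}$ until $h(s_i) = x_i$. To decode a (possibly corrupted) stegotext $s'$, set $y_i = h(s'_i)$ for each surviving symbol and return $\PRC.\Decode(\dk, y)$.

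The first step of the proof is steganographic security. Because $h$ is unbiased over the covertext channel, for any history the channel distribution conditioned on $\{h(\cdot) = b\}$ has mass $1/2$, and the rejection-sampled symbol with target bit $b$ is drawn exactly from it. Hence if $x$ were uniform in $\{0,1\}^n$, the resulting stegotext would be distributed identically to a fresh draw from the covertext channel. A polynomial-time adversary against undetectability therefore reduces to a distinguisher against multi-sample PRC pseudorandomness: given an oracle that returns either $\PRC.\Encode(\ek, \cdot)$ outputs or uniform strings, the reduction simulates the rejection sampling using its own access to the covertext channel, and this simulation is statistically perfect by unbiasedness. Composing the two indistinguishabilities (covertext $\equiv$ output with uniform $x$ $\stackrel{c}{\equiv}$ output with PRC $x$) gives undetectability.

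The second step is robustness. Since each stegotext symbol $s_i$ maps to codeword bit $x_i$ through the fixed hash $h$, any channel acting on the stegotext pushes forward to a channel on the reconstructed word $y$: a substitution of $s_i$ induces at most a flip of the $i$-th bit of $y$, and a deletion of $s_i$ induces exactly a deletion of the $i$-th bit of $y$. Thus $p$-bounded substitution or $\BSC_p \circ \BDC_q$ corruption of the stegotext induces the same class of corruption on $y$, and $\PRC.\Decode(\dk, y)$ returns $\m$ by the assumed robustness of $\PRC$. The rate claim is immediate because we use exactly one stegotext symbol per codeword bit, so the number of stegotext symbols equals the codeword length and the steganography rate matches that of $\PRC$.

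The main obstacle is making the security reduction faithful for covertext channels with memory: the PRC distinguisher must simulate rejection sampling with only black-box access to the channel, and unbiasedness must hold conditioned on every reachable history so that sampling both terminates in expected constant time and realizes the correct conditional distribution. This is exactly the quantitative content of ``unbiased over the covertext channel,'' so that hypothesis does the real work; once it is in hand, the remainder is a transparent two-step hybrid plus a push-forward through $h$ for robustness.
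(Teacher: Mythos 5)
Your proposal is correct and matches the paper's construction and proof essentially line for line: it is the Anderson/Hopper--Langford--von Ahn rejection-sampling scheme with a PRC substituted for the pseudorandom encryption, security argued by the two-step hybrid (unbiasedness makes the uniform-$x$ case a perfect simulation of the channel, then PRC pseudorandomness replaces uniform $x$ by a codeword), and robustness by pushing the stegotext-level channel $\calE'$ forward through $f$ to a binary channel $\calE$ on the reconstructed word and invoking PRC robustness to $\calE$. The only cosmetic differences are that the paper caps the rejection sampler at $\secpar$ draws rather than looping until success, and it isolates the push-forward hypothesis as the commutation condition $f \circ \calE' = \calE \circ f$ rather than spelling out the substitution/deletion cases as you do, but neither affects the substance.
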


Finally, we show that this result can be extended to the setting where an unbiased hash function on the covertext channel is not known, with some loss in robustness.

\begin{theorem*} [\Cref{theorem:stego-weaker-assumption}]
    Suppose there is a hash function that has constant min-entropy over the covertext channel.
    Then under \Cref{assumption:combined}, for any $p \in (0,1/2)$, there exists a constant-rate public-key stateless steganography scheme that is robust to a $p$ rate of random substitutions.
\end{theorem*}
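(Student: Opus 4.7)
The scheme combines a constant-rate public-key PRC with a two-try rejection sampler. Under \Cref{assumption:combined}, the corollary following \Cref{theorem:constant-rate-prcs} produces, for any constant $p'' \in (0,1/2)$, a constant-rate public-key PRC robust to every $p''$-bounded channel; $p''$ will be chosen in terms of $p$ and the min-entropy of $h$. The steganography key pair is the PRC key pair. On input $m$, the encoder computes $x \gets \Encode(m) \in \{0,1\}^n$ and, for each $i \in [n]$, draws $s'_1, s'_2$ independently from the covertext channel, emitting $s_i := s'_1$ if $h(s'_1) = x_i$ and $s_i := s'_2$ otherwise. The decoder sets $\tilde x_i := h(\tilde s_i)$ and returns $\Decode(\tilde x)$. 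The rate equals that of the PRC, and the scheme is stateless and public-key.

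The central observation is a marginal-preservation identity for two-try rejection sampling: \emph{for any bias of $h$, the marginal distribution of $s_i$ equals the covertext marginal when $x_i$ is uniform.} Writing $p_b := \Pr_{\mathrm{cov}}[h(s) = b]$, fix $s_0$ with $h(s_0) = c$ and covertext probability $q$. A case analysis on whether the output is $s'_1$ or $s'_2$ gives $\Pr[s_i = s_0 \mid x_i = c] = q(1 + p_{1-c})$ and $\Pr[s_i = s_0 \mid x_i = 1-c] = q\,p_c$, whose average over uniform $x_i$ collapses to $\tfrac{q}{2}(1 + p_0 + p_1) = q$. Since each $s_i$ is built from disjoint fresh covertext samples, the joint stegotext distribution equals the covertext distribution whenever $x$ is uniform. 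A two-step hybrid — real stegotext, then stegotext with $x$ replaced by a uniform string, then covertext — reduces any steganographic distinguisher to a PRC distinguisher, yielding computational undetectability.

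For robustness, the same case analysis gives $\Pr[h(s_i) = x_i \mid x_i = b] = p_b(1 + p_{1-b})$, so the average rejection-induced per-bit error rate (over uniform $x_i$) is $\tfrac14 + \epsilon^2$, where $\epsilon := |p_0 - \tfrac12| < \tfrac12$ by the constant min-entropy of $h$. Composing with the $p$-rate random-substitution channel yields an average per-bit error rate $r^\ast = (\tfrac14 + \epsilon^2)(1-2p) + p$ on $\tilde x$, and a short manipulation shows $r^\ast < \tfrac12$ whenever $\epsilon, p < \tfrac12$. Since bit errors are independent across positions conditional on $x$, a Chernoff bound gives at most $p'' n$ errors with overwhelming probability for any constant $p'' \in (r^\ast, \tfrac12)$; instantiating the PRC with robustness to every $p''$-bounded channel then ensures decoding.

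The main obstacle is pinning down $T = 2$ as the correct rejection budget: at $T = 1$ the sampler discards $x$, and at $T \geq 3$ a direct computation shows the output marginal deviates from the covertext marginal by $\Theta(\epsilon)$ per symbol, which accumulates into a linear statistical distance and breaks undetectability. The exact cancellation at $T = 2$ relies only on the trivial identity $p_0 + p_1 = 1$, and it is precisely what lets us remove the unbiasedness assumption of \Cref{theorem:stego} without any asymptotic cost in rate. The remaining pieces — the hybrid reduction to PRC pseudorandomness and the concentration argument for the induced bit channel — are routine.
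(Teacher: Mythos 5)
Your construction and analysis are essentially the same as the paper's: a two-try rejection sampler ${\sf Steg}[\PRC, f, 2]$, the exact marginal-preservation identity for uniform $x_i$ (resting on $p_0 + p_1 = 1$), a hybrid argument reducing steganographic security to PRC pseudorandomness, and the per-bit error bound combining the rejection-induced error with $\BSC_p$ so that the composed error rate stays below $1/2$. The bound you compute, $\tfrac14 + \epsilon^2$, is an exact restatement of the paper's $\tfrac12 - p_0 p_1$ bound and gives the same choice of $q$-bounded PRC. The side remark about $T \geq 3$ breaking the marginal identity by $\Theta(\epsilon)$ is correct and a nice sanity check, though not used in the proof.

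The one step that is stated incorrectly is the concentration argument for robustness. You assert that the bit errors $\mathbbm{1}[f(s_i) \neq x_i]$ are independent across positions conditional on $x$, "since each $s_i$ is built from disjoint fresh covertext samples," and apply a Chernoff bound. In the \cite{HLvA02} model the covertext distribution $\calC_h$ is history-dependent, so the rejection-sampled stegotext symbols are correlated across positions, and the error indicators are not independent even after conditioning on $x$ (nor on $\sk$). What you do have, from the fact that the constant min-entropy condition holds for \emph{every} history $h$, is that the \emph{conditional} error probability at step $i$ given $(s_1,\dots,s_{i-1}, x_i)$ is bounded by $\tfrac14 + \epsilon^2 < \tfrac12$. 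That is a bounded-martingale-difference condition, and the correct tool is Azuma's inequality (\Cref{theorem:azuma}), which is what the paper uses. The numerical conclusion is unaffected, but the justification via independence and Chernoff is wrong in general and would only be valid for a memoryless covertext channel.
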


\subsection{Related work: short summary}
\label{subsec:related-work-summary}

We briefly outline some of the related work here. See \Cref{sec:full-related-work} for a more complete discussion.

\begin{itemize}
    \item Code-based cryptography: Our work bears some similarity to the field of code-based cryptography. However, code-based cryptography is generally focused on building existing primitives from new assumptions --- whereas PRCs are a new primitive that we base on existing assumptions.
    \item Trapdoor pseudorandom generators: Our zero-bit PRCs can be equivalently viewed as \emph{robust} trapdoor (or equivalently, backdoored) pseudorandom generators \cite{trapdoor,dodis2015formal}. That is, we require the additional property that the trapdoor (or secret key) can be used to detect even \emph{corrupted} pseudorandom strings.
    \item Watermarking for language models: We build watermarking schemes for language models satisfying the strongest quality guarantee, undetectability. Undetectability was defined by \cite{CGZ23}, where undetectable watermarks for language models were also constructed. In that work and in \cite{scott,KGW+23}, it is essential for watermark detection that many sufficiently-long contiguous substrings of the response remain \emph{unchanged}. Therefore, these watermarks are easily removable by simple attacks (see the ``robustness of our watermark'' paragraph of \Cref{subsec:techo-water}). The watermarks of \cite{KTHL23} are more robust --- their robustness is more comparable to ours --- but they sacrifice undetectability. Instead, their watermarks satisfy the weaker property of distortion-freeness, which is the single-query version of undetectability. \cite{ZALW23} obtain even stronger robustness, at the cost of even further reduced quality.
    \item Impossibility of strong watermarks: \cite{zhang2023watermarks} explore the possibility of watermarking for language models in the presence of motivated adversaries. They argue that sampling a \emph{random} response is easier when one is provided \emph{any} response. Since a random response cannot be watermarked (or else there would be a high false-positive rate), they use this to argue that any watermarked language model necessarily provides some assistance in generating un-watermarked text.
    \item Steganography: Steganography is the study of concealing secret messages in innocent-looking content. Whereas encryption is about hiding the message, steganography is about hiding the \emph{existence} of the message. Ever since steganography was formalized by \cite{HLvA02}, \emph{robust} steganography schemes (that don't require a shared state) have remained elusive. We resolve this problem using PRCs.
\end{itemize}

\subsection{Organization}
\ifeprint
In \Cref{sec:techo}, we give a technical overview of the paper, which is self-contained and sufficient to understand the high-level ideas and results of each section.
In \Cref{sec:prelims}, we state relevant preliminaries and notation.
In \Cref{sec:prc-basics}, we formally define PRCs and provide a heuristic construction. 
In \Cref{sec:ldpc-prcs}, we build PRCs from LDPC codes and prove their pseudorandomness from standard cryptographic assumptions.
We then show in \Cref{sec:improved-prcs} how to improve the rate and robustness of any PRC, resulting in our constant-rate multi-bit PRCs and PRCs for deletion channels.
In \Cref{sec:water}, we present our language model watermarking schemes from PRCs, including both our standard watermarks and our watermarks with unforgeable public attribution.
Finally, in \Cref{sec:stego}, we show how PRCs can be used to construct robust universal steganography schemes.

\Cref{sec:full-related-work} gives a more comprehensive overview of related work.

\else
In \Cref{sec:techo}, we give a technical overview of the paper, which is self-contained and sufficient to understand the high-level ideas and results of each section.
We include formal definitions of PRCs in the full version (\Cref{sec:prc-basics}).
The complete presentations of all remaining results and their proofs are provided in the appendix.

In \Cref{sec:prelims}, we state relevant preliminaries and notation.
In \Cref{sec:ldpc-prcs}, we build PRCs from LDPC codes and prove their pseudorandomness from standard cryptographic assumptions.
We then show in \Cref{sec:improved-prcs} how to improve the rate and robustness of any PRC, resulting in our constant-rate multi-bit PRCs and PRCs for deletion channels.
In \Cref{sec:water}, we present our language model watermarking schemes from PRCs, including both our standard watermarks and our watermarks with unforgeable public attribution.
In \Cref{sec:stego}, we show how PRCs can be used to construct robust universal steganography schemes.
Finally, \Cref{sec:full-related-work} gives a more comprehensive overview of related work.
\fi

\subsection{Differences from a previous version}
This version of the paper contains two significant technical updates relative to the previous version, as well as a few editorial updates. The two significant updates address subtle technical issues, but do not substantially change any of the ideas or messages of the paper.

The first significant update is to the choice of parameters for which we invoke the planted XOR assumption (\Cref{assumption:planted-xor}). The previous version of the paper invoked the assumption with $m$, the number of samples, set to $\Omega(n)$. However, as pointed out to us by an anonymous CRYPTO 2024 reviewer, Theorem 4.26 of \cite{ASSVV23} (which is itself an updated version of a paper that we had cited in the old version of this paper) shows this assumption to be false. Fortunately we had only set $m = \Omega(n)$ because it made notation more convenient, and by instead setting $m = n^{1-\Omega(1)}$, we avoid the attack without sacrificing any of our results. We have updated all of the relevant theorem statements and proofs in this version. We emphasize that this issue did not affect our other main proof that our LDPC-based PRCs are pseudorandom, because that proof only relies on LPN.

The other significant update regards the completeness and robustness of our watermarking scheme in \Cref{sec:water}. To sample a long response, we repeatedly sample fresh PRC codewords to bias the text. However, in order to argue that the resulting text contains the watermark, we need to ensure that the channel applied to the text is $p$-bounded --- which, according to our definition, means that the error channel is not allowed to depend on the PRC key. But if the response contains multiple PRC blocks, then the error channel on one block could depend on codewords embedded in the previous blocks, violating this condition. In the prior version we had missed this issue, making our completeness and robustness claims incorrect. 
In this updated version we apply a one-time pad to each PRC codeword before embedding it, to ensure that the error channel cannot depend on the PRC key --- thus resolving the issue.

\ifeprint
\paragraph{Acknowledgements.}
We thank Yael Kalai, Venkat Guruswami, Rainer B\"ohme, Or Zamir, Shyamal Patel, and Shivam Nadimpalli for helpful research conversations. 
We additionally thank Vinod Vaikuntanathan for pointing out the LDPC-based PRC variant that we use to present \Cref{lemma:low-rate-random-ldpc} in the Technical Overview.
We thank Mihalis Yannakakis and Fermi Ma for helpful suggestions about the write-up. We thank Omar Alrabiah for help in proving \Cref{lemma:low-rate-random-ldpc}, as well as general assistance with coding theory.
We thank an anonymous CRYPTO 2024 reviewer for helpful feedback, especially regarding our use of the planted XOR assumption (\Cref{assumption:planted-xor}).
\else
\fi
\newpage

\section{Technical overview} \label{sec:techo}
\subsection{Pseudorandom code basics} \label{subsec:techo-prc-basics}
Pseudorandom codes (PRCs) can be viewed as a combination of two related primitives:
\begin{itemize}
    \item Pseudorandom encryption, where ciphertexts are indistinguishable from random under a chosen plaintext attack \cite{rogaway2003ocb}. Secret-key pseudorandom encryption is easy to build using a pseudorandom function $F_\sk$ --- just encrypt $\m$ by sampling a random $r$ and outputting $(r, \m \oplus F_\sk(r))$. Public-key pseudorandom encryption is also known from standard assumptions \cite{vAH04}. However, none of these constructions have any nontrivial robustness.
    \item Robust encryption, where encryptions of messages are robust to errors. Robust encryption is easy to build by applying an error-correcting code to ciphertexts from any standard encryption scheme. Even if that encryption scheme is pseudorandom, the use of the error-correcting code will in general render the robust encryption scheme not pseudorandom.
\end{itemize}
A PRC is required to simultaneously satisfy \emph{both} pseudorandomness and robustness --- properties that are in direct tension with each other. Using the secret key, one should be able to discern the redundancy and structure that give ciphertexts their robustness. Without the secret key, ciphertexts must appear completely unstructured.

We define secret-key PRCs below. For public-key PRCs (\Cref{def:pkPRC}), we further require that the encoding algorithm can be made public without sacrificing pseudorandomness.

\begin{definition*}[\Cref{def:skPRC}]
    Let $\Sigma$ be an alphabet and $\calE : \Sigma^* \to \Sigma^*$ be a channel. A \emph{secret-key PRC} with robustness to $\calE$ is described by algorithms $\Encode_{\sk} : \Sigma^k \to \Sigma^n$ and $\Decode_{\sk} : \Sigma^* \to \Sigma^k \cup \{\bot\}$, parameterized by a secret key $\sk$, satisfying the following criteria for every security parameter $\secpar$:
    \begin{itemize}
        \item (Error correction, or robustness) For any message $\m \in \Sigma^k$,
        \[
            \Pr_{\sk}[\Decode_{\sk}(\calE(x)) = \m : x \from \Encode_{\sk}(\m)] \geq 1-\negl.
        \]
        \item (Soundness) For any fixed $c \in \Sigma^*$,
        \[
            \Pr_{\sk}[\Decode_{\sk}(c) = \bot] \geq 1 - \negl.
        \]
        \item (Pseudorandomness) For any polynomial-time adversary $\adv$,
        \[
            \abs{\Pr_{\sk}[\adv^{\Encode_{\sk}}(\secparam) = 1] - \Pr_{\calU}[\adv^{\calU}(\secparam) = 1]} \leq \negl,
        \]
        where $\adv^{\calU}$ means that the adversary has access to an oracle that, on any (even previously queried) input, responds with a freshly drawn uniform value in $\Sigma^n$.
    \end{itemize}
\end{definition*}

If the scheme can only encode a singular message (i.e. $k = 0$), then we call it a \emph{zero-bit} PRC. Soundness is a technical condition that we include only to ensure that zero-bit PRCs are non-trivial.

For a sufficiently weak channel $\calE$, it is not hard to construct a secret-key PRC with robustness to $\calE$ where pseudorandomness rests on very mild assumptions. For instance, if $F_{\sk} : \{0,1\}^\ell \to \{0,1\}$ is a pseudorandom function, we can build a zero-bit secret-key PRC with the following encoding algorithm:
\begin{enumerate}
    \item[] $\Encode_\sk(1)$:
    \item Randomly sample $x_1, \dots, x_\ell \from \{0,1\}$.
    \item For $i = \ell+1, \dots, n$, let $x_i = F_\sk(x_{i-\ell}, \dots, x_{i-1})$.
    \item Output $x_1, \dots, x_n$.
\end{enumerate}
The decoding algorithm $\Decode_\sk$ simply checks whether much more than a $1/2$ fraction of conditions $x_i = F_\sk(x_{i-\ell}, \dots, x_{i-1})$ are satisfied. Pseudorandomness follows by taking $\ell$ to be the security parameter. It is immediate that this PRC is robust to any length-preserving channel that introduces at most a $p$ fraction of errors, for $p << 1/\ell$. We call any such channel $p$-\emph{bounded}.

This secret-key PRC construction can be seen as implicit in prior hash-based watermarking schemes \cite{scott,KGW+23,CGZ23}, where essentially the same level of robustness to a $1/\ell$ error rate is obtained. Unfortunately, it has an inherent trade-off between pseudorandomness and robustness: After roughly $2^{\ell/2}$ samples from $\Encode_\sk(1)$, there will be repeated prefixes and therefore correlations between the samples. In particular, if we want this PRC to be robust to a constant rate of errors, we have to set $\ell = O(1)$, in which case even a constant number of queries are enough to observe correlations. We therefore turn to alternative constructions.

\subsection{Pseudorandom LDPC codes} \label{subsec:techo-ldpc-prcs}
Fortunately, one of the most prominent assumptions in theoretical cryptography is a statement about codes: Learning Parity with Noise (LPN). The LPN assumption states that noisy samples from the codespace of a random linear code are pseudorandom, even to an adversary who knows a generator matrix for the code.
In more detail, let $n, g$ be integers and $G \from \F_2^{n \times g}$ be a random matrix. The LPN assumption (with noise rate $\eta$ and secrets of size $g$) states that $(G, Gs \oplus e) \approx (G, u)$,\footnote{Throughout this work we use $\approx$ to refer to computational indistinguishability.} where $s \from \F_2^g$, $e \from \Ber(n, \eta)$, and $u \from \F_2^n$.

The LPN assumption suggests using noisy codewords from a random linear code as a PRC. That is, let $G \from \F_2^{n \times g}$ be the secret key and $\Encode_G$ be the following zero-bit secret-key PRC encoder:
\begin{itemize}
    \item $\Encode_G(1)$: Sample $s \from \F_2^g$ and $e \from \Ber(n,\eta)$. Output $Gs \oplus e$.
\end{itemize}
The LPN assumption immediately implies that an arbitrary polynomial number of samples from $\Encode_G(1)$ are pseudorandom. However, recall that the LPN assumption states that these samples are pseudorandom \emph{even given} $G$ --- which means precisely that there does not exist an efficient zero-bit decoder $\Decode_G(x)$!

While this random linear code construction does not work, it naturally suggests a strategy that does.
If we find a sampling procedure that produces a random (or even pseudorandom) generator matrix $G$ \emph{together with a trapdoor for efficient decoding}, then we have a public-key PRC where the generator matrix is the public encoding key and the trapdoor is the secret decoding key.
By the LPN assumption, $\Encode_G(1)$ produces pseudorandom vectors even to an adversary who knows $G$, so the construction will satisfy pseudorandomness.

It turns out that low-weight parity checks can serve as such a trapdoor. That is, instead of sampling $G$ uniformly at random, we first sample a ``parity-check matrix'' $P \in \F_2^{r \times n}$ with sparse rows (i.e., ``low density''), and then sample $G \in \F_2^{n \times g}$ subject to $PG = 0$. 
For appropriate choice of $n,g,t,r$, we will show that the resulting marginal distribution on $G$ is random or pseudorandom. The low-density parity-check matrix $P$ will allow for efficient detection of near-codewords.

Codes defined by Low-Density Parity-Check matrices are called LDPC codes. For $n, g, t, r \in \N$, we define an $(n,g,t,r)$ random LDPC code by the following distribution over parity-check and generator matrices:
\begin{enumerate}
    \item[] $\LDPC[n,g,t,r]$:
    \item Sample a random matrix $P \in \F_2^{r \times n}$ subject to every row of $P$ being $t$-sparse.
    \item Sample a random matrix $G \in \F_2^{n \times g}$ subject to $PG = 0$.
    \item Output $(P,G)$.
\end{enumerate}
Zero-bit decoding works by counting the number of satisfied parity checks.
For any fixed $x \in \F_2^n$, with high probability over $P \in \F_2^{r \times n}$ we expect that the number of unsatisfied parity checks, $\wt(Px)$, is roughly $r/2$.
But if $x$ is close to $\img G \subseteq \ker P$ in Hamming distance, then as long as the error and the sparsity $t$ of the parity checks are not too high, we expect $\wt(Px)$ to be significantly smaller than $r/2$.

Therefore our zero-bit pseudorandom LDPC code uses the following zero-bit decoding algorithm:
\begin{itemize}
    \item $\Decode_P(x)$: If $\wt(Px) < (1/2-r^{-1/4}) \cdot r$, output 1; otherwise output $\bot$.\footnote{Throughout this work, $\wt$ will refer to Hamming weight.}
\end{itemize}
Encoding is exactly the same $\Encode_G$ algorithm as above --- but now that $G$ is sampled together with the trapdoor $P$, we have an efficient decoding algorithm.
Observe that this is a zero-bit scheme, because the decoder only determines whether the input is related to the keys or not.
Using belief propagation, it is possible to push this construction beyond a zero-bit PRC, although this results in lower robustness.
We ultimately construct a constant-rate multi-bit PRC by other means, which we discuss in \Cref{subsec:constant-rate-prcs}.

Let $\LDPCPRC_0$ be the zero-bit public-key PRC defined by $(\Encode_G, \Decode_P)$ for $(P,G) \from \LDPC[n,g,t,r]$.
In a moment we will outline our proofs that $\LDPCPRC_0$ is a public-key PRC with very strong robustness. First, let us see some restrictions on the sparsity parameter $t$ that provide important context for these proofs.

If random noise of rate $1/2-\varepsilon$ is applied to $x \in \img G$, then the probability of each parity check being satisfied for the noisy codeword is $1/2-(2\varepsilon)^t/2$.
So in order for $\Decode_P(x)$ to output 1 with high probability, we need $(2\varepsilon)^t/2 > r^{-1/4}$, i.e., $t < 1 + \log r / 4 \log(1/2\varepsilon) = O(\log r)$. We will always have $r = n^{\Omega(1)}$, so this restriction says that $t = O(\log n)$ for appropriate choice of constant.

On the other hand, if we set $t = O(1)$ then $\Encode_G(1)$ cannot be pseudorandom. This is because it is possible to brute-force search over all $\binom{n}{t}$ possible parity checks of weight $t$, and one can test whether $\Encode_G$ is consistent with a given parity check $s \in \F_2^n$ by simply computing $s \cdot x$ for many samples $x \from \Encode_G(1)$.

Therefore, we will choose $t = \Theta(\log n)$ in order to rely on the weakest possible cryptographic assumption for pseudorandomness, without sacrificing robustness to a constant noise rate.

\begin{remark*}
    The LDPC codes considered in this work differ from the traditional Gallager's LDPC ensemble in two important ways. First, our LDPC codes will have $t = \Theta(\log n)$ sparsity as opposed to constant sparsity. Unfortunately, the usual belief propagation decoder does not work for noise rates beyond $O(\log t / t)$; this is the reason why we only perform the simple zero-bit decoding. The second difference is that we use independent parity checks, which results in an irregular Tanner graph.
\end{remark*}

\begin{remark*}
    There is a well-known public-key encryption scheme, due to Alekhnovich \cite[Cryptosystem 1]{Alekhnovich03}, based on a low-noise variant of LPN. This scheme is similar to ours, but the decoder cannot tolerate any constant rate of errors.
\end{remark*}

\paragraph{Pseudorandom generator matrix (\Cref{lemma:xor-hybrid}).}
For appropriate choices of parameters, it turns out that the generator matrix of $\LDPC[n,g,t,r]$ is pseudorandom under the planted $t$-XOR assumption. The planted $t$-XOR problem (and its generalization, the planted $t$-SUM problem) is a natural and well-studied problem --- see e.g. \cite{ASSVV23} for a more detailed discussion. Formally, the $(n, m, t)$ \emph{planted XOR problem} states that it is computationally hard to distinguish between
\begin{itemize}
    \item[$\calD_0^m$:] a random $m$-dimensional linear subspace $V \subseteq \F_2^n$, and
    \item[$\calD_1^m$:] a random $m$-dimensional linear subspace $V_s \subseteq \F_2^n$ satisfying a random planted $t$-XOR relation $s$ (i.e., $s$ is a random $t$-sparse vector and $s \cdot v = 0$ for all $v \in V_s$).
\end{itemize}
Throughout this overview, we consider linear subspaces to be described by a random basis. Recalling the definition of $(P,G) \from \LDPC[n,g,t,r]$, if $r = 1$ the $(n, g, t)$ planted XOR assumption immediately implies that $G$ is pseudorandom (by identifying $V$ with $\img G$).
But for the more interesting case that $r > 1$, we require a stronger version of the planted XOR assumption with many planted relations. That is, we need to assume that the following distribution is indistinguishable from $\calD_0^m$:
\begin{itemize}
    \item[$\calD_r^m$:] a random $m$-dimensional linear subspace $V_{s_1, \dots, s_r} \subseteq \F_2^n$ satisfying $r$ random planted $t$-XOR relations $s_1, \dots, s_r$ (i.e., $s_1,\dots,s_r$ are random $t$-sparse vectors and $s_1 \cdot v = \cdots = s_r \cdot v = 0$ for all $v \in V_{s_1, \dots, s_r}$).
\end{itemize}
We are not aware of any prior work on this assumption that $\calD^m_0 \approx \calD^m_r$, so it is not immediately clear how reliable it is. Fortunately, it is \emph{implied} by the $(n,m+r,t)$ planted XOR assumption. 

We prove this with a hybrid argument.
Suppose that an efficient adversary $\adv$ distinguishes between $\calD_0^m$ and $\calD_r^m$ with advantage $\varepsilon > 0$.
By a telescoping argument, $\adv$ must distinguish between $\calD_i^m$ and $\calD_{i+1}^m$ with advantage $\varepsilon/r$, for some $i \in \{0, \dots, r-1\}$.
For each $i$, the following efficient reduction $\Red_i$ satisfies $\Red_i(\calD_0^{m+r}) \equiv \calD_i^m$ and $\Red_i(\calD_1^{m+r}) \equiv \calD_{i+1}^m$, which implies that $\varepsilon/r$ (and therefore $\varepsilon$) is negligible under the $(n,m+r,t)$ planted XOR assumption.

\begin{enumerate}
    \item[] $\Red_i(W)$:
    \item Sample $i$ random $t$-sparse vectors $s_1, \dots, s_i \in \F_2^n$ and let $S = \{v \in \F_2^n : v \cdot s_j = 0\ \forall j \in [i]\}$.
    \item Let $U = W \cap S$. Notice that $\dim U \ge \dim W - i$. Since $\dim W = m+r$ and $i < r$, this is at least $m$.
    \item Output a random $m$-dimensional subspace of $U$.
\end{enumerate}
It remains to see why $\Red_i(\calD_0^{m+r}) \equiv \calD_i^m$ and $\Red_i(\calD_1^{m+r}) \equiv \calD_{i+1}^m$.
In fact both of these statements are true even for \emph{fixed} planted relations.

\begin{itemize}
    \item $\Red_i(\calD_0^{m+r}) \equiv \calD_i^m$. Fix $s_1, \dots, s_i$ sampled in $\Red_i$ and let $S = \{v \in \F_2^n : v \cdot s_j = 0\ \forall j \in [i]\}$. Since $W$ is a random subspace of $\F_2^n$, conditioned on any $d = \dim(W \cap S)$, $U = W \cap S$ is a random $d$-dimensional subspace of $S$. Therefore the output of $\Red_i(\calD_0^{m+r})$ is a random $m$-dimensional subspace of $S$.
    \item $\Red_i(\calD_1^{m+r}) \equiv \calD_{i+1}^m$. Fix $s_1, \dots, s_i$ sampled in $\Red_i$ and let $S = \{v \in \F_2^n : v \cdot s_j = 0\ \forall j \in [i]\}$, as above. Suppose that $W \from \calD_1^{m+r}$ is sampled with the planted relation $s$. Fix $s$ and let $S' = \{v \in S : v \cdot s = 0\}$. Again, conditioned on any $d = \dim(W \cap S)$, $U = W \cap S = W \cap S'$ is a random $d$-dimensional subspace of $S'$. Therefore the output of $\Red_i(\calD_1^{m+r})$ is a random $m$-dimensional subspace of $S'$.
\end{itemize}

\paragraph{Narrow, statistically random generator matrix (\Cref{lemma:low-rate-random-ldpc}).}
Since the planted XOR assumption is not a standard cryptographic assumption, we show in \Cref{lemma:low-rate-random-ldpc} that the generator matrix of $\LDPC[n, c \log^2 n, \log n, 0.99 n]$ is \emph{statistically} random for some $c > 0$. This removes the need for the planted XOR assumption, but it comes at the cost of requiring a stronger version of the LPN assumption: When we invoke LPN to see that samples $(G, Gs \oplus e)$ are pseudorandom, the secrets $s$ are now only of size $c \log^2 n$. Therefore, for this PRC we will rely on a subexponential version of the LPN assumption which states that LPN is $2^{O(\sqrt{n})}$-hard.

For the purposes of this technical overview, we will show that the generator matrix of a closely related code is random. The proof for this version is significantly simpler, but the distribution is less natural and has worse error-correcting properties. The modified distribution on $(P,G)$ is defined as follows:
\begin{enumerate}
    \item Sample a uniformly random $G_0 \from \F_2^{0.01 n \times g}$.
    \item For $i \in [0.99n]$:
    \begin{enumerate}
        \item Sample a random $(t-1)$-sparse $s_i \in \F_2^{0.01 n}$.
        \item Let $G_i$ be the matrix $G_{i-1}$ with the extra row $s_i^T G_0$ appended to the bottom, 
        \[
            G_i = \begin{bmatrix}
                G_{i-1} \\
                s_i^T G_0
            \end{bmatrix}.
        \]
        \item Let $s_i' = [s_i^T, 0^{i-1}, 1, 0^{0.99n-i}]$. 
    \end{enumerate}
    \item Let $P$ be the matrix whose rows are $s_1', \dots, s_{0.99n}'$ and $G = G_{0.99n}$. Output $(P, G)$.
\end{enumerate}
First observe that $PG = 0$, because $s_i' G = [s_i^T, 0^{i-1}, 1, 0^{0.99n-i}] G = s_i^T G_0 \oplus (s_i^T G_0) = 0$ for every $i \in [0.99n]$.

The leftover hash lemma immediately implies that $G$ is statistically random. Recall that the leftover hash lemma states that if $A \from \F_2^{g \times \ell}$ is a uniformly random matrix and $s \in \F_2^\ell$ has min-entropy $\mu$, then $(A, As)$ is $2^{-(\mu-g)/2}$-close to uniform in statistical distance. In our case, we use $A = G_0^T$ and $s = s_i$ to see that $s_i^T G_0$ is $2^{-(\log \binom{0.01n}{t} - g)/2}$-close to uniform in statistical distance for each $i \in [n-g]$.
If $t = \Omega(\log n)$, then there is a choice of $g = \Omega(\log^2 n)$ such that $2^{-(\log \binom{0.01n}{t} - g)/2} = 2^{-\Omega(\log^2 n)} = \negl[n]$, completing the proof.

\paragraph{$\LDPCPRC_0$ is robust to any $p$-bounded channel (\Cref{lemma:zero-bit-ldpc-decoding}).} Recall that we say that any length-preserving channel that introduces at most a $p$ fraction of bit-flip errors is $p$-bounded. To prove robustness, we need to show two things:
\begin{enumerate}
    \item any fixed $x \in \F_2^n$ decodes to $\bot$ with high probability, and
    \item for any $p$-bounded channel $\calE$, samples from $\calE(\Encode_G(1))$ decode to 1 with high probability.
\end{enumerate}
Unfortunately, (1) does not quite hold for the scheme presented above. The issue is that, while \emph{most} fixed strings will decode to $\bot$, a small fraction of strings will decode to 1 regardless of $P$. For instance, 0 will always decode to 1 because $\wt(P \cdot 0) = 0$ for any $P$.
Therefore we modify our scheme slightly by using a one-time pad $z \in \F_2^n$, included in the public key. The modified $\Encode_G(1)$ outputs $Gs \oplus e \oplus z$, and $\Decode_P(x)$ computes $\wt(Px \oplus Pz)$ instead of $\wt(Px)$; this is the actual scheme we describe in \Cref{sec:ldpc-prcs}.

Now, for any fixed $x \in \F_2^n$, $\wt(Px \oplus Pz)$ is distributed identically to $\wt(Pz)$ because $z$ is uniform. In \Cref{subsec:subexp-lpn} we will see that $P$ is full rank with high probability, so $Pz$ is uniformly random. By a Chernoff bound, $\wt(Pz) \ge (1/2 - r^{-1/4}) \cdot r$ with high probability and therefore $\Decode_P(x)$ outputs $\bot$. This concludes the proof of (1), so we turn to (2).

Suppose that we sample $Gs \oplus e \oplus z \from \Encode_G(1)$ and apply some $p$-bounded channel $\calE$. The one-time pad effectively converts $\calE$ to a fixed error channel, independent of $P, G, s, e$: Suppose that $\calE(x) = x \oplus e(x)$, where $e(x)$ is a random variable depending on $x$. Since $\calE$ is $p$-bounded, $\wt(e(x)) \le p \cdot n$. Letting $y = Gs \oplus e \oplus z$, we have
\begin{align*}
    \calE(Gs \oplus e \oplus z) \oplus z &= (Gs \oplus e \oplus z) \oplus e(Gs \oplus e \oplus z) \oplus z \\
    &= Gs \oplus e \oplus e(y)
\end{align*}
where $y$ is uniformly random in $\F_2^n$, independent of $P, G, s, e$ because of $z$. Now it only remains to see that $\wt(P(Gs \oplus e \oplus e(y))) = \wt(P(e \oplus e(y))) < (1/2-r^{-1/4}) \cdot r$ with high probability. Since $e$ and $e(y)$ are independent errors, each of weight $(1/2-\Omega(1)) \cdot n$, the combined error $e \oplus e(y)$ also has weight $(1/2-\Omega(1)) \cdot n$. Therefore, if the row sparsity $t$ of $P$ is $c \log n$ for sufficiently small constant $c$, then we will have $\wt(P(e \oplus e(y))) < (1/2-r^{-1/4}) \cdot r$ with high probability, completing the proof of (2).

\subsection{Pseudorandom codes for the deletion channel} \label{subsec:techo-deletion-prcs}
So far, we have only considered PRCs for substitution channels. For our applications to watermarking and steganography, it will be useful to have PRCs for the noisy deletion channel as well.
The noisy deletion channel randomly introduces both deletions and substitutions.

Unfortunately, existing error-correcting codes for the deletion channel introduce a large amount of structure into codewords that precludes pseudorandomness.
For instance, the popular techniques of synchronization symbols or concatenation with constant-sized inner codes are immediately seen to be incompatible with pseudorandomness.
Even further limiting the techniques available to us, we want our PRCs for the noisy deletion channel to have a binary alphabet in order to be useful for watermarking.

We therefore turn to alternative techniques.
Surprisingly, we find that the repetition code --- perhaps the simplest and most-structured error-correcting code --- is a useful starting point.

For odd integer $T$, the rate-$(1/T)$ repetition code works by repeating each bit of the message $T$ times.
That is, for any message $\m = (\m_1 || \cdots || \m_k) \in \{0,1\}^k$, the encoder $\RepEncode_T$ is defined by $\RepEncode_T(\m) = (\m_1)^T || \cdots || (\m_k)^T$, where $(\m_i)^T$ denotes bit $\m_i$ repeated $T$ times.
For example, the rate-$(1/3)$ repetition code encodes $010$ as $\RepEncode_T(010) = 000111000$.

Now suppose that the encoding $(\m_1)^T || \cdots || (\m_k)^T$ is subjected to the noisy deletion channel, resulting in a string $z$.
A natural algorithm for decoding $z$ is to partition $z$ into $k$ equal-length blocks $z_1, \dots, z_k$, and compute the majority of each block:

\indent $\MajDecode_k(z)$:
\begin{enumerate}
    \item Partition $z$ into $k$ equal-length blocks $z = (z_1 || \cdots || z_k)$.
    \item Output $(\Maj(z_1) || \cdots || \Maj(z_k))$.
\end{enumerate}

As long as the deletions are sufficiently balanced across the different blocks, the $z_i$ will align well with the original blocks $(\m_i)^T$.
Provided further that there are not too many substitutions in any block, we should have $\MajDecode_k(z) = \m$.
The issue is that $\RepEncode_T(\m)$ is not pseudorandom even for random $\m$, because a random string is (extremely) unlikely to consist of $T$ repeated bits.

On the other hand, a random string typically does have $\Theta(\sqrt{T})$ bias towards 0 or 1.\footnote{That is, a random string has $\Theta(\sqrt{T})$ more 0's than 1's, or 1's than 0's. This can be seen as a consequence of the fact that a one-dimensional simple random walk of length $T$ will usually terminate $\Theta(\sqrt{T})$ away from the origin.}
So if we change or delete a small $O(\sqrt{T})$ number of bits of a random string, we expect the majority to stay the same.
This observation brings us to the following encoder $\MajEncode_T$, which encodes each bit in the majority of a random string. We refer to the code defined by $(\MajEncode_T, \MajDecode_k)$ as the \emph{majority code}.

\indent $\MajEncode_T(\m)$:
\begin{enumerate}
    \item For $i \in [k]$, let $z_i$ be a random sample from $\{0,1\}^T$ conditioned on $\Maj(z_i) = \m_i$.
    \item Output $(z_1 || \cdots || z_k)$.
\end{enumerate}

Now if $\m$ is random, then $z = \MajEncode_T(\m)$ is random as well.
Furthermore, if we subject $z$ to the noisy deletion channel to obtain $z'$, then the bits of $\m' = \MajDecode_k(z')$ are positively correlated with the bits of $\m$.
This is because the deletions are at random locations, and are therefore (roughly) evenly-distributed across the different blocks $z_i$ --- meaning that $\MajDecode_k$ will mostly use the correct locations to decode each bit.
Since the bit-flip errors are random, they merely dilute the $\Theta(\sqrt{T})$ biases.
If $T >> k$ and the rates of deletions and bit-flip errors are constants below 1 and $1/2$ respectively, then we show in \Cref{lemma:deletion-channel} that $\Pr[\m_i = \m_i']$ is a constant greater than $1/2$.
Therefore, the majority code has the effect of converting the constant-rate noisy deletion channel into some $p$-bounded channel.

Of course, the majority code is not itself a PRC.
The first problem is that codewords for the majority code are only random if the message is random, whereas a PRC needs to allow encoding of any particular message.
The second problem is that, even if the message is random, the majority code recovers a string that is only \emph{correlated} with it.

But these are exactly the problems solved by PRCs for bounded-weight error channels!
That is, if $\PRC$ is any PRC with robustness to every $p$-bounded channel (e.g. the PRCs from \Cref{subsec:techo-ldpc-prcs}), then the combined code $\PRCdel = (\MajEncode \circ \PRC.\Encode, \PRC.\Decode \circ \MajDecode)$ is a PRC with robustness to some constant-rate noisy deletion channel.\footnote{As $p$ approaches $1/2$, the combined PRC tolerates a noisy deletion channel with rates of deletions and bit-flip errors approaching 1 and $1/2$.}
Pseudorandomness follows from the pseudorandomness of $\PRC.\Encode$: Since $\PRC.\Encode(\m)$ is pseudorandom for any message $\m$, $\MajEncode(\PRC.\Encode(\m))$ is as well.
Robustness follows from the fact that the majority code has the effect of converting the constant-rate noisy deletion channel into some $p$-bounded channel, which is handled by $\PRC.\Decode$.

\subsection{Constant-rate pseudorandom codes}
So far we have only considered zero-bit PRCs, but for many applications it will be useful to have PRCs that can encode longer messages.
There is a simple construction of a multi-bit PRC directly from any zero-bit PRC: Encode each bit of the message with either a zero-bit PRC codeword, or a uniformly random string.
That is, if $\PRC$ is a zero-bit PRC, we encode a message $\m \in \{0,1\}^k$ as $(x_1 || \cdots || x_k)$ where for each $i \in [k]$, $x_i \from \{0,1\}^n$ if $\m_i = 0$ and $x_i \from \PRC.\Encode(1)$ if $\m_i = 1$.

Unfortunately this scheme has a very low rate.
If the zero-bit PRC has block length $n$, then the resulting multi-bit PRC has rate $1/n$.
However, we show in \Cref{subsec:constant-rate-prcs} that one can use any such low-rate PRC to make any error-correcting code pseudorandom, with no asymptotic loss in rate.

The idea is to encode a seed for a one-time pad in the simple low-rate PRC just described, and then use the one-time pad to hide an error-correcting encoding of the message.
More formally, let $(\Enc, \Dec)$ be any (standard) error-correcting code and $\PRC$ be a low-rate PRC.
We do not require $(\Enc, \Dec)$ to have any cryptographic properties.
We encode a message $\m$ as\footnote{In order to obtain stronger robustness guarantee, we actually randomly permute the symbols of this encoding. For the purposes of this technical overview we omit this detail.}
\[
    \PRC.\Encode(r), \Enc(\m) \oplus \PRG(r),
\]
where $\PRG$ is any pseudorandom generator and $r \from \{0,1\}^k$ is a fresh uniformly random string.

By pseudorandomness of $\PRC$, $\PRC.\Encode(r)$ is indistinguishable from a uniformly random string --- even for a fixed choice of $r$.
By security of $\PRG$, the encoding is therefore indistinguishable from a totally random string.

Decoding works as long as $\PRC.\Encode(r)$ is not too corrupted for $\PRC$ to recover $r$, and $\Enc(\m) \oplus \PRG(r)$ is not too corrupted for $\Dec$ to recover $\m$.

\subsection{Watermarks for language models} \label{subsec:techo-water}
In this work, a generative language model is formally described by an algorithm $\Model$ that takes as input a prompt $\prompt$ and a sequence of tokens output thus far $\t_1, \ldots, \t_{i-1}$, and produces a distribution over the next token.
A full response is generated by iteratively sampling from these distributions, at each step providing $\Model$ with the partial response, and terminating once a special ``done'' token is sampled.
For simplicity, we assume here that tokens are binary, which allows us to specify the distribution $\p_i$ over the next token as a Bernoulli distribution $\Ber(\hat{\p}_i)$ where $\hat{\p}_i := \E[\p_i] \in [0,1]$.
We also assume for the purposes of this technical overview that the model always generates a response of length $n$.
In \Cref{sec:water} we explain why neither of these assumptions is important for our results.

As defined in \cite{CGZ23}, a watermarking scheme for a language model consists of algorithms $\Wat$ and $\Detect$, where $\Wat$ is the watermarked model and $\Detect$ is an algorithm used to detect the presence of the watermark.
In this work we are interested in watermarks that are \emph{undetectable}, \emph{sound}, and \emph{robust}, loosely defined as follows.
\begin{itemize}
    \item \emph{Undetectability}: Any polynomial number of responses from the watermarked model are computationally indistinguishable from those of the original model.
    \item \emph{Soundness}: Text generated independently of the watermarked model is not falsely detected.
    \item \emph{Robustness}: Sufficiently high-entropy text output by the model is detected as watermarked, even if it is altered.
\end{itemize}

We show that the watermarking strategy from \Cref{sec:intro}, which replaces some of the model's randomness with PRC codewords, yields a scheme that satisfies all of the above properties.

\paragraph{Defining $\Generate$ for language models.}
Recall that the approach from \Cref{sec:intro} requires an algorithm $\Generate$ that takes as input a prompt and a random seed $x \in \{0,1\}^n$, and samples a response $\t \in \{0,1\}^n$ such that
\begin{enumerate}[label={(\arabic*)}]
    \item\label{item:generate-correctness} if $x$ is uniformly random, then $\t$ is distributed identically to a response from $\Model$, and
    \item\label{item:generate-bias} each bit of $\t$ is correlated with the corresponding bit of $x$.
\end{enumerate}
We define $\Generate(\prompt, x)$ to sample the $i^{\text{th}}$ bit $\t_i$ of the response as follows. It first computes $\p_i$ by querying $\Model$ with $\prompt$ and the response output thus far, then:
\begin{itemize}
    \item If $\hat{\p}_i \le 1/2$, sample $\t_i \from \Ber(2 x_i \hat{\p}_i)$.
    \item If $\hat{\p}_i > 1/2$, sample $\t_i \from \Ber(1-2(1-x_i)(1-\hat{\p}_i))$.
\end{itemize}
For any $\hat{\p}_i \in [0,1]$, one can easily see that $\t_i$ is distributed as $\Ber(\hat{\p}_i)$ since $x_i$ is a uniformly random bit.
This means that $\Generate$ satisfies Condition \ref{item:generate-correctness} above.

For Condition \ref{item:generate-bias}, the bias toward the seed is stronger the closer $\hat{\p}_i$ is to $1/2$.
It is strongest when $\hat{\p}_i = 1/2$ exactly, in which case $\t_i$ is sampled from $\Ber(x_i)$ and is therefore equal to $x_i$.
At the other extreme, if $\hat{\p}_i = 0$ or $1$, there is no bias.
In general the response is a noisy version of the seed, where the amount of noise on the $i^{\text{th}}$ token decays as the binary entropy of $\Ber(\hat{\p}_i)$ grows.

\paragraph{Replacing seeds with PRC codewords.}
We use PRC samples $x \from \PRC.\Encode(1)$, instead of random samples $x \from \{0,1\}^n$, as the seeds in $\Generate$.
That is, if $\PRC$ is a zero-bit PRC, we let our watermarking scheme $\calW[\PRC]$ be defined by
\begin{itemize}[leftmargin=3cm]
    \item[$\Wat(\prompt)$:] Sample $x \from \PRC.\Encode(1)$ and output a sample \ifeprint\else\\ \fi from $\Generate(\prompt, x)$.
    \item[$\Detect(\t)$:] Compute $\PRC.\Decode(\t)$ and output the result.
\end{itemize}
By Condition \ref{item:generate-correctness} and the pseudorandomness property of $\PRC$, the responses from $\Wat$ are computationally indistinguishable from those of the original model.
By Condition \ref{item:generate-bias} and the robustness property of $\PRC$, the watermark will be detectable as long as the PRC is sufficiently powerful.

\begin{remark*}
    Depending on the kind of robustness of the PRC, substituting the entire seed with a single PRC sample results in a watermark that may or may not be detectable from just a subsequence.
    This is easily fixed by using $x = (x_1 || \cdots || x_m)$, where $x_i \from \PRC.\Encode(1)$ are independent PRC samples that are much shorter than the generated content.
    As long as the text contains \emph{at least one} subsequence corresponding to a PRC sample $x_i$, the watermark will be detected.
\end{remark*}

\paragraph{PRC error correction and watermark robustness.}
To understand how error correction of $\PRC$ translates to robustness of $\calW[\PRC]$, it is helpful to think of $\Generate$'s sampling process as a noisy \emph{embedding channel} applied to the seed.
That is, for a seed $x \in \{0,1\}^n$, let $\Eemb(x) = \Generate(\prompt, x)$ be the ``embedding channel'' describing the noise in $x \mapsto \t$.
For detection, it is sufficient for $\PRC$ to correct against the channel $\Eemb$, since watermarked responses are exactly samples from $\Eemb(x)$ for $x \gets \PRC.\Encode(1)$. 

Robustness of $\calW[\PRC]$ is determined by $\PRC$'s ability to correct from additional errors on top of $\Eemb$.
Let $\Eadv$ be a channel modeling the changes an adversary introduces to a watermarked response, so the overall error applied to $\t$ follows $\Eadv \circ \Eemb$.
If $\PRC$ is robust to $\Eadv \circ \Eemb$, the watermark is robust to this adversary's modifications.

In \Cref{subsec:water-simple}, we show that as long as the text has non-zero entropy, $\Eemb$ introduces errors at a rate of less than $1/2$.
Therefore, using any PRC with robustness to every $p$-bounded channel, we immediately obtain watermarks that are robust to a constant rate of random substitutions.

\paragraph{Robustness of our watermark.}
Hashing-based watermarking schemes --- including all existing undetectable schemes --- are removable by the simple ``emoji attack'' \cite{scott,KGW+23}.
In this attack, an adversary asks the model to respond to its prompt and insert an emoji between every word of its response.
The adversary then deletes the emojis from the response.
This attack removes any watermark that relies on the detector seeing contiguous sequences of watermarked text.

It turns out that hashing-based schemes \cite{scott,KGW+23,CGZ23} can easily be made robust to this particular attack.\footnote{For instance, we could choose to only hash tokens whose index has the same parity as the token being sampled.}
However, if the adversary instead instructs the model to insert the emojis \emph{randomly}, then we do not know how to make any hashing-based scheme robust.
By constructing PRCs with robustness to random deletion channels, we give the first undetectable watermarking scheme that can resist this kind of attack.

In order to show this, we require a stronger assumption on the response: That $\Eemb$ behaves as the binary symmetric channel $\BSC_q$ for some $q \in (0,1/2)$.
The binary symmetric channel $\BSC_q$ is the channel that flips each bit of its input independently with probability $q$, so $\BSC_q(x) = x \oplus \Ber(q)$ for $x \in \{0,1\}$.
Essentially, this is equivalent to the assumption that the response has high entropy \emph{and does not repeat words too often}.

Under this assumption, if $\Eadv = \BDC_p$ is the binary deletion channel for some $p \in (0,1)$, then we just need a PRC with robustness to $\Eadv \circ \Eemb = \BDC_p \circ \BSC_q$.
The binary deletion channel $\BDC_p$ is the channel that deletes each bit of its input independently with probability $p$.
Indeed, we saw in \Cref{subsec:techo-deletion-prcs} that there exist PRCs with robustness to $\BDC_p \circ \BSC_q$ for any $p \in (0,1), q \in (0,1/2)$.

\subsection{Watermarks with public attribution}
\label{subsec:techo-attribution}

For the ``standard'' notions of watermarks considered so far, the goal is to determine whether a given text is a possibly-corrupted version of an output generated by a model. 
Standard watermarks are well-suited for applications such as detecting plagiarism, where one wishes to know if a model was used at all to produce a text, even if that text has been altered by the user.

A different use of watermarks is in \emph{attributing} content to an LLM that generated it. 
For example, if harmful content generated by an LLM is found on social media, it would be useful to trace this content back to the model using the watermark.
Ideally, anyone holding a \emph{public} detection key should be able to trace the content.
On the other hand, it should only be possible to embed the watermark by using a \emph{secret} embedding key, in order to avoid falsely attributing text to any model.\footnote{Note that the roles of the public and secret keys are reversed here. For PRCs, a secret key is necessary for decoding, but anyone can encode with knowledge of a public key. Public-key PRCs are useful for public-key steganography.}
In other words, to an attacker who does not know the secret key, the watermark should be \emph{unforgeable}.

In addition to unforgeability, watermarks with public attribution have subtly different detection properties than standard watermarks.
Robustness of standard watermarks means that an LLM-generated text will be detected even if small modifications are made.
If robust watermarks were used for attribution, an attacker could use a model to generate a benign watermark text, then change a few words to make it offensive.
By robustness, the watermark would still be present in this now-offensive content.
So whereas robustness is a useful feature for standard watermarking, in the context of attribution it is actually an issue.

We therefore define a watermark with public attribution to have a separate detection algorithm called $\ForgeDetect$, which is intentionally designed to \emph{not} be robust.
$\ForgeDetect$, given a text $x$ and a public detection key, indicates whether the model output verbatim a significant part of that text, and outputs that portion of the text if so.

In order to preserve the benefits of robust watermarking for applications like detecting plagiarism, our publicly attributable watermarks also retain a $\Detect$ algorithm (in addition to the $\ForgeDetect$ algorithm) with the robustness of our standard watermarking schemes.
One can choose at detection time whether one wants to use $\Detect$ for standard detection, or $\ForgeDetect$ for attribution.

Our watermarking scheme with public attribution, $\Watt[\PRC]$, is a natural extension  of our regular watermarking scheme $\calW[\PRC]$.
Recall that $\calW[\PRC]$ embeds a codeword of a zero-bit PRC into the model's response; the detector checks whether the given text is close to a codeword.
Of course, if we use a PRC that encodes an arbitrary message (rather than only `1' as in a zero-bit PRC), then $\calW[\PRC]$ will embed arbitrary messages in the text.
$\Watt[\PRC]$ does exactly this, where the message that it encodes is a \emph{signature on the response output thus far}.
$\ForgeDetect$ decodes the given text to obtain this signature, and checks using the public detection key that it is a valid signature of a portion of the response.
If so, this signed portion must have been generated by the model.

Concurrent work \cite{public} also constructs a watermark with public detection, although this scheme is designed to have mild robustness (comparable to that of \cite{CGZ23}) and therefore is not appropriate for attribution as-is. 
Their scheme can easily be modified to satisfy our definition of unforgeable public attribution, but it would then lose all robustness guarantees for standard watermarking.
Our scheme simultaneously functions as a highly robust standard watermark via $\Detect$, while also satisfying unforgeable public attribution via $\ForgeDetect$.

\subsection{Robust steganography} \label{subsec:techo-stego}
In steganography, the goal is to send a hidden message such that an observer cannot tell that a message is being sent at all. 
In the classic presentation, a prisoner wishes to secretly communicate with an outside party even though the warden is filtering their letters.
If the warden detects any unusual language then the communication channel will be shut down, so the prisoner cannot simply encrypt the message: The warden should not only be unable to learn anything about the message, but should be unable to even detect that secret communication is occurring at all.

Steganography was formalized in \cite{HLvA02}. In this presentation, there is some underlying \emph{steganographic channel},\footnote{In the steganography literature this is usually just called a ``channel''; we call it a ``steganographic channel'' to differentiate it from the coding-theoretic channels we use in the context of robustness.} a distribution with which the sender wishes to conceal a message. 
The sender is given sample access to this steganographic channel and sends a \emph{stegotext} to the receiver.
\emph{Steganographic secrecy} requires that the distribution of stegotexts is indistinguishable from the steganographic channel, except to the receiver who can recover the message with a secret key.

\cite{HLvA02} proves the security of a steganography scheme of \cite{And98} that can be constructed using any encryption scheme $(\Encode,\Decode)$ with pseudorandom ciphertexts.
The key idea behind this scheme is to embed each bit $x_i$ of a pseudorandom encryption of the message by drawing a sample $d_i$ from the steganographic channel such that $f(d_i) = x_i$ for some hash function $f$:

\begin{enumerate}
    \item[] $\Steg.\Encode(\sk,\m)$ \cite{And98,HLvA02}:
    \item Let $x = x_1 || \ldots || x_n \gets \Encode(\sk,\m)$
    \item For $i \in [n]$, sample a random $d_i$ from the channel conditioned on $f(d_i) = x_i$
    \item Output $d = d_1 || \ldots || d_n$
\end{enumerate}

The decoder $\Steg.\Decode$ simply outputs $\Decode(\sk, f(d_1) || \ldots || f(d_n)) \ifeprint\else $\\$\fi = \Decode(\sk, x) = \m$.

If $x_i$ is uniform over $\{0,1\}$, and $f$ is perfectly unbiased for the channel, then $d_i$ is sampled exactly from the channel distribution.
Therefore, by pseudorandomness of the ciphertext, an observer cannot distinguish stegotexts from samples from the steganographic channel.
The receiver, which knows the decoding key for the encryption scheme, can evaluate $f$ on each block of the stegotext to obtain the ciphertext, then decrypt to recover the message.

However, this scheme is very brittle. If the stegotext is corrupted with any errors at all --- even ones resulting from any small bias in the hash function $f$ --- the message cannot be recovered.
A natural attempt at achieving robustness is for the sender to apply an error-correcting code $(\Enc, \Dec)$ to the ciphertext before embedding it. 
But this loses pseudorandomness and therefore steganographic secrecy!
Consequently, the robust steganography schemes of prior work rely on stronger assumptions like the ability of the sender and receiver to share state. A shared state allows the sender to generate a fresh one-time pad $r$ for each message it sends, making the task much easier: The sender embeds $x = \Enc(\m) \oplus r$ by choosing $d$ such that $f(d_i) = x_i$, and the receiver computes $\Dec(\tilde{x} \oplus r)$, where $(\Enc,\Dec)$ is any error-correcting code.
However, if the sender and receiver become unsynchronized (as is likely in practice), the receiver can no longer decode the message.

We observe that PRCs are exactly the primitive needed for robust stateless steganography: Using a PRC as the pseudorandom encryption scheme in the above $(\Steg.\Encode, \Steg.\Decode)$ construction \emph{immediately} gives us a steganography scheme with the same robustness as the PRC.
If we use a public-key PRC, the resulting steganography scheme is also public-key.
Furthermore, the robustness of the PRC allows us to relax the assumption that $f$ is perfectly unbiased on the steganographic channel.

Our main result of \Cref{sec:stego} is the first stateless steganography scheme with nontrivial robustness to errors. 
In particular, using our LDPC-based PRCs, we construct stateless steganography schemes that are robust to $p$-bounded channels for any constant $p$, or any constant-rate random deletion channel.

\newpage

\section{Preliminaries} \label{sec:prelims}
Let $\N := \{1, 2, \dots\}$ denote the set of positive integers. We will write $[q] := \{1, \dots, q\}$. For a set $X$, we define $X^* := \{(x_1, \dots, x_k) \mid x_1, \dots, x_k \in X \wedge k \in \Z_{\ge 0}\}$ to be the set of all strings with alphabet $X$. For a binary string $s \in X^*$, we let $s_i$ denote the $i^{\text{th}}$ symbol of $s$ and $\len s$ denote the length of $s$. For a string $s \in X^*$ and positive integers $a \leq b \le \len s$, let $s[a:b]$ denote the substring $(s_a, \ldots, s_b)$.

For a finite set $X$, we will use the notation $x \from X$ to denote a uniformly random sample $x$ from $X$. If $X$ is a set of $n$-dimensional column vectors, we will write $X^m$ to refer to the set of $n \times m$ matrices whose columns take values in $X$. Unless otherwise specified, vectors are assumed to be column vectors. For matrices $M \in \F_2^{m \times n}$, let $\ker M$ and $\img M$ denote the kernel and image of $M$ over $\F_2$, respectively.

We use $\log(x)$ to denote the logarithm base 2 of $x$, and $\ln(x)$ to denote the natural logarithm of $x$.

Let $\Ber(p)$ be the Bernoulli distribution on $\{0,1\}$ with expectation $p$. Let $\Ber(n, p)$ be the distribution on $n$-bit strings where each bit is an i.i.d sample from $\Ber(p)$.

We let $\circ$ denote the composition of functions, algorithms, or channels; that is, $f \circ g$ denotes (the function/algorithm/channel) obtained by applying $g$, then $f$.

Let $\secpar$ denote the security parameter. 
A function $f$ of $\secpar$ is \emph{negligible} if $f(\secpar) = O(\frac{1}{\poly})$ for every polynomial $\poly[\cdot]$.
We write $f(\secpar) \leq \negl$ to mean that $f$ is negligible.
We let $\approx$ denote computational indistinguishability and $\equiv$ denote statistical indistinguishability.

\begin{lemma}[Azuma's inequality] \label{theorem:azuma}
    Let $Z_0, \dots, Z_n$ be a martingale with respect to $X_0, \dots, X_n$. If the differences $\abs{Z_i - Z_{i-1}}$ are all bounded by $C$ with probability $1-\varepsilon$, then for all $t > 0$,
    \[
        \Pr[\abs{Z_n - Z_0} > t] \le \exp{\left(\frac{-t^2}{2n C^2}\right)} + \varepsilon.
    \]
\end{lemma}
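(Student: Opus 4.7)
The plan is to reduce to the standard (almost-surely bounded differences) version of Azuma by separating the analysis on the ``good'' event $A = \{|Z_i - Z_{i-1}| \le C \text{ for all } i \in [n]\}$ and its complement. By hypothesis $\Pr[A] \ge 1 - \varepsilon$, so the first step is to write
\[
    \Pr[|Z_n - Z_0| > t] \;\le\; \Pr[\{|Z_n - Z_0| > t\} \cap A] + \Pr[A^c] \;\le\; \Pr[\{|Z_n - Z_0| > t\} \cap A] + \varepsilon.
\]
It then suffices to bound the first term by $\exp(-t^2/(2nC^2))$ (possibly up to a factor of $2$, which is either absorbed into the stated bound or reflects a minor looseness).

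The core of the argument is to couple $(Z_i)$ with a martingale $(Y_i)$ adapted to the same filtration $\mathcal{F}_i = \sigma(X_0, \dots, X_i)$ such that $|Y_i - Y_{i-1}| \le C$ almost surely and $Y_i = Z_i$ for all $i$ whenever $A$ holds. One explicit construction: at each step, examine the conditional distribution of $Z_i - Z_{i-1}$ given $\mathcal{F}_{i-1}$, and couple $Y_i - Y_{i-1}$ to $Z_i - Z_{i-1}$ by equating them on the sub-event $\{|Z_i - Z_{i-1}| \le C\}$ and replacing the remaining mass by a mean-zero law supported in $[-C,C]$ (working on an enlarged probability space if needed so that the conditional mean of $Y_i - Y_{i-1}$ given $\mathcal{F}_{i-1}$ is exactly zero). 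By construction $Y$ is a martingale with almost-surely bounded increments, and on the event $A$ each step succeeds in the ``good'' branch, so $Y_n = Z_n$ there.

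Once this coupled martingale $(Y_i)$ is in hand, the standard Azuma--Hoeffding inequality gives $\Pr[|Y_n - Y_0| > t] \le 2\exp(-t^2/(2nC^2))$, and the inclusion $\{|Z_n - Z_0| > t\} \cap A \subseteq \{|Y_n - Y_0| > t\}$ yields the desired bound on the first term above. The main obstacle is in the second paragraph: constructing the coupled martingale $Y$ with the stated properties requires care, since the ``bad'' step is only identified after it occurs, so one cannot simply stop $Z$ before it. The stopped process $M_i = Z_{i \wedge \tau}$, where $\tau$ is the first index with $|Z_\tau - Z_{\tau-1}| > C$, is a martingale but still carries one potentially oversized increment at step $\tau$; replacing that single increment by a suitable bounded, mean-zero surrogate on an enriched probability space is the clean fix. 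Everything else is a direct application of Azuma and a union bound.
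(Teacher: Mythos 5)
The paper states this lemma in its preliminaries without proof, so there is no author argument to compare against; judged on its own merits, your coupling step fails, and the failure is unavoidable because the lemma is false as stated. If you require $Y_i - Y_{i-1} = Z_i - Z_{i-1}$ whenever $|Z_i - Z_{i-1}| \le C$ (as you must, so that $Y = Z$ on $A$), then writing $p_i := \Pr[|Z_i - Z_{i-1}| > C \mid \mathcal{F}_{i-1}]$ and $\mu_i := \E[Z_i - Z_{i-1} \mid |Z_i - Z_{i-1}| \le C,\ \mathcal{F}_{i-1}]$, the martingale property of $Y$ forces the replacement mass to have conditional mean $-(1-p_i)\mu_i/p_i$. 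Since $Z$ is itself a martingale, this quantity equals $\E[Z_i - Z_{i-1} \mid |Z_i - Z_{i-1}| > C,\ \mathcal{F}_{i-1}]$, which need not lie in $[-C,C]$ (if, say, the overshoot is always on one side, its magnitude exceeds $C$). No replacement law supported in $[-C,C]$ can then have the required mean, so your parenthetical ``working on an enlarged probability space if needed so that the conditional mean \dots is exactly zero'' is precisely where the argument breaks. The same obstruction defeats the stopped-process variant you mention: removing or modifying the one oversized increment at time $\tau$ while staying both bounded and conditionally mean-zero is not possible in general.

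This is not a repairable detail: the lemma as stated is false. Take $n$ i.i.d.\ increments $D_i$ with $D_i = M$ with probability $p = n^{-2}$ and $D_i = -pM/(1-p)$ otherwise; this is a martingale difference sequence. With $C := pM/(1-p)$, the good event $A = \{\forall i : |D_i| \le C\}$ has probability $(1-p)^n \ge 1 - 1/n$, so $\varepsilon = 1/n$ suffices. On $A$, every increment equals $-C$, so $Z_n - Z_0 = -nC$, which is about $-M/n$. Taking $t = M/(2n)$ gives $\Pr[|Z_n - Z_0| > t] \ge 1 - 1/n$, whereas $\exp(-t^2/(2nC^2)) + \varepsilon \approx e^{-n/8} + 1/n \to 0$. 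The point is that conditioning on the good event introduces a systematic drift that Azuma cannot absorb, and your coupling is, in effect, attempting to argue that drift away. Correct versions of this folklore bound either additionally assume the truncated increments remain (approximately) conditionally mean-zero, or stop the martingale at the first bad step and carry an explicit drift correction. (For what it is worth, the paper's only application of the lemma, in the proof of \Cref{lemma:truncated-surprisal}, has superpolynomially small bad probability and hence negligible accumulated drift, so the conclusion drawn there survives even though the lemma is wrong in general.)
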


\begin{lemma}[Chernoff bounds] \label{theorem:chernoff}
    Let $X_1, \ldots, X_n \in [0,1]$ be independent random variables. Let $\mu = \E\left[\sum_{i=1}^n X_i \right]$. Then for any $\delta \in (0,1)$:
    \begin{align*}
        \Pr \left[\sum_{i=1}^n X_i \geq (1+\delta)\mu \right] &\leq \exp \left(-\frac{\mu \delta^2}{3} \right) \text{ and}\\
        \Pr \left[\sum_{i=1}^n X_i \leq (1-\delta)\mu \right] &\leq \exp \left(-\frac{\mu \delta^2}{2} \right).
    \end{align*}
\end{lemma}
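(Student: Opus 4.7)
The plan is to prove both tail bounds by the standard Chernoff/moment-generating-function argument, deducing the result from Markov's inequality applied to $e^{t S}$ where $S = \sum_{i=1}^n X_i$.

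For the upper tail, I would fix $t > 0$ and write
\[
\Pr[S \ge (1+\delta)\mu] \;=\; \Pr\!\left[e^{tS} \ge e^{t(1+\delta)\mu}\right] \;\le\; \frac{\E[e^{tS}]}{e^{t(1+\delta)\mu}} \;=\; \frac{\prod_{i=1}^n \E[e^{tX_i}]}{e^{t(1+\delta)\mu}},
\]
using Markov's inequality and independence of the $X_i$. To bound $\E[e^{tX_i}]$, I would use the fact that $X_i \in [0,1]$ together with convexity of $x \mapsto e^{tx}$: for $x \in [0,1]$, $e^{tx} \le 1 + x(e^t-1)$, so $\E[e^{tX_i}] \le 1 + p_i(e^t-1) \le \exp(p_i(e^t-1))$, where $p_i = \E X_i$. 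Taking the product and using $\sum p_i = \mu$ gives $\E[e^{tS}] \le \exp(\mu(e^t-1))$. Optimizing the resulting bound $\exp(\mu(e^t-1) - t(1+\delta)\mu)$ over $t > 0$ at $t = \ln(1+\delta)$ yields the standard expression $\bigl(e^\delta/(1+\delta)^{1+\delta}\bigr)^\mu$.

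The last step is the analytic inequality $(1+\delta)\ln(1+\delta) - \delta \ge \delta^2/3$ for $\delta \in (0,1)$, which converts the optimized form into $\exp(-\mu\delta^2/3)$. I would verify this by comparing Taylor series at $\delta = 0$ (both sides and their first two derivatives agree at $0$, and a direct second-derivative comparison shows the inequality for $\delta \in (0,1)$). For the lower tail, the argument is symmetric with $t < 0$: setting $t = -\ln\!\bigl(1/(1-\delta)\bigr)$ optimizes $\exp(\mu(e^t-1) + t(1-\delta)\mu)$ and gives the bound $\bigl(e^{-\delta}/(1-\delta)^{1-\delta}\bigr)^\mu$, which is bounded by $\exp(-\mu\delta^2/2)$ via the analogous inequality $(1-\delta)\ln(1-\delta) + \delta \ge \delta^2/2$ for $\delta \in (0,1)$.

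The main obstacle is essentially bookkeeping: the probabilistic part (Markov plus the MGF bound $\E[e^{tX}] \le \exp(p(e^t-1))$) is routine, while getting the constants $1/3$ and $1/2$ correct requires the two scalar inequalities above. Since this lemma is stated only as a preliminary and the conclusion is a completely standard form of Chernoff's bound for sums of $[0,1]$-valued independent variables, I would not expect any conceptual difficulty beyond these calculus estimates.
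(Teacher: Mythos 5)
The paper states this Chernoff bound as a standard preliminary fact (\Cref{theorem:chernoff}) without giving any proof, so there is no paper proof to compare against. Your standard MGF/Markov derivation is correct: the convexity bound $e^{tx}\le 1 + x(e^t-1)$ on $[0,1]$ gives $\E[e^{tS}]\le\exp(\mu(e^t-1))$, optimizing at $t=\ln(1+\delta)$ (resp.\ $t=\ln(1-\delta)$) yields the Poissonized forms, and the two scalar inequalities $(1+\delta)\ln(1+\delta)-\delta\ge\delta^2/3$ and $(1-\delta)\ln(1-\delta)+\delta\ge\delta^2/2$ hold on $(0,1)$, the latter by noting $g''(\delta)=\delta/(1-\delta)\ge 0$ with $g(0)=g'(0)=0$.

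One small imprecision: you claim the two sides of $(1+\delta)\ln(1+\delta)-\delta\ge\delta^2/3$ ``and their first two derivatives agree at $0$,'' but the second derivatives are $1$ and $2/3$, which do not agree; they agree only to first order. That actually makes the bound easier, since the second derivative of the left side dominates near $0$, but you still need to check the inequality persists up to $\delta=1$ (it does: at $\delta=1$ the left side is $2\ln 2-1\approx 0.386 > 1/3$, and the difference $f(\delta)$ has $f''(\delta)=1/(1+\delta)-2/3$, which is positive for $\delta<1/2$ and negative after, so $f$ is concave on $(1/2,1)$ and nonnegative at both endpoints). This is routine but worth spelling out if you want a complete proof rather than a sketch.
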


Let $\Hyp(N, K, n)$ denote the hypergeometric distribution with a population of size $N$, with $K$ success elements, and $n$ draws.
That is, a random variable $X \sim \Hyp(N, K, n)$ is the number of success elements contained in $n$ uniform draws from the population without replacement.

\begin{lemma}[Hypergeometric tail bounds \cite{hoeffding1994probability}] \label{theorem:hyper}
Let $X \sim \Hyp(N,K,n)$ and $p = K/N$. Then for any $0 < t < K/N$,
\begin{align*}
    \Pr \left[X \leq (p-t)n \right] &\leq e^{-2t^2 n}, \text{ and}\\
    \Pr \left[X \geq (p+t)n \right] &\leq e^{-2t^2 n}.
\end{align*}
\end{lemma}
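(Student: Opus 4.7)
The plan is to follow Hoeffding's classical approach from the cited paper: express $X$ as a sum of indicator variables, then reduce to the standard Chernoff bound for independent Bernoullis via a convexity argument that absorbs the dependence introduced by sampling without replacement.

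First, I would write $X = \sum_{i=1}^n Y_i$, where $Y_i \in \{0,1\}$ is the indicator that the $i$-th draw is a success. The $Y_i$ are not independent, but by symmetry each has marginal distribution $\Ber(p)$ with $p = K/N$, so $\E[X] = np$. The goal is to obtain the standard Chernoff-type tail bound $\Pr[X \geq (p+t) n] \leq e^{-2 t^2 n}$ in spite of the lack of independence.

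The key technical step is Hoeffding's convexity lemma: for every continuous convex function $\phi : \R \to \R$,
\[
    \E[\phi(X)] \;\leq\; \E[\phi(X')],
\]
where $X'$ is a sum of $n$ i.i.d.\ $\Ber(p)$ variables. Hoeffding proves this by representing the without-replacement sample as the first $n$ entries of a uniformly random permutation of the population (which has $K$ marked elements), and then applying a swap/averaging argument using exchangeability to show that replacing without-replacement sampling by with-replacement sampling can only increase the expectation of any convex functional of the sum.

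Given the convexity lemma, I would apply it with $\phi(x) = e^{s x}$ for $s > 0$ to obtain $\E[e^{sX}] \leq (1 - p + p e^s)^n$. Markov's inequality then yields
\[
    \Pr[X \geq (p+t) n] \;\leq\; e^{-s(p+t)n} (1 - p + p e^s)^n,
\]
and combining this with Hoeffding's elementary inequality $\ln(1 - p + p e^s) \leq p s + s^2 / 8$ gives $\Pr[X \geq (p+t) n] \leq \exp(-s t n + s^2 n / 8)$. Optimizing at $s = 4 t$ produces the claimed bound $e^{-2 t^2 n}$. The lower-tail bound follows symmetrically by applying the same argument to $n - X \sim \Hyp(N, N-K, n)$.

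The main obstacle is the convexity lemma itself, since the MGF and Markov steps are routine once it is available. The swapping argument underlying the convexity lemma is delicate but entirely classical, and since this is a preliminary lemma used only as a black box in the sequel, citing \cite{hoeffding1994probability} is sufficient.
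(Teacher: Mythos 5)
The paper states this lemma as a preliminary and gives no proof, only the citation to Hoeffding; there is therefore no in-paper argument to compare against. Your reconstruction of Hoeffding's argument is correct: the reduction from without-replacement to with-replacement sampling via the convex-ordering lemma, the exponential Markov bound with $\ln(1-p+pe^s)\leq ps+s^2/8$, the optimization at $s=4t$, and the lower tail via $n-X\sim \Hyp(N,N-K,n)$ are exactly the standard route, and this is the proof the citation is pointing to.
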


\subsection{Cryptography preliminaries}
\paragraph{Pseudorandom function (PRF).} Let $\mathcal{F} = \{F_{\sk} : \{0,1\}^{\ell_1(\secpar)} \to \{0, 1\}^{\ell_2(\secpar)} \ | \ \sk \in \{0,1\}^\secpar\}$ be a family of functions. 
$\mathcal{F}$ is a PRF if $F_\sk$ is efficiently computable and for all polynomial-time distinguishers $D$,
\[\left| \Pr_{\sk \gets \{0,1\}^\secpar}\left[D^{F_\sk(\cdot)}(1^\secpar) = 1\right] - \Pr_{f} \left[D^{f(\cdot)}(1^\secpar) = 1 \right]\right| \leq \negl.\]
where $f$ denotes a random function from $\{0,1\}^{\ell_1(\secpar)}$ to $\{0,1\}^{\ell_2(\secpar)}$.

\paragraph{Digital signature scheme.}
We use the definition of a digital signature from \cite{katzlindell}, with small modifications. 
A \emph{digital signature scheme} is defined over a message space $\calM$ and consists of polynomial-time algorithms $(\Gen, \Sign, \Vrfy)$ such that:
\begin{itemize}[leftmargin=1.5cm]
    \item[$\Gen:$] takes as input a security parameter $1^\secpar$ and outputs a public-private key pair $(\pk, \sk)$.
    \item[$\Sign:$] takes as input a private key $\sk$ and a message $\m \in \calM$. It outputs a signature $\sigma$, which we write as $\sigma \gets \Sign_\sk(\m)$.
    \item[$\Vrfy:$] takes as input a public key $\pk$, a message $\m$, and a signature $\sigma$. It outputs 1 if the signature is valid and 0 otherwise. We write $b := \Vrfy_\pk(\m, \sigma)$.
\end{itemize}
It is required that except with negligible over $(\pk, \sk)$ output by $\Gen(1^\secpar)$, it holds that $\Vrfy_\pk(\m, \Sign_\sk(\m)) = 1$ for every $\m \in \calM$.

\begin{figure}[t]
    \centering
    \hfpages{
        0.5
    }{
        \underline{$\SigForge_{\adv, \Pi}(\secpar)$}\\
        $(\pk, \sk) \gets \Gen(1^\secpar)$\\
        $(\m, \sigma) \gets \adv^{\Sign_{\sk}(\cdot)}(\pk)$\\
        Let $\calQ$ denote the set of queries made by $\adv$ to $\Sign_\sk(\cdot)$.\\
        \texttt{return} $\Vrfy_\pk(\m, \sigma) \land \m \notin \calQ$
    }
    \caption{The signature forgery experiment $\SigForge_{\adv, \Pi}(\secpar)$\label{fig:sigforge}}
\end{figure}

\begin{definition*}[Security of a digital signature scheme.]
    A digital signature scheme $\Pi = (\Gen, \Sign, \Vrfy)$ is \emph{existentially unforgeable under an adaptive chosen-message attack}, or just \emph{secure}, if for all polynomial-time adversaries $\adv$, 
    $$\Pr[\SigForge_{\adv,\Pi}(\secpar) = 1] \leq \negl$$
    where the experiment $\SigForge_{\adv,\Pi}(\secpar)$ is defined in \Cref{fig:sigforge}.
\end{definition*}

\subsection{Coding theory preliminaries}
A \emph{channel} is a randomized map $\calE : \Sigma^* \to \Sigma^*$. That is, for $x \in \Sigma^*$, $\calE(x)$ is a random sample from $\Sigma^*$. We use channels to model errors introduced by the environment, or by an adversary attempting to e.g. remove a watermark. Two of the most important channels we consider are the \emph{binary symmetric channel} $\BSC$ and the \emph{binary deletion channel} $\BDC$:
\begin{itemize}
    \item $\BSC_p : \{0,1\}^* \to \{0,1\}^*$ is the binary symmetric channel with error rate $p \in (0,1)$. That is, $\BSC_p(x) = x \oplus e$ where $e \from \Ber(\len x, p)$.
    \item $\BDC_q : \{0,1\}^* \to \{0,1\}^*$ is the binary deletion channel with deletion rate $q \in (0,1)$. That is, $\BDC_p(x)$ randomly deletes each bit $x_i$ independently with probability $q$.
\end{itemize}

For the purposes of this work, an error-correcting code with robustness to the channel $\calE$ is a pair of algorithms $(\enc,\dec)$ where $\enc, \dec : \Sigma^* \to \Sigma^*$. Error-correction (or robustness) for the channel $\calE$ says that if $x \from \enc(\m)$, then $\dec(\calE(x)) = \m$.
The \emph{block length} of an error correcting code is the number of symbols in a codeword required to encode a message of a particular length.
We therefore write the block length $n = n(k)$ as a function of the message length $k$.
The \emph{rate} of a code is the function $k \mapsto k / n(k)$, which may or may not depend on the message length $k$.

\newpage

\section{Pseudorandom code basics} \label{sec:prc-basics}
\subsection{Definitions} \label{subsec:prc-defs}
In this section we define pseudorandom codes (PRCs)\footnote{An unrelated notion of pseudorandom codes was defined by \cite{kolesnikov2016efficient}. Their notion does not require efficient decoding or pseudorandomness of codewords; rather, they require only that codewords are far apart.} and related terminology. 
A PRC can be viewed as a family\footnote{We remark that if a PRC were a fixed code rather than a family of codes, pseudorandomness against non-uniform adversaries would be impossible since the adversary could have the decoding key hard-coded.} of error-correcting codes indexed by encoding and decoding keys.
For secret-key PRCs, the encoding and decoding keys are identical; for public-key PRCs, the encoding key is public and the decoding key is secret.

Formally, a PRC is specified by three algorithms. A key generation function samples the keys.
An encoding function takes as input the encoding key and a message, and it outputs a codeword.
A decoding function takes as input the decoding key and a perturbed codeword, and it outputs the message or $\bot$.

Our error correction guarantee is defined in terms of a channel.
A PRC is robust to a channel $\calE$ if $\Decode$ can recover $\m$ from $\calE(\Encode(\m))$ with overwhelming probability.
In addition to requiring that we can recover messages from noisy codewords, we also require that $\Decode$ outputs $\bot$ given a string that is unrelated to the code.
That is, for any string $c$, $\Decode(c)$ outputs $\bot$ with overwhelming probability over the choice of the decoding key.
This property is important for applications such as watermarking, where we want the ability to distinguish codewords from uniformly random strings.

Pseudorandomness of the PRC ensures that an adversary without knowledge of the secret key cannot distinguish between an oracle for the $\Encode$ algorithm of the scheme, or an oracle that outputs independently drawn uniform strings.
If a PRC is viewed as an encryption scheme, pseudorandomness is equivalent to indistinguishability from random bits against a chosen plaintext attack (IND-\$CPA security) \cite{rogaway2003ocb}.
For public-key PRCs, pseudorandomness holds even against adversaries holding the encryption key.

\begin{definition}[Secret-key PRC] \label{def:skPRC}
    Let $\Sigma$ be a fixed alphabet. A \emph{secret-key pseudorandom error-correcting code} (abbreviated as secret-key PRC) with robustness to a channel $\calE : \Sigma^* \to \Sigma^*$ is a triple of polynomial-time randomized algorithms $(\KeyGen, \Encode, \Decode)$ satisfying
    \begin{itemize}
        \item (Syntax) There exist functions $\ell, n, k : \N \to \N$ such that for all $\secpar \in \N$, $\KeyGen(\secparam) \in \{0,1\}^{\ell(\secpar)}$, $\Encode : \{\secparam\} \times \{0,1\}^{\ell(\secpar)} \times \Sigma^{k(\secpar)} \to \Sigma^{n(\secpar)}$, and $\Decode : \{\secparam\} \times \{0,1\}^{\ell(\secpar)} \times \Sigma^* \to \Sigma^{k(\secpar)} \cup \{\bot\}$.
        \item (Error correction, or robustness) For any $\secpar \in \N$ and any message $\m \in \Sigma^{k(\secpar)}$,
        \[
            \Pr_{\sk \from \KeyGen(\secparam)}[\Decode(\secparam,\sk,\calE(x)) = \m : x \from \Encode(\secparam,\sk,\m)] \geq 1-\negl.
        \]
        \item (Soundness) For any fixed $c \in \Sigma^*$,
        \[
            \Pr_{\sk \from \KeyGen(\secparam)}[\Decode(\secparam,\sk,c) = \bot] \geq 1 - \negl.
        \]
        \item (Pseudorandomness) For any polynomial-time adversary $\adv$,
        \[
            \abs{\Pr_{\sk \from \KeyGen(\secparam)}[\adv^{\Encode(\secparam,\sk, \cdot)}(\secparam) = 1] - \Pr_{\calU}[\adv^{\calU} (\secparam) = 1]} \leq \negl,
        \]
        where $\adv^{\calU}$ means that the adversary has access to an oracle that, on any (even previously queried) input, responds with a freshly drawn uniform value in $\Sigma^{n(\secpar)}$. 
    \end{itemize}
\end{definition}

\begin{definition}[Public-key PRC] \label{def:pkPRC}
    Let $\Sigma$ be a fixed alphabet. A \emph{public-key pseudorandom error-correcting code} (abbreviated as public-key PRC) with robustness to a channel $\calE : \Sigma^* \to \Sigma^*$ is a triple of polynomial-time randomized algorithms $(\KeyGen, \Encode, \Decode)$ satisfying
    \begin{itemize}
        \item (Syntax) There exist functions $\ell_\dec, \ell_\enc, n, k : \N \to \N$ such that for all $\secpar \in \N$, $\KeyGen(\secparam) \in \{0,1\}^{\ell_\dec(\secpar)} \times \{0,1\}^{\ell_\enc(\secpar)}$, $\Encode : \{\secparam\} \times \{0,1\}^{\ell_\enc(\secpar)} \times \Sigma^{k(\secpar)} \to \Sigma^{n(\secpar)}$, and $\Decode : \{\secparam\} \times \{0,1\}^{\ell_\dec(\secpar)} \times \Sigma^* \to \Sigma^{k(\secpar)} \cup \{\bot\}$.
        \item (Error correction, or robustness) For any $\secpar \in \N$ and any message $\m \in \Sigma^{k(\secpar)}$,
        \begin{align*}
            \Pr_{(\sk,\pk) \from \KeyGen(\secparam)}[\Decode(\secparam,\sk,\calE(x)) = \m : x \from \Encode(\secparam,\pk,\m)] \geq 1-\negl.
        \end{align*}
        \item (Soundness) For any fixed $c \in \Sigma^*$,
        \[
            \Pr_{(\sk,\pk) \from \KeyGen(\secparam)}[\Decode(\secparam,\sk,c) = \bot] \geq 1 - \negl.
        \]
        \item (Pseudorandomness) For any polynomial-time adversary $\adv$,
        \ifeprint
        \[
            \abs{\Pr_{(\sk,\pk) \from \KeyGen(\secparam)}[\adv^{\Encode(\secparam,\pk, \cdot)}(\secparam,\pk) = 1] - \Pr_{\substack{(\sk,\pk) \from \KeyGen(\secparam) \\ \calU}}[\adv^{\calU}(\secparam,\pk) = 1]} \leq \negl,
        \]
        \else 
        \begin{multline*}
            \Bigg|\Pr_{(\sk,\pk) \from \KeyGen(\secparam)}[\adv^{\Encode(\secparam,\pk, \cdot)}(\secparam,\pk) = 1]\\
            - \Pr_{\substack{(\sk,\pk) \from \KeyGen(\secparam)\\ \calU}}[\adv^{\calU}(\secparam,\pk) = 1]\Bigg| \leq \negl,
        \end{multline*}
        \fi
        where $\adv^{\calU}$ means that the adversary has access to an oracle that, on any (even previously queried) input, responds with a freshly drawn uniform value in $\Sigma^{n(\secpar)}$.
    \end{itemize}
\end{definition}

The \emph{block length} of a (secret-key or public-key) PRC is $n(\secpar)$ and the \emph{message length} is $k(\secpar)$. The \emph{rate} is the function $\secpar \mapsto k(\secpar) / n(\secpar)$. We often drop the dependence on $\secpar$ when it is clear from context.

For both secret-key and public-key PRCs, if there is only one possible message (i.e. $k(\secpar) = 0$), then we say that the scheme is a \emph{zero-bit} PRC.

\begin{remark*}
    We present our constructions of PRCs for messages of \emph{fixed} lengths. That is, for a given set of keys, the construction will only work for messages of a fixed length. However, we note that this can be easily remedied using a pseudorandom function (PRF) to select new keys for every message length.
    We do not include the PRF in our constructions in order to simplify presentation.
\end{remark*}

\ifeprint
\subsection{Heuristic construction from permuted codes} \label{subsec:permuted-codes}
In this section we describe a natural \emph{heuristic} transformation for building a candidate secret-key PRC from any error-correcting code. In \Cref{sec:ldpc-prcs} we will give provably secure constructions of PRCs from standard (subexponential) cryptographic assumptions.

Our heuristic construction can be applied to any binary error-correcting code, such as the polar code \cite{Ari09}. The secret key will be a random permutation of the indices of the codewords. To encode a message, we encode it using the error-correcting code, then apply the secret permutation, and finally add a small amount of random Bernoulli noise.

There are two drawbacks of this generic permuted code construction relative to our pseudorandom LDPC codes:
\begin{itemize}
    \item In general the pseudorandomness of a permuted code is based on non-standard, ad-hoc conjectures. For certain codes, such as Gallager's LDPC ensemble \cite{Gal62}, the permuted code construction is \emph{not} pseudorandom.
    \item Permuted codes are inherently secret-key PRCs, whereas our pseudorandom LDPC codes are public-key.
\end{itemize}

For any error-correcting code $(\enc, \dec)$ and parameter $\eta > 0$, we formally define the corresponding permuted code $\PermPRC_\eta[\enc,\dec]$ as follows. Let $n = n(\cdot)$ be the block length for $(\enc, \dec)$, as a function of the message length. (Recall that the block length is the number of codeword symbols needed to encode a given message.)
$\PermPRC_\eta[\enc,\dec]$ will encode messages of length $k$ into codewords of length $n = n(k+\secpar)$, where $\secpar$ is a security parameter.

\begin{construction}[{$\PermPRC_\eta[\enc,\dec]$}]
\begin{itemize}
    \item[]
    \item Let $\secpar$ be a security parameter, let $k$ be the length of messages we wish to encode, and let $n = n(k+\secpar)$.
    \item $\KeyGen(1^\secpar)$: Sample a random permutation $\pi : [n] \to [n]$ and output $\sk = \pi$.
    \item $\Encode(\sk, \m)$ for $\m \in \Sigma^{k}$:
    \begin{enumerate}
        \item Sample a random string $r \from \Sigma^\secpar$ and a noise vector $e \from \Ber(n,\eta)$.
        \item Compute $c = \enc(r || \m) \oplus e$.
        \item Output $(c_{\pi(1)} || \cdots || c_{\pi(n)})$.
    \end{enumerate}
    \item $\Decode(\sk, x)$ for $x \in \Sigma^n$:
    \begin{enumerate}
        \item Compute $c = (x_{\pi^{-1}(1)} || \cdots || x_{\pi^{-1}(n)})$.
        \item Compute $\m' = \dec(c)$.
        \item Output the last $k$ symbols of $\m'$. 
    \end{enumerate}
\end{itemize}
\end{construction}
A permuted code has the same robustness to substitutions, and nearly the same rate, as $(\enc,\dec)$.

A natural choice of error-correcting codes to use in this permuted code construction is polar codes \cite{Ari09}. Since polar codes have linear rate and tolerate a constant rate of adversarial errors, permuted polar codes are a candidate linear-rate secret-key PRC with robustness to a constant rate of adversarial errors. We do not know whether such codes satisfy pseudorandomness.
\else
\fi

\newpage

\section{Constructing pseudorandom codes from cryptographic assumptions} \label{sec:ldpc-prcs}
\ifeprint
 In this section, we introduce a public-key PRC based on LDPC codes. We present the zero-bit version of our scheme, $\LDPCPRC_0$, in \Cref{subsec:ldpc-prc-construction}; we will see in \Cref{sec:improved-prcs} that this immediately implies a many-bit scheme with essentially the same robustness. In \Cref{subsec:codeword-detection} we show that $\LDPCPRC_0$ is robust to every error channel of bounded weight, as long as the parity checks have sufficiently low weight. Finally, we prove pseudorandomness of $\LDPCPRC_0$ in two different parameter regimes under different cryptographic assumptions: In \Cref{subsec:planted-xor}, we prove pseudorandomness under LPN and a certain planted XOR assumption; in \Cref{subsec:subexp-lpn}, we prove pseudorandomness under a subexponential-query variant of LPN.
\else 
 Our main contribution in this section is the introduction of a public-key PRC based on LDPC codes. 
 To build intuition, in \Cref{subsec:permuted-codes} we first present a heuristic construction of a secret-key PRC without provable pseudorandomness.
 We then present the zero-bit version of our pseudorandom LDPC code-based scheme, $\LDPCPRC_0$, in \Cref{subsec:ldpc-prc-construction}; we will see in \Cref{sec:improved-prcs} that this immediately implies a many-bit scheme with essentially the same robustness. In \Cref{subsec:codeword-detection} we show that $\LDPCPRC_0$ is robust to every error channel of bounded weight, as long as the parity checks have sufficiently low weight. Finally, we prove pseudorandomness of $\LDPCPRC_0$ in two different parameter regimes under different cryptographic assumptions: In \Cref{subsec:planted-xor}, we prove pseudorandomness under LPN and a certain planted XOR assumption; in \Cref{subsec:subexp-lpn}, we prove pseudorandomness under a subexponential-query variant of LPN.

\fi

\subsection{The zero-bit construction} \label{subsec:ldpc-prc-construction}
Let
\[
    \calS_{t,n} = \{s \in \F_2^n : \wt(s) = t\}
\]
be the set of all $t$-sparse vectors in $\F_2^n$, and
\[
    \calS_{t,r,n} = \{P \in \F_2^{r \times n} : \wt(P_{i,:}) = t\ \forall i \in [r]\}
\]
be the set of all $t$-row-sparse matrices in $\F_2^{r \times n}$.

Our zero-bit pseudorandom LDPC codes are parameterized by a public generator matrix $G \in \F_2^{n \times g}$ and a secret parity-check matrix $P \in \F_2^{r \times n}$. The sampling process for these matrices is described in \Cref{def:random-ldpc}.

\begin{definition}[Random LDPC code, {$\LDPC[n,g,t,r]$}] \label{def:random-ldpc}
    \sloppy
    For $n, g, t, r \in \N$, define the distribution $\LDPC[n,g,t,r]$ over $\F_2^{r \times n} \times \F_2^{n \times g}$ as follows:
    \begin{enumerate}
        \item[] $\LDPC[n,g,t,r]$:
        \item Sample $P \from \calS_{t,r,n}$, i.e. $P \in \F_2^{r \times n}$ is chosen to have i.i.d random $t$-sparse rows.
        \item Sample $G \from (\ker P)^g$, i.e. $G \in \F_2^{n \times g}$ is a random matrix subject to $PG = 0$.
        \item Output $(P,G)$.
    \end{enumerate}
    An $(n,g,t,r)$ \emph{random LDPC code} is a pair of matrices $(P, G) \from \LDPC[n,g,t,r]$.
\end{definition}

The focus of this section will be on the following \emph{zero-bit} PRC. Recall that a zero-bit PRC is one whose message space is just $\{1\}$. We will see in \Cref{subsec:constant-rate-prcs} that a constant-rate PRC can be generically constructed from any zero-bit PRC.
\ifeprint
\begin{construction}[Zero-bit public-key pseudorandom LDPC code, {$\LDPCPRC_0[n,g,t,r,\eta,\zeta]$}] \label{const:zero-bit-ldpc-prc}
\else 
\begin{construction}[Zero-bit public-key LDPC PRC, {$\LDPCPRC_0[n,g,t,r,\eta,\zeta]$}] \label{const:zero-bit-ldpc-prc}
\fi 
    Let $n, g, t, r : \mathbb{N} \to \mathbb{N}$ and $\eta, \zeta : \N \to [0,1/2)$ be efficiently-computable functions of the security parameter. We define $\LDPCPRC_0[n,g,t,r,\eta,\zeta]$ by the following algorithms, where we leave the dependence of $n,g,t,r,\eta,\zeta$ on $\secpar$ implicit:
    \begin{itemize}
        \item $\KeyGen(\secparam)$: Sample $(P,G) \from \LDPC[n,g,t,r]$ and $z \gets \F_2^n$. Output $(\sk = (P,z), \pk = (G,z))$.
        \item $\Encode(\secparam,(G,z))$: Sample $u \from \F_2^g$, $e \from \Ber(n,\eta)$. Output $Gu \oplus z \oplus e$.
        \item $\Decode(\secparam,(P,z),x)$: If $\wt(Px \oplus Pz) < \left(\frac{1}{2} - \zeta\right) \cdot r$, output 1; otherwise output $\bot$.
    \end{itemize}
\end{construction}

For the remainder of this section, we will identify the security parameter $\secpar$ with the dimension of the code $n$. We will therefore write $g, t, r, \eta, \zeta$ as functions of $n$, with the understanding that $n(\secpar) = \secpar$.

\subsection{Codeword detection (zero-bit decoding)} \label{subsec:codeword-detection}
We say that a length-preserving binary channel $\calE : \{0,1\}^* \to \{0,1\}^*$ is \emph{$p$-bounded} if there exists a negligible function $\negl[\cdot]$ such that for all $n \in \N$, $\Pr_{x \from \{0,1\}^n}[\wt(\calE(x) \oplus x) > p n] \leq \negl[n]$.

\begin{lemma} \label{lemma:zero-bit-ldpc-decoding}
    For any $p, \eta \in [0,1/2)$ and $\varepsilon \in (0,1)$, there exits $\delta > 0$ such that the following holds. For any $t \le \delta \log n$, $g > 0$, and $n^{\varepsilon} \le r \le 0.99 n$, $\LDPCPRC_0[n,g,t,r,\eta,r^{-1/4}]$ is robust to every $p$-bounded channel.
\end{lemma}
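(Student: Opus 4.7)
The plan is to verify the two properties required by the PRC robustness definition---error correction on encoded messages passed through the channel, and soundness on any fixed string---and in both cases the key device will be the one-time pad $z$, which renders the arguments of the decoder's parity computation (and of the channel's error function) uniformly random and independent of the parity-check matrix $P$.

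Soundness is the shorter direction. For a fixed $c \in \F_2^n$ the decoder evaluates $\wt(Pc \oplus Pz) = \wt(P(c \oplus z))$, and since $z$ is uniform and independent of $P$, the vector $y := c \oplus z$ is uniform in $\F_2^n$ and independent of $P$. Invoking the full-row-rank property of $P$ established in \Cref{subsec:subexp-lpn}, $Py$ is then uniform in $\F_2^r$, and a Chernoff bound gives $\wt(Py) \ge (1/2 - r^{-1/4}) r$ with probability $1 - \exp(-\Omega(r^{1/2})) = 1 - \negl[n]$, using $r \ge n^{\varepsilon}$.

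For error correction, let $x = Gu \oplus z \oplus e$ be a codeword and $x' = \calE(x)$ its image under the channel. Writing $e''(x) := \calE(x) \oplus x$ and using $PG = 0$, the decoder's statistic simplifies to $\wt(Px' \oplus Pz) = \wt(P(e \oplus e''(x)))$. Because $z$ is a uniform and independent one-time pad, $x$ is marginally uniform in $\F_2^n$ and independent of $(P, G, u, e)$, so $p$-boundedness of $\calE$ gives $\wt(e''(x)) \le pn$ with overwhelming probability. Conditional on $e''(x)$, the coordinates of $v := e \oplus e''(x)$ are independent Bernoullis with mean at most $\eta^* := \eta + (1 - 2\eta)p$, which is a constant strictly less than $1/2$ since $\eta, p < 1/2$; a Chernoff bound then yields $\wt(v) \le \eta'' n$ for some constant $\eta'' \in (\eta^*, 1/2)$ with all but negligible probability.

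It remains to bound $\wt(Pv)$. Since $v$ is a function of $(e, x)$ and channel randomness alone, all independent of $P$, we may condition on any particular $v$ of weight $w \le \eta'' n$ and treat $P$ as fresh. Each row $(Pv)_i$ is then the parity of a $\Hyp(n, w, t)$ variable, and comparing with sampling with replacement gives $\Pr[(Pv)_i = 1] \le \tfrac{1}{2} - \tfrac{1}{2}(1 - 2\eta'')^t + O(t^2/n)$. Picking $\delta > 0$ small enough that $\delta \log(1/(1-2\eta'')) < \varepsilon/4$ makes $(1-2\eta'')^t = n^{-\delta \log(1/(1-2\eta''))}$ exceed $4 r^{-1/4}$ for all sufficiently large $n$ (and the $O(t^2/n) = O(\log^2 n / n)$ term is negligible in comparison), so $\Pr[(Pv)_i = 1] \le 1/2 - 2 r^{-1/4}$. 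The $r$ rows of $P$ are i.i.d., so the coordinates of $Pv$ are independent Bernoullis of that mean, and a final Chernoff bound gives $\wt(Pv) < (1/2 - r^{-1/4}) r$ with probability $1 - \exp(-\Omega(r^{1/2})) = 1 - \negl[n]$. The main delicacy in the proof is keeping the chain of independences straight---in particular the fact that the $z$-reparameterization makes $x$ and $e$ jointly independent of $P$, which is what prevents the channel error $e''(x)$ from depending adversarially on $P$ through $e$.
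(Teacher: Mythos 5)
Your proof is correct and takes essentially the same approach as the paper's (one-time-pad decoupling, bound the weight of $v = e \oplus e''(x)$, then a per-row parity-bias bound plus a Chernoff bound over the $r$ independent rows); the paper factors the parity-bias computation into a separate lemma (\Cref{lemma:codeword-detection}), proved via an inductive telescoping bound on $\E[Y_1\cdots Y_t]$, whereas you sketch it via a comparison with sampling with replacement, which is the same idea but would need a few more lines to be rigorous. One local slip: conditional on $e''(x)$, the coordinates of $v$ are independent but have means $\eta$ or $1-\eta$, not ``at most $\eta^* = \eta + (1-2\eta)p$'' individually --- what the Chernoff bound actually requires, and what your argument correctly supplies, is that $\E[\wt(v) \mid e''(x)] \le \eta^* n$, so the conclusion stands.
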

\begin{proof}
    \Cref{lemma:codeword-nondetection} shows that any fixed string $c \in \F_2^n$ decodes to $\bot$ with probability $1-\negl[n]$. It remains to show that codewords subject to any $p$-bounded channel decode to $1$.
    
    Let $\calE$ be any $p$-bounded channel. We need to show that $\wt(Px \oplus Pz) < \left(\frac{1}{2} - r^{-1/4}\right) r$ with probability $1-\negl[n]$ over $(P,G) \from \LDPC[n,g,t,r], u \from \F_2^g, z \from \F_2^n, e \from \Ber(n,\eta), x \from \calE(Gu \oplus z \oplus e)$.
    
    We will show that there exists a constant $\alpha \in (0,1/2)$ such that $\wt(Gu \oplus z \oplus x) \le \left(\frac{1}{2} - \alpha\right) n$ with probability $1-\negl[n]$. Then we can apply \Cref{lemma:codeword-detection} with $y = Gu \oplus z \oplus x$ to see that $\wt(P y) = \wt(Px \oplus Pz) < \left(\frac{1}{2} - r^{-1/4}\right) r$ with probability $1-\negl[n]$ for appropriate choices of $t,r$.

    Let $c = Gu \oplus z \oplus e$. Then
    \begin{align*}
        y &= Gu \oplus z \oplus x \\
        &= Gu \oplus z \oplus \calE(Gu \oplus z \oplus e) \\
        &= e \oplus c \oplus \calE(c),
    \end{align*}
    where $c$ is uniformly distributed because of $z$. Let $e' = c \oplus \calE(c)$, so that $y = e \oplus e'$. Then since $e \from \Ber(n,\eta)$ is independent from $e'$, we have
    \begin{align*}
        \E[\wt(y)] &= \E[\wt(e \oplus e')] \\
        &= (1-\eta) \E[\wt(e')] + \eta \E[n-\wt(e')] \\
        &\le (1-\eta) p n + \eta (1-p) n + \negl[n]
    \end{align*}
    where the inequality holds because $\calE$ is $p$-bounded, so $\wt(e') \le p n$ with probability $1-\negl[n]$. Conditioned on $\wt(e') \le p n$, a Chernoff bound over the random choice of $e$ implies that
    \[
        \wt(e \oplus e') \le \left[\frac{(1-\eta) p + \eta (1-p)}{2} + \frac{1}{4}\right] \cdot n
    \]
    with probability $1-\negl[n]$. Therefore we apply \Cref{lemma:codeword-detection} with $\alpha = \left[\frac{(1-\eta) p + \eta (1-p)}{2} + \frac{1}{4}\right]$.
\end{proof}

\begin{lemma} \label{lemma:codeword-detection}
    Let $\alpha \in (0,1/2)$ be some constant and suppose that $r = n^{\Omega(1)}$ and $t \le \frac{1}{5} \log_{1/2\alpha} r$. If $y \in \F_2^n$ satisfies $\wt(y) \le \left(\frac{1}{2} - \alpha\right) n$, then
    \[
        \Pr_{P \from \calS_{t,r,n}} \left[\wt(Py) < \left(\frac{1}{2} - r^{-1/4}\right) r \right] \geq 1-\negl[n].
    \]
\end{lemma}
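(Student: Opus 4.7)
The plan is to exploit independence of the rows of $P$: since the entries of $Py$ are $(P_1\cdot y, \ldots, P_r\cdot y)$ and the rows $P_i$ are i.i.d.\ uniform $t$-sparse vectors in $\F_2^n$, the bits of $Py$ are i.i.d.\ Bernoulli. Writing $q := \Pr_{s\from\calS_{t,n}}[s\cdot y = 1]$, we have $\wt(Py) \sim \mathrm{Binomial}(r,q)$. The goal reduces to (i) proving a good upper bound $q \le 1/2 - \Omega(r^{-1/5})$, and then (ii) applying a Chernoff bound to show $\wt(Py) < (1/2 - r^{-1/4})r$ with overwhelming probability.

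For step (i), let $a = \wt(y) \le (1/2-\alpha)n$ and $p = a/n$. If we instead sampled the $t$ support positions of $s$ \emph{with} replacement, each position independently hits $\Supp(y)$ with probability $p$, so the XOR of $t$ such Bernoulli$(p)$ values is $1$ with probability exactly $(1 - (1-2p)^t)/2 \le 1/2 - (2\alpha)^t/2$. Using $t \le \tfrac15 \log_{1/2\alpha} r$ gives $(2\alpha)^t \ge r^{-1/5}$, hence the with-replacement probability is at most $1/2 - r^{-1/5}/2$. To pass to the without-replacement distribution on $\calS_{t,n}$, I would couple the two sampling procedures: they coincide unless the with-replacement sample produces a collision, an event of probability at most $\binom{t}{2}/n = O(\log^2 n / n)$ since $t = O(\log n)$. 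Therefore $q \le 1/2 - r^{-1/5}/2 + O(\log^2 n/n) \le 1/2 - r^{-1/5}/3$ for large $n$, using that $r = n^{\Omega(1)}$ with $r \le 0.99n$ makes $r^{-1/5}$ dominate the collision term.

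For step (ii), by the bound on $q$ we have $\E[\wt(Py)] \le r/2 - r^{4/5}/3$. Since $4/5 > 3/4$, the desired threshold $r/2 - r^{3/4}$ exceeds the mean by $\Omega(r^{4/5})$. A standard Chernoff (or Hoeffding) bound on the sum of $r$ independent Bernoulli$(q)$ variables gives
\[
\Pr\bigl[\wt(Py) \ge r/2 - r^{3/4}\bigr] \le \exp\bigl(-\Omega(r^{3/5})\bigr),
\]
which is $\negl[n]$ because $r = n^{\Omega(1)}$.

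The only non-routine piece is step (i): we need an upper bound on $q$ that holds for the without-replacement distribution of $s$, not just the with-replacement surrogate. The cleanest route is the collision coupling described above; an alternative would be the generating function identity $1 - 2q = [x^t](1-x)^a(1+x)^{n-a}/\binom{n}{t}$ combined with a direct estimate, but the coupling argument is shorter and gives the bound we need with room to spare under the hypothesis $t = O(\log n)$.
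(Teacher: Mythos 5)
Your proof is correct and follows the same overall decomposition as the paper's: (a) bound the bias of a single parity check $P_i\cdot y$, and (b) apply a Chernoff/Hoeffding bound over the $r$ independent rows of $P$. The difference lies in how you establish step (a). The paper works directly with the without-replacement distribution, defining $Y_\ell = (-1)^{y_{j_\ell}}$ for sequentially drawn distinct indices and proving the recurrence $\left|\E[Y_1\cdots Y_m] - 2\alpha'\E[Y_1\cdots Y_{m-1}]\right| \le 2t/n$ by a tower-rule/conditioning argument, then unrolling it inductively to get $\E[Y_1\cdots Y_t]\ge(2\alpha)^t - 2t^2/n$. You instead pass to the i.i.d.\ with-replacement surrogate, where the bias is exactly $(1-2p)^t$, and then control the discrepancy by a collision coupling: conditioned on no collision, the with-replacement tuple is distributed as a uniform $t$-subset, so the two probabilities differ by at most $\Pr[\text{collision}]\le\binom{t}{2}/n$. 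Both routes give an additive error of order $t^2/n$, and both implicitly use that $r\le\mathrm{poly}(n)$ (e.g.\ $r\le n$, as in every application of this lemma) so that $t^2/n = o(r^{-1/5})$. Your coupling argument is shorter and more conceptual; the paper's recurrence is more elementary in that it avoids explicitly constructing the coupling, but it requires more bookkeeping. Either way the quantitative outcome and the subsequent Chernoff step are the same.
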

\begin{proof}
Let $P_i$ denote the $i^{\text{th}}$ row of $P$. We will show that
\begin{equation} \label{eq:row-marginals}
    \Pr_{P_i \from \calS_{t,1,n}}[P_i y = 0] \geq \frac{1}{2} + \frac{2}{r^{1/4}},
\end{equation}
and the independence of the rows of $P$ will imply the lemma by a Chernoff bound.

Let $j_1, \dots, j_t \in [n]$, $Y_1, \dots, Y_t \in \{-1,1\}$ be random variables, defined as follows, for $\ell = 1, \dots, t$:
\begin{enumerate}
    \item Sample $j_\ell \from [n] \setminus \{j_1, \dots, j_{\ell-1}\}$.
    \item Let $Y_\ell = (-1)^{x_{j_\ell}}$.
\end{enumerate}
Then the bit $P_i y$ is distributed as $(1 - Y_1 \cdots Y_t)/2$, so
\begin{align*}
    \Pr_{P_i \from \calS_{t,1,n}}[P_i y = 0] &= \Pr[Y_1 \cdots Y_t = 1] \\
    &= \frac{1}{2} + \frac{1}{2}\E[Y_1 \cdots Y_t].
\end{align*}
The remainder of the proof is devoted to showing that
\begin{equation} \label{eq:Ys-product}
    \E[Y_1 \cdots Y_t] \ge (2\alpha)^t - 2t^2/n.
\end{equation}
For large enough $n$, this will imply \Cref{eq:row-marginals} by the assumption that $t \le \frac{1}{5} \log_{1/2\alpha} r$.

Let $\alpha' = 1/2-\wt(y)/n$, and note that $\alpha' \ge \alpha$. To prove \Cref{eq:Ys-product} we first show that for all $m \in [t]$,
\begin{equation} \label{eq:recurrence}
    \abs{\E[Y_1 \cdots Y_m] - 2\alpha' \E[Y_1 \cdots Y_{m-1}]} \le 2t/n. 
\end{equation}
By the tower property and linearity of expectation,
\begin{equation} \label{eq:expanded-recurrence}
    \abs{\E[Y_1 \cdots Y_m] - 2\alpha' \E[Y_1 \cdots Y_{m-1}]} = \abs{\E\left[Y_1 \cdots Y_{m-1} \cdot \big(\E[Y_m \mid Y_{<m}] - 2\alpha'\big)\right]}.
\end{equation}
For all possible assignments of $Y_{<m}$, 
\begin{align*}
    \E[Y_m \mid Y_{<m}] &= \Pr_{j_m}[x_{j_m} = 0 \mid Y_{<m}] - (1 - \Pr_{j_m}[x_{j_m} = 0 \mid Y_{<m}]) \\
    &= 2\Pr_{j_m}[x_{j_m} = 0 \mid Y_{< m}] - 1
\end{align*}
Recall that $j_m$ is chosen to be a uniformly random index from $[n] \setminus \{j_{<  m}\}$.
Since $m \leq t$, $[n] \setminus \{j_{<m}\}$ contains at least $(\frac{1}{2} + \alpha')n - t$ and at most $(\frac{1}{2} + \alpha')n$ indices for which the corresponding bit of $y$ is 0. 
Therefore, $\frac{1}{2} + \alpha' - t/n \leq \Pr_{j_m}[x_{j_m} = 0 \mid Y_{< m}] \leq \frac{1}{2} + \alpha' + t/n$.

We now have that $\abs{\E[Y_m \mid Y_{<m}] - 2\alpha'} \le 2t/n$, and $\abs{Y_1 \cdots Y_{m-1}} = 1$. Plugging these two facts into \Cref{eq:expanded-recurrence} gives us \Cref{eq:recurrence}.

We complete the proof by showing by induction that $\E[Y_1 \cdots Y_m] \ge (2\alpha')^m - 2mt/n$. Observe first that $\E[Y_1] = (\frac{1}{2} + \alpha') - (\frac{1}{2} - \alpha') > 2\alpha' - 2t/n$.
Assume that the inequality holds for some $m < t$; we'll show that it holds for $m + 1$.
We have $\E[Y_1 \cdots Y_{m+1}] \geq 2\alpha' \E[Y_1 \cdots Y_m] - 2t/n$. Therefore, since $2\alpha' \le 1$,
\begin{align*}
    \E[Y_1 \cdots Y_{m+1}] &\geq 2\alpha'\left( (2\alpha')^m - \frac{2mt}{n}\right) -\frac{2t}{n}\\
    &= (2\alpha')^{m+1} - \frac{2(m+1) t}{n}\\
\end{align*}
as desired.
\end{proof}

\begin{lemma} \label{lemma:codeword-nondetection}
    If $n^{\Omega(1)} \le r \le 0.99 n$, then for any fixed $c \in \F_2^n$,
    \[
        \Pr_{\substack{P \from \calS_{t,r,n} \\ z \from \F_2^n}}\left[\wt(Pc \oplus Pz) \ge \left(\frac{1}{2} - r^{-1/4}\right) r\right] \geq 1-\negl[n].
    \]
\end{lemma}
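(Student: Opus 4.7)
The plan is to exploit linearity of $P$ and the uniformity of $z$ to reduce to a statement about a single uniformly random vector, then apply Chernoff twice --- once to concentrate $\wt(w)$ for $w := c \oplus z$, and once to concentrate $\wt(Pw)$ conditional on $w$. Since matrix multiplication is linear over $\F_2$, $Pc \oplus Pz = P(c \oplus z)$, and for uniform $z$ the vector $w = c \oplus z$ is uniform in $\F_2^n$ for any fixed $c$. So the target probability equals $\Pr_{P, w}[\wt(Pw) \ge (1/2 - r^{-1/4}) r]$ with $P \from \calS_{t,r,n}$ and $w \from \F_2^n$.

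First I would condition on $w$ having balanced weight: a Chernoff bound on uniform $w$ gives $|\wt(w) - n/2| \le n^{2/3}$ with probability at least $1 - \exp(-\Omega(n^{1/3}))$. For any fixed $w$ with $\wt(w) = k$, the rows $P_i$ of $P$ are i.i.d., so the bits $P_i w$ are i.i.d.\ $\Ber(p(k))$, where $p(k) := \Pr_{P_i \from \calS_{t,1,n}}[P_i w = 1]$ depends only on $k$. Just as in the proof of \Cref{lemma:codeword-detection}, a character-style expansion gives
\[
    1 - 2p(k) = \frac{[x^t](1-x)^k(1+x)^{n-k}}{\binom{n}{t}}.
\]

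The core estimate is to bound this quantity when $k = n/2 + \Delta$ with $|\Delta| \le n^{2/3}$. Writing $(1-x)^{n/2+\Delta}(1+x)^{n/2-\Delta} = (1-x^2)^{n/2-\Delta}(1-x)^{2\Delta}$ (the case $\Delta < 0$ is symmetric via complementation of $w$) and expanding, the coefficient of $x^t$ has absolute value at most $\sum_{2i+b=t} \binom{n/2-\Delta}{i}\binom{2|\Delta|}{b}$. Dividing by $\binom{n}{t} \ge (n/t)^t$ and bounding each binomial by its exponential-type approximation yields a bound of the form $(\poly(t) \cdot \max(t/n, \Delta^2/n^2))^{\Omega(t)}$, which is negligible --- in particular much smaller than $r^{-1/4}$ --- in the relevant parameter regime ($t = O(\log n)$, $r \ge n^{\Omega(1)}$).

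Once $|p(k) - 1/2| = o(r^{-1/4})$ is established for such $k$, Chernoff (\Cref{theorem:chernoff}) applied to the i.i.d.\ sum $\wt(Pw) = \sum_i [P_i w = 1]$ of $r$ Bernoullis of mean $\ge 1/2 - o(r^{-1/4})$ gives
\[
    \Pr_P\!\left[\wt(Pw) < \left(\tfrac{1}{2} - r^{-1/4}\right) r \mid w\right] \le \exp(-\Omega(r^{1/2})) = \negl[n],
\]
using $r = n^{\Omega(1)}$. A union bound with the low-probability event from the first step finishes the proof. The main obstacle is the coefficient-bounding estimate in the third paragraph; everything else is routine concentration.
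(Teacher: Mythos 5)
Your proof is correct but takes a genuinely different route from the paper's. The paper's argument is extremely short: it invokes \Cref{lemma:full-rank} (with $g=0$) to conclude that $P \from \calS_{t,r,n}$ has full rank with probability $1-\negl[n]$; conditioned on that, $P(c\oplus z)$ is \emph{exactly uniform} on $\F_2^r$ because $z$ is uniform and the linear map $z \mapsto Pz$ is surjective with equal-size fibers, so a single Chernoff bound finishes. Your proof instead conditions on $\wt(w)$ where $w = c \oplus z$, derives the per-row bias $1-2p(k) = [x^t](1-x)^k(1+x)^{n-k}/\binom{n}{t}$ via a generating-function identity, shows this bias is $o(r^{-1/4})$ for near-balanced $k$, and then Chernoff on the $r$ i.i.d.\ indicators.

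What each approach buys: the paper's version is essentially a two-line proof, but it leans on \Cref{lemma:full-rank}, which is itself one of the more delicate results in the paper (random-walk trace counting on the hypercube, requiring $\omega(\sqrt{\log n}) \le t \le O(\log n)$). Your version is self-contained, avoids the full-rank machinery entirely, and in fact degrades gracefully for any $t = O(\log n)$ (even $t$ as small as $1$). The price is the coefficient-bounding step in your third paragraph, which you only sketch. That sketch is sound --- after the reduction to $(1-x^2)^{n/2-|\Delta|}(1-x)^{2|\Delta|}$ one gets $|1-2p(k)| \le 2^{O(t)}\max\left(\sqrt{t/n}, |\Delta|/n\right)^{t}$, which with $|\Delta| \le n^{2/3}$ and $t = \Theta(\log n)$ is $n^{-\Omega(\log n)} = o(r^{-1/4})$ --- but a careful write-up would need to fill in that computation, handle the parity of $t$, and verify the $\Delta < 0$ complementation argument (which flips $p \mapsto 1-p$ when $t$ is odd, but preserves $|1-2p|$ either way). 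Also a small side remark: you attribute the bias formula to ``a character-style expansion \emph{as in} \Cref{lemma:codeword-detection}'', but the paper proves that lemma by a sequential martingale-style sampling of indices rather than a generating function, so your route to the bias is itself a different (and arguably cleaner) derivation of the same quantity.
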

\begin{proof}
    With probability $1-\negl[n]$, $P \from \calS_{t,r,n}$ is full rank by \Cref{lemma:full-rank} (invoking the lemma with $g=0$). For any such $P$, over the random choice of $z \from \F_2^n$, the vector $P(c \oplus z) \in \F_2^r$ is uniformly random. The lemma follows from a Chernoff bound.
\end{proof}

\subsection{Pseudorandomness from the planted XOR assumption and LPN} \label{subsec:planted-xor}
In this subsection we prove that the generator matrix (public key) of $\LDPCPRC_0$ is pseudorandom under the planted XOR assumption. This is \Cref{lemma:xor-hybrid}. The number of columns of the generator matrix and the number of rows of the parity check matrix will depend on the particular planted XOR assumption one is willing to make. Then, since the generator matrix is pseudorandom, pseudorandomness of $\LDPCPRC_0$ (\Cref{theorem:ldpc-prc-xor}) follows directly from the standard LPN assumption.

\paragraph{Cryptographic assumptions.}
The existence of our LDPC-based PRCs relies on either of two assumptions. We state the two assumptions together as \Cref{assumption:combined}.

\begin{assumption} \label{assumption:combined}
    At least one of the following two statements is true:
    \begin{itemize}
        \item There exists a constant $\eta \in (0,1/2)$ such that, for any function $g(n) = \Omega(\log^2 n)$, the $\LPN_{g, \eta}$ assumption (\Cref{assumption:LPN}) holds.
        \item There exist constants $\eta \in (0,1/2)$ and $\varepsilon \in (0,1)$ such that, for any function $t(n) = \Theta(\log n)$, both the $\LPN_{n^\varepsilon, \eta}$ assumption (\Cref{assumption:LPN}) and the $\XOR_{2 n^\varepsilon, t}$ assumption (\Cref{assumption:planted-xor}) hold.
    \end{itemize}
\end{assumption}

We now define the $\LPN_{g,\eta}$ and $\XOR_{m,t}$ assumptions. Let us first recall the LPN assumption. We only state the decisional, constant-noise version, since all of our results pertain to that variant. For $\eta \in (0, 1/2)$ and $g : \N \to \N$, the $\LPN_{g,\eta}$ problem is to distinguish between $(A, As \oplus e)$ and $(A, u)$, where $A \from \F_2^{n \times g(n)}$, $s \from \F_2^{g(n)}, e \from \Ber(n,\eta)$, and $u \from \F_2^n$.

\begin{assumption}[LPN assumption] \label{assumption:LPN}
    For $\eta \in (0, 1/2)$ and $g : \N \to \N$, the $\LPN_{g,\eta}$ assumption states that for every $n \in \N$ and every polynomial-time adversary $\adv$,
    \[
        \abs{\Pr_{\substack{A \from \F_2^{n \times g(n)}\\ s \from \F_2^{g(n)}\\ e \from \Ber(n,\eta)}}[\adv(A, As \oplus e) = 1] - \Pr_{\substack{A \from \F_2^{n \times g(n)}\\ u \from \F_2^n}}[\adv(A, u) = 1]} \leq \negl[n].
    \]
\end{assumption}

By a standard hybrid argument, the $\LPN_{g, \eta}$ assumption implies that any polynomial number of samples of the form $(A,As \oplus e)$ are indistinguishable from uniformly random samples.

The planted XOR assumption is lesser-known than LPN, but there is still precedent \cite{ASSVV23}. This assumption states that a random linear subspace over $\F_2$ is indistinguishable from one containing a planted sparse vector. Formally, it says that the distributions $\calD_0(n,m)$ and $\calD_1(n,m,t)$ are computationally indistinguishable, where the ``null'' distribution $\calD_0$ is the uniform distribution over $\F_2^{n \times m}$, and the ``planted'' distribution $\calD_1$ is defined as follows.

\indent $\calD_1(n,m,t)$:
\begin{enumerate}
    \item Sample $s \from \calS_{t,n}$.
    \item Sample a random matrix $G \in \F_2^{n \times m}$ subject to $s^T G = 0$.
    \item Output $G$.
\end{enumerate}

\begin{assumption}[Planted XOR assumption] \label{assumption:planted-xor}
    For $m, t : \N \to \N$, the $\XOR_{m,t}$ assumption states that for every $n \in \N$ and every polynomial-time adversary $\adv$,
     \[
        \abs{\Pr_{G \from \calD_0(n,m(n))}[\adv(G) = 1] - \Pr_{G \from \calD_1(n,m(n),t(n))}[\adv(G) = 1]} \leq \negl[n].
    \]
\end{assumption}

\begin{remark*}
    A previous version of this paper made use of the planted XOR assumption with $m = \Omega(n)$. However, an anonymous CRYPTO 2024 reviewer pointed out to us that this assumption is false by Theorem 4.26 of \cite{ASSVV23}. All of our results are therefore updated in this version of the paper to use $m = n^{\Omega(1)}$; none of our results significantly change with the new choice of parameters. See \Cref{theorem:ldpc-prc-xor} for the updated theorem statement. Note also that \Cref{theorem:ldpc-prc-lpn} is not affected, as it does not make use of the planted XOR assumption.
\end{remark*}

Note that when $m = n - O(\log n)$, the $\XOR_{m,t}$ assumption is false because there will be only $n^{O(1)}$ vectors $v \in \F_2^n$ such that $v^T G = 0$, and one can therefore brute force search for the planted relation $s$. On the other hand, \Cref{claim:lhl} shows that when (for instance) $t \sim \log n$ and $m \sim \log^2 n$ the $\XOR_{m,t}$ assumption holds against even \emph{unbounded} adversaries. This claim follows from Theorem 4.2 of \cite{ASSVV23}.

\begin{numberedclaim} \label{claim:lhl}
    If $t = O(\log n)$ and $m \le (1-\Omega(1)) t \log n$, then
    \[
        SD(\calD_0(n,m), \calD_1(n,m,t)) = n^{-\Omega(t)}.
    \]
\end{numberedclaim}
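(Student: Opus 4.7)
The plan is to bound the statistical distance directly via a second-moment computation on the count of planted-compatible relations. Let $N(G) = |\{s \in \calS_{t,n} : s^T G = 0\}|$. Since $\calD_1$ samples $s$ uniformly and then $G$ uniformly subject to $s^T G = 0$, its density at $G$ equals $N(G) / (\binom{n}{t} \cdot 2^{nm-m})$, while $\calD_0$ has density $2^{-nm}$. This gives
\[
    2 \cdot SD(\calD_0, \calD_1) = \E_{G \from \F_2^{n \times m}}\!\left|\frac{2^m}{\binom{n}{t}} N(G) - 1\right| \le \frac{2^m}{\binom{n}{t}} \sqrt{\mathrm{Var}(N(G))},
\]
by Cauchy-Schwarz, using that $\E[N(G)] = \binom{n}{t} \cdot 2^{-m}$ so the expression inside the absolute value has mean zero.

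For the variance I would use pairwise independence. Over $\F_2$ any two distinct nonzero vectors are automatically linearly independent (the only nontrivial scalar is $1$), so for $s \neq s'$ in $\calS_{t,n}$ the map $G \mapsto (s^T G, s'^T G)$ is surjective onto $\F_2^m \times \F_2^m$. Hence the indicators $\mathbf{1}[s^T G = 0]$ have mean $2^{-m}$ and are pairwise independent across $s$, yielding $\mathrm{Var}(N(G)) \le \binom{n}{t} \cdot 2^{-m}$. Plugging back in gives the clean bound
\[
    SD(\calD_0, \calD_1) \le \tfrac{1}{2}\sqrt{2^m / \binom{n}{t}}.
\]

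To turn this into $n^{-\Omega(t)}$ I would estimate $\binom{n}{t}$ by $\log \binom{n}{t} \ge t \log(n/t) = t\log n - t\log t$. Since $t = O(\log n)$ implies $t \log t = O(t \log\log n) = o(t\log n)$, we get $\log \binom{n}{t} \ge (1 - o(1)) t \log n$. Combining with the hypothesis $m \le (1 - \Omega(1)) t\log n$ yields $\log\binom{n}{t} - m \ge \Omega(t\log n)$, so $2^m/\binom{n}{t} \le 2^{-\Omega(t\log n)} = n^{-\Omega(t)}$, and the claim follows after taking square roots.

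I do not expect any substantive obstacle: the variance bound is essentially automatic from pairwise independence in $\F_2$, and the arithmetic with $\binom{n}{t}$ is routine. Equivalently, one could cast the whole argument as a leftover hash lemma applied to the $2$-universal family $\{s \mapsto s^T G : G \in \F_2^{n \times m}\}$ with source uniform on $\calS_{t,n}$ (min-entropy $\log\binom{n}{t}$); the LHL delivers the same $\sqrt{2^m/\binom{n}{t}}$ bound.
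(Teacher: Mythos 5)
Your proof is correct, and the argument is the natural one. For reference, the paper itself does not prove \Cref{claim:lhl} but instead defers to Theorem~4.2 of \cite{ASSVV23}; your direct second-moment computation is the standard argument behind such statements and almost certainly mirrors the cited proof. All the steps check out: the density formula for $\calD_1$ (each $s$ is nonzero so $\{G : s^T G = 0\}$ has exactly $2^{nm-m}$ elements), the observation that $\frac{2^m}{\binom{n}{t}}N(G) - 1$ has mean zero, pairwise independence of the indicators $\mathbf{1}[s^T G = 0]$ (which holds because any two distinct nonzero vectors over $\F_2$ are linearly independent, making the pair $(s^T G, s'^T G)$ uniform on $\F_2^{2m}$), and the estimate $\log\binom{n}{t} \ge t\log n - t\log t = (1-o(1))\,t\log n$ when $t = O(\log n)$.

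One small caution about your concluding aside: you say the bound is ``equivalently'' a direct application of the LHL to the family $\{s \mapsto s^T G\}$. What the LHL gives is closeness of $(G, s^T G)$ to $(G, U)$; to extract closeness of $\calD_1$ to $\calD_0$ you would condition on the event $s^T G = 0$, and conditioning on an event of probability $2^{-m}$ can in general blow up statistical distance by a factor of $2^m$ --- which would destroy the bound. Your actual computation (directly bounding $\E_G\!\left|\frac{2^m N(G)}{\binom{n}{t}} - 1\right|$ via Cauchy--Schwarz) avoids this issue entirely; it is the same collision-counting mechanism that underlies the LHL proof, but applied to the right random variable. So the aside is fine as a moral observation but should not be read as a literal reduction to the LHL statement.
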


For larger values of $m$, $\calD_0(n,m)$ and $\calD_1(n,m,t)$ are no longer statistically close, but the planted XOR assumption says that they remain computationally indistinguishable.

\begin{lemma} \label{lemma:xor-hybrid}
    Let $m,t,r : \N \to \N$ be such that $m(n) + r(n) \le n-\omega(\log n)$. The $\XOR_{m+r,t}$ assumption (\Cref{assumption:planted-xor}) implies that the marginal distribution on $G$ for $(P, G) \from \LDPC[n,m,t,r]$ is pseudorandom.
\end{lemma}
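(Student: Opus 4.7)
The plan is a standard hybrid argument of length $r$. For each $i \in \{0, 1, \ldots, r\}$ I would define the distribution $\calD_i$ on $\F_2^{n \times m}$ by: sample $s_1, \ldots, s_i \from \calS_{t, n}$ i.i.d., form the matrix $P_i \in \F_2^{i \times n}$ whose rows are these vectors, and output a uniformly random $G \in \F_2^{n \times m}$ subject to $P_i G = 0$. Observe that $\calD_0$ is the uniform distribution on $\F_2^{n \times m}$ and $\calD_r$ is precisely the marginal distribution of $G$ in $\LDPC[n, m, t, r]$, so it suffices to prove $\calD_0 \approx \calD_r$.

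For each $i \in \{0, \ldots, r-1\}$, I would give a polynomial-time reduction $\Red_i$ such that $\Red_i(\calD_0(n, m+r))$ is within negligible statistical distance of $\calD_i$ and $\Red_i(\calD_1(n, m+r, t))$ is within negligible statistical distance of $\calD_{i+1}$. On input $W \in \F_2^{n \times (m+r)}$, $\Red_i$ samples fresh $s_1, \ldots, s_i \from \calS_{t, n}$, forms $P_i$, computes the subspace $U := \img W \cap \ker P_i$ (efficiently by first computing $\ker(P_i W) \subseteq \F_2^{m+r}$ and then applying $W$), and outputs an $n \times m$ matrix $G'$ whose columns are $m$ uniformly random linearly independent vectors in $U$. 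A standard hybrid step then combines these reductions: picking $i$ uniformly from $\{0, \ldots, r-1\}$ and running $\Red_i$ on $W$ converts any polynomial-time distinguisher between $\calD_0$ and $\calD_r$ with advantage $\varepsilon$ into a polynomial-time distinguisher between $\calD_0(n, m+r)$ and $\calD_1(n, m+r, t)$ with advantage $\varepsilon/r$, contradicting the $\XOR_{m+r, t}$ assumption whenever $\varepsilon$ is non-negligible (since $r$ is polynomial in $n$).

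Correctness of $\Red_i$ rests on two symmetry observations, both conditional on the \emph{generic} event that every subspace involved attains its expected dimension --- an event of probability $1 - \negl[n]$ under the hypothesis $m + r \le n - \omega(\log n)$. First, conditional on $W$ being full column rank, $\img W$ is uniformly distributed over $(m+r)$-dim subspaces of $\F_2^n$ (if $W \from \calD_0(n, m+r)$) or of $\ker(s^T)$ (if $W \from \calD_1(n, m+r, t)$, with planted $t$-sparse $s$); intersecting with a fixed ambient subspace --- namely $\ker P_i$ in the first case, or $\ker P_{i+1}$ in the second, where $P_{i+1}$ augments $P_i$ with the additional row $s$ --- and invoking symmetry under the stabilizer of that ambient subspace in the appropriate general linear group shows that $U$ is a uniform $(m+r-i)$-dim subspace of the respective ambient subspace. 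Second, a uniformly random $m$-dim subspace of a uniformly random $(m+r-i)$-dim subspace of a fixed ambient space is itself a uniformly random $m$-dim subspace of that ambient space, by marginalizing out the intermediate subspace. Together, these imply that $\img G'$ is distributed as a uniform $m$-dim subspace of $\ker P_i$ (respectively $\ker P_{i+1}$), which agrees with the image distribution of $\calD_i$ (respectively $\calD_{i+1}$) conditional on full column rank.

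The main obstacle is cleanly bridging the matrix-level statement of the $\XOR_{m+r, t}$ assumption with the subspace-level analysis of the intersection step. The hypothesis $m + r \le n - \omega(\log n)$ is exactly what makes this bridge lossless: it simultaneously guarantees that (i) a sample from $\calD_i$ has full column rank with probability $1 - \negl[n]$, so its column space faithfully encodes its full matrix distribution, and (ii) $\img W + \ker P_i = \F_2^n$ with probability $1 - \negl[n]$, so the intersection $U$ attains its expected dimension $m + r - i$ rather than something larger. Combining these with the two symmetry facts above matches $\Red_i(\calD_0(n, m+r))$ with $\calD_i$ and $\Red_i(\calD_1(n, m+r, t))$ with $\calD_{i+1}$ up to negligible statistical distance, completing the reduction and hence the hybrid argument.
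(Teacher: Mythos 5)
Your approach is essentially identical to the paper's: the same hybrid ladder $\calD_0, \ldots, \calD_r$, the same reduction $\Red_i$ that samples fresh sparse constraints, intersects $\img W$ with the resulting ambient subspace, and outputs a random full-rank $m$-column matrix inside the intersection, and the same conditioning on full-rank events. The telescoping step and the use of $\XOR_{m+r,t}$ to bound each hybrid gap are also identical.

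However, there is a genuine gap in part (ii) of your ``generic event'' claim. You assert that $\img W + \ker P_i = \F_2^n$ (equivalently $\dim U = m + r - i$) holds with probability $1-\negl[n]$ under the hypothesis $m(n) + r(n) \le n - \omega(\log n)$. This does not follow. Taking $W \from \calD_0(n,m+r)$, the event fails iff some nonzero $v \in \mathrm{span}(s_1,\ldots,s_i)$ lies in $(\img W)^\perp$; a union bound over the at most $2^i$ such $v$ gives failure probability on the order of $2^{i - m - r}$, which for $i = r-1$ is roughly $2^{-m}$. So you would additionally need $m = \omega(\log n)$, which the lemma does not assume (the hypothesis only bounds $m+r$ from above). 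The paper sidesteps this by conditioning on $\dim U = d$ for \emph{each} possible $d$ separately and only using the bound $d \ge m + r - i > m$ (which does hold unconditionally once $W$ and $P_i$ are full rank); the symmetry argument you invoke then works for every such $d$, and marginalizing over $d$ gives the conclusion. Your reduction algorithm is correct as written, but the analysis should condition on the observed dimension of $U$ rather than insisting it equal its minimum value.
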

\begin{proof}
    The proof closely mirrors that in the technical overview (\Cref{subsec:techo-prc-basics}), with the main difference being that here we deal with generator matrices instead of the linear subspaces themselves. For $i \in \{0, \dots, r-1\}$ and $m' \in [m+r]$, let $\calD_i(n,m',t)$ be defined as follows.
    \begin{enumerate}
        \item[] $\calD_i(n,m',t)$:
        \item Sample $s_1, \dots, s_i \from \calS_{t,n}$.
        \item Sample a random matrix $G \in \F_2^{n \times m'}$ subject to $s_j^T G = 0$ for all $j \in [i]$.
        \item Output $G$.
    \end{enumerate}
    Observe that $\calD_0(n,m',t) = \calD_0(n,m')$, and $\calD_1(n,m',t)$ is consistent with the definition given earlier.

    For each $i \in \{0, \dots, r-1\}$ and $m' \in [m+r]$, since $m' \le m + r \le n - \omega(\log n)$, the matrix $G \from \calD_i(n,m',t)$ has full rank with probability $1-\negl[n]$. Therefore, $\calD_i(n,m',t)$ is $\negl[n]$-close in statistical distance to the following distribution:
    \begin{enumerate}
        \item[] $\hat{\calD}_i(n,m',t)$:
        \item Sample $s_1, \dots, s_i \from \calS_{t,n}$.
        \item Sample a random \emph{full-rank} matrix $G \in \F_2^{n \times m'}$ subject to $s_j^T G = 0$ for all $j \in [i]$.
        \item Output $G$.
    \end{enumerate}
    Since $\hat{\calD}_0(n,m+r,t)$ (resp. $\hat{\calD}_1(n,m+r,t)$) is $\negl[n]$-close to $\calD_0(n,m+r)$ (resp. $\calD_1(n,m+r,t)$) in statistical distance, the $(n,m+r,t)$ planted XOR assumption implies that $\hat{\calD}_0(n,m+r,t)$, and $\hat{\calD}_1(n,m+r,t)$ are computationally indistinguishable.
    
    Now suppose that an efficient adversary $\adv$ distinguishes between $\hat{\calD}_0(n,m,t)$ and $\hat{\calD}_r(n,m,t)$ with advantage $\varepsilon > 0$. By a telescoping argument, $\adv$ must distinguish between $\hat{\calD}_i(n,m,t)$ and $\hat{\calD}_{i+1}(n,m,t)$ with advantage $\varepsilon/r$, for some $i \in \{0, \dots, r-1\}$. For each $i \in \{0,\dots,r-1\}$, the following efficient reduction $\Red_i$ satisfies $\Red_i(\hat{\calD}_0(n,m+r,t)) \equiv \hat{\calD}_i(n,m,t)$ and $\Red_i(\hat{\calD}_1(n,m+r,t)) \equiv \hat{\calD}_{i+1}(n,m,t)$.
    Therefore, the $(n,m+r,t)$ planted XOR assumption implies that $\varepsilon/r = \negl[n]$, which will complete the proof.
    \begin{enumerate}
        \item[] $\Red_i(W)$:
        \item Sample $i$ random $t$-sparse vectors $s_1, \dots, s_i \in \F_2^n$ and let $S = \{v \in \F_2^n : v \cdot s_j = 0\ \forall j \in [i]\}$.
        \item Let $U = \img(W) \cap S$. Since $i < r$ and $\dim \img W = m+r$, we have $\dim U > m$.
        \item Sample a random full-rank matrix $G \in \F_2^{n \times m}$ such that $\img(G) \subseteq U$.
        \item Output $G$.
    \end{enumerate}

    It remains to see why $\Red_i(\hat{\calD}_0(n,m+r,t)) \equiv \hat{\calD}_i(n,m,t)$ and $\Red_i(\hat{\calD}_1(n,m+r,t)) \equiv \hat{\calD}_{i+1}(n,m,t)$.
    In fact both of these statements are true even for \emph{fixed} planted relations.
    
    \paragraph{Proof that $\Red_i(\hat{\calD}_0(n,m+r,t)) \equiv \hat{\calD}_i(n,m,t)$.}
    Suppose that $W \from \hat{\calD}_0(n,m+r,t)$. Fix $s_1, \dots, s_i$ sampled in $\Red_i(W)$ and let $S = \{v \in \F_2^n : v \cdot s_j = 0\ \forall j \in [i]\}$. For any $d \in \{m+1, \dots, m+r\}$, consider the distribution of the subspace $U = \img(W) \cap S$ conditioned on the event that $\dim U = d$. Before the conditioning $\img W$ was a random subspace of $\F_2^n$, so after the conditioning $U$ is a random $d$-dimensional subspace of $S$. The output of $\Red_i(W)$ is a random full-rank matrix $G \in \F_2^{n \times m}$ such that $\img(G) \subseteq U \subseteq S$. But we have just seen that $U$ is a uniformly random $d > m$ dimensional subspace of $S$, so it follows that $G$ is a random matrix subject to $\img G \subseteq S$ --- i.e., $s_j^T G = 0$ for all $j \in [i]$.
    
    \paragraph{Proof that $\Red_i(\hat{\calD}_1(n,m+r,t)) \equiv \hat{\calD}_{i+1}(n,m,t)$.}
    Suppose that $W \from \hat{\calD}_1(n,m+r,t)$ is sampled with the planted relation $s$. Fix $s$ and $s_1, \dots, s_i$ sampled in $\Red_i$. Let $S = \{v \in \F_2^n : v \cdot s_j = 0\ \forall j \in [i]\}$ and $S' = \{v \in \F_2^n : v \cdot s = 0\}$. Again, for each $d \in \{m+1, \dots, m+r\}$, consider the distribution of $U = \img(W) \cap S$ conditioned on the event that $\dim U = d$. Before the conditioning $\img W$ was a random subspace of $S'$, so after the conditioning $U$ is a random $d$-dimensional subspace of $S \cap S'$. The output of $\Red_i(W)$ is a random full-rank matrix $G \in \F_2^{n \times m}$ such that $\img(G) \subseteq U \subseteq S \cap S'$. But we have just seen that $U$ is a uniformly random $d > m$ dimensional subspace of $S \cap S'$, so it follows that $G$ is a random matrix subject to $\img G \subseteq S \cap S'$ --- i.e., $s^T G = 0$ and $s_j^T G = 0$ for all $j \in [i]$.
\end{proof}

\Cref{claim:lhl} and \Cref{lemma:xor-hybrid} together imply that the generator matrix from \ifeprint\else \\ \fi $\LDPC[n, \log^2 n, 4 \log n, \log^2 n]$ is statistically uniform. In \Cref{lemma:low-rate-random-ldpc} we improve this to show that the generator matrix from $\LDPC[n, g, 4 \log n, r]$ is statistically uniform for any $r \le 0.99 n$ (and some $g = \Omega(\log^2 n)$); this forms the basis of our PRC construction based only on LPN (\Cref{theorem:ldpc-prc-lpn}).

\begin{theorem} \label{theorem:ldpc-prc-xor}
    \sloppy
    For any constants $p, \eta \in (0,1/2)$ and $\varepsilon \in (0,1)$, there exists a function $t = \Theta(\log n)$ such that $\LDPCPRC_0[n, n^\varepsilon, t, n^\varepsilon, \eta, n^{-\varepsilon/4}]$ is a zero-bit public-key PRC (\Cref{def:pkPRC}) that is robust to every $p$-bounded channel, where pseudorandomness rests on the $\LPN_{n^\varepsilon, \eta}$ assumption (\Cref{assumption:LPN}) and the $\XOR_{2 n^\varepsilon, t}$ assumption (\Cref{assumption:planted-xor}).
\end{theorem}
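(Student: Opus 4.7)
The proof verifies the three properties required by Definition~\ref{def:pkPRC}: error correction (robustness), soundness, and pseudorandomness. Both robustness to any $p$-bounded channel and soundness follow immediately from Lemma~\ref{lemma:zero-bit-ldpc-decoding}. To invoke that lemma, I would first fix the constants $p$ and $\eta$ from the theorem statement and let $\delta > 0$ be the constant supplied by the lemma. Setting $t(n) := \lfloor \delta \log n \rfloor$ gives $t = \Theta(\log n)$, and since the construction uses $r = n^\varepsilon$ (and any $g > 0$), all of the lemma's parameter hypotheses $t \le \delta \log n$, $g > 0$, and $n^\varepsilon \le r \le 0.99 n$ are satisfied. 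This handles robustness and soundness with essentially no further work.

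For pseudorandomness, the adversary receives the public key $(G, z)$ and may query the encode oracle, which on each invocation returns $Gu \oplus z \oplus e$ for fresh $u \from \F_2^{n^\varepsilon}$ and $e \from \Ber(n, \eta)$. I would establish indistinguishability from the uniform oracle by a two-step hybrid. In hybrid $H_1$, the matrix $G$ sampled as part of $(P,G) \from \LDPC[n, n^\varepsilon, t, n^\varepsilon]$ is replaced by a uniformly random $G' \from \F_2^{n \times n^\varepsilon}$; indistinguishability of $H_0$ from $H_1$ follows from Lemma~\ref{lemma:xor-hybrid} applied with $m = r = n^\varepsilon$, which relies on the $\XOR_{2 n^\varepsilon, t}$ assumption. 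The hypothesis $m + r \le n - \omega(\log n)$ is clearly met since $2 n^\varepsilon = o(n)$ for $\varepsilon < 1$. In hybrid $H_2$, each oracle response $G' u \oplus e$ (XORed with the publicly known $z$) is replaced by a fresh uniform string in $\F_2^n$; this step invokes $\LPN_{n^\varepsilon, \eta}$ via the standard multi-sample hybrid over the polynomially many oracle queries made by the adversary. Because $z$ is a uniformly random one-time pad supplied in the public key, $H_2$ matches the uniform oracle distribution of Definition~\ref{def:pkPRC}.

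The main structural point, rather than a serious obstacle, is that the LPN reduction must see a truly uniform sample matrix, whereas in the real scheme $G$ is correlated with a low-weight trapdoor $P$ that breaks LPN-style pseudorandomness. The XOR hybrid is precisely what removes this correlation, allowing LPN to be applied in the second step. Concretely, the LPN reduction in $H_1 \to H_2$ takes as input an LPN matrix $A \from \F_2^{n \times n^\varepsilon}$ together with polynomially many samples that are either of the form $A u_i \oplus e_i$ or uniform, embeds $A$ as the public key $G'$ (together with a freshly sampled $z$), and forwards the samples XORed with $z$ as the oracle responses. Putting these ingredients together, the existence of a distinguisher contradicts either the planted XOR or the LPN assumption, completing the proof.
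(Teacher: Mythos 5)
Your proof is correct and takes essentially the same approach as the paper: invoke Lemma~\ref{lemma:zero-bit-ldpc-decoding} for robustness and soundness, then Lemma~\ref{lemma:xor-hybrid} (under $\XOR_{2n^\varepsilon,t}$) to replace $G$ by a uniform matrix, and finally the standard multi-sample LPN hybrid (under $\LPN_{n^\varepsilon,\eta}$) to replace the encode oracle's outputs by fresh uniform strings. The paper states the pseudorandomness step in two sentences; you unfold it into the explicit $H_0 \to H_1 \to H_2$ hybrid with the reduction that embeds the LPN matrix as $G'$ and pads with a freshly sampled $z$, which is the argument the paper implicitly relies on.
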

\begin{proof}
    By \Cref{lemma:zero-bit-ldpc-decoding}, there exists $t = \Theta(\log n)$ such that \ifeprint\else \\ \fi $\LDPCPRC_0[n, n^\varepsilon, t, n^\varepsilon, \eta, n^{-\varepsilon/4}]$ is robust to every $p$-bounded channel.
    
    By \Cref{lemma:xor-hybrid}, the $\XOR_{2n^\varepsilon, t}$ assumption implies that for $(P,G) \from \LDPC[n,n^\varepsilon, t, n^\varepsilon]$, the marginal distribution on $G$ is pseudorandom. By the $\LPN_{n^\varepsilon, \eta}$ assumption, it follows that $\LDPCPRC_0[n, n^\varepsilon, t, n^\varepsilon, \eta, n^{-\varepsilon/4}]$ is a public-key PRC.
\end{proof}

\subsection{Pseudorandomness from subexponential LPN} \label{subsec:subexp-lpn}
In \Cref{subsec:planted-xor}, we showed that the generator matrix $G \in \F_2^{n \times g}$ of $\LDPCPRC_0$ was pseudorandom under the planted XOR assumption. In \Cref{lemma:low-rate-random-ldpc} of this section, we will show that $G$ is \emph{statistically} random if we set $g$ to be sufficiently small. Therefore, the planted XOR assumption is no longer necessary; however, the number of columns will be only $g = O(\log^2 n)$, so we must rely on a stronger, sub-exponential variant of LPN than in \Cref{subsec:planted-xor}.

\begin{remark*}
    Before we see \Cref{lemma:low-rate-random-ldpc}, note that we use a different proof strategy here than the technical overview. The reason we use this more involved proof here is that the proof outlined in the technical overview results in a different ensemble of parity-check matrices $P$ with a triangular restriction and non-independent rows. By proving \Cref{lemma:low-rate-random-ldpc}, we are able to use the same ensemble as in \Cref{subsec:planted-xor} --- that is, the rows of $P$ are still independent and uniform $t$-sparse vectors.
\end{remark*}

\begin{lemma} \label{lemma:low-rate-random-ldpc}
    If $r \le 0.99 n$ and $\omega(\sqrt{\log n}) \le t \le O(\log n)$, then there is a $g = \Omega(t^2)$ such that the marginal distribution on $G$ for $(P,G) \from \LDPC[n,g,t,r]$ is $\negl[n]$-close to uniform in statistical distance.
\end{lemma}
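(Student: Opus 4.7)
The plan is to analyze the marginal distribution $\mu$ of $G$ via Fourier analysis on $\mathbb{F}_2^{n \times g}$. Since $G \mid P$ is uniform on $\ker(P)^g$, for each $v \in \mathbb{F}_2^{n \times g}$ with columns $v_1, \dots, v_g$,
\[
\hat{\mu}(v) = \E_{G \sim \mu}\bigl[(-1)^{\langle v, G \rangle}\bigr] = \E_P\bigl[\textstyle\prod_j \mathbb{1}[v_j \in (\ker P)^\perp]\bigr] = \Pr_P[\text{colspan}(v) \subseteq \text{rowspan}(P)],
\]
because the inner character sum over $\ker(P)$ vanishes unless $v_j$ lies in its dual. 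By Parseval's identity together with Cauchy--Schwarz,
\[
4\,\SD(\mu, U)^2 \;\le\; \sum_{v \ne 0} \hat{\mu}(v)^2 \;=\; \E_{P, P'}\bigl[|R_P \cap R_{P'}|^g\bigr] - 1,
\]
where $P, P' \from \calS_{t, r, n}$ are independent and $R_P = \text{rowspan}(P)$, using that $\hat{\mu}(v)^2 = \Pr_{P,P'}[\text{colspan}(v) \subseteq R_P \cap R_{P'}]$ and that the number of $v \in \mathbb{F}_2^{n \times g}$ whose columns all lie in a subspace $W$ is exactly $|W|^g$.

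The main work is then to bound $\E_{P,P'}[|R_P \cap R_{P'}|^g]$. First, use \Cref{lemma:full-rank} to restrict to the event that both $P$ and the stacked matrix $[P; P']$ have their maximum possible ranks, which holds except with super-polynomially small probability once $\binom{n}{t} \gg 2^g$. On this event $\dim(R_P \cap R_{P'}) = \max(0, 2r - n) + k$ where $k$ captures the "excess" intersection beyond what dimension forces. A union bound over $k \ge 1$, controlled by the probability that $k$ additional random $t$-sparse rows of $P'$ fall into the fixed $r$-dimensional subspace $R_P$, shows that the excess intersection shrinks the contribution by a factor $(|R_P \cap \calS_{t,n}|/\binom{n}{t})^k \le 2^{-\Omega(k(n-r))}$ — crucially because a random $r$-dimensional subspace contains few $t$-sparse vectors when $t \log(n/t)$ exceeds the subspace's relative index. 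Rolling these bounds together, the Fourier sum reduces (up to negligible error) to the contribution from the forced intersection, which in the regime $g = \alpha t^2$ for small enough constant $\alpha$ is made negligible by choosing the constant in $\Omega(t^2)$ so that $\binom{n}{t}/2^g = 2^{\omega(\log n)}$.

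The parameter choice is: let $t \le C \log n$ and set $g = \alpha t^2$ for $\alpha > 0$ small enough that $\alpha C < 1$; then $\log \binom{n}{t} - g \ge (1 - \alpha C) t \log n - O(t \log t) = \omega(\log n)$ using $t = \omega(\sqrt{\log n})$, so $2^g/\binom{n}{t} \le 2^{-\omega(\log n)}$. This slack feeds into both the full-rank lemma invocation and the character-sum bound.

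The hard part will be controlling the contribution from pairs $(P, P')$ where $R_P \cap R_{P'}$ is atypically large, especially in the regime $r > n/2$, where the forced intersection of dimension $2r - n$ inflates $|R_P \cap R_{P'}|^g$ by a factor $2^{g(2r-n)}$. Handling this requires leveraging the fact that $R_P$ is spanned by $t$-sparse vectors, so the \emph{marginal} distribution of $R_P \cap R_{P'}$ is much more concentrated than the worst case — concretely, showing that the expected number of $t$-sparse vectors in a random $r$-dimensional subspace spanned by $t$-sparse rows is $\binom{n}{t} \cdot 2^{r-n} \cdot (1 + o(1))$ and using this to cancel the apparent $2^{g(2r-n)}$ blow-up. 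This is the step that truly requires the quantitative condition $t = \omega(\sqrt{\log n})$, and will be the most delicate computation.
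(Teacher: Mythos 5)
Your Fourier-analytic setup is genuinely different from the paper's proof, and the first few steps are correct: the computation $\hat{\mu}(v) = \Pr_P[\operatorname{colspan}(v) \subseteq \operatorname{rowspan}(P)]$, the Parseval/Cauchy--Schwarz bound $4\,\SD(\mu,U)^2 \le \sum_{v\ne 0}\hat{\mu}(v)^2$, and the identity $\sum_{v\ne 0}\hat{\mu}(v)^2 = \E_{P,P'}[|R_P\cap R_{P'}|^g]-1$ all check out.

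However, there is a genuine gap, and you have actually identified its location without realizing it is fatal. For $r > n/2$ (and the lemma allows $r$ up to $0.99n$, a regime the paper actually uses in \Cref{theorem:ldpc-prc-lpn}), the dimension bound
\[
\dim(R_P \cap R_{P'}) \ \ge\ \dim R_P + \dim R_{P'} - \dim(R_P + R_{P'}) \ \ge\ 2r - n
\]
holds \emph{deterministically} whenever $P$ and $P'$ are full rank. So $|R_P \cap R_{P'}|^g \ge 2^{g(2r-n)}$ on essentially all of the probability mass, and hence $\E_{P,P'}[|R_P\cap R_{P'}|^g] - 1 \ge 2^{g(2r-n)} - 1$, which at $r = 0.99n$ is astronomically larger than $1$. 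The bound $4\,\SD^2 \le \chi^2$ is therefore vacuous in this regime; no cancellation from the sparsity structure of the rows is possible, because the $2^{g(2r-n)}$ blow-up comes from linear-algebraic dimension counting, not from atypical events you could condition away. This is the well-known gap between statistical distance and $\chi^2$-divergence: $\mu$ can be $\negl$-close to uniform in SD while $\chi^2(\mu,U)$ is enormous, precisely because a negligible-measure set of outcomes $G^*$ with density far above $2^{-ng}$ is tolerable for SD but squared in $\chi^2$. The structure you hoped to exploit (few $t$-sparse vectors in a random subspace) is in fact the right intuition, but it cannot be deployed inside the Parseval inequality.

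The paper's proof (Lemma~\ref{lemma:low-rate-random-ldpc}) sidesteps this exactly where your approach breaks: rather than bounding a global second moment, it sandwiches the pointwise density $\Pr_{G\from\calD}[G=G^*]$ between multiples of $\eta(G^*)^r / (\binom{n}{t}^r 2^{(n-r)g})$, where $\eta(G^*)$ is the number of $t$-sparse vectors annihilating $G^*$, and then invokes a ``for a $1-\negl$ fraction of $G^*$'' argument (\Cref{lemma:uniform-approx}) that is precisely an $L^1$ rather than $L^2$ statement, together with Chebyshev concentration of $\eta(G^*)$ (\Cref{lemma:eta-concentration}) and a full-rank lemma (\Cref{lemma:full-rank}). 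Because the paper only needs $\eta(G^*)$ to be near its mean for a $1-\negl$ fraction of $G^*$, the rare $G^*$ with inflated density — the ones that blow up your $\chi^2$ sum — simply fall into the excluded set. If you want to salvage the Fourier route, you would need to first condition on a high-probability event (e.g., the one excluding $G^*$ with atypical $\eta$) before applying Parseval to the conditional distribution, at which point you are essentially reconstructing the paper's argument.
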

\begin{proof}
Recall the definitions of $\calS_{t,n}$ and $\calS_{t,r,n}$ from \Cref{subsec:ldpc-prc-construction}. Let $\calD$ be the marginal distribution on $G \in \F_2^{n \times g}$ for $(P,G) \from \LDPC[n,g,t,r]$.

We will show that $\calD$ is $\negl[n]$-close to the uniform distribution in statistical distance. For any $G^* \in \F_2^{n \times g}$,
\begin{align*}
    \Pr_{G \from \calD}[G=G^*] &= \sum_{\substack{P^* \in \calS_{t,r,n} : \\ P^* G^* = 0}} \Pr_{P \from \calS_{t,r,n}}[P=P^*] \cdot \Pr_{G \from (\ker P^*)^g}[G=G^*] \\
    &= \frac{1}{\binom{n}{t}^r} \sum_{\substack{P^* \in \calS_{t,r,n} : \\ P^* G^* = 0}} \frac{1}{\abs{\ker P^*}^g} \\
    &\le \frac{1}{\binom{n}{t}^r} \sum_{\substack{P^* \in \calS_{t,r,n} : \\ P^* G^* = 0}} \frac{1}{2^{(n-r) m}} \\
    &= \frac{\eta(G^*)^r}{\binom{n}{t}^r \cdot 2^{(n-r) m}}
\end{align*}
where $\eta(G^*)$ is the number of collections of $t$ rows of $G^*$ that sum to 0. Since valid rows of $P^*$ correspond to collections of $t$ rows of $G^*$ that sum to 0, $|\{P^* \in \F_2^{r \times n} : \ P^* G^* = 0\}| = \eta(G^*)^r$, which gives us the last equality above. 

Let $\calF \subset \F_2^{r \times n}$ be the set of $t$-row-sparse matrices of full rank,
\[
    \calF := \{P \in \calS_{t,r,n} \mid \rank(P) = r\} = \{P \in \calS_{t,r,n} \mid \dim \ker(P) = n-r\}.
\]
Then we also have
\begin{align*}
    \Pr_{G \from \calD}[G=G^*] &= \frac{1}{\binom{n}{t}^r} \sum_{\substack{P^* \in \calS_{t,r,n} : \\ P^* G^* = 0}} \frac{1}{\abs{\ker P^*}^g} \\
    &\ge \frac{1}{\binom{n}{t}^r} \sum_{\substack{P^* \in \calF : \\ P^* G^* = 0}} \frac{1}{2^{(n-r) g}} \\
    &= \frac{\eta(G^*)^r}{\binom{n}{t}^r \cdot 2^{(n-r) g}} \cdot \Pr_{P^* \from \F_2^{r \times n}}[P^* \text{ has full rank} \mid P^*G^*=0].
\end{align*}
Let $\calP(G^*) = \{P^* \in \calS_{t,r,n} \mid P^* G^* = 0\}$. The above two inequalities give us a point-wise approximation to the density of $\calD$,
\begin{equation} \label{eq:prob-bounds}
    \Pr_{P^* \from \calP(G^*)}[P^* \text{ has full rank}] \cdot \frac{\eta(G^*)^r}{\binom{n}{t}^r \cdot 2^{(n-r) g}} \le \Pr_{G \from \calD}[G=G^*] \le \frac{\eta(G^*)^r}{\binom{n}{t}^r \cdot 2^{(n-r) g}}.
\end{equation}
We complete the proof with the following three facts:
\begin{itemize}
    \item \Cref{lemma:uniform-approx} is a general statistical fact, which implies that if
    \[
        \Pr_{G^* \from \F_2^{n \times g}}\left[\abs{\Pr_{G \from \calD}[G=G^*] - \frac{1}{2^{n g}}} \le \frac{\negl[n]}{2^{n g}}\right] \ge 1-\negl[n],
    \]
    then $\calD$ is $\negl[n]$-close to uniform in statistical distance. Crucially, \Cref{lemma:uniform-approx} and \Cref{eq:prob-bounds} reduce the problem to reasoning about \emph{the uniform distribution} over $G^*$, rather than $\calD$.
    \item \Cref{lemma:eta-concentration} implies that
    \[
        \Pr_{G^* \from \F_2^{n \times g}}\left[\abs{\frac{\eta(G^*)^r}{\binom{n}{t}^r \cdot 2^{(n-r) g}} - \frac{1}{2^{n g}}} \le \frac{\negl[n]}{2^{n g}}\right] \ge 1-\negl[n].
    \]
    The proof is a simple invocation of Chebyshev's inequality.
    \item \Cref{lemma:full-rank} implies that
    \[
        \Pr_{G^* \from \F_2^{n \times g}}\left[\Pr_{P^* \from \calP(G^*)}[P^* \text{ has full rank}] \ge 1-\negl[n]\right] \ge 1-\negl[n].
    \]
    
\end{itemize}
Using \Cref{lemma:eta-concentration,lemma:full-rank} with \Cref{eq:prob-bounds}, the condition of \Cref{lemma:uniform-approx} is satisfied. This completes the proof of the theorem.
\end{proof}

\begin{lemma} \label{lemma:uniform-approx}
    Let $p$ be a probability distribution on a finite set $\calX$. If
    \[
        \Pr_{x \from \calX}\left[\abs{p(x) - \frac{1}{\abs{\calX}}} > \frac{\alpha}{\abs{\calX}}\right] \le \beta,
    \]
    then the statistical distance between $p$ and the uniform distribution is at most $\alpha + \beta$.
\end{lemma}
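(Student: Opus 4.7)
The plan is to split the sample space $\calX$ into the ``good'' set $G = \{x \in \calX : |p(x) - 1/|\calX|| \le \alpha/|\calX|\}$ and its complement, the ``bad'' set $B$. By hypothesis, $|B|/|\calX| \le \beta$, i.e., $u(B) \le \beta$, where $u$ denotes the uniform density on $\calX$. I would then bound $SD(p,u) = \tfrac{1}{2} \sum_{x \in \calX} |p(x) - u(x)|$ by summing separately over $G$ and $B$.

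On $G$ the bound is immediate: every term satisfies $|p(x) - u(x)| \le \alpha u(x)$, so
\[
\sum_{x \in G} |p(x) - u(x)| \;\le\; \alpha \sum_{x \in G} u(x) \;\le\; \alpha.
\]
On $B$, the pointwise approximation gives us no control, so I would instead use the triangle inequality $|p(x) - u(x)| \le p(x) + u(x)$, giving
\[
\sum_{x \in B} |p(x) - u(x)| \;\le\; p(B) + u(B).
\]

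The main (and only nonroutine) step is bounding $p(B)$. The key observation is that the good set carries nearly all of $p$'s mass: since $p(x) \ge (1-\alpha) u(x)$ for every $x \in G$, we have
\[
p(G) \;\ge\; (1-\alpha)\, u(G) \;\ge\; (1-\alpha)(1-\beta),
\]
and so $p(B) = 1 - p(G) \le \alpha + \beta - \alpha\beta \le \alpha + \beta$. Combining everything,
\[
SD(p,u) \;\le\; \tfrac{1}{2}\bigl(\alpha + p(B) + u(B)\bigr) \;\le\; \tfrac{1}{2}\bigl(\alpha + (\alpha + \beta) + \beta\bigr) \;=\; \alpha + \beta,
\]
as desired. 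No substantive obstacle is anticipated; the argument is elementary once one notices that pointwise control of $p$ on $G$ simultaneously controls its total mass on $B$.
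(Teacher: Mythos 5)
Your proof is correct, but it takes a slightly different route from the paper's. You work with the symmetric form $SD(p,u) = \tfrac{1}{2}\sum_x |p(x)-u(x)|$, which forces you to control both $u(B)$ and $p(B)$ on the bad set, and the latter requires the extra (nice) observation that $p(G) \ge (1-\alpha)u(G) \ge (1-\alpha)(1-\beta)$. The paper instead uses the one-sided form $SD(p,u) = \sum_{x : p(x) < u(x)}\left(u(x)-p(x)\right)$, and within this deficit set distinguishes the points where $p(x) \ge (1-\alpha)/|\calX|$ (each contributing at most $\alpha/|\calX|$) from those where $p(x) < (1-\alpha)/|\calX|$ (necessarily in the bad set, each contributing at most $u(x) = 1/|\calX|$). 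This sidesteps the need to bound $p(B)$ entirely: because the deficit $u(x)-p(x)$ is automatically at most $u(x)$, the bad set's contribution is bounded by $u(B) \le \beta$ with no further work. Both arguments are elementary and sound; the paper's is marginally tighter on bookkeeping, while yours makes the ``pointwise control implies mass control'' idea explicit.
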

\begin{proof}
    We just compute
    \begin{align*}
        & \sum_{\substack{x \in \calX : \\ p(x) < \frac{1}{\abs{X}}}} \left(\frac{1}{\abs{X}} - p(x)\right) \\
        &= \sum_{\substack{x \in \calX : \\ \frac{1-\alpha}{\abs{X}} \le p(x) < \frac{1}{\abs{X}}}} \left(\frac{1}{\abs{X}} - p(x)\right) + \sum_{\substack{p(x) < \frac{1-\alpha}{\abs{X}}}} \left(\frac{1}{\abs{X}} - p(x)\right) \\
        &\le \alpha + \sum_{\substack{p(x) < \frac{1-\alpha}{\abs{X}}}} \frac{1}{\abs{X}} \\
        &\le \alpha + \beta. 
    \end{align*} \qedhere
\end{proof}

\begin{lemma} \label{lemma:eta-concentration}
    For any $\varepsilon > 0$,
    \[
        \Pr_{G^* \from \F_2^{n \times g}}\left[\abs{\eta(G^*) - \binom{n}{t} \cdot 2^{-g}} > \varepsilon \cdot \binom{n}{t} \cdot 2^{-g}\right] \le \frac{2^g}{\binom{n}{t} \cdot \varepsilon^2}.
    \]
\end{lemma}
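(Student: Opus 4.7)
The plan is to invoke Chebyshev's inequality on the random variable $\eta(G^*)$, written as a sum of indicators. For each $t$-subset $S \subseteq [n]$, let $X_S$ be the indicator that $\sum_{i \in S} G^*_i = 0 \in \F_2^g$, where $G^*_i$ denotes the $i^{\text{th}}$ row of $G^*$. Then $\eta(G^*) = \sum_{S} X_S$, where the sum runs over all $\binom{n}{t}$ subsets of size $t$.

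First I would compute the expectation. Since $G^* \from \F_2^{n \times g}$ has i.i.d.\ uniform rows and the sum of any $t \ge 1$ such rows is again uniform in $\F_2^g$, we have $\E[X_S] = 2^{-g}$, hence $\E[\eta(G^*)] = \binom{n}{t} \cdot 2^{-g}$, matching the center of the concentration window.

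Next I would bound the variance by showing that the $X_S$ are pairwise uncorrelated. The key observation is that for any two distinct subsets $S \ne T \subseteq [n]$, the two $\F_2$-linear functionals $x \mapsto \sum_{i \in S} x_i$ and $x \mapsto \sum_{i \in T} x_i$ are linearly independent over $\F_2$ (their symmetric difference $S \triangle T$ is nonempty). Applying these functionals coordinate-wise to the independent uniform rows, the joint distribution of $\bigl(\sum_{i \in S} G^*_i, \sum_{i \in T} G^*_i\bigr)$ is uniform on $\F_2^g \times \F_2^g$, so $\E[X_S X_T] = 2^{-2g} = \E[X_S]\E[X_T]$ and $\mathrm{Cov}(X_S, X_T) = 0$. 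Therefore
\[
    \mathrm{Var}(\eta(G^*)) = \sum_{S} \mathrm{Var}(X_S) = \binom{n}{t} \cdot 2^{-g}(1 - 2^{-g}) \le \binom{n}{t} \cdot 2^{-g}.
\]

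Finally, Chebyshev's inequality applied to the threshold $\varepsilon \cdot \E[\eta(G^*)] = \varepsilon \cdot \binom{n}{t} \cdot 2^{-g}$ gives
\[
    \Pr\bigl[\,\abs{\eta(G^*) - \tbinom{n}{t} 2^{-g}} > \varepsilon \tbinom{n}{t} 2^{-g}\bigr] \le \frac{\binom{n}{t} \cdot 2^{-g}}{\varepsilon^2 \binom{n}{t}^2 \cdot 2^{-2g}} = \frac{2^g}{\binom{n}{t} \cdot \varepsilon^2},
\]
as required. The only step that takes any thought is the pairwise-independence verification, but since it reduces to observing that two distinct $0/1$-indicator vectors over $[n]$ are $\F_2$-linearly independent, no genuine obstacle arises.
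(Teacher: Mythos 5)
Your proof is correct and takes essentially the same route as the paper: both decompose $\eta(G^*)$ into a sum of indicator variables over $t$-sparse parity checks, observe these indicators are pairwise independent (equivalently, uncorrelated) over a uniformly random $G^*$, and conclude by Chebyshev's inequality. You simply spell out the pairwise-independence verification that the paper asserts in one line.
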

\begin{proof}
    By linearity of expectation,
    \[
        \E_{G^* \from \F_2^{n \times g}} \eta(G^*) = \sum_{w \in \calS_{t,n}} \Pr_{G^* \from \F_2^{n \times g}}[w^T G^* = 0] = \binom{n}{t} \cdot 2^{-g}.
    \]
    Furthermore, $\{\mathbbm{1}[w^T G^* = 0]\}_{w \in \calS_{t,n}}$ are pairwise independent random variables over $G^* \from \F_2^{n \times g}$, so the lemma follows from Chebyshev's inequality.
\end{proof}

\begin{lemma} \label{lemma:full-rank}
    If $r \le 0.99 n$ and $\omega(\sqrt{\log n}) \le t \le O(\log n)$, then there is a $\tilde{g} = \Omega(t^2)$ such that for all $g \le \tilde{g}$,
    \[
        \Pr_{G^* \from \F_2^{n \times g}}\left[\Pr_{P^* \from \calP(G^*)}[P^* \text{\emph{ has full rank}}] \ge 1-\negl[n]\right] \ge 1-\negl[n].
    \]
\end{lemma}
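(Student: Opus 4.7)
By Markov's inequality applied to the nonnegative random variable $f(G^*) := \Pr_{P^* \from \calP(G^*)}[P^* \text{ is not full rank}]$, it suffices to show that $\E_{G^*}[f(G^*)] \le \negl[n]$. This expectation equals the joint probability $\Pr_{(G^*, P^*) \from \mu}[P^* \text{ is not full rank}]$ under the natural joint distribution $\mu$ of $(G^*, P^*)$ obtained by sampling $G^* \from \F_2^{n \times g}$ and then $P^* \from \calP(G^*)$. The key point of passing to the joint distribution is that it lets me freely apply Fubini to exchange the expectation over $G^*$ with combinatorial sums over $s$-tuples of $t$-sparse vectors.

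A union bound over nonempty subsets $S \subseteq [r]$ controls the joint probability by $\sum_{s = 1}^{r} \binom{r}{s} \E_{G^*}\bigl[\Pr_{v_1, \ldots, v_s \sim V(G^*)}[v_1 + \cdots + v_s = 0]\bigr]$, where $V(G^*) := \calS_{t, n} \cap \ker G^{*T}$ and the rows $v_i$ of $P^*$ are i.i.d.\ uniform on $V(G^*)$. For each fixed $s$, I would first use the concentration of $\eta(G^*) = |V(G^*)|$ around $\bar V := \binom{n}{t}/2^g$ from \Cref{lemma:eta-concentration} to replace the random denominator $|V(G^*)|^s$ by $\bar V^s$, and then apply Fubini together with the identity $\Pr_{G^*}[v_1, \ldots, v_s \in \ker G^{*T}] = 2^{-g \dim \spn(v_1, \ldots, v_s)}$. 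Since $\dim \spn(v_1, \ldots, v_s) = s-1$ for generic $s$-tuples summing to zero, the inner expectation is bounded by approximately $2^{g} \cdot T_s / \binom{n}{t}^s$, where $T_s$ denotes the number of $s$-tuples of $t$-sparse vectors in $\F_2^n$ summing to zero (the $s-1$ in the exponent cancels with $s$ factors of $2^g$ from $\bar V^s$, leaving a single $2^g$).

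Summing over $s$, the overall bound is thus $2^g$ times the probability that a uniformly random $t$-sparse $r \times n$ matrix (with no conditioning on $G^*$) fails to be full row rank --- a purely combinatorial quantity equal to the probability that a random $t$-uniform hypergraph on $n$ vertices with $r \le 0.99n$ edges contains a nonempty ``even subgraph'' (a set of edges in which every vertex has even degree). For small $s$, direct Krawtchouk-polynomial estimates give $T_s/\binom{n}{t}^s \le n^{-\Omega(ts)}$, so the sum over $s \ge 2$ is dominated by the $s=2$ term $\binom{r}{2}/\binom{n}{t} \le n^{2 - \Omega(t)}$; for larger $s$, a first-moment argument over all candidate even subhypergraphs of size $s$ gives an analogous bound. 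The main obstacle will be sharpness: this combinatorial bound must be small enough to dominate the prefactor $2^g = 2^{\Theta(t^2)}$ produced by the reduction. The hypothesis $t = \omega(\sqrt{\log n})$ is calibrated exactly for this trade-off; by taking $\tilde g$ to be a sufficiently small constant times $t^2$, the decay $n^{-\Omega(t)}$ from the dominant $s=2$ term beats $2^{\tilde g}$ by a super-polynomial margin, closing the argument.
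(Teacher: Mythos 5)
Your high-level plan matches the paper's: reduce the failure probability to a purely combinatorial quantity (the probability that a random $t$-sparse $r \times n$ matrix has a linear dependency), pick up a prefactor of order $2^{\Theta(g)}$ from the reduction, and beat it by a $2^{-\Omega(t^2)}$ decay calibrated by the hypothesis $t = \omega(\sqrt{\log n})$. However, several steps in your sketch either do not quite go through as stated or misidentify where the argument is tight.

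First, your union bound ranges over all nonempty $S \subseteq [r]$, and you then asserts $\dim\spn(v_1,\ldots,v_s) = s-1$ ``for generic $s$-tuples summing to zero.'' This glosses over the cases where it fails: if a proper subset of $\{v_1,\ldots,v_s\}$ also sums to zero (for instance, two pairs of repeated vectors), then $\dim\spn < s-1$, and the corresponding term $2^{-g\dim\spn}$ in your Fubini calculation is \emph{larger} than the $2^{-g(s-1)}$ you assume, while the number of such non-generic tuples does not obviously wash out. The paper sidesteps this exactly by restricting attention to \emph{simple} dependencies (subsets of rows that sum to zero with no proper subset summing to zero). For a simple dependency any $s-1$ of its rows are linearly independent, so $\dim\spn = s-1$ always, and full rank fails iff some simple dependency exists — so nothing is lost. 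Your argument needs this refinement (or a separate bookkeeping of the non-generic tuples), and without it the ``$\approx 2^g T_s/\binom{n}{t}^s$'' display is not correct.

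Second, the proposed Fubini step with a random denominator is not rigorous as written. You want to compute $\E_{G^*}\left[N_s(G^*)/\eta(G^*)^s\right]$, and you say you would ``replace the random denominator by $\bar V^s$'' using the concentration of $\eta$. But once you condition on $\eta(G^*)$ being near its mean, the $G^*$-distribution is no longer uniform, so the identity $\Pr_{G^*}[v_1,\ldots,v_s \in \ker G^{*T}] = 2^{-g\dim\spn}$ — which requires uniform $G^*$ — no longer applies. The paper handles this by concentrating the numerator (Markov on $|\SD_\ell(G^*)|$) and denominator (Chebyshev via \Cref{lemma:eta-concentration}) \emph{separately}, then taking a union bound over the two bad events; on the joint good event, one may bound the ratio pointwise without ever conditioning the $G^*$-distribution inside a Fubini swap. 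Your plan, pushed through carefully, collapses to this same two-concentration structure.

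Third, the calibration claim that ``the sum over $s \ge 2$ is dominated by the $s=2$ term'' is not correct and, if taken at face value, would lead to a wrong bound. The $s=2$ term is $\binom{r}{2}/\binom{n}{t} \le n^{2-\Omega(t)} = 2^{-\Omega(t\log n)}$, which (for $t = O(\log n)$) is actually \emph{stronger} than the target $2^{-\Omega(t^2)}$. The bottleneck in the paper's estimate of $\sum_\ell \binom{r}{\ell}\Pr_{w_1,\dots,w_\ell}[w_1\oplus\cdots\oplus w_\ell=0]$ occurs in the medium range $\ell \approx t$, where the bound degrades to exactly $2^{-\Omega(t^2)}$; the very large $\ell$ regime (up to $r \le 0.99n$) requires a separate argument using the binomial-theorem estimate $\left(\frac{1+e^{-2\ell t/n}}{2}\right)^n$ and the $r \le 0.99n$ hypothesis. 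Presenting $s=2$ as dominant misses where the hypotheses $t = \omega(\sqrt{\log n})$, $t = O(\log n)$, and $r \le 0.99n$ actually bite. Finally, your ``Krawtchouk-polynomial / first-moment over even subhypergraphs'' route to the combinatorial estimate is a genuinely different tactic from the paper's random-walk spectral approach via $\Tr(A^\ell)$ and $\Tr(H^{\ell t})$; it is plausible, but the details are not supplied and this is where most of the technical work of the lemma lives.
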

\begin{proof}
We say that a subset of rows $S \subseteq [r]$ of $P^*$ forms a ``simple dependency'' if those rows sum to $0$, and no subset of them sums to $0$. For a $1-\negl[n]$ fraction of $G^* \from \F_2^{n \times g}$, we will show that over $P^* \from \calP(G^*)$ the expected number of simple dependencies in $P^*$ is $\negl[n]$.

Letting $W_i$ denote the $i$th row of the matrix $W$, we define
\[
    \SD_\ell = \left\{W \in \calS_{t,\ell,n} \ \middle\vert \ \bigoplus_{i \in [\ell]} W_i = 0 \text{ and } \forall T \subsetneq [\ell], \bigoplus_{i \in T} W_i \ne 0\right\}
\]
and
\[
    \SD_\ell(G^*) = \{W \in \SD_\ell \mid W G^* = 0\}.
\]
For any $G^*$, over $P^* \from \calP(G^*)$ the expected number of simple dependencies in $P^*$ is
\begin{align*}
    \E_{P^* \from \calP(G^*)} \sum_{\ell=1}^r \sum_{S \in \binom{[r]}{\ell}} \mathbbm{1}[P^*_S \in \SD_\ell] &= \sum_{\ell=1}^r \sum_{S \in \binom{[r]}{\ell}} \Pr_{P^* \from \calP(G^*)}[P^*_S \in \SD_\ell] \\
    &= \sum_{\ell=1}^r \binom{r}{\ell} \frac{\abs{\SD_\ell(G^*)}}{\eta(G^*)^\ell}. \addtocounter{equation}{1}\tag{\theequation} \label{eq:num-simp-deps}
\end{align*}
Now since any $\ell-1$ of the rows of any simple dependency are linearly independent, we have
\[
    \Pr_{G^* \from \F_2^{n \times g}}[W G^* = 0] = \frac{1}{2^{(\ell-1) g}}
\]
for any $W \in \SD_\ell$. Therefore,
\begin{align*}
    \E_{G^* \from \F_2^{n \times g}} \abs{\SD_\ell(G^*)} &= \sum_{W \in \SD_\ell} \Pr_{G^* \from \F_2^{n \times g}}[W G^* = 0] \\
    &= \frac{\abs{\SD_\ell}}{2^{(\ell-1) g}}.
\end{align*}
By Markov's inequality it follows that for any $q > 0$, with probability $1-1/q$ over $G^* \from \F_2^{n \times g}$,
\[
    \abs{\SD_\ell(G^*)} \le \frac{\abs{\SD_\ell}}{2^{(\ell-1) g}} \cdot q.
\]
Together with \Cref{lemma:eta-concentration}, we have that for any $q, \varepsilon > 0$ the expected number of simple dependencies computed in \Cref{eq:num-simp-deps} is at most
\begin{align*}
    \sum_{\ell=1}^r \binom{r}{\ell} \frac{\abs{\SD_\ell(G^*)}}{\eta(G^*)^\ell} &\le q \sum_{\ell=1}^r \binom{r}{\ell} \frac{\abs{\SD_\ell}}{\eta(G^*)^\ell \cdot 2^{(\ell-1) g}} \\
    &\le \frac{q}{(1-\varepsilon)^n} \sum_{\ell=1}^r \binom{r}{\ell} \frac{2^{g\ell} \abs{\SD_\ell}}{\binom{n}{t}^\ell 2^{(\ell-1) g}} \\
    &= \frac{q \cdot 2^g}{(1-\varepsilon)^n} \sum_{\ell=1}^r \binom{r}{\ell} \frac{\abs{\SD_\ell}}{\binom{n}{t}^\ell}
\end{align*}
with probability $1-\frac{1}{q}-\frac{2^g}{\binom{n}{t} \cdot \varepsilon^2}$ over $G^*$. Since
\[
    \frac{\abs{\SD_\ell}}{\binom{n}{t}^\ell} \le \Pr_{w_1, \dots, w_\ell \from \calS_{t,n}}[w_1 \oplus \cdots \oplus w_\ell = 0],
\]
the remainder of the proof is devoted to showing that
\begin{equation} \label{eq:no-dependencies}
    \sum_{\ell=1}^r \binom{r}{\ell} \Pr_{w_1, \dots, w_\ell \from \calS_{t,n}}[w_1 \oplus \cdots \oplus w_\ell = 0] \le 2^{-c t^2}.
\end{equation}
for some constant $c > 0$. Setting $\tilde{g} = c t^2/4$, $q = 2^{ct^2/4}$, and $\varepsilon = 2^{-t \log(n/t)/8}$ will then complete the proof of \Cref{lemma:full-rank}.

Let $A$ be the transition matrix for the random walk on $\F_2^n$ where, at each step, we sample a random $w \from \calS_{t,n}$ and move $x \mapsto x \oplus w$. Observe that $\Pr_{w_1, \dots, w_\ell \from \calS_{t,n}}[w_1 \oplus \cdots \oplus w_\ell = 0]$ is equal to the probability that $\ell$ steps of this walk form a (not necessarily simple) cycle, i.e.,
\[
    \Pr_{w_1, \dots, w_\ell \from \calS_{t,n}}[w_1 \oplus \cdots \oplus w_\ell = 0] = \frac{1}{2^n} \Tr(A^\ell).
\]

Let $H$ be the transition matrix for the random walk on the hypercube graph on $\F_2^n$. In \Cref{claim:sparse-to-hypercube}, we bound the probability that $\ell$ steps of $A$ form a cycle in terms of the probability that $\ell t$ steps of $H$ form a cycle.

\begin{numberedclaim} \label{claim:sparse-to-hypercube}
If $\ell = O(n)$ and $t = o(n^{1/3})$, then $\Tr(A^\ell) \le O\left(e^{t^2 \ell/n} \cdot \Tr(H^{\ell t})\right)$.
\end{numberedclaim}
\begin{proof}
    Let $\calE_H$ denote the event that $\ell t$ steps of $H$ form a cycle, and let $\calE_A$ denote the event that $\ell$ steps of $A$ form a cycle. Observe that 
    $$\frac{1}{2^n} \Tr(A^\ell) = \Pr[\calE_A] = \Pr\left[\calE_H \ | \ \text{every } t \text{-block is distinct} \right]$$
    where we consider the walk in $H$ as consisting of $\ell$ consecutive blocks of $t$ steps each, and we say a block is ``distinct'' if a different index is changed in each step.

    Furthermore,
    $$\frac{1}{2^n} \Tr(H^{\ell t}) = \Pr[\calE_H] \ge \Pr\left[\calE_H \big| \ \text{every } t \text{-block is distinct} \right] \cdot \Pr[\text{every } t \text{-block is distinct}]$$
    So $\Pr[\calE_A] \le \frac{\Pr[\calE_H]}{\Pr[\text{every } t \text{-block is distinct}]}$.
    Thus, it suffices to see that
    \begin{align*}
        \Pr[\text{every } t \text{-block is distinct}] &= \Pr[\text{a given } t \text{-block is distinct}]^\ell \\
        &\ge (1-t/n)^{t \ell} \\
        &\ge \left(e^{-t} \left(1 - \frac{t^2}{n}\right)\right)^{t \ell/n} \\
        &\ge e^{-t^2 \ell/n} \cdot (1-t^3 \ell/n^2) \\
        &\ge \Omega(e^{-t^2 \ell/n}).
    \end{align*} \qedhere
\end{proof}

Applying \Cref{claim:sparse-to-hypercube}, we have reduced the problem of proving \Cref{eq:no-dependencies} to showing that
\begin{equation} \label{eq:no-dependencies-reformulated}
    \sum_{\ell=1}^r \binom{r}{\ell} \cdot \frac{e^{t^2 \ell/n}}{2^n} \Tr(H^{\ell t}) \le 2^{-\Omega(t^2)}.
\end{equation}
Fortunately, the eigenvalues of $H$ are simple to compute:
\begin{equation} \label{eq:hypercube-eigenvalues}
    \frac{1}{2^n} \Tr(H^{\ell t}) = \E_{x \from \F_2^n} \left[\left(1 - \frac{2 \wt(x)}{n}\right)^{\ell t}\right]
\end{equation}
where $\wt(x)$ denotes the Hamming weight of $x \in \F_2^n$. We analyze this expression separately depending on how large $\ell$ is.

\paragraph{Small $\ell$ ($\ell \le (e-\Omega(1)) \cdot n/t)$.} We can rewrite \Cref{eq:hypercube-eigenvalues} as a moment of a simple random walk:
\[
\E_{x \from \F_2^n} \left[\left(1 - \frac{2 \wt(x)}{n}\right)^{\ell t}\right] = n^{-\ell t} \E_{X_1, \dots, X_n \from \{1,-1\}} \left[\left(\sum_{i=1}^n X_i\right)^{\ell t}\right].
\]
Let $X = \sum_{i=1}^n X_i$ where $X_1, \dots, X_n \from \{1,-1\}$. For small $\ell$, the Gaussian approximation to these moments is good enough. For even $p$,
\begin{align*}
    \E \left[X^p\right] &= \sum_{i_1, \dots, i_p \in [n]} \E[X_{i_1} \cdots X_{i_p}] \\
    &= \abs{\{(i_1, \dots, i_p) \in [n]^p \mid \text{ each $i_j$ appears an even number of times}\}} \\
    &\le n^{p/2} \cdot (p-1)!!
\end{align*}
For odd $p$, $\E[X^p] = 0$. Therefore, for $\ell = O(n/t)$ we have
\begin{align*}
    \binom{r}{\ell} \cdot \frac{e^{t^2 \ell/n}}{2^n} \Tr(H^{\ell t}) &\le e^{O(t)} \binom{n}{\ell} \cdot \frac{(\ell t-1)!!}{n^{\ell t/2}} \\
    &\le e^{O(t)} \left(\frac{e n}{\ell}\right)^\ell \cdot \left(\frac{\ell t}{e n}\right)^{\ell t / 2} \cdot O(\sqrt{\ell t}) \\
    &= e^{O(t)} \left(\frac{\ell}{e n}\right)^{\ell (t/2-1)} \cdot t^{\ell t/2}\cdot O(\sqrt{\ell t}) \\
\intertext{If $\ell \le t$, then this yields}
    \binom{r}{\ell} \cdot \frac{e^{t^2 \ell/n}}{2^n} \Tr(H^{\ell t}) &\le e^{O(t)} \left(\frac{t^{t-1}}{(e n)^{t/2-1}}\right)^\ell = n^{-\Omega(t)}
\intertext{and if $t < \ell \le (e-\Omega(1)) \cdot n/t$,}
    \binom{r}{\ell} \cdot \frac{e^{t^2 \ell/n}}{2^n} \Tr(H^{\ell t}) &\le e^{O(t)} \left(\frac{e-\Omega(1)}{e t}\right)^{\ell (t/2-1)} \cdot t^{\ell t/2}\cdot O(\sqrt{\ell t}) = 2^{-\Omega(t^2)}.
\end{align*}
In either case we have a bound of $2^{-\Omega(t^2)}$.

\paragraph{Large $\ell$ ($\ell \ge (e-o(1)) \cdot n/t$).} Using the binomial theorem,
\begin{align*}
        \frac{1}{2^n} \Tr(H^{\ell t}) &= \E_{x \from \F_2^n} \left[\left(1 - \frac{2 \wt(x)}{n}\right)^{\ell t}\right] \\
        &= \frac{1}{2^n} \sum_{s=0}^n \binom{n}{s} \cdot \left(1-\frac{2 s}{n}\right)^{\ell t} \\
        &\le \frac{1}{2^n} \sum_{s=0}^n \binom{n}{s} \cdot e^{-2 s \ell t /n} \\
        &= \left(\frac{1 + e^{-2 \ell t/n}}{2} \right)^n.
\end{align*}
For $\ell \ge (e-o(1)) \cdot n/t$, this is at most $\left(\frac{1 + e^{o(1)-2e}}{2}\right)^n$. Since $\left(\frac{1 + e^{-2e}}{2}\right) < 2^{-0.99}$, we have for $r \le 0.99n$ that
\[
    \sum_{\ell=1}^r \binom{r}{\ell} \cdot \frac{e^{t^2 \ell/n}}{2^n} \Tr(H^{\ell t}) \le 2^r \cdot e^{t^2} \cdot \left(\frac{1 + e^{o(1)-2e}}{2}\right)^n = 2^{-\Omega(n)}. \qedhere
\]
\end{proof}

\begin{theorem} \label{theorem:ldpc-prc-lpn}
    \sloppy
    For any $p, \eta \in (0,1/2)$, there exists $g = \Omega(\log^2 n)$, $t = \Theta(\log n)$ such that $\LDPCPRC_0[n, g, t, 0.99 n, \eta, (0.99n)^{-1/4}]$ is a zero-bit public-key PRC (\Cref{def:pkPRC}) that is robust to every $p$-bounded channel, where pseudorandomness rests on the $\LPN_{g, \eta}$ assumption (\Cref{assumption:LPN}).
\end{theorem}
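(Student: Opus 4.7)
The plan is to combine the robustness and pseudorandomness ingredients already established in this section. Specifically, I would invoke \Cref{lemma:zero-bit-ldpc-decoding} for robustness, \Cref{lemma:low-rate-random-ldpc} to argue that the public key $G$ is statistically close to uniform, and then apply the $\LPN_{g,\eta}$ assumption on a truly uniform generator matrix to conclude pseudorandomness.

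First I would pin down the parameters. \Cref{lemma:zero-bit-ldpc-decoding} (applied with $r = 0.99n$ and $\varepsilon$ any fixed constant so that $r = n^\varepsilon$ holds loosely) supplies a constant $\delta > 0$ such that for all $t \le \delta \log n$ the scheme $\LDPCPRC_0[n,g,t,0.99n,\eta,(0.99n)^{-1/4}]$ is robust to every $p$-bounded channel. I would fix $t = \lfloor \delta \log n \rfloor$, which is $\Theta(\log n)$ and in particular satisfies $\omega(\sqrt{\log n}) \le t \le O(\log n)$. Then \Cref{lemma:low-rate-random-ldpc} applies and yields some $g = \Omega(t^2) = \Omega(\log^2 n)$ for which the marginal distribution of $G$ in $(P,G) \gets \LDPC[n,g,t,0.99n]$ is $\negl[n]$-close to uniform in statistical distance.

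Now I would establish pseudorandomness via a two-step hybrid. Hybrid $H_0$ is the real experiment: the adversary receives $\pk = (G, z)$ with $G$ from the LDPC distribution and $z \gets \F_2^n$, and has oracle access to $\Encode(\pk, \cdot) = Gu \oplus z \oplus e$. Hybrid $H_1$ replaces $G$ with a uniform matrix in $\F_2^{n \times g}$; by \Cref{lemma:low-rate-random-ldpc}, $H_0$ and $H_1$ are statistically $\negl[n]$-close (the oracle is a deterministic function of $G$, $z$, and freshly sampled $(u,e)$). Hybrid $H_2$ replaces each oracle response by a uniformly random element of $\F_2^n$. Since $z$ is public and uniform, the distribution of $Gu \oplus z \oplus e$ conditioned on $(G,z)$ equals $Gu \oplus e$ shifted by a constant; thus, indistinguishability of $H_1$ and $H_2$ follows from $\LPN_{g,\eta}$ via a standard hybrid over the polynomially many oracle queries (each query uses fresh $u$ and $e$). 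Composing these steps gives the pseudorandomness property of \Cref{def:pkPRC}.

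The main obstacle, such as it is, is parameter coordination: we need one value of $t$ to simultaneously satisfy the constraint $t \le \delta \log n$ from \Cref{lemma:zero-bit-ldpc-decoding} and the range $\omega(\sqrt{\log n}) \le t \le O(\log n)$ from \Cref{lemma:low-rate-random-ldpc}, while the resulting $g = \Omega(t^2)$ is the secret dimension for which we must invoke $\LPN_{g,\eta}$. Since both lemmas accommodate $t = \Theta(\log n)$, the choice $t = \lfloor \delta \log n \rfloor$ works and yields $g = \Omega(\log^2 n)$ as required; the soundness condition of \Cref{def:pkPRC} is already contained in \Cref{lemma:codeword-nondetection} (used inside the proof of \Cref{lemma:zero-bit-ldpc-decoding}). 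No new calculations are needed beyond invoking these three ingredients.
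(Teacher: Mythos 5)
Your proof is correct and follows essentially the same route as the paper: invoke \Cref{lemma:zero-bit-ldpc-decoding} for robustness (and soundness via \Cref{lemma:codeword-nondetection}), invoke \Cref{lemma:low-rate-random-ldpc} for statistical uniformity of $G$, and then apply $\LPN_{g,\eta}$. The paper's proof is more terse; your explicit hybrid $H_0 \to H_1 \to H_2$ and the parameter coordination (choosing $t = \lfloor \delta\log n\rfloor$ to satisfy both $t \le \delta\log n$ and $\omega(\sqrt{\log n}) \le t \le O(\log n)$ simultaneously, yielding $g = \Omega(t^2) = \Omega(\log^2 n)$) are implicit in the paper but spelled out correctly here.
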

\begin{proof}
    By \Cref{lemma:zero-bit-ldpc-decoding}, there exists $t = \Theta(\log n)$ such that for any $g > 0$, \ifeprint\else \\ \fi $\LDPCPRC_0[n, g, t, 0.99n, \eta, (0.99n)^{-1/4}]$ is robust to every $p$-bounded channel.
    
    By \Cref{lemma:low-rate-random-ldpc}, there exists a function $g = \Omega(\log^2 n)$ such that the generator matrix of this code --- that is, $G$ for $(P,G) \from \LDPC[n,g,t,0.99n]$ --- is $\negl[n]$-close to uniformly random.

    Since $G$ has dimensions $n \times g$, the $\LPN_{g,\eta}$ assumption implies that \ifeprint\else \\ \fi $\LDPCPRC_0[n, g, t, 0.99n, \eta, (0.99n)^{-1/4}]$ is a public-key PRC.
\end{proof}

\ifeprint
\newpage
\else \fi
\newpage

\section{Boosting the rate and robustness of any pseudorandom code} \label{sec:improved-prcs}
\subsection{Multi-bit pseudorandom codes} \label{subsec:many-bit}
We show how to construct a multi-bit PRC from any zero-bit PRC.
The high-level idea is to encode a given message bit-by-bit, where we use codewords from the zero-bit PRC to represent bits of the message that are 1, and uniformly random strings to represent bits of the message that are 0.
We say this encoding consists of many \emph{blocks}, where each block is either a codeword from the zero-bit PRC or a random string.
Since our zero-bit PRC allows a decoder with the secret key to distinguish uniform strings from noisy codewords, we can use this decoder to recover each bit of the message from each block.

However, as described so far, there are two issues with this scheme. 
The first is that this scheme encodes the all-0 string as a uniformly random string, but the error correction property of a PRC requires that the decoder can distinguish encodings (of any message) from random strings.
Thus, we modify the above scheme to append a codeword from the zero-bit PRC to the end of every encoding.

The other issue is that this scheme may lose the zero-bit PRC's robustness.
In particular, consider a zero-bit PRC that is robust to all $p$-bounded channels; we would like for our multi-bit PRC to retain this robustness.
However, consider the channel that flips each bit in only the first block of the encoding, independently with probability $\frac{1}{2}$. 
The decoder will now be unable to recover the first bit of the message, and this channel is $p$-bounded since it changes only a sub-constant fraction of the bits.
The issue here is that our $p$-bounded channel was not bounded at all on the first block; we'd like for the channel's effect on every block of the encoding to be $p$-bounded.
We solve this issue by randomly permuting the encoding, which ensures that the errors introduced by the channel cannot be too concentrated in any block. 
The decoder now inverts the permutation before decoding block-by-block as before.

Although the constructions are essentially the same, we separately present multi-bit secret-key and public-key PRCs as they have slightly different syntax. 

\begin{construction}[Multi-bit secret-key PRC] \label{const:multi-bit-sk}
Let \ifeprint\else \\ \fi $\PRC_0 = (\KeyGen_0, \Encode_0, \Decode_0)$ be a zero-bit secret-key PRC with block length $n$. We define an $\ell$-bit secret-key PRC, \ifeprint\else \\ \fi $\PRC_\ell = (\KeyGen_\ell, \Encode_\ell, \Decode_\ell)$, as follows:
\begin{itemize}
    \item $\KeyGen_\ell(1^\secpar)$: Sample $\sk_0 \gets \KeyGen_0(1^\secpar)$. Sample a random permutation $\pi : [n \cdot (\ell+1)] \to [n \cdot (\ell+1)]$. Output $\sk = (\sk_0, \pi)$.
    \item $\Encode_\ell(\sk, \m)$: Given as input a message $\m \in \{0,1\}^\ell$, for each $i \in [\ell + 1]$, let
    \begin{align*}
        c_i = \begin{cases}
            c \from \{0,1\}^n &\text{ if } \m_i = 0 \text{ and } i \neq \ell + 1\\
            c \gets \Encode_0(\sk_0) &\text{ if } \m_i = 1 \text{ and } i \neq \ell + 1\\
            c \gets \Encode_0(\sk_0) &\text{ if } i = \ell+1
        \end{cases}
    \end{align*}
    Let $y = c_1 || \ldots || c_{\ell+1} \in \{0,1\}^{n(\ell+1)}$.
    Output $x = y_{\pi(1)} || \ldots || y_{\pi(n(\ell + 1))}$.
    \item $\Decode_\ell(\sk, x_1 || \ldots || x_{n(\ell+1)})$:
    For each $i \in [\ell + 1]$, \ifeprint\else \\ \fi let $\hat{y}_i = x_{\pi^{-1}(1 + (i-1) n)} || x_{\pi^{-1}(2 + (i-1) n)} || \ldots || x_{\pi^{-1}(n + (i-1) n)}$.
    \begin{align*}
        \hat{x}_i = \begin{cases}
            1 &\text{ if } \Decode_0(\sk_0, \hat{y}_i) = 1\\
            0 &\text{ otherwise }
        \end{cases}
    \end{align*}    
    If $\hat{x}_{\ell+1} \neq 1$, output $\bot$. Otherwise, output $\hat{\m} = \hat{x}_{\pi^{-1}(1)} || \ldots || \hat{x}_{\pi^{-1}(\ell)}$.
\end{itemize}
\end{construction}

\begin{construction}[Multi-bit public-key PRC] \label{const:multi-bit-pk}
Let \ifeprint\else \\ \fi  $\PRC_0 = (\KeyGen_0, \Encode_0, \Decode_0)$ be a zero-bit public-key PRC with block length $n$. We define a $\ell$-bit public-key PRC, \ifeprint\else \\ \fi $\PRC_\ell = (\KeyGen_\ell, \Encode_\ell, \Decode_\ell)$, as follows:
\begin{itemize}
    \item $\KeyGen_\ell(1^\secpar)$: Sample $(\sk_0, \pk_0) \gets \KeyGen_0(1^\secpar)$. Sample a random permutation $\pi : [n \cdot (\ell+1)] \to [n \cdot (\ell+1)]$. Output $\sk = (\sk_0, \pi)$ and $\pk = (\pk_0, \pi)$.
    \item $\Encode_\ell(\pk, m)$: Given as input a message $\m \in \{0,1\}^\ell$, for each $i \in [\ell + 1]$, let
    \begin{align*}
        c_i = \begin{cases}
            c \from \{0,1\}^n &\text{ if } \m_i = 0 \text{ and } i \neq \ell + 1\\
            c \gets \Encode_0(\pk_0) &\text{ if } \m_i = 1 \text{ and } i \neq \ell + 1\\
            c \gets \Encode_0(\pk_0) &\text{ if } i = \ell+1
        \end{cases}
    \end{align*}
    Let $y = c_1 || \ldots || c_{\ell+1} \in \{0,1\}^{n(\ell+1)}$.
    Output $x = y_{\pi(1)} || \ldots || y_{\pi(n(\ell + 1))}$.
    \item $\Decode_\ell(\sk, x_1 || \ldots || x_{n(\ell+1)})$:
    For each $i \in [\ell + 1]$, \ifeprint\else \\ \fi let $\hat{y}_i = x_{\pi^{-1}(1 + (i-1) n)} || x_{\pi^{-1}(2 + (i-1) n)} || \ldots || x_{\pi^{-1}(n + (i-1) n)}$.
    \begin{align*}
        \hat{x}_i = \begin{cases}
            1 &\text{ if } \Decode_0(\sk_0, \hat{y}_i) = 1\\
            0 &\text{ otherwise }
        \end{cases}
    \end{align*}    
    If $\hat{x}_{\ell+1} \neq 1$, output $\bot$. Otherwise, output $\hat{\m} = \hat{x}_{\pi^{-1}(1)} || \ldots || \hat{x}_{\pi^{-1}(\ell)}$.
\end{itemize}
\end{construction}

\begin{numberedclaim} \label{claim:many-bit-from-zero}
    If $\PRC_0 = (\KeyGen_0, \Encode_0, \Decode_0)$ is a zero-bit secret-key (public-key) PRC robust to $p$-bounded channels, $\PRC_\ell$ is a $\ell$-bit secret-key (public-key) PRC robust to $(p-\varepsilon)$-bounded channels for any constant $\varepsilon \in (0,p)$.
\end{numberedclaim}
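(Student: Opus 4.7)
The plan is to verify each of the three PRC properties---pseudorandomness, soundness, and robustness to $(p-\varepsilon)$-bounded channels---by reducing it to the corresponding property of $\PRC_0$. Pseudorandomness follows from a standard hybrid argument: each response of $\Encode_\ell(\sk,\cdot)$ consists of $\ell+1$ pre-permutation blocks that are either $\PRC_0$ codewords or uniformly random strings, and swapping each codeword for a uniform string one at a time, using pseudorandomness of $\PRC_0$ (in the public-key case, even given $\pk_0$), shows that the response is indistinguishable from $\ell+1$ fresh uniform $n$-bit strings, after which the permutation preserves uniformity. Soundness is equally quick: $\Decode_\ell$ returns $\bot$ whenever the check block $\hat{y}_{\ell+1}$ fails to decode to $1$ under $\PRC_0$, and since $\hat{y}_{\ell+1}$ is a deterministic function of the fixed input $c$ and $\pi$ (both independent of $\sk_0$), soundness of $\PRC_0$ yields $\Decode_0(\sk_0,\hat{y}_{\ell+1}) = \bot$ with overwhelming probability.

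The interesting case is robustness. Fix a $(p-\varepsilon)$-bounded channel $\calE$ and message $\m$, let $x \gets \Encode_\ell(\sk_\ell,\m)$, and let $e = \calE(x) \oplus x$. Combining the pseudorandomness of $x$ (just proven) with the $(p-\varepsilon)$-boundedness of $\calE$, we get $\wt(e) \le (p-\varepsilon)\, n(\ell+1)$ with overwhelming probability. Crucially, $\calE$ is universally quantified before the keys are sampled, so the permutation $\pi$ is information-theoretically hidden from $\calE$; conditioned on the above weight bound, the number of errors landing in any given block is hypergeometrically distributed with mean at most $(p-\varepsilon) n$. Applying \Cref{theorem:hyper} with the slack $\varepsilon$, together with a union bound over the $\ell+1$ blocks, shows that every block sees at most $pn$ errors, and these errors are at uniformly random positions within each block, with overwhelming probability.

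With this in hand, I reduce decoding of each individual block to robustness or soundness of $\PRC_0$. Define, for each block index $i$, the auxiliary channel $\tilde\calE_i$ on $\{0,1\}^n$ that, on input $c$, samples uniform fillers for the other $\ell$ slots and a fresh uniform $\pi$, assembles the resulting length-$n(\ell+1)$ string, applies $\calE$, and returns the bits of the output that originally came from $c$. The argument above shows $\tilde\calE_i$ is $p$-bounded, so for the $\PRC_0$-codeword blocks (those with $\m_i = 1$ or $i = \ell+1$), robustness of $\PRC_0$ applied to $\tilde\calE_i$ gives $\Decode_0 = 1$ with overwhelming probability; for the uniform blocks (those with $\m_i = 0$ and $i \ne \ell+1$), soundness of $\PRC_0$ symmetrically gives $\Decode_0 = \bot$. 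A union bound over the $\ell+1$ blocks then shows that $\Decode_\ell$ recovers $\m$ with overwhelming probability.

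The main obstacle is that, in the real experiment, the error projected to block $i$ is not literally $\tilde\calE_i$ applied to that block: the other $\ell$ slots contain $\PRC_0$ codewords correlated with $\sk_0$ rather than the uniform fillers used in the definition of $\tilde\calE_i$. I bridge this gap by an additional hybrid that, before applying $\calE$, replaces each of the other-slot codewords with a fresh uniform string one at a time, using pseudorandomness of $\PRC_0$ to argue that each such swap changes the decoder's success probability by at most $\negl$. After all $O(\ell)$ swaps the induced per-block channel coincides exactly with $\tilde\calE_i$, at which point the conclusions of the previous paragraph transfer back to the real experiment with only a $\poly(\ell) \cdot \negl$ loss.
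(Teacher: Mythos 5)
Your pseudorandomness and soundness arguments match the paper's and are correct. For robustness, your approach also tracks the paper's through the hypergeometric step bounding the per-block error weight at $pn$. You then go beyond the paper: you define an explicit auxiliary channel $\tilde{\calE}_i$ that uses uniform fillers for the other slots, note that it is $p$-bounded and fixed independently of $\sk_0$, and correctly observe that the error which block $i$ actually experiences is \emph{not} $\tilde{\calE}_i(c_i)$, because the other slots contain $\Encode_0(\sk_0)$ codewords correlated with $\sk_0$. Identifying this dependence is a genuine contribution: the paper's own proof glosses over it, asserting only that each block is subject to $p$-bounded error and therefore decodes correctly, and the paper's ``Differences from a previous version'' subsection shows the same key-dependent-channel subtlety was previously overlooked in the watermarking completeness argument and had to be patched with a fresh one-time pad per PRC block.

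Unfortunately, the bridge you propose does not go through. You want to invoke pseudorandomness of $\PRC_0$ to argue that swapping the other-slot codewords for uniform strings changes ``the decoder's success probability'' by only $\negl$. The event in question is whether $\Decode_0(\sk_0, \hat{y}_i)$ equals $1$ (respectively $\bot$), which requires evaluating $\Decode_0$ with the decoding key $\sk_0$. A pseudorandomness adversary for $\PRC_0$ is given only $\pk_0$ (or just encoding-oracle access in the secret-key case), so it cannot compute this event; hence indistinguishability of the two hybrids does not constrain this $\sk_0$-dependent statistic. Pseudorandomness does legitimately transfer efficiently-checkable quantities such as per-block error weights --- which is why your weight-bound and hypergeometric steps are fine --- but not the per-block decoding outcome. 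Relatedly, your claim that ``$\pi$ is information-theoretically hidden from $\calE$'' only holds after passing to the hybrid in which $y$ is uniform; in general $\calE$ sees $x = y_{\pi(\cdot)}$ and can in principle extract information about $\pi$ if $y$ has structure. The final paragraph of your proof is therefore a gap, and the underlying difficulty (the induced per-block channel depending on $\sk_0$ through sibling codeword blocks) is not explicitly resolved in the paper either; it stops at the weight bound and asserts that robustness of $\PRC_0$ applies.
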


\begin{proof}

\par{\bf Error correction.}
We first show that with overwhelming probability, a $(p-\varepsilon)$-bounded channel applied to a codeword of $\Encode_\ell$ is $p$-bounded on each of its blocks.
Since $\PRC_0$ is robust against $p$-bounded channels, this implies that each block will be correctly decoded by $\Decode_0$.
First, by definition of $p$-boundedness, the entire codeword from $\PRC_\ell$ will have at most $(p-\varepsilon)n(\ell+1)$ errors with overwhelming probability.
Let $\alpha \leq p-\varepsilon$ be the actual fraction of errors.
Now, fix any $n$-length block of the codeword after the permutation has been inverted.
Observe that over the randomness of the permutation, the number of errors in this block is a random variable $X \sim \Hyp(n(\ell+1), \alpha n(\ell+1), n)$.
By \Cref{theorem:hyper},
$$\Pr\left[X \geq (\alpha+\varepsilon)n \right] \leq e^{-2 \varepsilon^2 n},$$
which is negligible in $n$.
Since $\alpha \leq p - \varepsilon$,
the probability that there are at least $pn$ errors in our block is at most the probability that there are $(\alpha + \varepsilon)n$ errors, which we've just shown is negligible.
By a union bound, the probability that \emph{any} block has more than $pn$ errors is negligible.

We now show the second property of error correction, that an unrelated string $c$ decodes to $\bot$ with overwhelming probability.
Consider the last block $\hat{y}_{\ell+1}$ of $c$ after the permutation has been inverted.
By error correction of $\PRC_0$,
the probability over $\sk_0$ that $\Decode_0(\sk_0, \hat{y}_{\ell+1}) = 1$ is negligible.
Therefore, with overwhelming probability, $\Decode_\ell(\sk, c) = \bot$.

\par{\bf Pseudorandomness.}
We prove pseudorandomness of public-key $\PRC_\ell$. We observe that the same proof holds for secret-key $\PRC_\ell$, by omitting the public keys as input to the adversaries $\adv$ and $\bdv$, and instead giving them oracle access to $\Encode_0(\sk_0, \cdot)$ and $\Encode_\ell(\sk, \cdot)$ respectively.

Suppose that an adversary $\adv$ given $\pk$ can distinguish between oracle access to $\Encode_\ell(\pk, \cdot)$ and the uniform distribution.
Then $\bdv$ given $\pk_0$ can distinguish between oracle access to $\Encode_0$ and the uniform distribution as follows. 
$\bdv$ samples a random permutation $\pi : [n(\ell+1)] \to [n(\ell+1)]$ and gives $\pk = (\pk_0, \pi)$ to $\adv$ as input.
Whenever $\adv$ queries $\m$ to its oracle, $\bdv$ computes a response $x$ by following $\Encode_\ell$, but querying its own oracle when $\Encode_\ell$ requires a call to $\Encode_0$, and using $\pi$ as the permutation.
Observe that if $\bdv$'s oracle is the uniform distribution, the resulting $x$ is permutation of a uniform string, so $x$ is uniform.
If $\bdv$'s oracle is $\Encode_0$, the resulting $x$ is drawn from $\Encode_\ell(\sk, \m)$.
Therefore, $\bdv$'s advantage is exactly $\adv$'s advantage.
\end{proof}

\begin{remark*}
    Unfortunately, \Cref{const:multi-bit-sk,const:multi-bit-pk} yield codes of rate roughly $1/n$: If the underlying zero-bit PRC has block length $n$, then in order to encode a $\ell$-bit message we need codewords of length $n \cdot (\ell+1)$. Ideally, we would like to have constant rate --- i.e., to encode $\ell$-bit messages into $O(\ell)$-bit codewords. While typical LDPC codes have constant rate, and there is a simple modification of our pseudorandom LDPC codes that have constant rate, at densities of $\Omega(\log n)$ it is not known how to decode from a constant rate of errors. Instead, we build constant-rate PRCs generically from any multi-bit PRC in \Cref{subsec:constant-rate-prcs}.
\end{remark*}

\subsection{Constant-rate pseudorandom codes}
\label{subsec:constant-rate-prcs}

We now show how to build constant-rate PRCs from any multi-bit PRC (as in \Cref{subsec:many-bit}) and any constant-rate error-correcting code.
We state the public-key version of the following construction only; as usual, the secret-key version is similar.

\begin{construction}[Constant-rate public-key PRC] \label{const:constant-rate-prcs}
    Let $\secpar$ be a security parameter and $\PRC_\secpar$ be a $\secpar$-bit public-key PRC with block length $n'$. Let $(\Enc, \Dec)$ be any error-correcting code with block length $n > \secpar$ and messages of length $k$. Let $\PRG : \{0,1\}^\secpar \to \{0,1\}^n$ be any pseudorandom generator. We define $\PRC[\PRC_\secpar, (\Enc, \Dec)]$ which is a $k$-bit public-key PRC as follows:
    \begin{itemize}
        \item $\PRC.\KeyGen(\secparam)$: Sample $(\sk',\pk') \gets \PRC_\secpar.\KeyGen(1^\secpar)$. Sample a random permutation $\pi : [n' + n] \to [n' + n]$. Let $\sk = (\sk', \pi)$ and $\pk = (\pk', \pi)$; output $(\sk, \pk)$.
        \item $\PRC.\Encode(\pk,\m)$: Given as input a message $\m \in \{0,1\}^k$, let $r \gets \{0,1\}^\secpar$, and let
        \[
            x \gets \PRC_\secpar.\Encode(\pk, r) || \PRG(r) \oplus \Enc(\m).
        \]
        Let $x_1, \ldots, x_{n' + n}$ denote the bits of $x$.
        The output is $x_{\pi(1)} || x_{\pi(2)}|| \ldots || x_{\pi(n' + n)}$.
        \item $\PRC.\Decode(\sk, c)$: Letting $c_1, \ldots, c_{n' + n}$ denote the bits of $c$, the decoder first computes
        \[
            y = c_{\pi^{-1}(1)} || \ldots || c_{\pi^{-1}(n' + n)}.
        \]
        The decoder then parses $y$ as $y_1 || y_2$, where $\abs{y_1} = n'$ and $\abs{y_2} = n$. It computes $r \gets \PRC_\secpar.\Decode(\sk, y_1)$ and outputs $\m = \Dec(\PRG(r) \oplus y_2)$.
    \end{itemize}
\end{construction}
The block length of the resulting PRC is $n + n'$, so the rate is $k / (n+n')$. Since $n'$ is only a function of the security parameter $\secpar$ and does not depend on $k$, for large $k$ the rate approaches the rate $k/n$ of the underlying error-correcting code $(\Enc, \Dec)$.

\begin{theorem} \label{theorem:constant-rate-prcs}
    Let $\PRC_\secpar$ be a $\secpar$-bit public-key PRC with block length $n'$, and let $(\Enc, \Dec)$ be any error-correcting code with block length $n > \secpar$ and messages of length $k$.
    
    Then $\PRC = \PRC[\PRC_\secpar, (\Enc, \Dec)]$ of \Cref{const:constant-rate-prcs} is a public-key PRC where
    \begin{itemize}
        \item the rate of $\PRC$ is $k/(n+n')$; and
        \item for any constants $p, \varepsilon \in (0, 1/2)$, if $\PRC_\secpar$ and $(\Enc,\Dec)$ are robust to channels that introduce at most a $p$ fraction of errors at random locations, then $\PRC$ is robust to all $(p-\eps)$-bounded channels.
    \end{itemize}
\end{theorem}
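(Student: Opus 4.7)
The plan is to verify the four properties of a PRC from \Cref{def:pkPRC} for \Cref{const:constant-rate-prcs}. The rate is immediate from the encoding length: each codeword has $n+n'$ symbols and carries a $k$-bit message, giving rate $k/(n+n')$.

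For pseudorandomness, I would use a two-step hybrid argument. Before the permutation $\pi$ is applied, a response to a query $\m$ is the concatenation $\PRC_\secpar.\Encode(\pk', r) \,\|\, \PRG(r) \oplus \Enc(\m)$, for a fresh $r$. By the pseudorandomness of $\PRC_\secpar$ (which holds even when $\pk'$ and the independent $\pi$ are given to the adversary), each $\PRC_\secpar.\Encode(\pk', r)$ can be replaced by a fresh uniform string in $\{0,1\}^{n'}$. In the resulting hybrid, $r$ appears only inside $\PRG$ and is information-theoretically hidden from the adversary, so by PRG security $\PRG(r)$ can further be replaced by a fresh uniform string in $\{0,1\}^{n}$. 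The pre-permutation output is then uniform in $\{0,1\}^{n+n'}$, and applying a fixed permutation preserves uniformity.

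For error correction, fix a $(p-\varepsilon)$-bounded channel $\calE$ and any message $\m$. A single codeword $x$ is indistinguishable from a uniform string (by the pseudorandomness just established, used in single-query form), so the total error $\wt(\calE(x) \oplus x)$ exceeds $(p-\varepsilon)(n+n')$ with only negligible probability. After the decoder inverts $\pi$, exactly as in the proof of \Cref{claim:many-bit-from-zero}, a hypergeometric tail bound (\Cref{theorem:hyper}) shows that each of the two blocks carries at most a $p - \varepsilon/2$ fraction of errors with overwhelming probability, with the error coordinates within each block uniformly distributed. The length-$n'$ block is then a codeword of $\PRC_\secpar$ perturbed by at most a $p$ fraction of random-location errors, so $\PRC_\secpar.\Decode$ recovers $r$; XORing $\PRG(r)$ into the length-$n$ block gives $\Enc(\m)$ perturbed by the same random-location errors, which $\Dec$ decodes to $\m$.

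Soundness is essentially inherited: for any fixed $c$ and any permutation $\pi$, the length-$n'$ block $y_1$ obtained by inverting $\pi$ is a fixed string, so soundness of $\PRC_\secpar$ gives $\PRC_\secpar.\Decode(\sk', y_1) = \bot$ with overwhelming probability over $\sk'$; averaging over $\pi$ the same bound holds, and the outer decoder is set to return $\bot$ in this case. The main obstacle is the coupling between the fixed permutation $\pi$ (part of the public key), the channel's input-dependent error pattern, and the pseudorandomness of the encoding: the channel could in principle tailor its errors to $\pi$. The key move is to first reduce to a uniform channel input via single-shot pseudorandomness; once the channel input is uniform the conditional distribution of $\pi$ given that input remains uniform, which is precisely what legitimizes the hypergeometric concentration and the claim that each block sees errors at uniformly random coordinates.
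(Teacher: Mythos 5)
Your proposal follows the paper's own proof essentially step for step: the identical two-step hybrid for pseudorandomness (first replace $\PRC_\secpar.\Encode(\pk',r)$ by a fresh uniform string using pseudorandomness of $\PRC_\secpar$, then replace $\PRG(r)$ by a fresh uniform string once $r$ is information-theoretically hidden), and the identical hypergeometric-concentration plus random-permutation argument for robustness, invoking \Cref{theorem:hyper} to split the error budget across the two blocks and the uniformity of $\pi$ to randomize error locations within each block. You additionally spell out two points the paper leaves implicit --- the soundness argument (for each fixed $\pi$ the inner block $y_1$ is a fixed string, so soundness of $\PRC_\secpar$ applies and then one averages over $\pi$) and the explicit appeal to single-shot pseudorandomness to transfer the $(p-\varepsilon)$-bounded channel's guarantee from uniform inputs to actual codewords, which is what makes the conditional uniformity of $\pi$ given the channel input (and hence the hypergeometric bound and the random-location claim) legitimate.
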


\begin{proof}
    \par{\bf Pseudorandomness.} By pseudorandomness of $\PRC_\secpar$, no polynomial-time adversary can distinguish between $\PRC.\Encode$ and a hybrid where $x \gets \PRC_\secpar.\Encode(\pk, r)$ is replaced by a uniformly random $x \gets \{0,1\}^{n'}$.
    By pseudorandomness of the PRG, this is computationally indistinguishable from a hybrid where $\PRG(r)$ is replaced by a uniformly random $s \gets \{0,1\}^n$, making $(x || s \oplus \Enc(\m))$ uniform. 
    Therefore, in this hybrid, which is indistinguishable from oracle access to $\PRC.\Encode$, each query to the oracle outputs an independently drawn uniform string in $\{0,1\}^{n + n'}$.

    \par{\bf Robustness.} We first show that for any $p$-bounded channel $\calE$, if the decoder receives $c \gets \calE(\PRC.\Encode(\pk, \m))$, then the resulting $y_1$ and $y_2$ are equal to $\calE_1(\Encode(r))$ and $\calE_2(\Enc(\m)) \oplus \PRG(r)$ respectively, where $\calE_1$ and $\calE_2$ are both $p$-bounded. 
    Observe that the number of errors introduced by $\calE_1$ is distributed as a hypergoemetric random variable with a population of size $n + n'$, $p(n + n')$ success elements (representing the errors), and $n'$ draws. 
    By \Cref{theorem:hyper}, the probability that $\calE_1$ introduces at least $(p-\eps) n'$ errors is at most $e^{-2\eps^2 n'}$, which is negligible. 
    Similarly, the number of errors introduced by $\calE_2$ is distributed as a hypergoemetric random variable with a population of size $n + n'$, $p(n + n')$ success elements (representing the errors), and $n$ draws.
    By \Cref{theorem:hyper}, the probability that $\calE_2$ introduces at least $(p-\eps) n'$ errors is at most $e^{-2\eps^2 n}$, which is negligible as $n > \secpar$.
    By a union bound, with overwhelming probability both $\calE_1$ and $\calE_2$ introduce errors with at most a rate of $(p-\eps)$, and therefore they are $(p-\eps)$-bounded.

    Furthermore, because of the random permutation $\pi$, the locations of the errors on $y_2$ are random.
    Therefore by robustness of $\PRC_\secpar$, we have that $\Decode(\sk, y_1) = r$ with overwhelming probability; and by error correction of $(\Enc, \Dec)$, $\Dec(\PRG(r) \oplus y_2) = \Dec(\BSC_p(\Enc(\m))) = \m$ with overwhelming probability as well.
\end{proof}

If we instantiate \Cref{const:constant-rate-prcs} with
\begin{itemize}
    \item $\PRC_\secpar$ from \Cref{const:multi-bit-pk}, using the LDPC-based PRCs of \Cref{const:zero-bit-ldpc-prc} as $\PRC_0$; and
    \item the error-correcting codes $(\Enc,\Dec)$ of \cite{ABN+92,NN93,TaShma17},
\end{itemize}
then we obtain constant-rate PRCs with robustness to every $p$-bounded channel for $p \in (0,1/2)$.

\subsection{Pseudorandom codes for the deletion channel} \label{subsec:prcs-for-deletions}

In this section, our primary goal is to construct a PRC, $\PRCdel$, that is robust against a constant-rate deletion channel.
That is, the message can be recovered from an encoding under $\PRCdel$ when each of its bits is deleted independently with some constant probability $p$.
The actual robustness guarantee we obtain is even stronger: we can recover the message even when this constant-rate deletion channel is composed with a constant-rate binary symmetric channel.
Our construction makes black-box use of any PRC that is robust against $p$-bounded channels.

Our high-level idea is to take an encoding of this original PRC and make it redundant, while preserving pseudorandomness.
Let $x = x_1, \dots, x_n$ be a PRC encoding of some message.
We could make $x$ robust to the deletion channel by duplicating each bit $x_i$ some number of times $m$.
However, the result would no longer be pseudorandom.
Instead of duplicating $x_i$, we sample a random $y \in \{0,1\}^m$, conditioned on the majority of the bits of $y$ being $x_i$.
Observe that if $x_i$ is $\Ber\left(\frac{1}{2} \right)$, $y$ is uniform over $\{0,1\}^m$.
Since $x_i$ is a bit of our PRC, it is pseudorandom, and $y$ is indistinguishable from uniform.
The resulting codeword is $y_1, \dots, y_n$, where each $y_i \in \{0,1\}^m$.
We call this the ``majority code.''

Observe that given $y = y_1, \dots, y_n$, one can recover $x$ by computing the majority of each length-$m$ block $y_i$.
If each bit of $y$ is deleted independently with some probability $p$ to yield $y'$, we can recover an approximate version of $x$ as follows.
We partition $y'$ into equal-length blocks $y'_1, \dots, y'_n$, and we compute the majority of each block.
Intuitively, we expect most of the new blocks $y'_i$ to be subsets of the corresponding original blocks $y_i$.
Since the deletions are random, we expect them to preserve the majority for most blocks.
Therefore, the message $x'$ that we recover should be approximately $x$, with some bounded number of bits flipped.
Recalling that $x$ itself was a PRC codeword, and our PRC is robust to bounded-weight channels, we can recover the original message from $x'$.

It may be tempting to apply this majority code to other encoding schemes such as pseudorandom encryption, and the result would indeed be pseudorandom.
However, it would not be robust against a constant-rate deletion channel.
Since decoding from the deletion channel introduces bounded-weight errors, the message that the majority code is applied to must be error-correcting against bounded-weight channels. 
Therefore, our PRC is crucial to this approach. 

We first introduce some notation. Let $\Maj(y)$ denote the majority function for bitstrings $y$, where ties are resolved by choosing a random bit. For a given positive odd integer $m \in \N$, let $\MajEncode_m : \{0,1\} \to \{0,1\}^k$ be a randomized function that takes an input bit $b \in \{0,1\}$ and outputs a random string $y \in \{0,1\}^m$ conditioned on $\Maj(y) = b$. For $x \in \{0,1\}^n$, we define $\MajEncode_m(x) := (\MajEncode_m(x_1), \dots, \MajEncode_m(x_n)) \in \{0,1\}^{nm}$. The decoding algorithm $\MajDecode_n : \{0,1\}^* \to \{0,1\}^n$ evenly partitions the received string into $n$ blocks and outputs a string of the blocks' majorities.

\indent $\MajEncode_m(x)$:
\begin{enumerate}
    \item For each $i \in [\len{x}]$, let $y^i = \MajEncode_m(x_i)$.
    \item Output $(y^1 || \cdots || y^{\len{x}}) \in \{0,1\}^{\len{x} \cdot m}$.
\end{enumerate}

\indent $\MajDecode_n(z)$:
\begin{enumerate}
    \item Partition $z$ into $n$ blocks $z = (z^1 || \cdots || z^n)$ such that $\abs{\len(z^i) - \len(z)/n} \le 1$ for all $i \in [n]$.
    \item Output $(\Maj(z^1), \dots, \Maj(z^n))$.
\end{enumerate}

In \Cref{lemma:deletion-channel}, we show that majority encoding a message allows us to recover an approximation of the message with bounded-weight error after a deletion channel is applied.
This holds even if this deletion channel is composed with a binary symmetric channel.

\begin{lemma} \label{lemma:deletion-channel}
   Assume $m = \Omega(n \log^6 n)$. For any constant deletion rate $p \in (0,1)$ and error rate $q \in (0,1/2)$, there exists a constant $\varepsilon > 0$ such that
    \[
        \Pr_{x \from \{0,1\}^n}[\wt(x \oplus \MajDecode_n \circ \emph{\BSC}_q \circ \emph{\BDC}_p \circ \MajEncode_m(x)) \le (1/2-\varepsilon) \cdot n] \ge 1-\negl[n].
    \]
\end{lemma}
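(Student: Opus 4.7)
The plan is to show separately that (a) the partition blocks $z^i$ align well with the post-deletion blocks $\tilde w^i$, differing on only a sub-$m$ number of positions, and (b) conditional on this alignment, each majority vote $\Maj(z^i)$ agrees with $x_i$ with probability at least $\tfrac12 + c$ for some constant $c = c(p,q) > 0$; a martingale concentration argument will then convert the resulting bound on $\E[\wt(x \oplus \hat x)]$ (where $\hat x_i := \Maj(z^i)$) into the claimed high-probability bound. For (a), I let $D_1,\dots,D_n\sim\Ber(m,p)$ be the independent per-block deletion counts, $B_i = im - \sum_{j\le i} D_j$, and $M = B_n$, so that
\[
    B_i - iM/n = -\tfrac{n-i}{n}\sum_{j\le i} D_j + \tfrac{i}{n}\sum_{j>i} D_j
\]
is a zero-mean linear combination of independent Bernoullis whose squared coefficients sum to at most $mn/4$. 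Hoeffding's inequality (a special case of \Cref{theorem:azuma}) plus a union bound over $i\in[n]$ will give $\max_i|B_i - iM/n| \le \sqrt{mn}\log n$ with probability $1-\negl[n]$. Under this event, each $z^i$ agrees with $\tilde w^i$ on at least $(1-p)m - O(\sqrt{mn}\log n)$ positions and differs from it on at most $d_i = O(\sqrt{mn}\log n)$ positions drawn from $\tilde w^{i\pm 1}$.

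For (b), I fix $i$, condition on $x_i=1$ (the case $x_i=0$ is symmetric), and analyze the signed sum $S_i = \sum_k (2z^i_k - 1)$. Standard binomial estimates show that the signed sum of $y^i$ conditioned on $\Maj(y^i)=1$ has mean $\Theta(\sqrt m)$ and variance $O(m)$; a per-bit $\Ber(q)$ flip scales the mean by $(1-2q)$ while leaving the variance $O(m)$; and restricting to any fixed subset of $(1-p)m - o(m)$ positions preserves mean $\Theta(\sqrt m)$ and variance $O(m)$. The $d_i$ misaligned positions come from blocks independent of $x_i$, so they contribute a zero-mean signed sum with variance $O(d_i)$. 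Under the hypothesis $m = \Omega(n\log^6 n)$, both $d_i = o(m)$ and $\sqrt{d_i} = o(\sqrt m)$, so $S_i$ has mean $\Omega(\sqrt m)$ toward $+1$ and standard deviation $O(\sqrt m)$; a Berry--Esseen-type estimate (or a direct local central limit theorem computation for binomials) will then yield $\Pr[\Maj(z^i) = 1 \mid x_i = 1] \ge \tfrac12 + c$ for some $c = c(p,q)>0$.

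Finally, linearity of expectation gives $\E[\wt(x\oplus\hat x)] \le (\tfrac12 - c)n + \negl[n]$. For the high-probability bound I condition on the bounded-drift event of step (a) and expose the block-wise randomness $(x_1, y^1, e^1),\dots,(x_n, y^n, e^n)$ as a Doob martingale. Under bounded drift, each original block $i$ affects a contiguous range of at most $\lceil 1/(1-p)\rceil + 1 = O(1)$ partition blocks, so modifying $(x_i,y^i,e^i)$ changes at most $O(1)$ of the indicators $\mathbbm{1}[\hat x_j \ne x_j]$. The martingale differences are therefore $O(1)$, and Azuma's inequality (\Cref{theorem:azuma}) will give deviation $O(\sqrt{n\log n}) = o(n)$ with probability $1-\negl[n]$. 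Setting $\varepsilon = c/2$ completes the proof. The hard part will be step (b): because the signal bias is only $\Theta(\sqrt m)$, one must carefully account for both the lost bias from restricting to a smaller correctly-aligned subset and the extra variance from the misaligned bits, and verify that the slack in $m = \Omega(n\log^6 n)$ is enough to absorb both effects rather than treating them as routine lower-order nuisances.
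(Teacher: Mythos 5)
Your step (a) matches the paper's alignment argument (the paper bounds $\bigl| |[a,b]\setminus D| - (b-a+1)(1-p) \bigr|$ over all intervals; you bound $\max_i |B_i - iM/n|$; these are equivalent up to constants and yield the same $O(\sqrt{nm}\log n)$ misalignment). Your steps (b) and (c), however, diverge from the paper in ways worth comparing, and the final concentration step has a concrete gap.

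\textbf{Step (b).} The paper's key technical move (Claim 5, proved inside the lemma) avoids conditioning on the majority entirely. It treats $y^i$ as fully uniform, writes the received signed sum as $\sum_j X_j Z_j$ and the original block's signed sum as $\sum_j X_j + \sum_j Y_j + \Delta$, reveals the set $F$ of positions where the BSC error is deterministically~$+1$, and observes that both sums share the common random variable $\sum_{j\in F}X_j$ plus independent fluctuations $A,B$. The signs then match with probability $\tfrac12 + \Omega(1)$ because $|\sum_F X_j|$ dominates $\max\{|A|,|B+\Delta|\}$ with constant probability. Your route instead conditions on $x_i = \Maj(y^i) = 1$ and applies a Berry--Esseen / local CLT argument to the conditional distribution. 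This can be made to work, but it is substantially harder than you acknowledge: conditioned on $\Maj(y^i)=1$, the bits of $y^i$ are exchangeable but \emph{not} independent, so you cannot invoke Berry--Esseen for independent summands; you would need a local CLT for hypergeometric restrictions (condition on $\sum_k y^i_k$, then the restriction to a subset is hypergeometric), plus careful variance bookkeeping. The paper's decomposition sidesteps all of this and is the cleaner and more robust route.

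\textbf{Step (c) — this is the genuine gap.} You expose $(x_1,y^1,e^1),\dots,(x_n,y^n,e^n)$ as a Doob martingale and claim that, having conditioned on the bounded-drift event, modifying $(x_i,y^i,e^i)$ changes at most $O(1)$ indicators $\mathbbm{1}[\hat x_j \ne x_j]$. If $e^i$ includes the deletion pattern in block $i$ (as the notation suggests), this claim is false: changing the \emph{number} of deletions in $R_i$ by $\delta$ shifts the mapping $f$ for \emph{every} downstream position, so $f(S_j)$ for all $j>i$ shifts by roughly $\delta \cdot (n-j)/n$ positions in $y$-space. Even a unit change ($\delta=1$) perturbs $\Theta(n)$ downstream windows each by $O(1)$ positions, and each such block's majority flips with probability $\Theta(1/\sqrt m)$; within the bounded-drift event, $\delta$ can still vary by $\Theta(\sqrt{mn}\log n)$, so the effect is not even close to $O(1)$ in expectation. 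Conditioning only on the \emph{bulk} event from step (a) does not freeze the alignment; it merely bounds it. The fix is to condition on the \emph{full} deletion pattern $D$ (which fixes $f$ and the partitions $\calR, \calS$ deterministically) and then expose only $(x_i, y^i, \text{BSC errors on } R_i)$; with $D$ fixed, each block's decision depends on $O(1)$ neighboring blocks' $(y, \text{BSC})$-randomness, so bounded differences hold. This is essentially what the paper does: it conditions on $D$, observes that (after also fixing the $\Delta_i$ terms) the detection events for distinct $i$ involve disjoint sets of $(y_j, Z_j)$ variables, and concludes by a Chernoff bound rather than Azuma. Your argument can be salvaged by that same conditioning, but as written the bounded-difference claim does not hold.

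Minor note: the claim ``each original block $i$ affects a contiguous range of at most $\lceil 1/(1-p)\rceil + 1$ partition blocks'' seems to have the arithmetic backwards — since each partition block has $\approx (1-p)m$ symbols and $R_i$ contributes $\approx (1-p)m$ surviving symbols, $R_i$ maps into at most $2$--$3$ partition blocks regardless of $p$ — but this is cosmetic; the number is $O(1)$ either way.
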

\begin{proof}
    Let $y \from \MajEncode_m(x)$ and let $z \from \BSC_q \circ \BDC_p(y)$. Let $D \subseteq [nm]$ be the set of indices deleted by $\BDC_p$; $D$ is a random variable where each $i \in [nm]$ is included in $D$ independently with probability $p$. Let $f : [nm - \abs{D}] \to [nm]$ be the function that maps indices of $z$ to those of $y$, i.e., $z_j = \BSC_q(y_{f(j)})$.

    Let $\calR = R_1, \dots, R_n \subseteq [nm]$ be the partition of $[nm]$ into $n$ blocks of length $m$. Let $\calS = S_1, \dots, S_n \subseteq [nm - \abs{D}]$ be the partition of $[nm - \abs{D}]$ into almost-equal-sized blocks from the definition of $\MajDecode_n(z)$. $\calR$ and $\calS$ define partitions of $y$ and $z$, respectively. All of 

    \begin{figure}
        \centering
        \makebox[\textwidth][c]{\includegraphics{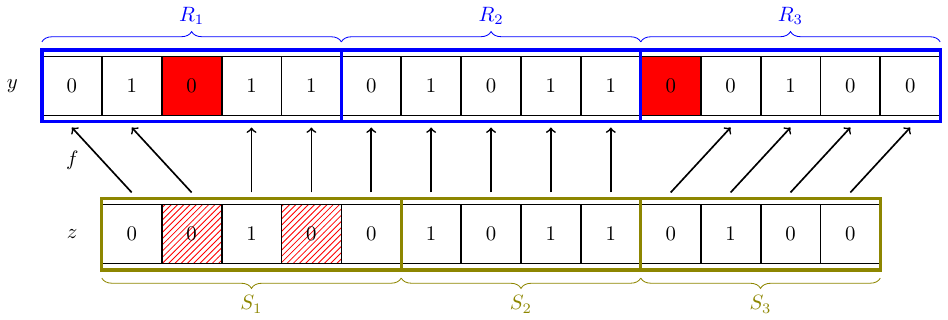}}
        \caption{Objects from the proof of \Cref{lemma:deletion-channel}: $y \from \MajEncode_m(x)$, $z \from \BSC_q \circ \BDC_p(y)$, the function $f$, and the partitions $\calR, \calS$ for $n = 3, m = 5$. In this illustration $x = (1, 1, 0)$ and $\MajDecode_3(z) = (0, 1, 0)$. There are deletions at locations $D = \{3, 11\}$ (indicated by solid red), and there are errors from $\BSC_q$ at locations 2 and 4 (indicated by hatched red). The arrows indicate the mapping $f$, i.e. an arrow points from an index in $z$ to the index in $y$ from which it originated.}
    \end{figure}

    Let $a, b \in [nm]$ be such that $a \le b$. Applying Hoeffding's inequality to $\abs{[a,b] \setminus D} = \sum_{j=a}^b \mathbbm{1}[j \not\in D]$ yields
    \[
        \Pr\left[\Big|(b-a+1)(1-p) - \abs{[a,b] \setminus D}\Big| > \sqrt{nm} \log n\right] \le e^{-\Omega(\log^2 n)}.
    \]
    By a union bound, with probability $1-\negl[n]$ we have that for every $a,b \in [nm]$
    \begin{align}
        \Big|(b-a+1)(1-p) - \abs{[a,b] \setminus D}\Big| \le \sqrt{nm} \log n. \label{eq:ab-length}
    \end{align}
    That is, the number of deletions in every contiguous interval $[a,b]$ of $y$ concentrates and is roughly $p(b-a) \pm \sqrt{nm} \log n$.
    
    We now argue in \Cref{claim:small-misalignment} that because of this concentration, each $S_i$ in the decoder's partition corresponds to a subset of the original message $y$ that is largely contained in $R_i$.
    That is, the symmetric set difference between $f(S_i)$ and $(R_i \setminus D)$ is small.
    \begin{numberedclaim} \label{claim:small-misalignment}
        If \Cref{eq:ab-length} holds for all $a,b \in [nm]$, then for all $i \in [n]$ we have $\abs{f(S_i) \triangle (R_i \setminus D)} = O(\sqrt{nm} \log n)$ (where $\triangle$ denotes the symmetric set difference).
    \end{numberedclaim}
    \begin{proof}
        Observe that $\abs{f(S_i) \triangle (R_i \setminus D)} \le \abs{(i-1) m  - \min f(S_i)} + \abs{im - \max f(S_i)}$.
        We will show that $\abs{(i-1)m - \min f(S_i)} = O(\sqrt{nm} \log n)$. A nearly identical proof shows that $\abs{im - \max f(S_i)} = O(\sqrt{nm} \log n)$ as well, so this will complete the proof.
        
        Setting $[a,b] = [nm]$ to be all of $y$ in \Cref{eq:ab-length}, we have that \ifeprint\else \\ \fi $\abs{(1-p)nm - \len z} \le \sqrt{nm} \log n$. Therefore, $\abs{(1-p)m - \abs{S_i}} \le \left\lceil \frac{\sqrt{nm} \log n}{n }\right\rceil$ for every $i \in [n]$. It follows that for every $i \in [n]$,
        \begin{equation} \label{eq:min-s}
            \abs{(1-p) (i-1) m - \min S_i} = O(\sqrt{nm} \log n).
        \end{equation}
        For any $j \in [nm-\abs{D}]$, setting $[a,b] = [f(j)]$ in \Cref{eq:ab-length} we have that $\abs{j - (1-p) f(j)} = O(\sqrt{nm} \log n)$. Applying this fact to $j = \min S_i$ for any $i \in [n]$,
        \[
            \abs{\min S_i  - (1-p) \min f(S_i)} = O(\sqrt{nm} \log n).
        \]
        Together with \Cref{eq:min-s} and the fact that $p \in (0,1)$ is a constant, we have
        \[
            \abs{(i-1) m  - \min f(S_i)} = O(\sqrt{nm} \log n)
        \]
        as desired.
    \end{proof}

    So far, we've argued that each block $z^i$ in the decoder's partition consists mostly of bits from block $y^i$ of the encoding.
    We now argue that with high probability, the bits of $y^i$ that were excluded and the bits from other blocks that were erroneously included do not affect the majority of $z^i$.

    To do so, we define $\Delta_i$ to be the difference of the erroneously excluded bits and the erroneously included bits, converted to values in $\{-1, 1\}$ so we can later reason about this as a random walk.
    Let
    \[
        \Delta_i = \sum_{j \in (R_i \setminus D) \setminus f(S_i)} (-1)^{y_j} - \sum_{j \in f(S_i) \setminus (R_i \setminus D)} (-1)^{y_j}.
    \]
    Conditioned on \Cref{eq:ab-length} holding for all $a,b \in [nm]$, \Cref{claim:small-misalignment} implies that $\Delta_i$ is a sum of $O(\sqrt{nm} \log n)$ random $\{1,-1\}$ variables. 
    Furthermore, these random variables are independent because $y$ is uniform over $\{0,1\}^{nm}$.
    By Hoeffding's inequality,
    \begin{equation} \label{eq:delta-small}
        \Pr[\abs{\Delta_i} \geq (nm)^{1/4} \log^{3/2} n] \le 2\exp\left(\frac{-\sqrt{nm} \log^3 n}{O(\sqrt{nm} \log n)} \right) \leq \negl[n].
    \end{equation}
    By a union bound, the probability that $\abs{\Delta_i} \le (nm)^{1/4} \log^{3/2} n$ for all $i \in [n]$ is at least $1 - \negl[n]$.

    We will complete the proof by applying \Cref{claim:bsc-signs} with $N = \abs{S_i}$, $N' = \abs{R_i \cap D}$, and $\Delta = \Delta_i$. The variables $X_j$ are the values of $y_j$ for $j \in f(S_i) \cup (R_i \cap D)$ and $Z_j$ are the errors introduced by $\BSC_q$. Since the sets $f(S_i) \cup (R_i \cap D)$ are disjoint for distinct $i$, the events that $\Maj(z_i) = x_i$ are independent once we condition on the values of $y_j$ for $j \not\in \bigcup_{i \in [n]} f(S_i) \cup (R_i \cap D)$; therefore the result will follow from a Chernoff bound.

    \begin{numberedclaim} \label{claim:bsc-signs}
        Let $N, N' \in \N$ and $\Delta \in \R$ be such that $N' = O(N)$ and $\Delta = O(\sqrt{N})$. Let $X_1, \dots, X_N, Y_1, \dots, Y_{N'}$ be independent, uniform $\{-1, 1\}$ random variables and $Z_1, \dots, Z_N$ be independent $\{-1,1\}$ random variables with expectation $\delta = \Omega(1)$. Then for sufficiently large $N$,
        \[
            \Pr\left[\sgn\left(\sum_{j \in [N]} X_j \cdot Z_j\right) = \sgn\left(\sum_{j \in [N]} X_j + \sum_{j \in [N']} Y_j + \Delta\right)\right] = \frac{1}{2} + \Omega(1),
        \]
        where we define $\sgn(0)$ to be a random value in $\{-1, 1\}$.
    \end{numberedclaim}
    \begin{proof}
        We can equivalently view $Z_j$ as a random variable which is 1 with probability $\delta$, and a uniformly random $\{-1,1\}$ value with probability $1-\delta$. Let $F$ be the set of indices $j \in [N]$ where $Z_j$ is deterministically 1. By a Chernoff bound, $\abs{\delta N - \abs{F}} \le \delta N/2$ with probability $1-\negl[n]$; fix such an $F$. We can write
        \begin{align*}
            \sum_{j \in [N]} X_j \cdot Z_j &= \sum_{j \in F} X_j + \sum_{j \in [N] \setminus F} X_j \cdot Z_j \\
            &=: \sum_{j \in F} X_j + \sum_{j \in [N_a]} X_j^a
        \end{align*}
        where we have defined $N_a := N - \abs{F}$ and $\{-1,1\}$ variables $\{X_j^a\}_{j \in [N_a]} = \{X_j \cdot Z_j\}_{j \in [N] \setminus F}$. We also write
        \begin{align*}
            \sum_{j \in [N]} X_j + \sum_{j \in [N']} Y_j + \Delta &= \sum_{j \in F} X_j + 
            \left(\sum_{j \in [N] \setminus F} X_j + \sum_{j \in [N']} Y_j\right) + \Delta \\
            &=: \sum_{j \in F} X_j + \sum_{j \in [N_b]} X_j^b + \Delta
        \end{align*}
        where we have defined $N_b := N - \abs{F} + N'$ and $\{-1,1\}$ variables $\{X_j^b\}_{j \in [N_b]} = \{X_j\}_{j \in [N] \setminus F} \cup \{Y_j\}_{j \in [N']}$. Observe that $\{X_j\}_{j \in F} \cup \{X_j^a\}_{j \in [N_a]} \cup \{X_j^b\}_{j \in [N_b]}$ are all independent, uniformly random $\{-1,1\}$ variables. We wish to show that
        \[
            \Pr\left[\sgn\left(\sum_{j \in F} X_j + \sum_{j \in [N_a]} X_j^a\right) = \sgn\left(\sum_{j \in F} X_j + \sum_{j \in [N_b]} X_j^b + \Delta\right)\right] = \frac{1}{2} + \Omega(1).
        \]
        Let $X = \sum_{j \in F} X_j$, $A = \sum_{j \in [N_a]} X_j^a$, and $B = \sum_{j \in [N_b]} X_j^b$. The above equation we wish to show becomes
        \[
            \Pr[\sgn(X+A) = \sgn(X+B+\Delta)] = \frac{1}{2} + \Omega(1).
        \]
        If $\abs{X} > \max\{\abs{A}, \abs{B+\Delta}\}$, then $\sgn(X+A) = \sgn(X+B+\Delta) = \sgn(X)$. On the other hand,
        \[
            \Pr\left[\sgn(X+A) = \sgn(X+B+\Delta) \Bigm| \abs{X} \le \max\{\abs{A}, \abs{B+\Delta}\}\right] = \frac{1}{2}.
        \]
        Therefore, it suffices to show that $\Pr[\abs{X} > \max\{\abs{A}, \abs{B+\Delta}\}] = \Omega(1)$. We do this by bounding each of the three factors below separately:
        \ifeprint
        \[
            \Pr\Big[\abs{X} > \max\{\abs{A}, \abs{B+\Delta}\}\Big] \ge \Pr\left[\abs{X} > \sqrt{\abs{F}}+\abs{\Delta}\right] \cdot \Pr\left[\abs{A} \le \sqrt{\abs{F}}+\abs{\Delta}\right] \cdot \Pr\left[\abs{B+\Delta} \le \sqrt{\abs{F}}+\abs{\Delta}\right].
        \]
        \else 
        \begin{multline*}
                        \Pr\Big[\abs{X} > \max\{\abs{A}, \abs{B+\Delta}\}\Big] \ge \Pr\left[\abs{X} > \sqrt{\abs{F}}+\abs{\Delta}\right] \cdot \Pr\left[\abs{A} \le \sqrt{\abs{F}}+\abs{\Delta}\right]\\ \cdot \Pr\left[\abs{B+\Delta} \le \sqrt{\abs{F}}+\abs{\Delta}\right].
        \end{multline*}
        \fi
        \begin{itemize}
            \item Since $\sqrt{\abs{F}} + \abs{\Delta} = O(\sqrt{\abs{F}})$, the central limit theorem implies that \ifeprint\else \\ \fi $\Pr\left[\abs{X} > \sqrt{\abs{F}}+\abs{\Delta}\right] = \Omega(1)$.
            \item Since $N_a = O(\abs{F})$, Hoeffding's inequality implies that $\Pr\left[\abs{A} \le \sqrt{\abs{F}}+\abs{\Delta}\right] \ge \Pr\left[\abs{A} \le \sqrt{\abs{F}}\right] = \Omega(1)$.
            \item By the triangle inequality, $\Pr\left[\abs{B+\Delta} \le \sqrt{\abs{F}}+\abs{\Delta}\right] \ge \Pr\left[\abs{B} \le \sqrt{\abs{F}}\right]$. Since $N_b = O(\abs{F})$, Hoeffding's inequality again implies that $\Pr\left[\abs{B} \le \sqrt{\abs{F}}\right] = \Omega(1)$. \qedhere
        \end{itemize}
    \end{proof}
    It only remains to show that the conditions of \Cref{claim:bsc-signs} are satisfied with probability $1-\negl[n]$. That is, for $N = \abs{S_i}$, we need to show that $N' = \abs{R_i \cap D} = O(N)$ and $\Delta = \Delta_i = O(\sqrt{N})$ with probability $1-\negl[n]$.
    Conditioned on \Cref{eq:ab-length} holding for all $a, b \in [nm]$, we have that $\abs{pm - \abs{R_i \cap D}} \le \sqrt{nm} \log n$ and $\abs{(1-p)m - \abs{S_i}} = O(\sqrt{nm} \log n)$ for all $i \in [n]$. Therefore $\abs{R_i \cap D} = \Theta(m)$ and $\abs{S_i} = \Theta(m)$. By \Cref{eq:delta-small}, $\abs{\Delta_i} = O((nm)^{1/4} \log^{3/2} n)$ for all $i \in [n]$ with probability $1-\negl[n]$. Therefore, $\abs{\Delta_i} = O(\sqrt{\abs{S_i}})$ with probability $1-\negl[n]$ if $m = \Omega(n \log^6 n)$.
\end{proof}

We now show that we can compose this majority code with a PRC for bounded-weight channels to yield a PRC for the deletion channel.

\begin{construction}[PRC for deletions]
Let $\PRC = (\KeyGen, \Encode, \Decode)$ be a PRC with block length $n$.
\ifeprint\else \\ \fi $\PRCdel[m, \PRC] = (\KeyGendel, \Encodedel, \Decodedel)$ is defined as follows:
\begin{itemize}
    \item $\KeyGendel(1^\secpar)$: Output $\sk \gets \KeyGen(1^\secpar)$.
    \item $\Encodedel(\sk, \m)$: Output $\MajEncode_m(\Encode(\sk,\m))$.
    \item $\Decodedel(\sk, z)$: Output $\Decode(\sk, \MajDecode_n(z))$.
\end{itemize}    
\end{construction}

\begin{theorem} \label{theorem:deletion-code}
    For any constants $p \in (0,1)$ and $q\in (0, 1/2)$, there exists $\varepsilon > 0$ such that the following holds.
    If $\PRC$ is a PRC for $\left(\frac{1}{2} - \varepsilon\right)$-bounded channels with block length $n$,
    there exists an $m \in \N$, where $m = \Omega(n \log^6 n)$, such that $\PRCdel[m, \PRC]$ is a PRC for the channel $\BSC_q \circ \BDC_p$.
\end{theorem}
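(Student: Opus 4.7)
The plan is to verify the three PRC axioms (pseudorandomness, robustness, soundness) for $\PRCdel[m,\PRC]$ by reducing each to the corresponding property of $\PRC$, with \Cref{lemma:deletion-channel} supplying the single technical ingredient. Given constants $p \in (0,1)$ and $q \in (0,1/2)$, let $\varepsilon > 0$ be the constant produced by \Cref{lemma:deletion-channel} for these $p,q$, and let $m$ be any odd-integer-valued function of $n$ satisfying $m = \Omega(n \log^6 n)$ large enough for that lemma's conclusion to apply.

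For pseudorandomness, the key observation is that if $b \from \{0,1\}$ is uniform, then $\MajEncode_m(b)$ is uniformly distributed in $\{0,1\}^m$: a uniform $y \in \{0,1\}^m$ has $\Maj(y)$ uniform over $\{0,1\}$, so averaging the conditional distribution $\MajEncode_m(b)$ over uniform $b$ recovers the uniform distribution on $\{0,1\}^m$. Applying this bit-by-bit, $\MajEncode_m(x)$ is uniform on $\{0,1\}^{nm}$ whenever $x$ is uniform on $\{0,1\}^n$. I would then give the standard oracle reduction: any distinguisher $\calA$ against $\PRCdel.\Encode$ vs.\ uniform yields a distinguisher $\calB$ against $\PRC.\Encode$ vs.\ uniform with identical advantage, by simulating each of $\calA$'s queries by forwarding to the $\PRC$ oracle and applying $\MajEncode_m$ to the response.

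For robustness, define the induced length-preserving channel $\calE : \{0,1\}^n \to \{0,1\}^n$ by
\[
    \calE(x) := \MajDecode_n\bigl(\BSC_q(\BDC_p(\MajEncode_m(x)))\bigr).
\]
\Cref{lemma:deletion-channel} states exactly that $\Pr_{x \from \{0,1\}^n}[\wt(x \oplus \calE(x)) > (1/2-\varepsilon) n] \le \negl[n]$, i.e., $\calE$ is $(1/2-\varepsilon)$-bounded. By the hypothesis that $\PRC$ is robust to every $(1/2-\varepsilon)$-bounded channel, for every message $\m$ we have $\Pr_\sk[\Decode(\sk, \calE(\Encode(\sk,\m))) = \m] \ge 1 - \negl$. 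Unwinding the definitions of $\Encodedel$ and $\Decodedel$, this is exactly the statement that $\PRCdel$ is robust to $\BSC_q \circ \BDC_p$.

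Soundness is essentially immediate: for any fixed $c \in \{0,1\}^*$, the value $c' := \MajDecode_n(c)$ is a random variable whose distribution depends only on $c$ and is independent of $\sk$; applying soundness of $\PRC$ to each value in the support of $c'$ and averaging gives $\Pr_\sk[\Decode(\sk, c') = \bot] \ge 1 - \negl$, hence $\Pr_\sk[\Decodedel(\sk, c) = \bot] \ge 1 - \negl$. The theorem is fundamentally a composition argument, since all of the real technical work is encapsulated in \Cref{lemma:deletion-channel}; I do not anticipate any substantial obstacle beyond carefully checking that the reductions (oracle simulation, viewing $\calE$ as a black-box bounded-weight channel, and averaging out the decoder's internal randomness in the soundness step) go through cleanly.
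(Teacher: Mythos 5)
Your proposal is correct and follows essentially the same route as the paper's proof: pseudorandomness by the balanced-majority observation plus an oracle reduction, robustness by treating $\MajDecode_n \circ \BSC_q \circ \BDC_p \circ \MajEncode_m$ as a single $(1/2-\varepsilon)$-bounded channel via \Cref{lemma:deletion-channel}, and soundness by reducing $\Decodedel$ on a fixed $c$ to $\Decode$ on the fixed (up to internal randomness) $c' = \MajDecode_n(c)$. No gaps.
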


\begin{proof}
    \par{\bf Error correction.}
    Let $\m$ be any message, and let $x = \Encode(\sk, \m)$ for $\sk \gets \KeyGen(1^\secpar)$.
    By pseudorandomness of $\PRC$, $x$ is indistinguishable from a random string in $\{0,1\}^n$.
    By \Cref{lemma:deletion-channel}, there exists $\varepsilon \in (0,1/2)$ such that
    \[
        \Pr_{x \from \{0,1\}^n}[\wt(x \oplus \MajDecode_n \circ \emph{\BSC}_q \circ \emph{\BDC}_p \circ \MajEncode_m(x)) \le (1/2-\varepsilon \cdot n)] \ge 1-\negl[n].
    \]
    Let $\calE' = \MajDecode_n \circ \BSC_q \circ \BDC_p \circ \MajEncode_m$ be a channel, and observe that the above implies that $\calE'$ is $\left(1/2 - \varepsilon \right)$-bounded.
    
    Since $\PRC$ is error-correcting against $\left(1/2 - \varepsilon\right)$-bounded channels, 
    \[\Pr_{\sk \gets \KeyGen(1^\secpar)} [\Decode(\sk, \calE'(x)) = s : x \gets \Encode(\sk, \m)] \geq 1 - \negl.\]
    Now, we rewrite this guarantee to be in terms of $\PRCdel$. 
    Observe that by definition, for all $\m$, $\Decodedel(\sk, \BSC_q \circ \BDC_p(\Encodedel(\sk,\m)))$ is distributed identically to $\Decode(\sk, \calE'(\Encode(\sk,\m)))$. Therefore, 
    \ifeprint
    \[
    \Pr_{\sk \gets \KeyGendel(1^\secpar)} [\Decodedel(\sk, \BSC_q \circ \BDC_p(x)) = \m : x \gets \Encodedel(\sk, s)] \geq 1 - \negl.
    \]
    \else 
    \[
    \Pr_{\sk} [\Decodedel(\sk, \BSC_q \circ \BDC_p(x)) = \m : x \gets \Encodedel(\sk, s)] \geq 1 - \negl.
    \]
    \fi
    
    We finally show that unrelated strings $c$ decode to $\bot$ under $\PRCdel$.
    Let $c' = \MajDecode_n(c)$.
    By the analogous property for $\PRC$,
    \[\Pr_{\sk \gets \KeyGen(1^\secpar)} [\Decode(\sk, c') = \bot] \geq 1 - \negl\]
    as desired.

    \par{\bf Pseudorandomness.}
    By pseudorandomness of $\PRC$, its codewords are indistinguishable from random strings from $\{0,1\}^n$.
    Thus, let $x$ be drawn uniformly from $\{0,1\}^n$. The distribution $\MajEncode_m(x)$ is uniform over $\{0,1\}^{nm}$ because for odd $m \in \N$, $\abs{\{y \in \{0,1\}^m : \Maj(y) = 0\}} = \abs{\{y \in \{0,1\}^m : \Maj(y) = 1\}} = 2^{m-1}$.
\end{proof}

\newpage

\section{Application: watermarking for language models} \label{sec:water}
In this section we show that PRCs can be used to build quality-preserving (undetectable) and robust watermarks.
For an overview of our approach, see either the introduction or \Cref{subsec:techo-water}.

\subsection{Watermarking preliminaries} \label{subsec:water-prelims}
We follow \cite{CGZ23} in our definition of a \emph{language model}. We will often refer to language models simply as \emph{models}.
\begin{definition}\label{def:model}
A language model $\Model$ over token set~$\cal{T}$ is a deterministic algorithm that takes as input a prompt $\prompt$ and tokens previously output by the model $\t=(\t_1, \ldots, \t_{i-1})$, and outputs a probability distribution $\p_i = \Model(\prompt, \t)$ over $\cal{T}$.
\end{definition}
We write $\p_i(t)$ to denote the probability that the model outputs a token $t \in \calT$ as specified by the distribution $\p_i$.
When $\p_i$ is supported on $\{0,1\}$, we write $\hat{\p}_i := \E[\p_i] \in [0,1]$.

A language model~$\Model$ is used to generate text as a response to a prompt by iteratively sampling from the returned distribution until a special terminating token~$\done \in \cal{T}$ is drawn. 
We assume an upper bound $L^*$ on the length of the token sequence comprising any response.

\begin{definition}\label{def:modelresp}
    A language model's \emph{response} to~$\prompt$ is a random variable $\RModel(\prompt)\in \mathcal{T}^\star$ that is defined algorithmically as follows.
    We begin with an empty list of tokens~$\t=()$. As long as the last token in~$\t$ is not~$\done$, we draw a token~$\t_i$ from the distribution~$\Model(\prompt,\t)$ and append it to~$\t$. Finally, we set~$\RModel(\prompt) = \t$.
\end{definition}

For a probability distribution~$D$ over elements of a finite set~$X$, we define the Shannon \emph{entropy} of~$D$ as 
\[H(D) = \E_{x\sim D}[-\log D(x)],\]
where~$D(x)$ is the probability of~$x$ in the distribution~$D$.
The \emph{empirical entropy} (also known as Shannon information or surprisal) of~$x$ in~$D$ is simply~$-\log D(x)$.
The expected empirical entropy of~$x\sim D$ is exactly~$H(D)$.
Intuitively, the empirical entropy of~$x$ (with respect to~$D$) is the number of random bits that were required to draw~$x$ out of the distribution~$D$.

\begin{definition}[Empirical entropy]\label{def:empentropy}
    For a language model~$\Model$, a prompt $\prompt$, and a possible response~$\t\in\calT^\star$, we define the \emph{empirical entropy} of~$\RModel$ responding with~$\t$ to~$\prompt$ as
    \[
        \empH(\Model,\prompt,\t) := -\log \Pr\Big[\RModel\left(\prompt\right)=\t\Big].
    \]
\end{definition}

We next generalize the definition of empirical entropy from whole outputs to \emph{substrings} of a model's output. This will quantify how much entropy was involved in the generation of a particular contiguous substring of the output.

\begin{definition}\label{def:subseq-empentropy}
    For a language model~$\Model$, a prompt~$\prompt$, a possible response~$\t\in\calT^\star$, and indices $i, j \in [\len{\t}]$ with $i \le j$ we define the \emph{empirical entropy on substring $[i,j]$} of~$\RModel$ responding with~$\t$ to~$\prompt$ as
    \ifeprint
    \begin{align*}
        \empH^{[i,j]}(\Model,\prompt,\t) := -\log \Pr\Big[&\RModel\left(\prompt\right)[i:j]=\t[i:j] \ \\
        &\Big| \ \RModel\left(\prompt\right)[1:(i-1)] = \t[1:(i-1)]\Big].
    \end{align*}
    \else 
    \begin{multline*}
        \empH^{[i,j]}(\Model,\prompt,\t) := \\-\log \Pr\Big[\RModel\left(\prompt\right)[i:j]=\t[i:j] 
        \Big| \RModel\left(\prompt\right)[1:(i-1)] = \t[1:(i-1)]\Big].
    \end{multline*}
    \fi
\end{definition}
For convenience, in settings where the model and prompt are clear we simply write $\empH^{[i,j]}(\t)$ to mean $\empH^{[i,j]}(\Model, \prompt,\t)$.
We sometimes write $H_e^{i} := H_e^{[i,i]}$ to denote the empirical entropy of a single token $i$.

We formally define a watermarking scheme as follows.
\begin{definition}[Watermarking scheme] \label{def:watermark}
    A \emph{watermarking scheme} for a model $\Model$ over~$\cal{T}$ is a tuple of polynomial-time algorithms $\calW = (\Setup, \Wat, \Detect)$ where:
    \begin{itemize}
        \item $\Setup(\secparam) \to \sk$ outputs a secret key, with respect to a security parameter~$\secpar$.
        \item $\Wat_{\sk}(\prompt)$ is a randomized algorithm that takes as input a prompt $\prompt$ and generates a response in~$\cal{T}^\star$.
        \item $\Detect_{\sk}(\t) \to \{\true, \false\}$ is an algorithm that takes as input a sequence~$\t\in\cal{T}^\star$ outputs~$\true$ or~$\false$.
    \end{itemize}
\end{definition}

Ideally, $\Detect_{\sk}(\t)$  should output $\true$ if~$\t$ is generated by~$\Wat_{\sk}(\prompt)$, and should output $\false$ if~$\t$ is generated independently of $\sk$. 
The former property is called \emph{completeness} and the latter \emph{soundness}.

\begin{definition}[Soundness] \label{def:wm_sound}
    A watermarking scheme~$\calW$ is~\emph{sound} if for every security parameter $\secpar$ and token sequence~$\t\in \calT^\star$ of length $\poly$,
    \[\Pr_{\sk \gets \Setup(\secparam)}[\Detect_{\sk}(\t) = \true] \leq \negl.\]
\end{definition}

\begin{definition}[Completeness] \label{def:wm_complete}
    A watermarking scheme~$\calW$ is~$b(L)$-\emph{complete} if for every security parameter $\secpar$ and prompt~$\prompt$ of length $\poly$,
    \[
        \Pr_{\substack{\sk \gets \Setup(\secparam) \\ \t \gets \Wat_\sk(\prompt)}}[\Detect_\sk(\t) = \false \text{ and }  \empH\left(\Model,\prompt,\t\right) \geq b\left(\len \t\right)] \le \negl.
    \]
\end{definition}

\begin{definition}[Substring completeness] \label{def:wm_complete_ss}
    A watermarking scheme~$\calW$ is~$b(L)$-\emph{substring-complete} if for every prompt~$\prompt$ and security parameter $\secpar$,
    \begin{align*}
        \Pr_{\substack{\sk \gets \Setup(\secparam) \\ \t \gets \Wat_\sk(\prompt)}}
        \Big[\;
        \exists\; i,L \in [\len{\t}] &\text{ such that }
        \Detect_\sk(\t[i:i+L]) = \false \\
        &\text{ and }  \empH^{[i:i+L]}\left(\Model,\prompt,\t\right) \geq b(L)\;\Big] \le \negl.
    \end{align*}
\end{definition}

If the watermarked model is indistinguishable from the un-watermarked model (and therefore perfectly quality-preserving), we say that the watermarking scheme is \emph{undetectable}.

\begin{definition}[Undetectability] \label{def:wm_undet}
    A watermarking scheme \ifeprint\else \\ \fi $\calW = (\Setup, \Wat, \Detect)$ is \emph{undetectable} if for every security parameter $\secpar$ and all polynomial-time distinguishers $D$,
    \[
        \abs{\Pr[D^{\Model,\RModel}(\secparam) \to 1] - \Pr_{\sk \gets \Setup(\secparam)}[D^{\Model,\Wat_\sk}(\secparam) \to 1]} \le \negl,
    \]
    where the notation $D^{\calO_1, \calO_2}$ means that $D$ is allowed to adaptively query both~$\calO_1$ and $\calO_2$ with arbitrary prompts.
\end{definition}

We also define robustness for a watermarking scheme. For robustness, the detector should be able to identify the watermark even if the text has undergone corruption by some channel $\calE$. Substring robustness says that the watermark should be robust to both cropping as well as $\calE$.

\begin{definition}[Substring robustness] \label{def:wm_robust_ss}
    A watermarking scheme \ifeprint\else \\ \fi $\calW = (\Setup, \Wat, \Detect)$ is $b(L)$-\emph{substring-robust} against a channel $\calE$ if for every security parameter $\secpar$ and prompt~$\prompt$ of length $\poly$,
    \begin{align*}
        \Pr_{\substack{\sk \gets \Setup(\secparam) \\ \t \gets \Wat_\sk(\prompt)}}
        \Big[\;
        \exists\; i,L \in [\len{\t}] &\text{ such that }
        \Detect_\sk(\calE(\t[i:i+L])) = \false \\
        &\text{ and }  \empH^{[i:i+L]}\left(\Model,\prompt,\t\right) \geq b(L)\;\Big] \le \negl.
    \end{align*}
\end{definition}

\paragraph{Reducing to a binary alphabet.} 
Recall that a language model operates over an arbitrary token alphabet $\calT$. 
In constructing our PRC-based watermarks, it will be convenient to take $\calT$ to be $\{0,1\}$, since our pseudorandom LDPC codes have binary codewords.
Prior work \cite{CGZ23} suggests a black-box transformation from any model with an arbitrary-token alphabet to a model with a binary-token alphabet, by reinterpreting the probability vectors output by the model.
Here, we describe that transformation in more detail.

Let $\Model$ be any model with token alphabet $\calT$. $\Model$, when queried with a prompt and token sequence, outputs a probability vector $\p \in [0,1]^{\abs{\calT}}$.
We construct a model $\Model'$ that has token alphabet $\calT' = \{0,1\}$; that is, $\Model'$ outputs probability vectors $\p' \in [0,1]^2$.
This construction will make only black-box use of $\Model$ and will ensure that there is some decoding function $f$ converting binary outputs of $\RModel'$ into sequences of tokens from $\calT$, such that the distributions of $\RModel$ and $f\left(\RModel'\right)$ are identical.

Let $\Enc : \calT \to \{0,1\}^*$ be any prefix-free encoding function, and let $\Dec : \{0,1\}^* \to \calT^* \times \{0,1\}^*$ be the corresponding decoding function such that for any $t \in \calT$, $\Dec(\Enc(t)) = (t, \bot)$.
Furthermore, if $\Dec$ is given as input an encoded token $\Enc(t)$ concatenated with some binary string $s \in \{0,1\}^*$ such that no prefix of $s$ is a valid codeword, $\Dec$ outputs the token and the string; that is, $\Dec(\Enc(t)||s) = (t,s)$.
We will construct $\Model'$ such that the distribution of $\Dec\left(\RModel'\right)$ is identical to that of $\RModel$.
That is, $\RModel'$ will output a binary encoding of a token sequence in $\calT$. So $\Model'$ will compute its distribution $\p'$ over the next binary token by querying $\Model$ for its distribution $\p$ over $\calT$, and computing (according to $\p$) the distribution over the next bit of the binary encoding of the next token.

Suppose that $\Model'$ is given as input a prompt $\prompt$ and a binary token sequence $(\t_1', \ldots, \t_{\ell}') \in \{0,1\}^{\ell}$.
$\Model'$ must now output a probability distribution over the next (binary) token, which must match $\Model$ under the appropriate transformation.
Let $((\t_1, \ldots, \t_i), s) \gets \Dec(\t_1', \ldots, \t_{\ell}')$, and let $\p = \Model(\prompt, (\t_1, \ldots, \t_i))$, where $\p \in [0,1]^{\abs{\calT}}$.
$\Model'$ computes $\p'$ as
\begin{align*}
\p'(0) &= \sum_{\substack{t \in \calT\\
\Enc(t)[1:\len{s}+1] = s|| 0}} \p(t)\\
\p'(1) &= 1 - \p'(0)
\end{align*}

where $\Enc(t)[1:\len{s}+1]$ denotes the first $(\len{s} + 1)$ bits of $\Enc(t)$.
Observe that $\p'(0)$ is exactly 
the probability that the next bit of $\Enc(t)$ is 0, where $t \gets \p$ from $\Model$, conditioned on the event that the previous bits of $\Enc(t)$ match $s$.
Therefore, the probability that $t$ is drawn as the next token from $\Model$ is exactly the same as the probability that $\Enc(t)$ is drawn as the next sequence of binary tokens from $\Model'$.

This transformation exactly preserves the empirical entropy of the response, since the distribution of $\Dec\left(\RModel'\right)$ is identical to that of $\RModel$.
However, it does reduce the rate of entropy per token, since responses from $\RModel'$ are longer.
This reduction will depend on the length of codewords, and using an efficient encoding such as a Huffman encoding can help mitigate this rate reduction.

For the remainder of this paper, we assume that the token alphabet of the underlying model is binary. In practice, we can embed the watermark in a model $\Model$ with an arbitrary token alphabet $\calT$ by using this transformation to compute $\Model'$, generating a watermarked binary response from $\Model'$, and transforming this response back to tokens in $\calT$ using the decoding function $D$.
This transformation preserves the undetectability of our watermark.
Recall that by undetectability, the distribution of watermarked (binary) responses from $\Model'$ is indistinguishable from the original (binary) distribution of $\Model'$.
Let $\Wat$ denote the distribution of watermarked responses from $\Model'$.
Since our transformation guarantees that $\Dec \left(\RModel'\right) \equiv \RModel$, 
we have that $\Dec \left(\Wat \right) \approx \RModel$, meaning that our watermarked responses after this transformation are indistinguishable from those of the original model.

Since $\Model'$ outputs distributions $\p$ over $\{0,1\}$, we often specify these distributions by their expectations $\hat{\p} := \E[\p] \in [0,1]$.

\begin{remark*}
    In our robustness theorems, we assume that the bit-errors are random. Using the above transformation, this may not be realistic because bit-errors within the representation of a given token may be correlated. We note that an alternative transformation that avoids this issue simply uses the first bit of the above representation.
\end{remark*}

\subsection{Robust watermarking from PRCs} \label{subsec:water-simple}
Watermarking schemes for language models typically embed the watermark by sampling each token with some bias.
In this section, we show that using a PRC to determine this bias yields a watermarking scheme that inherits robustness from the PRC.
In more detail, our watermark generator first samples a PRC codeword $x \in \{0,1\}^n$.
It then samples each token $\t_i \in \{0,1\}$ to be biased toward $x_i$, yielding a response that is a noisy codeword.
By pseudorandomness of PRCs, this biased sampling does not noticeably change the distribution of $\t_i$ in expectation over $x_i$.
Yet the detector, which knows the PRC secret key, can check if the given text is a noisy codeword to detect the watermark.
Furthermore, one can use our scheme to embed an arbitrary message $\m$ in the response, by choosing $x$ to be $\Encode(\pk, \m)$ under a multi-bit PRC.
This yields a language model steganography scheme, since steganographic secrecy is implied by undetectability.

Our watermarking scheme can tolerate additional changes to the response depending on the robustness of the PRC used.
In particular, if the PRC is robust to deletions, our scheme evades the \emph{emoji attack} which succeeds against all existing undetectable watermarking schemes.
Here, the attacker asks the model to answer the prompt and randomly insert an emoji in between words, then deletes the emojis. This attack succeeds because the detectors in existing schemes must be given \emph{contiguous} text from the watermarked model. However, modeling this attack as a random deletion channel, our codes from \Cref{subsec:prcs-for-deletions} give rise to watermarking schemes that are robust to this attack.

\begin{construction}[Watermark from PRCs]
    Let $\PRC$ be a PRC of block length $n$ with security parameter $\secpar$. $\calW[\PRC] = (\Setup, \Wat, \Detect)$ is the watermarking scheme whose algorithms are specified in \Cref{fig:w-algos}.
\end{construction}

\begin{remark*}
    In \Cref{alg:watermarking-det}, we specify the detector to do a brute-force search over $O(Ln)$ index pairs to find the start and end locations of the perturbed codeword, since deletions may change its length.
    If the perturbed codeword length is known, e.g., when one wants robustness only to substitutions, one can use a more efficient detector that searches only over $O(L)$ indices to find the start location of the codeword.
\end{remark*}

Recall that we write $\hat{\p}$ for the expectation $\E[\p]$ of a distribution $\p$ over $\{0,1\}$.

\begin{figure}

\begin{minipage}[t]{\textwidth}
\begin{algorithm}[H]
\caption{Watermark setup procedure $\Setup(\secparam)$}
\label{alg:watermarking-setup}
    \KwIn{A security parameter $\secpar$}
    \KwResult{A watermark key $\sk$}
    $\PRC.\sk \from \PRC.\KeyGen(\secparam)$\;
    $a_1, \ldots, a_{\lceil \frac{L^*}{n} \rceil} \gets \{0,1\}^{n}$\;
    \Return $\sk = \left(\PRC.\sk, (a_1, \ldots, a_{\lceil \frac{L^*}{n} \rceil})\right)$\;
\end{algorithm}
\begin{algorithm}[H]
\caption{Watermarked text generator $\Wat_\sk$}
\label{alg:watermarking-gen}
    \KwIn{A prompt ($\prompt$) and a key $\sk$}
    \KwResult{Watermarked text $\t_1, \hdots, \t_L$}
    $(\PRC.\sk, a) \gets \sk$\;
    $x \from \PRC.\Encode(\PRC.\sk) \oplus a_1$\; \label{line:r1}
    $i := 1$\;
    $j := 1$\;
    \While{$\texttt{done} \notin \{\t_1, \ldots, \t_{i-1}\}$}{
        $\p_i := \Model(\prompt, \t_1, \hdots, \t_{i-1})$\; \label{line:p_i}
        $\t_i \from \Ber(\hat{\p}_i - (-1)^{x_j} \cdot \min\{\hat{\p}_i,1-\hat{\p}_i\})$\;
        $i \gets i + 1$\;
        $j \gets j+1$\;
        \If{$j > n$}{
            \tcp{Re-sample from the PRC}
            $x \from \PRC.\Encode(\PRC.\sk) \oplus a_{\lceil \frac{i}{n} \rceil}$\; \label{line:r2}
            $j \gets 1$\;
        }
    }
    \Return $\t_1, \dots, \t_L$\;
\end{algorithm}
\end{minipage}
\hspace{0.03\textwidth}
\begin{minipage}[t]{\textwidth}
\begin{algorithm}[H]
\caption{Watermark detector $\Detect_\sk$}
\label{alg:watermarking-det}
    \KwIn{Text $\t_1, \ldots, \t_L$ and watermark key $\sk$}
    \KwResult{$\true$ or $\false$}
    $(\PRC.\sk, a) \gets \sk$\;
    \For{$i \in [L], j \in [i, \min\{i+n-1, L\}], \ell \in \left[\lceil \frac{L^*}{n} \rceil\right]$}{
        \If{$\PRC.\Decode(\PRC.\sk, (\t_i, \dots, \t_{j}) \oplus a_{\ell}) \neq \bot$}{
            \Return $\true$\;
        }
    }
    \Return $\false$\;
\end{algorithm}
\end{minipage}
\caption{The algorithms $(\Setup, \Wat, \Detect)$ of $\calW[\PRC]$. \label{fig:w-algos}}
\end{figure}

\paragraph{Soundness and undetectability}
We first show that $\calW[\PRC]$ is sound and undetectable, which follows quickly from pseudorandomness and error correction of PRCs.

\begin{lemma}[Soundness] \label{lemma:soundness}
    $\calW[\PRC]$ is sound.
\end{lemma}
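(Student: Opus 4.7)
The plan is to reduce directly to the soundness property of the underlying PRC (as stated in \Cref{def:skPRC,def:pkPRC}) via a union bound. Fix any token sequence $\t \in \calT^*$ of polynomial length $L$. On input $\t$, $\Detect_\sk$ iterates over triples $(i,j,\ell)$ with $i \in [L]$, $j \in [i, \min\{i+n-1, L\}]$, and $\ell \in [\lceil L^*/n \rceil]$, and returns $\true$ if and only if there exists a triple such that $\PRC.\Decode(\PRC.\sk, (\t_i, \dots, \t_j) \oplus a_\ell) \neq \bot$. Since $L, L^*, n$ are all polynomial in $\secpar$, the total number of such triples is $N = \poly$.

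The key observation is that the pads $a_1, \dots, a_{\lceil L^*/n \rceil}$ sampled in $\Setup$ are independent of $\PRC.\sk$. Therefore, we may first condition on any fixed realization of the pads. Once the pads are fixed, every string $c_{i,j,\ell} = (\t_i, \dots, \t_j) \oplus a_\ell$ examined by the detector is a \emph{fixed} string (independent of $\PRC.\sk$), so the PRC soundness property applies directly: for each such $c_{i,j,\ell}$,
\[
    \Pr_{\PRC.\sk \from \PRC.\KeyGen(\secparam)}\bigl[\PRC.\Decode(\PRC.\sk, c_{i,j,\ell}) \neq \bot\bigr] \le \negl.
\]
A union bound over the $N = \poly$ triples yields that the conditional probability (over $\PRC.\sk$) that $\Detect_\sk(\t) = \true$ is at most $N \cdot \negl = \negl$. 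Averaging back over the choice of pads preserves the bound and gives $\Pr_\sk[\Detect_\sk(\t) = \true] \le \negl$, as required.

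There is no real obstacle here; this is essentially a bookkeeping argument. The one subtlety worth highlighting is that even though each input $c_{i,j,\ell}$ to $\PRC.\Decode$ is \emph{randomized} from the perspective of $\sk$ as a whole (since it depends on $a_\ell$), the independence of $a_\ell$ from $\PRC.\sk$ allows us to apply PRC soundness in the ``fixed input'' form stated in \Cref{def:skPRC,def:pkPRC}. This is precisely the role of the one-time pads introduced in the update mentioned in Section 1.4 for the construction.
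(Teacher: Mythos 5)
Your proof is correct and follows essentially the same route as the paper: reduce each candidate string examined by $\Detect$ to the fixed-string soundness guarantee of the underlying PRC, then apply a union bound over the polynomially many $(i,j,\ell)$ triples. You are in fact slightly more careful than the paper's own write-up in explicitly conditioning on the pads $a_\ell$ (which are part of $\sk$ but independent of $\PRC.\sk$) so that each $c_{i,j,\ell}$ becomes a fixed string to which the PRC soundness definition applies verbatim.
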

\begin{proof}
    Let $\t \in \calT^*$ be any token sequence of length $(\len{\t}) \leq \poly$.
    $\Detect_\sk(\t)$ iterates over all $i \in [\len{\t}], j \in [i, \min\{i+n-1, \len{\t}\}]$, checks if \ifeprint\else \\ \fi $\PRC.\Decode(\PRC.\sk, (\t_i, \ldots, \t_{j}) \oplus a_\ell) = 1$ for any $i,j,\ell$, and outputs true if so.
    Recall that the error correction of $\PRC$ ensures that for any $c \in \Sigma^* = \calT^*$,
        \[
            \Pr_{\prcsk \from \PRC.\KeyGen(\secparam)}[\Decode(\prcsk,c) \neq \bot] \leq \negl.
        \]
    Setting $c_i = (\t_i, \ldots, \t_{j}) \oplus a_\ell$ for each $i,j,\ell$, the probability that $\Decode$ returns 1 for any $c_i$ is negligible by a union bound. 
    Consequently, the probability over choice of $\sk$ that $\Detect_\sk$ outputs true is negligible.
\end{proof}

\begin{lemma}[Undetectability] \label{lemma:undetectability}
$\calW[\PRC]$ is undetectable.
\end{lemma}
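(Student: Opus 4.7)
The plan is a standard two-step hybrid argument: replace each invocation of $\PRC.\Encode$ inside $\Wat_\sk$ with a uniformly random string and observe that the resulting distribution on responses is identical to that of $\RModel$.

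Concretely, define an intermediate oracle $\Wat^*_a$ that behaves exactly like $\Wat_\sk$ except that each call to $\PRC.\Encode(\PRC.\sk)$ on lines~\ref{line:r1} and~\ref{line:r2} of \Cref{alg:watermarking-gen} is replaced by a freshly sampled uniform string in $\{0,1\}^n$; the one-time pads $a_1, \dots, a_{\lceil L^*/n \rceil}$ are still sampled once in setup and reused across all queries, exactly as in $\Wat_\sk$. I would then argue:

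\textbf{Step 1 (pseudorandomness reduction).} For any polynomial-time distinguisher $D$ making $\poly$ queries to $(\Model, \Wat_\sk)$ or $(\Model,\Wat^*_a)$, the two views are computationally indistinguishable. The reduction $\adv$, given oracle access to either $\PRC.\Encode(\PRC.\sk,\cdot)$ or a uniform oracle $\calU$, samples the pads $a_1, \dots, a_{\lceil L^*/n \rceil}$ itself and answers each of $D$'s $\Model$-queries directly (this needs no PRC sample), while answering each $\Wat$-query by simulating \Cref{alg:watermarking-gen} and calling its own oracle whenever a fresh codeword is needed. If $\adv$'s oracle is $\PRC.\Encode(\PRC.\sk,\cdot)$ then $D$'s view is that of $(\Model,\Wat_\sk)$; if it is $\calU$ then $D$'s view is that of $(\Model,\Wat^*_a)$. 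Since each $\Wat$-query uses at most $\lceil L^*/n \rceil$ codewords, $\adv$ makes only polynomially many oracle calls, so the PRC pseudorandomness property (\Cref{def:pkPRC} or \Cref{def:skPRC}) implies the two views are $\negl$-close.

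\textbf{Step 2 (perfect indistinguishability of $\Wat^*_a$ and $\RModel$).} In $\Wat^*_a$, each fresh uniform string is XORed with the fixed pad $a_\ell$, producing a string that is uniform in $\{0,1\}^n$ and independent of all other blocks and all previously answered queries (the pad is a bijection on $\{0,1\}^n$, and each underlying uniform string is sampled independently). Hence every bit $x_j$ used on the line $\t_i \from \Ber(\hat{\p}_i - (-1)^{x_j} \cdot \min\{\hat{\p}_i,1-\hat{\p}_i\})$ is a fresh uniform bit, independent of $\hat{\p}_i$. A direct calculation shows that if $x_j$ is uniform over $\{0,1\}$ then the marginal distribution of $\t_i$ is exactly $\Ber(\hat{\p}_i)$: averaging the two cases gives $\tfrac{1}{2}(\hat{\p}_i-\min\{\hat{\p}_i,1-\hat{\p}_i\}) + \tfrac{1}{2}(\hat{\p}_i+\min\{\hat{\p}_i,1-\hat{\p}_i\}) = \hat{\p}_i$. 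Consequently the joint distribution of a full response produced by $\Wat^*_a$, on any prompt, is identical to that of $\RModel$ on the same prompt, and this holds independently across queries.

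Combining the two steps, $(\Model,\Wat_\sk) \approx (\Model,\Wat^*_a) \equiv (\Model,\RModel)$, which is the definition of undetectability. The only delicate point is ensuring that the reduction in Step~1 really does simulate both oracles correctly while the reused pads $a_\ell$ are sampled internally by $\adv$; I expect this to be the place where care is needed, but it is otherwise routine since the $\Model$ oracle is independent of $\sk$.
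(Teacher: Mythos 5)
Your proposal is correct and follows essentially the same argument as the paper: reduce undetectability to PRC pseudorandomness by having $\adv$ simulate $\Wat$-queries via its encoding oracle, then observe that with uniform seeds the per-token sampling rule collapses to $\Ber(\hat{\p}_i)$. Your Step 2 is slightly more careful than the paper's write-up in explicitly noting that XOR with the reused pads $a_\ell$ preserves uniformity and independence across blocks and queries (the paper's proof omits mention of the pads, though the point is the same), and you phrase it as a hybrid while the paper phrases it as a contradiction, but there is no substantive difference.
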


\begin{proof}
    We prove that if the output of $\Encode$ is replaced with uniform randomness, this watermark is undetectable. It then follows from pseudorandomness of the PRC that the actual watermark is undetectable.

    We begin by showing that if $x_j \gets \Ber(\frac{1}{2})$, then $\Pr[\t_i = 1] = \hat{\p}_i$, where $\t_i \from \Ber(\hat{\p}_i - (-1)^{x_j} \cdot \min\{\hat{\p}_i,1-\hat{\p}_i\})$.
    Suppose first that $\min\{\hat{\p}_i, 1-\hat{\p}_i\} = \hat{\p}_i$. Then 
    \[\hat{\p}_i - (-1)^{x_j} \cdot \min\{\hat{\p}_i,1-\hat{\p}_i\} = 
    \begin{cases}
        0 &\text{ if } x_j = 0\\
        2 \hat{\p}_i &\text{ if } x_j = 1
    \end{cases} \]
    so $\Pr_{x_j}[\t_i = 1] = \frac{1}{2} \cdot 0 + \frac{1}{2} \cdot 2 \hat{\p}_i = \hat{\p}_i$.
    Now, suppose that $\min\{\hat{\p}_i, 1-\hat{\p}_i\} = 1-\hat{\p}_i$ Then 
    \[\hat{\p}_i - (-1)^{x_j} \cdot \min\{\hat{\p}_i,1-\hat{\p}_i\} = 
    \begin{cases}
        2 \hat{\p}_i - 1 &\text{ if } x_j = 0\\
        1 &\text{ if } x_j = 1
    \end{cases} \]
    so $\Pr_{x_j}[\t_i = 1] = \frac{1}{2} \cdot (2 \hat{\p}_i-1) + \frac{1}{2} \cdot 1 = \hat{\p}_i$. 
    Therefore, if the $x_j$'s are independent $\Ber(\frac{1}{2})$, the distribution of every $\t_i$ under the watermark is identical to its distribution under the original model.

    Suppose for the sake of contradiction that $\calW$ is not undetectable; that is, there is a polynomial-time distinguisher $D$ such that for some constant $c > 0$,
    $$\abs{\Pr[D^{\Model,\RModel}(\secparam) \to 1] - \Pr_{\sk \gets \Setup(\secparam)}[D^{\Model,\Wat_\sk}(\secparam) \to 1]} \ge \frac{1}{\secpar^c}.$$
    We'll construct an adversary $\adv$ that uses $D$ to break pseudorandomness of $\PRC$.
    Recall that in the PRC pseudorandomness experiment, $\adv$ has access to $\calO$, which is either an oracle for $\Encode(\PRC.\sk, \cdot)$ or the uniform distribution $\calU$, and its goal is to distinguish between these two cases.
    $\adv$ interacts with $D$ and simulates $D$'s oracle $\calO_2$, which is either $\RModel$ or $\Wat_\sk$.
    When $D$ queries a prompt $\prompt$ to $\calO_2$, $\adv$ returns a response according to \Cref{alg:watermarking-gen}, \emph{with one key modification}: It draws each $x$ by querying its oracle $\calO$.
    If $\calO$ is $\Encode(\PRC.\sk)$, the responses returned by $\adv$ are exactly those of $\Wat_\sk$.
    By our earlier observation that if the $x_j$'s are uniform, the $\t_i$'s distributions are unchanged, if $\calO$ is $\calU$ the responses returned by $\adv$ are exactly those of $\RModel$.
    Therefore, if we set $\adv$ to output 1 if and only if $D$ outputs 1, we have that 
        \[
            \abs{\Pr_{(\PRC.\sk) \from \KeyGen(\secparam)}[\adv^{\Encode(\PRC.\sk, \cdot)}(1^\secpar) = 1] - \Pr_{\calU}[\adv^{\calU}(1^\secpar) = 1]} \geq \frac{1}{\secpar^c},
        \]
    contradicting pseudorandomness of the PRC.
\end{proof}

\paragraph{Robustness of our scheme} 
In the watermarked text generator (\Cref{alg:watermarking-gen}), we embed a PRC output $r$ into the model's response to a prompt $\prompt$ by sampling $\t_i$ from $\Ber(\hat{\p}_i - (-1)^{x_j} \cdot \min \{\hat{\p}_i, 1-\hat{\p}_i\})$.
Observe that if $\hat{\p}_i = \frac{1}{2}$, this is equal to $\Ber(x_j)$, so $\t_i$ will \emph{always} equal $x_j$.
If $\hat{\p}_i \ne \frac{1}{2}$, then $\t_i$ is sampled with bias toward $x_j$, but it sometimes will not equal $x_j$.
One can think of this embedding process as taking as input $x$, applying some random error, and returning the noisy response $x$ as output.

We model this process as an \emph{embedding channel} $\Eemb : \{0,1\}^n \to \{0,1\}^n$, where for $x \from \{0,1\}^n$, $\Eemb(x) \sim \Model(\prompt)[i:i+n-1]$ (i.e., the $n$ bits of the watermarked model's response to $\prompt$ where $x$ is embedded).
For ease of notation, we write $\t \gets \Eemb(x)$, leaving $\prompt$ and the index $i$ implicit. 
We will show that $\Eemb$ introduces a bounded number of errors when the text has non-zero entropy.

The relevant notion of entropy is the empirical entropy (\Cref{def:empentropy}). We will also make use of the \emph{truncated empirical entropy} $\empHtr$, which is defined as follows.
For a model output $\t$ (given some prompt that we leave implicit),
$\empHtr(\t) = \sum_{i \in \len{\t}} \min\{1, \empH^i(\t)\}$.
As with empirical entropy, we define truncated empirical entropy for partial responses:
$\empHtr^{[i,j]}(\t) = \sum_{\ell = i}^j \min\{1, \empH^\ell(\t)\}$, and $\empHtr^{i}(\t) = \min\{1, \empH^i(\t)\}$.

\begin{lemma} \label{lemma:truncated-surprisal}
    For any $c > 0$ and any indices $a,b \in \N$ where $a \leq b$,
    \[
        \Pr_{\t \from \Model}\left[\empH^{[a,b]}(\t) > c \cdot (b-a) \text{ and } \empHtr^{[a,b]}(\t) \le \frac{c^2}{2} (b-a)\right] \leq \negl[b-a].
    \]
\end{lemma}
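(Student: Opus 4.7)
Let $L := b - a$ and define $\phi_i := (\empH^i(\t) - 1)^+ \ge 0$, so that $\empH^{[a,b]}(\t) - \empHtr^{[a,b]}(\t) = \sum_{i=a+1}^{b} \phi_i =: Y$. The event of interest forces $Y > (c - c^2/2)\,L =: \mu L$, since the conditions $\empH^{[a,b]} > cL$ and $\empHtr^{[a,b]} \le c^2 L/2$ together give $Y = \empH^{[a,b]} - \empHtr^{[a,b]} > cL - c^2 L/2$. So the plan is to bound $\Pr[Y > \mu L]$ via an exponential-Markov (Chernoff) argument on the conditional MGF of $\phi_i$.

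Using the reduction to a binary alphabet from \Cref{subsec:water-prelims}, I would let $\hat p_i \in [0, 1/2]$ denote the conditional minority probability at step $i$. Then $\phi_i$ equals $-\log_2 \hat p_i - 1$ with probability $\hat p_i$ and $0$ otherwise, giving
\[
\E[2^{\lambda \phi_i} \mid \t_{<i}] \;=\; (1 - \hat p_i) + \hat p_i^{\,1-\lambda}\, 2^{-\lambda}
\]
for any $\lambda \in [0,1]$. Maximizing the right-hand side over $\hat p_i$ produces a deterministic bound $M(\lambda) = 1 + \lambda (1-\lambda)^{1/\lambda - 1}/2$; by iterated expectation $\E[2^{\lambda Y}] \le M(\lambda)^L$, and Markov yields $\Pr[Y > \mu L] \le (M(\lambda) \cdot 2^{-\lambda \mu})^L$. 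Since $\log_2 M(\lambda)/\lambda \to 1/(2 e \ln 2) \approx 0.266$ as $\lambda \to 0^+$, the right-hand side is negligible in $L$ whenever $\mu > 0.266$, which corresponds to $c$ in roughly $(0.32,\,1.68)$.

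The main obstacle is extending to all $c > 0$, since the MGF bound above degrades when $\mu \to 0$. For $c \ge 1$, I would instead apply exponential Markov directly to $\empH^{[a,b]}$: for a binary alphabet $\E[2^{\empH^i} \mid \t_{<i}] = 2$, so $\E[2^{\empH^{[a,b]}}] \le 2^L$ and Markov gives $\Pr[\empH^{[a,b]} > cL] \le 2^{-(c-1)L}$, which is negligible. For very small $c$ (specifically, $c$ with $H_2(c^2/2) < c$, e.g.\ $c \lesssim 0.39$), a counting argument closes the remaining gap: on the event at most $c^2 L/2$ positions satisfy $\empH^i \ge 1$, and because the binary minority outcome at such a position is determined by $\t_{<i}$, the full response is specified by this set of positions, yielding at most $2^{L H_2(c^2/2)}$ candidate responses, each with probability $< 2^{-cL}$. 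These three regimes (counting for small $c$, MGF on $Y$ for moderate $c$, and direct MGF on $\empH$ for $c \ge 1$) jointly cover every $c > 0$; a cleaner uniform proof would likely combine the MGF argument with a truncation that isolates atypically small $\hat p_i$, and finding such a unified bound is where I expect most of the technical effort to lie.
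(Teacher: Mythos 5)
Your proposal is correct in outline but takes a genuinely different route from the paper, and it is instructive to compare the two.

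The paper's proof is a single uniform argument for all $c > 0$: it observes (via Cauchy--Schwarz) that $\sum_i \empH^i \le \big(\sum_i (\empH^i/\empHtr^i)^2\big)^{1/2} \big(\sum_i (\empHtr^i)^2\big)^{1/2}$ and, since $\empHtr^i \le 1$, the event of the lemma forces $\sum_i (\empH^i/\empHtr^i)^2 > 2(b-a)$. The random variables $(\empH^i/\empHtr^i)^2 = \max\{1, \empH^i\}^2$ have conditional mean $\le 3/2$ and are bounded with overwhelming probability, so Azuma's inequality applied to the associated martingale shows this is negligible. The whole thing is one martingale concentration bound plus one application of Cauchy--Schwarz, with no case split on $c$.

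Your approach instead decomposes $\empH - \empHtr = \sum_i(\empH^i - 1)^+ =: Y$ and runs a Chernoff/MGF bound directly on $Y$. This is a clean idea, and your MGF computation (the maximizer $\hat p^* \to 1/(2e)$, the limiting slope $1/(2e\ln 2) \approx 0.266$) is correct as far as I can tell. The drawback, which you yourself flag, is that the MGF bound only beats the trivial bound when $\mu = c - c^2/2 > 0.266$, so you must stitch together three regimes (counting for small $c$, MGF on $Y$ for intermediate $c$, direct MGF on $\empH^{[a,b]}$ for $c > 1$). That stitching works (the intervals $(0, {\sim}0.39)$, $({\sim}0.32, {\sim}1.68)$, $(1,\infty)$ do cover $(0,\infty)$), but it is more delicate than the paper's single argument. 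The paper's Cauchy--Schwarz step is exactly the ``unified bound'' you were hoping for: by passing through the ratio $\empH^i/\empHtr^i$, it converts the two-sided condition into a one-sided tail event whose threshold is independent of $c$.

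Two small technical points in your write-up. First, your counting argument assumes the ``minority outcome at position $i$ is determined by $\t_{<i}$''; if $\hat p_i = 1/2$ exactly there is no unique minority, and a position with $\empH^i \ge 1$ could take either bit. This is fixable by paying an extra $2^{|S|}$ in the count, which changes the small-$c$ threshold slightly but does not break the argument. Second, a minor indexing slip: $\empH^{[a,b]}$ sums $b-a+1$ terms (from $i=a$ to $b$), not $b-a$; this does not affect the asymptotics but should be tracked consistently with the lemma's $c\cdot(b-a)$ normalization.
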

\begin{proof}
    We will show that with probability $1-\negl[b-a]$ over $\t \from \Model$,
    \[
        \sum_{i=a}^b \left(\frac{\empH^i(\t)}{\empHtr^i(\t)}\right)^2 \le 2(b-a).
    \]
    By the Cauchy-Schwarz inequality, we have
    \[
        \sum_{i=a}^b \empH^i(\t) \le \sqrt{\sum_{i=a}^b \left(\frac{\empH^i(\t)}{\empHtr^i(\t)}\right)^2} \cdot \sqrt{\sum_{i=a}^b \empHtr^i(\t)^2}.
    \]
    Combining these two inequalities, it will follow that with probability $1-\negl[b-a]$ if $\sum_{i=a}^{b} \empH^i(\t) \ge c \cdot (b-a)$ then
    \begin{align*}
        \sum_{i=a}^{b} \left(\empHtr^i(\t)\right)^2 &\geq \left(\sum_{i=a}^{b} \empH^i(\t) \right)^2 \left(\sum_{i=a}^{b} \left(\frac{\empH^i(\t)}{\empHtr^i(\t)}\right)^2 \right)^{-1}\\
        &\geq \frac{c^2 \cdot (b-a)^2}{2(b-a)}\\
        &= \frac{c^2}{2}{(b-a)}.
    \end{align*}
    And since $\empHtr^i(\t) \leq 1$ by definition, $\empHtr^i(\t) \geq \left(\empHtr^i(\t)\right)^2$ for every $i$, and \ifeprint\else \\ \fi $\sum_{i=a}^{b} \empHtr^i(\t) \geq \frac{c^2}{2}(b-a)$ as desired.

    It remains to show that with overwhelming probability, $\sum_{i=a}^{b} \left(\frac{\empH^i(\t)}{\empHtr^i(\t)}\right)^2 \le 2 (b-a)$.
    For $i \in [a,b]$, let $X_i := (\empH^i(\t)/\empHtr^i(\t))^2$ and $\q_i := \min\{\p_i(0), \p_i(1)\}$. If $\q_i \ge 1/e$, $X_i = 1$; if $\q_i < 1/e$ then
    \[
        X_i = \begin{cases}
            1 & \text{w.p. } 1-\q_i \\
            \ln^2 \q_i & \text{w.p. } \q_i
        \end{cases}
    \]
    and $\E[X_i \mid \t_a, \dots, \t_{i-1}] = 1-\q_i + \q_i \ln^2 \q_i \le 3/2$. For any fixed $\t_a, \dots, \t_{i-1}$,
    \[
        \Pr_{\t_i \from \p_i}[X_i \ge \ln^4 (b-a)] = \q_i \cdot \mathbbm{1}[\q_i \le e^{-\ln^2 (b-a)}] \le e^{-\ln^2 (b-a)}.
        \]
    By a union bound, $\Pr[\exists i \in [a,b] \text{ s.t. } X_i \ge \ln^4 (b-a)] = \negl[b-a]$.

    Let $Z_{a-1} = 0$ and for $i \in [a,b]$ let $Z_i = Z_{i-1} + X_i - (1-\q_i + \q_i \ln^2 \q_i)$. Observe that $Z_a, \dots, Z_{b}$ is a martingale with respect to $\t_a, \dots, \t_{b}$. Since the differences $\abs{Z_i - Z_{i-1}}$ are bounded by $\ln^4 (b-a)$ with probability $1-\negl[b-a]$, Azuma's inequality (\Cref{theorem:azuma}) implies that
    \[
        \Pr[Z_{b} > {(b-a)}/2] \le \exp{\left(\frac{-{(b-a)}}{8\ln^8 (b-a)}\right)} + \negl[b-a] \leq \negl[b-a].
    \]
    Since $Z_{b} = \sum_{i=a}^{b} [X_i - (1-\q_i + \q_i \ln^2 \q_i)] \ge \sum_{i=a}^{b} X_i - \frac{3}{2} (b-a)$, it follows that
    \[
        \Pr\left[\sum_{i=a}^{b} X_i > 2(b-a)\right] \leq \negl[b-a]. \qedhere
    \]
\end{proof}

We show in the following lemma that the embedding channel introduces bounded-weight errors into any contiguous substring of a response with sufficient empirical entropy.
\begin{lemma}[Entropy and $\Eemb$] \label{lemma:encoding-errors}
    Let $\kappa > 0$ be any constant, and let $a,b \in \N$ be indices such that $b = a + \lceil \kappa n \rceil - 1$. For the embedding channel $\Eemb$ of our watermarked model,
    \ifeprint
    \[
        \Pr_{x \from \{0,1\}^{\lceil \kappa n \rceil }} \left[
        \begin{array}{rcl}
        \wt(\t \oplus x) >  \left(\frac{1}{2} - \frac{c^2}{16} \right) \cdot \len{\t} \text{ and } \empH^{[a,b]}(\t') > c \cdot \len{\t} &: &\t' \gets \Eemb(x) \\
        & &\t \gets \t'[a:b] 
        \end{array}
        \right]
        \leq \negl[n].
    \]
    \else 
    \begin{multline*}
        \Pr_{x \from \{0,1\}^{\lceil \kappa n \rceil }} \Biggr[
        \begin{array}{rcl}
        \wt(\t \oplus x) >  \left(\frac{1}{2} - \frac{c^2}{16} \right) \cdot \len{\t} \text{ and } \empH^{[a,b]}(\t') > c \cdot \len{\t} &: &\t' \gets \Eemb(x) \\
        & &\t \gets \t'[a:b] 
        \end{array}
        \Biggr]\\
        \leq \negl[n]. \hspace{2cm}
    \end{multline*}
    \fi
\end{lemma}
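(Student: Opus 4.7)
The plan is to relate the truncated empirical entropy $\empHtr^{[a,b]}(\t')$ to the per-position biases $q_i := \min\{\hat{\p}_i, 1-\hat{\p}_i\}$, and then use two martingale concentration arguments to force the Hamming distance $\wt(\t \oplus x)$ to be bounded away from $\len{\t}/2$ whenever the aggregate bias $\sum_i q_i$ is $\Omega(\len{\t})$.

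First I would invoke \Cref{lemma:truncated-surprisal} to replace the hypothesis $\empH^{[a,b]}(\t') > c\len{\t}$ with $\empHtr^{[a,b]}(\t') > (c^2/2)\len{\t}$ outside a negligible event. This is legitimate because, by the argument in the proof of \Cref{lemma:undetectability}, the marginal distribution of $\t'$ under uniform $x$ coincides with $\RModel(\prompt)[a:b]$, so the lemma applies verbatim. Next I would bound $\empHtr^i(\t')$ by case-analyzing whether $\t_i$ is the more- or less-likely outcome of $\p_i$: using the convexity bound $-\log(1-q) \le 2q$ on $[0,1/2]$ (both sides agree at the endpoints), one obtains $\empHtr^i(\t') \le 2q_i + L_i$, where $L_i := \mathbbm{1}[\t_i \text{ is the less-likely outcome of } \p_i]$. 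Summing yields $\empHtr^{[a,b]}(\t') \le 2\sum_i q_i + N_l$ with $N_l := \sum_i L_i$.

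A first application of Azuma's inequality concentrates $N_l$ around $\sum_i q_i$: the differences $L_i - q_i$ form a bounded martingale difference sequence relative to the filtration generated by $\t_{<i}$ (since $\E[L_i \mid \t_{<i}] = q_i$ and $q_i$ is $\sigma(\t_{<i})$-measurable), so $|N_l - \sum_i q_i| \le \len{\t}^{3/4}$ with overwhelming probability. Combined with the previous step, this forces $\sum_i q_i \ge (c^2/7)\len{\t}$ for large $n$. A second Azuma argument then handles the main claim: a direct case-split on whether $x_j$ agrees with $\arg\max_b \p_i(b)$ shows that $\Pr[\t_i \ne x_j \mid \t_{<i}] = 1/2 - q_i$, so setting $Y_k := \mathbbm{1}[\t_k \ne x_{k-a+1}]$, the partial sums $M_i := \sum_{k \le i}(Y_k - (1/2 - q_k))$ form a bounded-increment martingale. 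Azuma's inequality then gives $|\wt(\t \oplus x) - (\len{\t}/2 - \sum_i q_i)| \le \len{\t}^{3/4}$ with overwhelming probability. Chaining with the lower bound $\sum_i q_i \ge (c^2/7)\len{\t}$ yields $\wt(\t \oplus x) \le (1/2 - c^2/16)\len{\t}$ outside a negligible event, and a union bound over the three bad events completes the proof.

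The main obstacle is that the per-position biases $q_i$ are themselves random, depending on $\t_{<i}$ (and hence indirectly on $x_{<j}$), so $\sum_i q_i$ cannot be treated as a deterministic quantity and elementary Chernoff bounds do not directly apply. This is handled by framing both concentration steps as bounded-increment martingales with respect to the natural filtration $\sigma(\t_1, \ldots, \t_i, x_1, \ldots, x_{i-a+1})$, exploiting that the conditional expectations $\E[L_i \mid \t_{<i}]$ and $\E[Y_i \mid \t_{<i}]$ are both clean functions of $q_i$; after that the remaining calculations are routine.
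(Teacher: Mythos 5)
Your proposal is correct, but it takes a genuinely different route from the paper's proof. The paper conditions on the realized output $\t = \hat{\t}$ and computes the reversed conditional probability $\Pr[x_i = \hat{\t}_i \mid \t_i = \hat{\t}_i] \ge \tfrac{1}{2} + \tfrac{1}{2}\min\{1, -\ln\p_i(\hat{\t}_i)\}$; summing this lower-bounds $\E[\len{\t} - \wt(\t \oplus x) \mid \t]$ directly by $\tfrac{\len{\t}}{2} + \tfrac{1}{4}\empHtr(\t)$, after which a single Chernoff bound finishes, since conditioned on $\t$ the events $\{x_i = \hat{\t}_i\}$ are \emph{independent}. Your argument avoids this reversed-conditioning trick and instead stays in the forward direction: you relate $\empHtr$ to the deterministic-per-step quantity $q_i$ plus the error indicator $L_i$, use one Azuma application (with $\E[L_i \mid \mathcal{F}_{i-1}] = q_i$) to extract a lower bound on $\sum_i q_i$, compute $\Pr[\t_i \ne x_j \mid \mathcal{F}_{i-1}] = \tfrac12 - q_i$ by a direct case split on the watermark's sampling rule, and use a second Azuma application to concentrate $\wt(\t \oplus x)$ around $\tfrac{\len{\t}}{2} - \sum_i q_i$. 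Both your martingale increments are bounded by 1 and $q_i$ is $\sigma(\t_{<i})$-measurable, so the arguments are sound; the constants chain through with room to spare ($c^2/7 > c^2/16$). The cost of your route is an extra concentration step, needed precisely because $\sum_i q_i$ is itself random; what it buys is that you never invoke the pointwise Bayes inversion, which is perhaps the least obvious move in the paper's proof. Your use of \Cref{lemma:truncated-surprisal} is justified for the same reason the paper's is: with uniform $x$, the argument in the proof of \Cref{lemma:undetectability} shows the marginal law of the response matches the unwatermarked model.
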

\begin{proof}
    For ease of notation, in this proof we use $\p_i(\hat{\t}_i)$ to denote the probability that the model outputs a bit $\hat{\t}_i \in \{0,1\}$ as the next $(a+i)^{\text{th}}$ token given that it has output tokens $\t'[:a + i - 1]$ so far.
    We also let $\empH(\t)$ denote $\empH^{[a,b]}(\t')$ and $\empHtr(\t)$ denote $\empHtr^{[a,b]}(\t')$.
    
    For each $i \in \len{\t}$, recall that the watermark samples $\t_i \from \Ber(\hat{\p}_i - (-1)^{x_i} \cdot \min \{\hat{\p}_i, 1-\hat{\p}_i\})$, and for $\hat{\t}_i \in \{0,1\}$, $\Pr_{x_i \from \{0,1\}}[\t_i = \hat{\t}_i] = \p_i(\hat{\t}_i)$.

    For fixed $\hat{\t}_i \in \{0,1\}$, we have
    \begin{align*}
         \frac{\Pr[\t_i = \hat{\t}_i \mid x_i=\hat{\t}_i]}{2} &= \Pr[x_i = \hat{\t}_i \text{ and } \t_i = \hat{\t}_i] \\
        &= \Pr[\t_i = \hat{\t}_i] \cdot \Pr[x_i = \hat{\t}_i \mid \t_i = \hat{\t}_i] \\
        &= \p_i(\hat{\t}_i) \cdot \Pr[x_i = \hat{\t}_i \mid \t_i = \hat{\t}_i].
    \end{align*}
    Because $\t_i$ is distributed as $\Ber(\hat{\p}_i - (-1)^{x_i} \cdot \min \{\hat{\p}_i, 1-\hat{\p}_i\}) = \Ber(\hat{\p}_i - (-1)^{x_i} \cdot \min \{\hat{\p}_i, 1-\hat{\p}_i\})$, we also have
    \[
        \Pr[\t_i = \hat{\t}_i \mid x_i=\hat{\t}_i] = \p_i(\hat{\t}_i) + \min\{\p_i(\hat{\t}_i), 1-\p_i(\hat{\t}_i)\}.
    \]
    Therefore,
    \begin{align*}
        \Pr[x_i = \hat{\t}_i \mid \t_i = \hat{\t}_i] &= \frac{1}{2} + \frac{1}{2} \min\left\{1, \frac{1}{\p_i(\hat{\t}_i)} - 1\right\} \\
        &\ge \frac{1}{2} + \frac{1}{2} \min\left\{1, - \ln \p_i(\hat{\t}_i)\right\}
    \end{align*}
    where for the inequality we have used the fact that $\ln z \ge 1-1/z$ for $z>0$. 
 
    Let $Y_1, \ldots, Y_{\len{\t}}$ be Bernoulli random variables where $Y_i = 1$ if and only if $x_i = \hat{\t}_i$. The expected number of correct bits given that $\t = \hat{\t}$ is 
    \begin{align*}
        \E\left[\sum_{i=1}^{\len{\t}} Y_i \ \middle\lvert \ \t = \hat{\t} \right] &= \sum_{i=1}^{\len{\t}} \Pr[Y_i \mid \t_i = \hat{\t}_i]\\
        &\geq \frac{{\len{\t}}}{2} + \frac{1}{2} \sum_{i=1}^{\len{\t}} \min \{1, -\ln \p_i(\hat{\t}_i)\}\\
        &\ge \frac{{\len{\t}}}{2} + \frac{\ln 2}{2} \sum_{i=1}^{\len{\t}} \min \{1, -\log \p_i(\hat{\t}_i)\}\\
        &\ge \frac{{\len{\t}}}{2} + \frac{1}{4} \empHtr(\t).
    \end{align*}
    By \Cref{lemma:truncated-surprisal}, if $\empH(\t) > c \cdot {\len{\t}}$, this is at least $\left(\frac{1}{2} + \frac{c^2}{8}\right) {\len{\t}}$. Since the events $x_i = \hat{\t}_i$ are independent for fixed $\hat{\t}_i$, we can apply a Chernoff bound to see that $\wt(x \oplus \t) \le \left( \frac{1}{2} - \frac{c^2}{16}\right) \cdot \len{\t}$ with probability $1-\negl[n]$.
\end{proof}

In the following proofs of substring-completeness and substring-robustness, it will be useful to consider the model's response as consisting of \emph{blocks}.
A block is a contiguous substring of the response where a PRC codeword was embedded.
More formally, recall that the generator $\Wat_\sk$ first chooses a PRC codeword of length $n$ that corresponds to the first $n$ tokens of the response, $(\t_1, \ldots, \t_n)$, and applies a one-time pad $a_1$.
It then chooses a new PRC codeword for the next $n$ tokens of the response and applies a fresh one-time pad,
continuing to do so until it terminates, having output the final response $\t_1, \ldots, \t_{cn + i}$ for some $c, i \in \Z_{\geq 0}$.
This response consists of $(c-1)$ \emph{full} blocks $(\t_1, \ldots, \t_n), \ldots, (\t_{(c-1)n + 1}, \ldots, \t_{cn})$.
It also consists of a possibly \emph{incomplete} block, $(\t_{cn + 1}, \ldots, \t_{cn + i})$.
The substring $\t_{n-1}, \ldots, \t_{2n+2}$ of this response consists of two incomplete blocks: $(\t_{n-1}, \t_n)$ and $(\t_{2n+1}, \t_{2n+2})$, and one full block: $(\t_{n+1}, \ldots, \t_{2n})$.

\begin{lemma}[Substring completeness] \label{lemma:substring-completeness}
    Let $\varepsilon > 0$ be any constant. If $\PRC$ is a zero-bit PRC with block length $n$ and robustness to every $(1/2-\varepsilon)$-bounded channel, then $\calW[\PRC]$ is $(4\sqrt{\varepsilon} \cdot L + 2\sqrt{2} \cdot n)$-substring-complete.
\end{lemma}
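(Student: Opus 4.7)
The plan is to show substring completeness by finding, within any high-entropy substring, at least one full PRC block whose embedded entropy is large enough for the decoder to succeed. I would fix a pair $(i, L)$ and condition on $\empH^{[i:i+L]}(\Model, \prompt, \t) \geq 4\sqrt{\varepsilon} L + 2\sqrt{2} n$, aiming to show $\Detect_\sk(\t[i:i+L]) = \true$ except with negligible probability; a final union bound over the polynomially many such pairs then closes the lemma. The substring decomposes into at most two incomplete end blocks $I_s, I_e$ (each of length strictly less than $n$) and some number $k \leq L/n$ of complete PRC-aligned blocks $B_1, \ldots, B_k$, where $B_j$ occupies token positions $[(k_j{-}1)n{+}1,\, k_j n]$ for some $k_j$. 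I would analyze the entropy contributions of the boundary and interior pieces separately.

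The first key step is bounding the entropy that the boundary pieces can absorb. For an incomplete block $I$ of length $s < n$, a direct union bound over its at most $2^s$ possible realizations yields $\Pr[\empH(I) > \sqrt{2} n] \leq 2^s \cdot 2^{-\sqrt{2} n} = 2^{-\Omega(n)}$. Hence $\empH(I_s) + \empH(I_e) \leq 2\sqrt{2} n$ except with negligible probability, and subtracting from the hypothesis gives $\sum_j \empH(B_j) \geq 4\sqrt{\varepsilon} L \geq 4\sqrt{\varepsilon} k n$ (using $k n \leq L$). By pigeonhole, at least one full block $B_{j^*}$ satisfies $\empH(B_{j^*}) \geq 4\sqrt{\varepsilon} n$. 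The edge case $L < n$ (no full block fits) is automatically handled: the same tail bound on the single incomplete piece already rules out the entropy hypothesis with overwhelming probability.

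Given the good block $B_{j^*}$, I would condition on the entire generation history preceding $B_{j^*}$. Under this conditioning, the fresh PRC codeword $c$ sampled inside $\Wat_\sk$ for that block is independent of the history, and XORing with the independent uniform pad $a_{k_{j^*}}$ makes the effective seed $x = c \oplus a_{k_{j^*}}$ uniform in $\{0,1\}^n$. Applying \Cref{lemma:encoding-errors} with $\kappa = 1$ and entropy constant $4\sqrt{\varepsilon}$ then gives $\wt(\t[B_{j^*}] \oplus x) \leq (1/2 - \varepsilon) n$ with overwhelming probability, i.e.\ $\t[B_{j^*}] \oplus a_{k_{j^*}}$ lies within Hamming distance $(1/2-\varepsilon)n$ of $c$. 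Since $\Detect_\sk$ brute-forces every length-$n$ window of its input XORed with every pad $a_\ell$, it will in particular invoke $\PRC.\Decode$ on $\t[B_{j^*}] \oplus a_{k_{j^*}}$, and PRC robustness to $(1/2-\varepsilon)$-bounded channels ensures the call returns $1$. The main subtlety is exactly the issue flagged in the version-history remark: without the one-time pads, the effective channel on $B_{j^*}$ would depend on previous PRC codewords through the prefix and thus on the PRC key, violating the $(1/2{-}\varepsilon)$-boundedness hypothesis of the PRC; it is precisely the pad $a_{k_{j^*}}$ that, after conditioning on the history, freezes a fixed $(1/2{-}\varepsilon)$-bounded channel (via \Cref{lemma:encoding-errors}) to which PRC robustness can be applied cleanly.
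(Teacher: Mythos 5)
Your proof is correct and follows essentially the same structure as the paper's: decompose the substring into at most two incomplete end pieces and full PRC-aligned blocks, charge at most $\sqrt{2}n$ empirical entropy to each end piece with overwhelming probability, deduce by averaging that some full block has $\ge 4\sqrt{\varepsilon}n$ empirical entropy, then apply \Cref{lemma:encoding-errors} (with $c = 4\sqrt{\varepsilon}$, giving the $\varepsilon = c^2/16$ bound) and PRC robustness to that block, using the freshness of the pad $a_{k_{j^*}}$ to decouple the embedding channel on that block from the PRC key. The one genuine variation is how you bound the end-piece entropy: you use a direct counting/union bound ($2^s$ realizations, each of conditional probability $< 2^{-\sqrt{2}n}$, giving $2^{s-\sqrt{2}n} = 2^{-\Omega(n)}$), whereas the paper pads the incomplete piece to length $n$ with deterministic tokens and then invokes \Cref{lemma:truncated-surprisal} with $c=\sqrt{2}$, using $\empHtr \le n$ to discharge the second conjunct. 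Your route is a bit cleaner — it avoids the padding device and the detour through truncated entropy — and gives an explicit $2^{(1-\sqrt{2})n}$ bound rather than an unspecified $\negl[n]$; both are equally valid. (Both your argument and the paper's implicitly transfer an entropy-tail bound proved for $\RModel$ to the watermarked distribution $\Wat_\sk$ via undetectability/PRC pseudorandomness; this is elided in the paper's write-up as well, so it is not a gap on your side.)
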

\begin{proof}
    Let $\t \gets \Wat_\sk(\prompt)$ be a watermarked response for some arbitrary prompt $\prompt$, and let $\t'$ be a length-$L$ contiguous substring of $\t$, where $\empH(\t') \geq 4\sqrt{\varepsilon} \cdot L + 2\sqrt{2} \cdot n$.
    
    The start and end of $\t'$ may be part of incomplete blocks, with some number of full blocks in between.
    Recall that the truncated empirical entropy of any single-bit distribution is at most 1, so each of these possibly-incomplete blocks contains at most $n$ truncated empirical entropy.
    Consider padding these possibly-incomplete blocks, appending deterministic bits so they are each of length $n$ but have the same truncated empirical entropy.
    It now follows from \Cref{lemma:truncated-surprisal} that with overwhelming probability in the length of these blocks (which is now $n$), they each have at most $n\sqrt{2}$ empirical entropy.
    The full blocks therefore contain at least $4\sqrt{\varepsilon} \cdot L$ empirical entropy.

    By an averaging argument, $\t'$ must contain some full block $\tilde{\t} \subseteq \t'$ with at least $4\sqrt{\varepsilon} \cdot n$ empirical entropy.
    By \Cref{lemma:encoding-errors}, the embedding channel applied to this block is $(1/2-\varepsilon)$-bounded. 
    Furthermore, because of the one-time pads, these channels' errors do not depend on other codewords.
    Since the PRC used is robust against such bounded channels, the detector will see that $\PRC.\Decode(\PRC.\sk, \tilde{\t} \oplus \alpha_\ell) \neq \bot$ for some $\ell$ and output $\true$.
\end{proof}

\begin{lemma}[Substring-robustness against substitutions] \label{lemma:watermark-robustness}
    Let $\varepsilon, \delta > 0$ be any constants. If $\PRC$ is a zero-bit PRC with block length $n$ and robustness to every $(1/2 - \varepsilon \cdot \delta)$-bounded channel, then $\calW[\PRC]$ is $(4\sqrt{\varepsilon} \cdot L + 2\sqrt{2} \cdot n)$-substring-robust against $\BSC_{1/2-\delta}$.
\end{lemma}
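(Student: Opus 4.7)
The proof parallels that of \Cref{lemma:substring-completeness}, with the added complication that we must handle the composition of the embedding channel $\Eemb$ with the adversarial channel $\BSC_{1/2-\delta}$. At a high level, the plan is to find a ``full'' codeword block within the substring that carries enough empirical entropy, show that the combined channel $\BSC_{1/2-\delta} \circ \Eemb$ restricted to this block is $(1/2 - \varepsilon\delta)$-bounded, and then invoke the PRC's robustness guarantee.

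Given a substring $\t'$ of length $L$ with $\empH(\t') \ge 4\sqrt{\varepsilon} L + 2\sqrt{2} n$, I first repeat the block-partitioning and averaging argument from the proof of \Cref{lemma:substring-completeness}: since each of the (at most two) possibly-incomplete boundary blocks contains at most $\sqrt{2} n$ empirical entropy with overwhelming probability (by \Cref{lemma:truncated-surprisal} applied after a trivial deterministic padding), the remaining full blocks must contribute at least $4\sqrt{\varepsilon} L$ empirical entropy in aggregate. An averaging argument then produces a full block $\tilde{\t} \subseteq \t'$ of length $n$ with empirical entropy at least $4\sqrt{\varepsilon} n$. Let $x = \PRC.\Encode(\PRC.\sk) \oplus a_\ell$ be the one-time-padded codeword embedded in this block, and observe that $x$ is statistically uniform over $\{0,1\}^n$ thanks to the pad $a_\ell$.

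The core computation is to bound the weight of the combined error on this block. Setting $y = \tilde{\t} = \Eemb(x)$ and $z = \BSC_{1/2-\delta}(y)$, \Cref{lemma:encoding-errors} (applied with $c = 4\sqrt{\varepsilon}$, so that $c^2/16 = \varepsilon$) gives $\wt(y \oplus x) \le (1/2 - \varepsilon) n$ with overwhelming probability. Writing $z \oplus x = (y \oplus x) \oplus \xi$ where $\xi \sim \Ber(1/2 - \delta)^n$ is independent of everything else, a direct computation shows
\[
    \E[\wt(z \oplus x) \mid y \oplus x] = n(1/2 - \delta) + 2\delta \cdot \wt(y \oplus x) \le n(1/2 - 2\varepsilon\delta).
\]
Because the coordinates of $\xi$ are independent, a Chernoff bound over $\xi$ then gives $\wt(z \oplus x) \le n(1/2 - \varepsilon\delta)$ with overwhelming probability. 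Critically, the one-time pad $a_\ell$ makes the distribution of $z \oplus x = z \oplus a_\ell \oplus (\PRC.\Encode(\PRC.\sk))$ --- viewed as the error channel applied to the underlying PRC codeword --- independent of the PRC key, exactly as required by the definition of a $(1/2 - \varepsilon\delta)$-bounded channel.

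Finally, invoking the assumed robustness of $\PRC$ against every $(1/2-\varepsilon\delta)$-bounded channel yields $\PRC.\Decode(\PRC.\sk,\, z \oplus a_\ell) \neq \bot$ with overwhelming probability. Since $\Detect_\sk$ iterates over all start/end indices within the received string and all block indices $\ell \in [\lceil L^*/n \rceil]$, it will try the correct window and pad, and therefore output $\true$. The main obstacle is the third step --- tracking the combined error rate cleanly and, more subtly, verifying that the one-time pads do make the effective channel $p$-bounded in the exact sense demanded by \Cref{def:pkPRC}, so that we can legitimately hand the block off to PRC robustness rather than accidentally conditioning on correlations introduced by other blocks.
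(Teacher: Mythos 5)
Your proof is correct and takes essentially the same approach as the paper: both select a high-entropy full block via the argument from \Cref{lemma:substring-completeness}, invoke \Cref{lemma:encoding-errors} to bound $\Eemb$ as $(1/2-\varepsilon)$-bounded, compose with $\BSC_{1/2-\delta}$ to obtain expected error rate $1/2 - 2\varepsilon\delta$, and concentrate to conclude the combined channel is $(1/2-\varepsilon\delta)$-bounded before handing off to PRC robustness. You simply make explicit the conditional-expectation computation, the Chernoff step, and the role of the one-time pad in decoupling the effective error from the PRC key, all of which the paper's terse proof leaves implicit or defers to \Cref{lemma:substring-completeness}.
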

\begin{proof}
    As in the proof of \Cref{lemma:substring-completeness}, the decoder receives the output of the composition of the error channel and embedding channel on the sufficiently high-entropy block. The embedding channel is again $(1/2-\varepsilon)$-bounded on this block by \Cref{lemma:encoding-errors}.
    Therefore, the composition of the error channel $\BSC_{1/2-\delta}$ with the embedding channel has expected error rate
    \[
        (1/2-\delta) \cdot (1/2+\varepsilon) + (1/2-\varepsilon) \cdot (1/2+\delta) = \frac{1}{2} - 2 \varepsilon \cdot \delta
    \]
    and is in particular $(1/2-\varepsilon \cdot \delta)$-bounded. Since $\PRC$ is robust to such channels, it follows that $\calW[\PRC]$ is $(4\sqrt{\varepsilon} \cdot L + 2\sqrt{2} \cdot n)$-substring-robust against $\BSC_{1/2-\delta}$.
\end{proof}

We showed above that when a substring $\t'$ of a response has at least $4\sqrt{\varepsilon} \cdot \t'$ empirical entropy, the embedding channel $\Eemb$ applied to some block in that substring is $\left(\frac{1}{2} - \varepsilon \right)$-bounded.
For the following theorem, we need to assume something further about $\Eemb$: restricted to that block in the substring, $\Eemb$ is equal to $\BSC_{\alpha(\varepsilon)}$ for some $\alpha(\varepsilon) \in \left(0, \frac{1}{2} - \varepsilon\right]$.

\begin{assumption} \label{assumption:embedding}
    There is a function $\alpha : [0,1/2) \to [0,1/2)$ such that for any constant $\varepsilon \in [0,1/2)$, if a response substring of length $L$ has at least $(4\sqrt{\varepsilon} \cdot L + 2\sqrt{2} \cdot n)$ empirical entropy, the embedding channel can be modeled as $\BSC_{\alpha(\varepsilon)}$.
\end{assumption}

As noted in the remark at the end of \Cref{subsec:water-prelims}, this assumption is not realistic if we use the standard transformation from tokens to bits because the bit-errors within the representation of a given token may be correlated.
However, an alternative transformation that avoids this issue simply uses one bit for each token.
In this case, \Cref{assumption:embedding} roughly states that the entropy is spread throughout the substring, and that the substring does not repeat too many tokens.

\begin{lemma}[Substring-robustness against deletions] \label{lemma:watermark-robustness-deletions}
    Suppose that \Cref{assumption:embedding} holds with function $\alpha$. Let $\varepsilon > 0$, $q \in (0,1/2)$, and $p \in (0,1)$ be any constants. If $\PRCdel$ is a zero-bit PRC with block length $n$ and robustness to $\BDC_p \circ \BSC_q \circ \BSC_{\alpha(\varepsilon)} = \BDC_p \circ \BSC_{q + \alpha(\varepsilon) - q \cdot \alpha(\varepsilon)}$, then $\calW[\PRCdel]$ is $(4\sqrt{\varepsilon} \cdot L + 2\sqrt{2} \cdot n)$-substring-robust against $\BDC_p \circ \BSC_q$.
\end{lemma}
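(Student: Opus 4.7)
The plan is to mirror the argument used for \Cref{lemma:watermark-robustness}, adapted so that the outer error channel is $\BDC_p \circ \BSC_q$ instead of a single $\BSC$. Given a prompt $\prompt$ and a watermarked response $\t \gets \Wat_\sk(\prompt)$, fix any substring $\t' = \t[i:i+L]$ with $\empH^{[i:i+L]}(\Model,\prompt,\t) \ge 4\sqrt{\varepsilon} \cdot L + 2\sqrt{2} \cdot n$. First I would reuse the block decomposition from the proof of \Cref{lemma:substring-completeness}: each at most-two partial blocks at the endpoints of $\t'$ contribute at most $\sqrt{2} \cdot n$ empirical entropy each (by \Cref{lemma:truncated-surprisal}, after padding to a length-$n$ block), so averaging across the full interior blocks forces at least one full block $\tilde{\t} \subseteq \t'$ of length $n$ with empirical entropy at least $4\sqrt{\varepsilon} \cdot n$.

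Next I would identify the channel that acts on the PRC codeword embedded in $\tilde{\t}$. Since $\tilde{\t}$ satisfies the $4\sqrt{\varepsilon} \cdot n$ empirical entropy threshold, \Cref{assumption:embedding} applies and the embedding channel on $\tilde{\t}$ is (distributed as) $\BSC_{\alpha(\varepsilon)}$. Composing with the adversarial $\BDC_p \circ \BSC_q$, the full channel acting on the underlying PRC codeword is $\BDC_p \circ \BSC_q \circ \BSC_{\alpha(\varepsilon)}$. Since the composition of two binary symmetric channels is again a BSC with crossover $q + \alpha(\varepsilon) - q \cdot \alpha(\varepsilon)$, this is exactly $\BDC_p \circ \BSC_{q + \alpha(\varepsilon) - q \cdot \alpha(\varepsilon)}$, the channel against which $\PRCdel$ is assumed robust. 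It is here that the one-time pad $a_\ell$ (introduced on \cref{line:r1,line:r2} of \Cref{alg:watermarking-gen}) is essential: it decouples the error experienced by $\tilde{\t}$'s codeword from PRC codewords in other blocks, so that from the PRC's perspective the error channel does not depend on $\PRC.\sk$.

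Finally, I would argue that $\Detect_\sk$ finds $\tilde{\t}$'s corrupted codeword in its brute-force loop. Deletions shift indices, but the full block corresponds to some contiguous substring $(\t_{i^\star}, \ldots, \t_{j^\star})$ of the post-channel text whose start and end indices $i^\star \le j^\star \le i^\star + n - 1$ lie within the range iterated over in \Cref{alg:watermarking-det}; for the corresponding value of $\ell$, XORing with $a_\ell$ recovers exactly the $\BDC_p \circ \BSC_{q + \alpha(\varepsilon) - q \cdot \alpha(\varepsilon)}$-corrupted codeword, which $\PRCdel.\Decode$ recovers with overwhelming probability. The main obstacle is the bookkeeping for deletions, which is what forced the detector to search over both $i$ and $j$ in the first place; once one accepts this, the rest is a direct reduction to robustness of $\PRCdel$ combined with the block-selection argument from \Cref{lemma:substring-completeness} and the channel composition observation.
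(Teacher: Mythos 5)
Your proposal is correct and follows essentially the same approach as the paper's (much terser) proof: find the high-entropy block via the argument from \Cref{lemma:substring-completeness}, invoke \Cref{assumption:embedding} to model the embedding channel on that block as $\BSC_{\alpha(\varepsilon)}$, compose with the adversarial channel, and apply the assumed robustness of $\PRCdel$. Your added remarks about the one-time pad and the detector's two-index search are details the paper leaves implicit, and they are accurate.
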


\begin{proof}
    Consider the block of the response over which $\Eemb$ is equivalent to $\BSC_{\alpha(\varepsilon)}$.
    The decoder receives the output of the composition of the error channel $\BDC_p \circ \BSC_q$ and embedding channel $\BSC_{\alpha(\varepsilon)}$ applied to the codeword used in this block.
    The composition of these channels is $\BDC_p \circ \BSC_q \circ \BSC_{\alpha(\varepsilon)}$, which $\PRCdel$ is robust to. 

    Therefore, the detector of $\calW[\PRCdel]$ will output $\true$ when it runs $\Decode$ on this block of the response.
\end{proof}

By applying the PRCs of \Cref{theorem:ldpc-prc-xor,theorem:ldpc-prc-lpn} to \Cref{lemma:watermark-robustness,lemma:watermark-robustness-deletions}, we obtain the following results.

\begin{theorem} \label{theorem:robust-water}
    Let $\delta > 0$ be any constant and $n$ be a security parameter. Under \Cref{assumption:combined}, there exists a language model watermarking scheme that is $O(L+n)$-substring-robust against $\BSC_{1/2-\delta}$.
\end{theorem}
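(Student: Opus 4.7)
The plan is to instantiate the general substring-robustness result of \Cref{lemma:watermark-robustness} with one of the LDPC-based PRCs produced in \Cref{theorem:ldpc-prc-xor,theorem:ldpc-prc-lpn}. Together, those two theorems say that, under \Cref{assumption:combined}, for every constant $p \in (0,1/2)$ there exists a zero-bit public-key PRC that is robust to every $p$-bounded channel. So the theorem reduces to choosing the right constant $p$ in terms of $\delta$.

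First I would fix a constant $\varepsilon > 0$ small enough that $p := 1/2 - \varepsilon \delta$ lies in $(0,1/2)$; concretely, take $\varepsilon = \min\{1, 1/(4\delta)\}$, which depends only on the constant $\delta$. Under \Cref{assumption:combined}, I then invoke whichever of \Cref{theorem:ldpc-prc-xor} or \Cref{theorem:ldpc-prc-lpn} is applicable (depending on which branch of \Cref{assumption:combined} holds) with this $p$ to obtain a zero-bit public-key PRC with block length $n$ that is robust to every $(1/2 - \varepsilon \delta)$-bounded channel.

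Next, I feed this PRC into \Cref{lemma:watermark-robustness}. Since the PRC has exactly the robustness the lemma requires for these choices of $\varepsilon$ and $\delta$, the lemma concludes that $\calW[\PRC]$ is $(4\sqrt{\varepsilon} \cdot L + 2\sqrt{2} \cdot n)$-substring-robust against $\BSC_{1/2-\delta}$. Because $\varepsilon$ is a constant fixed from $\delta$, the entropy threshold is $O(L + n)$, which matches the statement. Undetectability (\Cref{lemma:undetectability}) and soundness (\Cref{lemma:soundness}) of $\calW[\PRC]$ come for free, so the scheme we have constructed is a bona fide watermarking scheme.

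There is essentially no hard step here: all of the analytic content lives in \Cref{lemma:watermark-robustness} (which translates PRC robustness against bounded channels into watermark robustness against $\BSC_{1/2-\delta}$) and in the PRC constructions themselves. The only non-mechanical point is verifying that we are allowed to take $\varepsilon$ to be a positive constant depending only on $\delta$, which is what makes the resulting bound $O(L+n)$ rather than some growing function of $n$; this is immediate since the PRC constructions tolerate every constant bounded-error rate strictly below $1/2$.
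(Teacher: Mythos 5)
Your proposal is correct and matches the paper's own (implicit) proof: the paper likewise obtains \Cref{theorem:robust-water} by applying the zero-bit LDPC-based PRCs of \Cref{theorem:ldpc-prc-xor,theorem:ldpc-prc-lpn} (available for any constant bounded-error rate $p\in(0,1/2)$ under \Cref{assumption:combined}) to \Cref{lemma:watermark-robustness}, with the entropy threshold $4\sqrt{\varepsilon}\cdot L + 2\sqrt{2}\cdot n = O(L+n)$ once $\varepsilon$ is fixed as a constant depending only on $\delta$. Your choice of $\varepsilon$ and the observation that undetectability and soundness follow from \Cref{lemma:undetectability,lemma:soundness} fill in exactly the bookkeeping the paper leaves implicit.
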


\begin{theorem} \label{theorem:deletion-robust-water}
    Suppose that \Cref{assumption:embedding} holds with function $\alpha$, let $q \in (0,1/2)$ and $p \in (0,1)$ be any constants, and let $n$ be a security parameter. Under \Cref{assumption:combined}, there exists a language model watermarking scheme that is $O(L+n)$-substring-robust against $\BDC_p \circ \BSC_q$.
\end{theorem}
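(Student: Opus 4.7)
The plan is to assemble the watermark by mechanically composing previously-established pieces: the pseudorandom LDPC code from Theorem~\ref{theorem:ldpc-prc-xor} or \ref{theorem:ldpc-prc-lpn}, the deletion-lifting construction from Theorem~\ref{theorem:deletion-code}, and the PRC-to-watermark reduction in Lemma~\ref{lemma:watermark-robustness-deletions}.

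First, I would fix a constant $\varepsilon > 0$ small enough that the effective substitution rate $q^{\star} := q + \alpha(\varepsilon) - q\cdot \alpha(\varepsilon)$ lies strictly below $1/2$, where $\alpha$ is the function from Assumption~\ref{assumption:embedding}. Since $q < 1/2$ we have $(1/2 - q)/(1 - q) > 0$, so we can pick $\varepsilon$ making $\alpha(\varepsilon)$ fall below this strictly positive threshold (as the entropy requirement grows, $\alpha$ is expected to shrink, consistent with the $1/2 - c^{2}/16$ bound from \Cref{lemma:encoding-errors}). With this choice, the composed channel $\BDC_{p} \circ \BSC_{q} \circ \BSC_{\alpha(\varepsilon)}$ equals $\BDC_{p} \circ \BSC_{q^{\star}}$, whose parameters $(p, q^{\star}) \in (0,1) \times (q, 1/2)$ lie in the regime covered by Theorem~\ref{theorem:deletion-code}.

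Next, I would invoke Theorem~\ref{theorem:deletion-code} with these $(p, q^{\star})$ to extract constants $\varepsilon' > 0$ and $m = \Omega(n \log^{6} n)$ such that any zero-bit PRC robust to $(1/2 - \varepsilon')$-bounded channels lifts to a zero-bit $\PRCdel$ robust to $\BSC_{q^{\star}} \circ \BDC_{p}$ (equivalently, $\BDC_{p} \circ \BSC_{q^{\star}}$). Under Assumption~\ref{assumption:combined}, Theorem~\ref{theorem:ldpc-prc-xor} or Theorem~\ref{theorem:ldpc-prc-lpn} supplies exactly such an inner PRC, giving us the deletion-robust $\PRCdel$ needed by the lemma. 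Feeding this $\PRCdel$ into Lemma~\ref{lemma:watermark-robustness-deletions} with the already-chosen $\varepsilon$, $q$, $p$ yields $(4\sqrt{\varepsilon}\, L + 2\sqrt{2}\, n)$-substring-robustness of $\calW[\PRCdel]$ against $\BDC_{p} \circ \BSC_{q}$; since $\varepsilon$ is an absolute constant depending only on $q$, this bound is $O(L + n)$, completing the proof. The only nontrivial step is selecting $\varepsilon$ to control $\alpha(\varepsilon)$; everything else is a direct invocation of an already-proved theorem or lemma.
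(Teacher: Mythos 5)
Your proposal is correct and matches the paper's approach: the paper's own proof of Theorem~\ref{theorem:deletion-robust-water} is the one-line observation that Lemma~\ref{lemma:watermark-robustness-deletions} should be instantiated with a deletion-robust PRC built via Theorem~\ref{theorem:deletion-code} from the LDPC PRCs of Theorem~\ref{theorem:ldpc-prc-xor} or~\ref{theorem:ldpc-prc-lpn}, which is exactly your chain. Indeed, you are more explicit than the paper about the one genuinely nontrivial step: verifying that the effective rate $q^\star = q + \alpha(\varepsilon)(1-q)$ lands strictly below $1/2$ so that Theorem~\ref{theorem:deletion-code} applies. One small direction slip to fix: you open by fixing $\varepsilon$ ``small enough,'' but as your own parenthetical notes, the entropy threshold $4\sqrt{\varepsilon}\,L + 2\sqrt{2}\,n$ grows with $\varepsilon$ and a higher entropy threshold is what drives $\alpha(\varepsilon)$ \emph{down} (cf.\ the $1/2 - c^2/16$ bound of Lemma~\ref{lemma:encoding-errors} with $c = 4\sqrt{\varepsilon}$), so you should choose $\varepsilon$ \emph{large} enough (but still a constant below $1/2$). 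With that correction, the argument is complete and this choice of $\varepsilon$ is what the paper (implicitly) relies on; since $\varepsilon$ is a constant depending only on $q$ and $\alpha$, the resulting $(4\sqrt{\varepsilon}\,L + 2\sqrt{2}\,n)$ bound is $O(L+n)$ as claimed.
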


We remark that although we state our theorems for error rates in $(0,1/2)$ for convenience, one can easily extend these schemes to error rates for arbitrary constants in $(0,1) \setminus \{1/2\}$.

\subsection{Language model steganography} \label{subsec:langauge-model-stego}
We have proven completeness, robustness, soundness, and undetectability of $\calW$, where the random strings $x$ used in $\Wat_\sk$ were computed as $\PRC.\Encode(\PRC.\sk)$.
These proofs relied only on our ability to distinguish PRC codewords from unrelated strings, so it sufficed to use a zero-bit PRC.
Of course, one could obtain a multi-bit watermarking scheme by replacing $\PRC.\Encode(\PRC.\sk)$ with $\PRC'.\Encode(\PRC'.\sk, \m)$ for any message $\m$, using $\PRC'$ which is multi-bit and has the same robustness as $\PRC$. 
\Cref{lemma:watermark-robustness,lemma:watermark-robustness-deletions} both apply identically in this case.
This yields a robust language model steganography scheme, where steganographic secrecy is implied by undetectability of $\calW[\PRC']$.
This steganography scheme has the same robustness as the watermarking scheme.

Applying this observation and using the constant-rate PRCs of \Cref{subsec:constant-rate-prcs}, we obtain the following result.

\begin{theorem} \label{theorem:language-model-stego}
    Let $\delta > 0$ be any constant and $n$ be a security parameter. Under \Cref{assumption:combined}, there exists a language model steganography scheme that is $O(L+n)$-substring-robust against $\BSC_{1/2-\delta}$.
\end{theorem}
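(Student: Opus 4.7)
\medskip

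\textbf{Proof proposal.} The plan is to directly instantiate the watermarking construction $\calW[\PRC']$ from \Cref{subsec:water-simple} with a constant-rate multi-bit public-key PRC $\PRC'$ in place of the zero-bit PRC, using the codeword to encode an arbitrary message $\m$ rather than the fixed symbol $1$. The steganographic encoder on input $(\prompt, \m)$ samples $x \gets \PRC'.\Encode(\pk, \m)$ and then runs the biased sampling process of \Cref{alg:watermarking-gen} verbatim (with the same per-block one-time pads $a_1, a_2, \dots$); the decoder iterates over index pairs as in \Cref{alg:watermarking-det} and outputs the first non-$\bot$ value returned by $\PRC'.\Decode$.

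To obtain $\PRC'$, I would apply \Cref{theorem:constant-rate-prcs} to the zero-bit LDPC-based PRCs of \Cref{theorem:ldpc-prc-xor} or \Cref{theorem:ldpc-prc-lpn}: under \Cref{assumption:combined}, for any constant $p \in (0,1/2)$ this yields a constant-rate public-key PRC with block length $n$ that is robust to every $p$-bounded channel. This supplies both the constant information rate (inherited from $\PRC'$) and the bounded-channel robustness needed to mirror \Cref{lemma:watermark-robustness}.

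Steganographic secrecy follows from undetectability: the proof of \Cref{lemma:undetectability} carries through unchanged, because the pseudorandomness property of $\PRC'$ holds for every fixed message $\m$, so the distribution of stegotexts is computationally indistinguishable from the covertext distribution $\RModel(\prompt)$. For robustness, I would reuse the argument of \Cref{lemma:substring-completeness,lemma:watermark-robustness} essentially verbatim: given a response substring of length $L$ with empirical entropy at least $4\sqrt{\varepsilon}\,L + 2\sqrt{2}\,n$, the truncated-entropy averaging argument guarantees a full PRC block inside the substring with at least $4\sqrt{\varepsilon}\,n$ empirical entropy. By \Cref{lemma:encoding-errors} the embedding channel on that block is $(1/2-\varepsilon)$-bounded, and composing with $\BSC_{1/2-\delta}$ gives a $(1/2-\varepsilon\delta)$-bounded channel (the one-time pads $a_\ell$ ensure the composed channel does not depend on the PRC key, which is precisely the technical point flagged in the paper's ``Differences from a previous version''). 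Choosing $\varepsilon$ so that $\PRC'$ tolerates every $(1/2-\varepsilon\delta)$-bounded channel, $\PRC'.\Decode$ recovers $\m$ on that block with overwhelming probability.

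The main subtlety, and the one place where the zero-bit proof does not transfer mechanically, is argue that the brute-force search of the decoder returns $\m$ rather than a spurious decoding from some other index pair. Here I would lean on the soundness clause of \Cref{def:pkPRC}: for any fixed string $c$, $\Pr[\PRC'.\Decode(\sk,c) = \bot] \ge 1 - \negl$. A union bound over the $O(L n \cdot L^*/n)$ candidate windows is only polynomially many, so with overwhelming probability the only non-$\bot$ decoding comes from the correctly aligned full-entropy block, yielding $\m$. Combining this with the rate of $\PRC'$ and \Cref{lemma:undetectability} completes the proof of \Cref{theorem:language-model-stego}.
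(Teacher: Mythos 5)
Your proposal follows essentially the same route as the paper: the paper's proof of \Cref{theorem:language-model-stego} (in \Cref{subsec:langauge-model-stego}) simply observes that replacing $\PRC.\Encode(\PRC.\sk)$ with $\PRC'.\Encode(\PRC'.\sk,\m)$ for a constant-rate multi-bit PRC $\PRC'$ (built via \Cref{theorem:constant-rate-prcs} from the zero-bit LDPC-PRCs) preserves the proofs of \Cref{lemma:undetectability,lemma:substring-completeness,lemma:watermark-robustness} verbatim, and that undetectability gives steganographic secrecy. Your extra paragraph on ruling out spurious decodings from misaligned windows is a point the paper glosses over; the union-bound-over-windows argument via the soundness clause is the right instinct, but note that the candidate windows $(\t_i,\ldots,\t_j)\oplus a_\ell$ are not literally fixed strings independent of the PRC key, so a fully rigorous version would need to invoke the one-time pads $a_\ell$ to argue that each misaligned/mis-indexed window is distributed (close to) uniformly over the choice of the $a_\ell$'s, after which the soundness clause and the polynomial union bound do apply.
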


In \Cref{subsec:public-attr}, we will see how this observation about encoding arbitrary messages can be used to build watermarks with public attribution.
Since the proofs of security and robustness of the language model steganography scheme are completely identical to those for the watermarking scheme $\calW[\PRC]$, we do not formally prove them separately.
However, in \Cref{sec:stego} we show that PRCs can be used for \emph{universal} steganography (which is more general than language model steganography).

\subsection{Watermarks with public attribution} \label{subsec:public-attr}
See \Cref{subsec:techo-attribution} for a description of publicly attributable watermarks.

We define public attribution in terms of a function called $\ForgeDetect$.
Recall that we intentionally design this function to \emph{not} be robust.
$\ForgeDetect$, given a text $\t$ and a public detection key, outputs a pair $(\t', \b)$, where $\b \in \{\true, \false\}$ indicates whether $\t$ contains verbatim a significant prefix of text output by the model, and if so, $\t'$ is that prefix.
Although $\ForgeDetect$ is intentionally not robust, our scheme's $\Detect$ function retains all properties (undetectability, robustness, soundness) of our standard watermarks.

\Cref{alg:attr-watermarking-setup,alg:attr-watermarking-gen,alg:attr-watermarking-attr,alg:watermarking-det}, given in \Cref{fig:watt-algos}, comprise a watermarking scheme with unforgeable public attribution from any secret-key PRC and any digital signature scheme.

\begin{figure}

\begin{algorithm}[H]
\caption{Publicly attributable watermark setup procedure $\Setup(\secparam)$}
\label{alg:attr-watermarking-setup}
    \KwIn{A security parameter $\secpar$.}
    \KwResult{A watermark public key $\pk$ and secret key $\sk$}
    $\prcsk \from \PRC.\KeyGen(\secparam)$\;
    $\sigpk, \sigsk \from \Sig.\KeyGen(\secparam)$\;
    $a_1, \ldots, a_{\lceil \frac{L^*}{n} \rceil} \gets \{0,1\}^{n}$\;
    $\pk \gets \left(\prcsk, \sigpk, (a_1, \ldots, a_{\lceil \frac{L^*}{n} \rceil})\right)$\;
    $\sk \gets \sigsk$\;
    \Return $\pk, \sk$\;
\end{algorithm}
\vspace{0.1cm}

\begin{algorithm}[H]
\caption{Publicly attributable watermarked text generator $\Wat_\sk$}
\label{alg:attr-watermarking-gen}
    \KwIn{A prompt ($\prompt$), a watermark public key $\pk = \left(\prcsk, \sigpk, (a_1, \ldots, a_{\lceil \frac{L^*}{n} \rceil})\right)$, and a watermark secret key $\sk = \sigsk$.}
    \KwResult{Watermarked text $\t_1, \hdots, \t_L$}
    $(\PRC.\sk, \sigpk, a) \gets \pk$\;
    $x \from \PRC.\Encode(\prcsk) \oplus a_1$\;
    $i := 1, j := 1$\;
    \While{$\texttt{done} \notin \{\t_1, \ldots, \t_{i-1}\}$}{
        \tcp{Embed the current message if first block, or signature otherwise}
        $\p_i := \Model(\prompt, \t_1, \hdots, \t_{i-1})$\;
        $\t_i \from \Ber(\hat{\p}_i - (-1)^{x_j} \cdot \min\{\hat{\p}_i,1-\hat{\p}_i\})$\;
        $i \gets i + 1$, $j \gets j+1$\;       
        \tcp{If done embedding the current codeword, generate a new signature}
        \If{$j > n$}{
            $\m \gets \t_1, \ldots, \t_i$\;
            $\sigma \gets \Sign_\sigsk(\m)$\;
            $x \gets \PRC.\Encode(\prcsk,\sigma) \oplus a_{\lceil \frac{i}{n} \rceil}$\;
            $j \gets 1$\;
        }
    }
    \Return $\t_1, \ldots, \t_L$\;
\end{algorithm}
\vspace{0.1cm}

\begin{algorithm}[H]
\caption{Publicly attributable watermarked text attributor $\ForgeDetect_\pk$}
\label{alg:attr-watermarking-attr}
    \KwIn{Text $\t_1, \ldots, \t_L$ and a watermark public key $\pk = \left(\prcsk, \sigpk, (a_1, \ldots, a_{\lceil \frac{L^*}{n} \rceil})\right)$.}
    \KwResult{$\true$ or $\false$}
    $(\PRC.\sk, \sigpk, a) \gets \pk$\;
    \For{$i \in [L-n+1], \ell \in \left[\lceil \frac{L^*}{n} \rceil \right]$}{
        $\m \gets \t_1, \ldots, \t_{i-1}$\;
        $\sigma \gets \PRC.\Decode(\prcsk, (\t_{i}, \ldots, \t_{i+n-1}) \oplus a_\ell)$\;
        \If{$\Vrfy_{\sigpk}(\m, \sigma)$}{
            \Return $(\m, \true)$\;
        }
    }
    \Return $(\bot, \false)$\;
\end{algorithm}

\caption{Algorithms $(\Setup, \Wat, \ForgeDetect)$ of $\Watt[\PRC]$. The detector $\Detect$ is the same as that of $\calW[\PRC]$, given in \Cref{alg:watermarking-det}. Here, we let $n$ be a codeword length sufficient to encode signatures from $\Sig$ using $\PRC$. \label{fig:watt-algos}}
\end{figure}

The watermarked text generator of $\Watt$ (\Cref{alg:attr-watermarking-gen}) is exactly the same as that of $\calW$ (\Cref{alg:watermarking-gen}), but $x$ is now generated as the output of $\PRC.\Encode$ on specific messages.
In particular, the first $x$ is sampled as $\PRC.\Encode(\sk)$, and every subsequent $x$ is the encoding of a signature on the response output thus far.

\begin{figure}[t]
    \centering
    \hfpages{
        0.6
    }{
        \underline{$\AttrForge_{\adv, \calW_{\sf att}}(\secpar)$}\\
        $\pk, \sk \gets \KeyGen(\secparam)$\\
        $\t^* \gets \adv^{\Model, \Wat_\sk}(\secparam, \pk)$\\
        Let $\calQ$ denote the set of responses returned to $\adv$ by $\Wat_\sk$\\
        $(\m, \b) \gets \ForgeDetect(\pk, \t^*)$\\
        \texttt{return} $(\b = \true) \land (\forall \t \in \calQ, \m$ is not a contiguous substring of $\t$)
    }
    \caption{The attribution forgery experiment $\AttrForge_{\adv, \calW_{\sf att}}(\secpar)$\label{fig:attrforge}}
\end{figure}

\begin{definition}[Unforgeable public attribution]
    A watermarking scheme $\calW_{\sf att}$ for a model $\Model$ has $b(L)$ \emph{unforgeable public attribution} if there is a function $\ForgeDetect_\pk(\t)$ satisfying:
    \begin{itemize}
        \item (Syntax): $\ForgeDetect_\pk(\t) \to (\t', {\sf b})$ takes in a token sequence $\t$ and outputs a (token sequence, boolean) pair. $\sf b = \true$ indicates that a prefix of $\t$ was output verbatim by the model; the corresponding $\t'$ that it outputs is this prefix. $\sf b = \false$ indicates that no sufficiently long prefix of $\t$ was output verbatim by the model; in this case $\t'= \bot$.
        \item ($b(L)$ public attribution): 
        For every security parameter $\secpar$ and prompt~$\prompt$ of length $\poly$,
        \ifeprint
        \begin{align*}
            \Pr_{\substack{\pk, \sk \gets \Setup(\secparam) \\ \t \gets \Wat_\sk(\prompt)}}
            \Big[\;
            & \exists\; i,L \in [\len{\t}] \text{ such that }
            \len \t' < i-1 \text{ and } \\
            &  \empH^{[i:i+L]}\left(\Model,\prompt,\t\right) \geq b\left(L\right) \mid (\t', \b) \from \ForgeDetect_\pk(\t)] \le \negl.
        \end{align*}
        \else 
        \begin{align*}
            &\Pr_{\substack{\pk, \sk \gets \Setup(\secparam) \\ \t \gets \Wat_\sk(\prompt)}}
            \Big[\;
            \exists\; i,L \in [\len{\t}] \text{ such that }
            \len \t' < i-1 \text{ and } \\
            &  \hspace{0.5cm}\empH^{[i:i+L]}\left(\Model,\prompt,\t\right) \geq b\left(L\right) \mid (\t', \b) \from \ForgeDetect_\pk(\t)\Big] \le \negl.
        \end{align*}
        \fi
        \item (Unforgeability): For all polynomial-time adversaries $\adv$,
            $$\Pr[\AttrForge_{\adv, \calW_{\sf att}}(\secpar) = 1] \leq \negl,$$
            where the $\AttrForge$ experiment is given in \Cref{fig:attrforge}. 
    \end{itemize}
\end{definition}

\begin{construction}[Watermark with public attribution]
    Let $\Sig$ be a digital signature scheme with signatures of length $n$, and let $\PRC$ be a PRC of block length $n$.
    $\Watt[\PRC]$ is the watermarking scheme whose algorithms are specified in \Cref{fig:watt-algos}. Its detector is the same as that of $\calW[\PRC]$.
\end{construction}

\begin{lemma}[Unforgeability] \label{lemma:unforgeability}
    $\Watt[\PRC]$ is unforgeable if the underlying signature scheme is unforgeable.
\end{lemma}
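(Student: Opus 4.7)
The plan is to prove the lemma by a direct reduction: we show that an adversary $\adv$ breaking unforgeability of $\Watt[\PRC]$ yields a signature forger $\bdv$ breaking the unforgeability of $\Sig$. Given $\sigpk$ and access to $\Sign_{\sigsk}(\cdot)$, the forger $\bdv$ simulates the $\AttrForge$ experiment for $\adv$ as follows. It samples $\prcsk \gets \PRC.\KeyGen(\secparam)$ and the one-time pads $a_1, \dots, a_{\lceil L^*/n\rceil}$ itself, and sets $\pk = (\prcsk, \sigpk, a_1, \dots, a_{\lceil L^*/n\rceil})$, which it forwards to $\adv$. Whenever $\adv$ queries a prompt to its $\Wat_{\sk}$ oracle, $\bdv$ executes \Cref{alg:attr-watermarking-gen} step by step, but each time the algorithm needs a signature $\sigma \gets \Sign_{\sigsk}(\t_1, \ldots, \t_i)$ it instead queries its own signing oracle. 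Since $\bdv$ knows $\prcsk$ and the one-time pads, it can perform all the PRC encodings locally. The resulting responses are distributed exactly as genuine outputs of $\Wat_{\sk}$, so $\adv$'s view is identical to that in the real $\AttrForge$ experiment.

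When $\adv$ halts with output $\t^*$, $\bdv$ runs $\ForgeDetect_\pk(\t^*)$ (which it can do using the public key it constructed) to obtain $(\m, \b)$. If $\b = \false$ or $\m = \bot$, $\bdv$ aborts. Otherwise $\ForgeDetect$ must have succeeded for some index $i$ and pad $a_\ell$, at which point $\bdv$ recomputes $\sigma \gets \PRC.\Decode(\prcsk, (\t_i^*, \dots, \t_{i+n-1}^*) \oplus a_\ell)$ and outputs $(\m, \sigma)$ as its forgery attempt.

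It remains to verify that whenever $\adv$ wins $\AttrForge$, $\bdv$ wins $\SigForge$. First, if $\ForgeDetect$ returned $\b = \true$ then by construction $\Vrfy_{\sigpk}(\m, \sigma) = 1$, so the forgery verifies. Second, we must show that $\m$ was never queried by $\bdv$ to its signing oracle. The key observation is that every message $\bdv$ sends to $\Sign_{\sigsk}$ is of the form $(\t_1, \dots, \t_i)$ for some prefix of a response $\t$ that $\bdv$ later returns to $\adv$; in particular, every signed message is a contiguous substring (indeed, a prefix) of some $\t \in \calQ$. Since $\adv$ winning $\AttrForge$ requires that $\m$ is \emph{not} a contiguous substring of any $\t \in \calQ$, the message $\m$ cannot equal any signing query made by $\bdv$. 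Hence $\bdv$ wins $\SigForge_{\bdv,\Sig}(\secpar)$ exactly when $\adv$ wins $\AttrForge_{\adv,\Watt[\PRC]}(\secpar)$, so
\[
    \Pr[\AttrForge_{\adv,\Watt[\PRC]}(\secpar) = 1] \leq \Pr[\SigForge_{\bdv,\Sig}(\secpar) = 1] \leq \negl,
\]
by the assumed unforgeability of $\Sig$. The proof involves no real obstacle; the only point requiring care is the bookkeeping in the last step to ensure that the set of messages $\bdv$ signs consists exactly of prefixes of responses returned to $\adv$, so that the ``fresh message'' condition of $\SigForge$ follows from the ``not a contiguous substring'' condition of $\AttrForge$.
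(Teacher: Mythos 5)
Your proof is correct and follows essentially the same reduction as the paper: build a $\Sig$ forger $\bdv$ that simulates the $\AttrForge$ challenger for $\adv$ by sampling the PRC key and one-time pads itself, forwarding signing requests to its own oracle, and observing that every signed message is a prefix (hence a contiguous substring) of some response in $\calQ$, so the winning condition of $\AttrForge$ directly implies the freshness condition of $\SigForge$. No gaps.
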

\begin{proof}    
    Suppose there exists an adversary $\adv$ that wins $\AttrForge_{\adv, \calW_{\sf att}}(\secpar)$ with non-negligible probability. 
    We construct $\bdv$ that uses $\adv$ to break unforgeability of the underlying signature scheme $\Sig$. 
    $\bdv$ acts as the challenger in $\AttrForge_{\adv, \calW_{\sf att}}(\secpar)$. It receives the signature public key $\Sig.\pk$ used in $\calW_{\sf att}$ from its own challenger for $\SigForge_{\bdv,\Sig}$. $\bdv$ generates the other parameters for $\calW_{\sf att}$ (that is, the PRC parameters) itself. 
    Let $\pk$ denote the resulting public key of the watermarking scheme; $\bdv$ passes $\adv$ $\pk$ as input.
    When responding to $\adv$'s queries to $\Wat_\sk$, $\bdv$ generates the necessary signature using its signing oracle in $\SigForge_{\bdv,\Sig}$.
    Notice that every query that $\bdv$ makes to its signing oracle is a contiguous substring of some response $\t \in \calQ$, where $\calQ$ is the set of responses returned to $\adv$ by $\Wat_\sk$. 
    At the end of $\AttrForge_{\adv, \calW_{\sf att}}(\secpar)$, $\adv$ outputs $\t^*$. $\bdv$ computes $(\m, \b) \gets \ForgeDetect(\pk, \t^*)$; if $\b = \true$, $\bdv$ outputs the corresponding $\m$ and verifying signature $\sigma$.
    
    Recall that if $\adv$ wins, it outputs $\t^*$ such that $(\m,\b) \gets \ForgeDetect(\pk, \t^*)$ where $\b = \true$ and $\m$ is not a contiguous substring of any response in $\calQ$. Since every query made by $\bdv$ to its signing oracle was a substring of some response in $\calQ$, $\bdv$ did not query this $\m$. 
    However, $\bdv$ is able to output a verifying signature for $\m$, so $\bdv$ wins $\SigForge_{\bdv,\Sig}$.
\end{proof}

\begin{lemma}[Public attribution] \label{lemma:public-attribution}
    Let $\varepsilon > 0$ be any constant.
    If $\PRC$ has block length $n$ and is robust to $\left(\frac{1}{2}-\varepsilon\right)$-bounded channels, then $\Watt[\PRC]$ satisfies $(4\sqrt{\varepsilon} \cdot L + 2\sqrt{2} \cdot n)$ public attribution.
\end{lemma}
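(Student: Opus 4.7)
The plan is to reduce public attribution to the substring-completeness analysis of $\calW[\PRC]$: any sufficiently high-entropy substring of $\t$ must contain a full PRC block on which $\PRC.\Decode$ succeeds, and in $\Watt[\PRC]$ that block decodes to a valid signature on the prefix of $\t$ preceding it, which $\ForgeDetect$ can then verify.

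Structurally, by construction of $\Wat_\sk$ (\Cref{alg:attr-watermarking-gen}), the watermarked response $\t$ partitions into consecutive length-$n$ blocks, where the $k$-th block (for $k \geq 2$) is the one-time pad $a_k$ xored with a fresh PRC encoding of $\sigma_k = \Sign_\sigsk(\t_1, \ldots, \t_{(k-1)n})$. I would then fix any pair $(i, L)$ with $\empH^{[i:i+L]}(\Model, \prompt, \t) \geq 4\sqrt{\varepsilon}\, L + 2\sqrt{2}\, n$ and run the same averaging argument as in \Cref{lemma:substring-completeness}: after padding the (at most two) boundary partial blocks up to length $n$, their empirical entropies are bounded by $\sqrt{2}\, n$ except with negligible probability (via \Cref{lemma:truncated-surprisal}), so some full block $B = [(k-1)n + 1,\; kn] \subseteq [i, i+L]$ carries empirical entropy at least $4\sqrt{\varepsilon}\, n$.

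By \Cref{lemma:encoding-errors}, the embedding channel restricted to $B$ is $(1/2 - \varepsilon)$-bounded; the freshly sampled one-time pad $a_k$ makes this channel independent of the PRC decoding key, so robustness of $\PRC$ implies that $\PRC.\Decode(\prcsk,\, (\t_{(k-1)n+1}, \ldots, \t_{kn}) \oplus a_k) = \sigma_k$ with overwhelming probability. Correctness of $\Sig$ then gives $\Vrfy_\sigpk(\t_1, \ldots, \t_{(k-1)n}, \sigma_k) = 1$, so at iteration $i' = (k-1)n + 1,\, \ell = k$ the algorithm $\ForgeDetect$ accepts $\m = \t_1, \ldots, \t_{(k-1)n}$. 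Returning the longest such verified prefix over all iterations then yields $\t'$ with $\len{\t'} \geq (k-1)n \geq i - 1$ (using $(k-1)n + 1 \geq i$). A union bound over the polynomially many pairs $(i, L)$ closes the argument.

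The main subtlety is that $\ForgeDetect$ must return the \emph{longest} valid prefix rather than simply the first match encountered while iterating; otherwise a valid signature block found early in the response could shadow a later high-entropy region, and I would interpret the pseudocode in this way (the change is cosmetic and does not affect efficiency). A secondary subtlety, inherited directly from substring completeness, is that one cannot apply PRC robustness to a block whose error channel depends on PRC codewords embedded in other blocks of the response; this is precisely what the per-block one-time pads $a_k$ prevent.
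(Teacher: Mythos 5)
Your proof is correct and follows essentially the same approach as the paper's, which simply invokes the substring-completeness argument (\Cref{lemma:substring-completeness}) to exhibit a full block starting at index $\geq i$ whose decode-and-verify succeeds.

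Your secondary observation about $\ForgeDetect$ is a real point that the paper elides: as written, with the natural ascending iteration over $i$, the algorithm returns on the \emph{first} verifying $(i,\ell)$ pair, which yields the \emph{shortest} verified prefix. In the honest setting where several blocks carry enough entropy to decode, this would be an early block (typically $\ell=2$, giving $\len\t' = n$), and the $b(L)$ public attribution condition would then be violated by any high-entropy window $[i,i+L]$ with $i > n+1$. The paper's proof only establishes existence of a verifying block with start $\geq i$; it does not argue that this is the one $\ForgeDetect$ outputs. Your fix --- iterate over all candidates and return the longest verified prefix (equivalently, iterate $i$ in descending order) --- closes this gap, costs nothing asymptotically, and makes the stated lemma actually follow from the existence argument. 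The per-block one-time pads do double duty here, both decoupling the embedding channel on a given block from the PRC key (as you and the paper's completeness argument note) and ensuring no spurious matches at non-block-aligned offsets or mismatched pad indices, so the returned prefix is always a genuine model prefix.
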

\begin{proof}
    Let $\t \gets \Wat_\sk(\prompt)$ be a response, and let $i$ be such that $H_e^{[i:i+L]}(\t) \ge 4\sqrt{\varepsilon} \cdot L + 2\sqrt{2} \cdot n$.

    By the exact same proof as in \Cref{lemma:substring-completeness}, the robustness of $\PRC$ implies that there is some block $\t'$ with starting index at least $i$, which is correctly decoded by $\PRC.\Decode$ in \Cref{alg:attr-watermarking-attr}.
    That is, $\PRC.\Decode$ yields a signature computed on the substring of the response up to the start of block $\t'$.
    Since $\t'$ starts at index at least $i$, this signed portion has length at least $i-1$.       
\end{proof}

\begin{theorem}[A watermark with unforgeable public attribution] \label{theorem:watermark-att}
        Let $\varepsilon > 0$ be any constant.
        If the underlying signature scheme is unforgeable, and $\PRC$ is a PRC with block length $n$ and robustness to every $\left(\frac{1}{2}-\varepsilon\right)$-bounded channel, $\Watt[\PRC]$ is unforgeable and $(4\sqrt{\varepsilon} \cdot L + 2\sqrt{2} \cdot n)$-publicly-attributable.
        
        Furthermore, $\Watt[\PRC]$ retains the same soundness, undetectability, substring-completeness, and substring-robustness properties as $\calW[\PRC]$. That is, \Cref{lemma:soundness,lemma:undetectability,lemma:substring-completeness,lemma:watermark-robustness,lemma:watermark-robustness-deletions} all apply to $\Watt[\PRC]$.
\end{theorem}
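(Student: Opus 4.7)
The two new properties, unforgeability and $(4\sqrt{\varepsilon}\cdot L + 2\sqrt{2}\cdot n)$-public-attribution, are handled by the two lemmas already proven: unforgeability is exactly the content of \Cref{lemma:unforgeability}, and the public attribution bound is exactly \Cref{lemma:public-attribution}. So the only real work is to show that the inherited properties (soundness, undetectability, substring-completeness, substring-robustness) indeed transfer from $\calW[\PRC]$ to $\Watt[\PRC]$, despite the fact that $\Watt[\PRC]$ uses a multi-bit PRC and embeds message-dependent signatures rather than the fixed symbol $1$.

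The plan is to observe that $\Watt[\PRC]$ and $\calW[\PRC]$ have syntactically identical sampling procedures, differing only in the choice of the sequence of messages $\m^{(1)}, \m^{(2)}, \dots$ passed to $\PRC.\Encode$: $\calW$ always uses the zero-bit message, while $\Watt$ uses a fresh signature per block. For each of the four properties, I would point out which part of the original proof used the fact that the encoded message was $1$ and check that it goes through unchanged when the message is a signature.
\begin{itemize}
    \item \emph{Soundness} (\Cref{lemma:soundness}): The proof only invokes the soundness of $\PRC$ (any fixed string decodes to $\bot$ with overwhelming probability over $\PRC.\sk$), together with the one-time pads $a_\ell$ making the inputs $(\t_i,\dots,\t_j)\oplus a_\ell$ independent of $\sk$. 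Neither step depends on what messages were encoded, so a union bound over the same set of indices $i,j,\ell$ yields the same conclusion for $\Watt[\PRC]$.
    \item \emph{Undetectability} (\Cref{lemma:undetectability}): The reduction to PRC pseudorandomness carries over verbatim once we use multi-bit PRC pseudorandomness. The adversary $\adv$ against the PRC simulates $\Wat_\sk$ by sampling signatures $\sigma$ itself (using its own copy of $\sigsk$, which it generates) and querying its oracle $\calO$ on each $\sigma$; if $\calO = \Encode(\PRC.\sk,\cdot)$ this simulates $\Watt$, and if $\calO=\calU$ the $x$'s are uniform and, by the same calculation as in \Cref{lemma:undetectability}, each token is distributed exactly as under $\RModel$.
    \item \emph{Substring-completeness and substring-robustness} (\Cref{lemma:substring-completeness,lemma:watermark-robustness,lemma:watermark-robustness-deletions}): These proofs only use (i) the entropy bound from \Cref{lemma:encoding-errors} showing that $\Eemb$ is $(1/2-\varepsilon)$-bounded on any sufficiently high-entropy block, independent of what $x$ encodes; (ii) the independence of the per-block error channel from the PRC key, which is guaranteed by the fresh one-time pads $a_\ell$ (still present and freshly sampled in $\Setup$); and (iii) the robustness of the underlying PRC. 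None of these ingredients care whether the encoded message is the fixed symbol $1$ or a signature, so the same bounds apply block-by-block.
\end{itemize}

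The only point that requires a moment's thought is the subtle issue flagged in the ``Differences from a previous version'' section: the error channel applied to one block must not depend on the PRC key embedded in earlier blocks. The same fix used for $\calW[\PRC]$ resolves it here, namely the freshly sampled one-time pads $a_1,\dots,a_{\lceil L^*/n\rceil}$, which convert the arbitrary adversarial channel on each block into a channel independent of the PRC key. Since $\Watt[\PRC]$ inherits this mechanism verbatim from $\calW[\PRC]$, the transfer of robustness is immediate. I expect no serious obstacle; the proof is essentially a bookkeeping argument consisting of four short reductions, each of which is a one-line check that the analog lemma for $\calW[\PRC]$ never used the zero-bit property of the underlying PRC.
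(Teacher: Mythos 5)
Your proposal is correct and takes essentially the same approach as the paper: unforgeability and public attribution come from \Cref{lemma:unforgeability,lemma:public-attribution}, and the remaining four properties transfer because the only syntactic difference between $\calW[\PRC]$ and $\Watt[\PRC]$ is the message passed to $\PRC.\Encode$, which none of the earlier proofs depended on. The paper's actual proof states this observation in two sentences; your proposal is simply a more detailed unpacking of the same bookkeeping check, with the undetectability reduction spelled out (the $\Watt$ adversary generates $\sigsk$ itself and forwards adaptively chosen signatures to the PRC oracle, which is handled by the definition of PRC pseudorandomness).
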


\begin{proof}
    Unforgeabile public attribution follows from  \Cref{lemma:unforgeability,lemma:public-attribution}.

    As noted earlier, the proofs of \Cref{lemma:soundness,lemma:undetectability,lemma:substring-completeness,lemma:watermark-robustness,lemma:watermark-robustness-deletions} all hold for any choice of message input to $\PRC.\Encode$. 
    The only difference between $\calW[\PRC]$ and $\Watt[\PRC]$ is the choice of message input to $\PRC.\Encode$.
    Therefore, those proofs hold for $\Watt[\PRC]$.
\end{proof}

\paragraph{Related public watermarking work.}
Concurrent work \cite{public} constructs a watermark that similarly can be detected with a public key but requires a secret key for generation. 
Their scheme, like $\Watt$, embeds a digital signature in the response and includes the signature verification key in the public key. However, their scheme has two key differences from $\Watt$:
\begin{itemize}
    \item They assume that each token sequence of a fixed length $\ell$ has at least $\alpha$ min-entropy.
    \item Their scheme has a single detector, analogous to $\ForgeDetect$, which is robust only to cropping. In addition to $\ForgeDetect$, our scheme also has $\Detect$, which is robust to deletions and/or the binary symmetric channel, depending on the underlying PRC.
\end{itemize}

\cite{public} also discuss using error-correcting codes to improve robustness. 
They suggest applying an error-correcting code to the message and signature before embedding them. However, this would make the scheme no longer undetectable or distortion-free.

\newpage

\section{Application: universal steganography} \label{sec:stego}
Our main result in this section is the construction of a robust stateless steganography scheme using PRCs.
We first prove that robustness follows immediately from applying a PRC to a non-robust steganography scheme appearing in several prior works \cite{And98,HLvA02,vAH04}.
We then show that the assumptions necessary for this scheme can be weakened, which sacrifices some robustness but still yields the most robust scheme in its regime.

\subsection{Steganography preliminaries}
\paragraph{Setting (from \cite{HLvA02}).}
A \emph{steganographic channel} (or \emph{covertext distribution}) $\calC$ is a distribution on sequences of symbols from some alphabet $\Sigma$. These symbols $d \in \Sigma$ are often called \emph{documents}, and sequences of symbols are often called \emph{covertexts}. 
It is often convenient to consider the conditional distribution over the next symbol, given some subsequence already output by $\calC$.
More precisely, given a \emph{history} $h \in \Sigma^*$, we use $\calC_h$ to denote the conditional distribution of the next symbol from $\calC$ given that thus far $\calC$ has output $h$.
For the steganographic encoder, we assume existence of an efficient oracle $M(\cdot)$ that can sample $\calC_h$ for any $h \in \Sigma^*$ of the encoder's choice.
Our steganographic encoder makes use of a rejection sampling function denoted $RS^{M,f} : \{0,1\} \times \mathbb{N} \to \Sigma$, which on input $(x, \kappa)$, samples $c \gets M$ until $f(c) = x$, taking at most $\kappa$ samples. 
If the sampler reaches the $\kappa^{\text{th}}$ sample $c_{\kappa}$, it outputs $c_{\kappa}$ regardless of whether $f(c_{\kappa}) = x$.
We say a function $f : \Sigma \to R$ is an \emph{unbiased function on a steganographic channel $\calC$} if for all $r \in R$ and all histories $h \in \Sigma^*$,
$\Pr_{d \gets \calC_h}[f(d) = r] = \frac{1}{\abs{R}}$.

We recall the definition of a symmetric steganography scheme as presented in \cite{meteor}, which is equivalent to that of \cite{HLvA02}.

\begin{definition}[Symmetric steganography scheme \cite{meteor}]
    A symmetric steganography scheme $\Sigma_{\calC}$ is a triple of possibly probabilistic algorithms $\Pi_{\calC} = ({\sf KeyGen}_{\calC}, \Encode_{\calC}, \Decode_{\calC})$ parameterized by a covertext channel distribution $\calC$. 
    \begin{itemize}
        \item ${\sf KeyGen}_{\calC} (1^\secpar)$ generates the key $\sk$.
        \item $\Encode_{\calC} (\sk, \m, h)$ is a (possibly probabilistic) algorithm that takes the key $\sk$ and a plaintext message $\m$, called a \emph{hiddentext}.
        Additionally, the algorithm can optionally take in a message history $h = \{h_0,h_1, \ldots, h_{\abs{h}-1}\} \in \Sigma^*$ of covertext messages. It returns a symbol sequence $c_i \in \Sigma^*$, called a \emph{stegotext}.
        \item $\Decode_{\calC}(\sk, c, h)$ is a (possibly probabilistic) algorithm that takes as input the key $\sk$, a stegotext $c$, and optionally a history $h \in \Sigma^*$.
        It returns a plaintext message $\m$ on success or the empty string $\epsilon$ on failure.
    \end{itemize}
\end{definition}
The subscript $\calC$ indicates that these algorithms have access to $\calC$ via the oracle $M(\cdot)$; we often omit this subscript for convenience.

A scheme in which the decoding algorithm requires the history is \emph{stateful}; a scheme where decoding is possible without knowledge of the history is \emph{stateless}.
This shared history in stateful schemes is very powerful, allowing the sender and receiver to maintain a shared counter of messages sent thus far, which they can use to generate a shared one-time pad for each message.

In a \emph{public-key steganography} scheme, $\KeyGen$ outputs a public-secret key pair. Analogously to public-key encryption, the encoder takes as input the public key (and not the secret key), and the decoder still takes as input the secret key. We present formal definitions for secret-key steganography here, and refer the reader to \cite{vAH04} for formal definitions of public-key steganography. 

A steganography scheme must satisfy \emph{correctness} and \emph{security}.

\begin{definition}[Steganographic correctness]
A steganography scheme $\Pi_{\calC} = ({\sf KeyGen}_{\calC}, \Encode_{\calC}, \Decode_{\calC})$ is \emph{correct} if for any history $h \in \Sigma^*$ and any message $\m$,
$$\Pr_{\sk \gets \KeyGen(1^\secpar)} \left[\Decode_{\calC}(\sk, \Encode_{\calC}(\sk,\m,h),h) = \m \right] \geq 1 - \negl.$$
\end{definition}
We note that some other works require this probability to be 1.

Security requires that an adversary cannot distinguish between oracle access to $\Encode_{\calC}(\sk, \cdot, \cdot)$ or a sampling oracle $O_{\calC}(\cdot, \cdot)$ for the channel distribution.
\begin{definition}[Steganographic security against chosen hiddentext attacks]
    $\Pi_{\calC}$ is secure against \emph{chosen hiddentext attacks} if for all polynomial-time adversaries $\calA$, for all $\sk \gets {\sf KeyGen}_{\calC}(1^\secpar)$,
    $$\abs{\Pr\left[\adv^{\Encode_{\calC}(\sk, \cdot, \cdot), M(\cdot)} = 1\right] - \Pr\left[\adv^{O_{\calC}(\cdot,\cdot), M(\cdot)} = 1 \right]} \leq \negl.$$
\end{definition}

This definition can be modified for public-key steganography by giving the adversary the public key as input.

\subsection{Robust stateless steganography}
We first present $\sf Steg$, a steganography scheme that appeared originally in \cite{And98}.
$\sf Steg$ is parameterized by an encryption scheme $(\KeyGen,\Enc,\Dec)$ and a function $f$.

\begin{construction}[{${\sf Steg}[(\KeyGen,\Enc,\Dec), f, \kappa]$}]
    Let $\sf Steg.\KeyGen = \KeyGen$.
    Let $\sf Steg.\Encode$ and $\sf Steg.\Decode$ be defined as in \Cref{fig:steg}. 
\end{construction}

\cite{HLvA02} and \cite{vAH04} prove security of the secret- and public-key versions of $\sf Steg$, respectively.

\begin{numberedclaim}[\cite{HLvA02,vAH04}] \label{claim:stego-orig}
Let $(\KeyGen, \Enc, \Dec)$ be a (public-key) encryption scheme with ciphertexts indistinguishable from random bits under a chosen plaintext attack, and let $f: \Sigma \to \{0,1\}$ be a function which is unbiased on $\calC$. Then {${\sf Steg}[(\KeyGen,\Enc,\Dec), f,\secpar]$} is a secure (public-key) steganography scheme.
\end{numberedclaim}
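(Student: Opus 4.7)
The plan is a two-step hybrid argument followed by a distributional equality. First I would define an intermediate experiment $H$ in which $\sf Steg.\Encode$ is modified so that, instead of computing $x \gets \Enc(\sk,\m)$, it samples $x \gets \{0,1\}^n$ uniformly at random (where $n$ is the ciphertext length) and then performs the rejection sampling embedding step exactly as before. The proof will then consist of showing (i) any efficient adversary $\adv$ distinguishes the real $\sf Steg.\Encode$ oracle from $H$ with only negligible advantage, and (ii) the oracle in $H$ is statistically indistinguishable from the channel sampling oracle $O_\calC(\cdot,\cdot)$.

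For step (i), I would argue by reduction: given an adversary $\adv$ distinguishing the real and hybrid stegotext oracles with non-negligible advantage, I build $\bdv$ breaking the IND-\$CPA security of $(\KeyGen,\Enc,\Dec)$. On input from its challenge oracle (which returns either $\Enc(\sk,\cdot)$ or a uniformly random string of the appropriate length), $\bdv$ simulates $\adv$'s stegotext oracle by querying its own challenge oracle on $\adv$'s hiddentext $\m$ to obtain an $n$-bit string $x$, and then performing the rejection sampling embedding of $x$ using its own sample access $M(\cdot)$ to $\calC$. In the public-key variant, $\bdv$ additionally forwards the public key to $\adv$. Since rejection sampling is efficient and independent of the key, $\bdv$ runs in polynomial time and its advantage matches that of $\adv$.

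For step (ii), I would argue that for a single document drawn from $\calC_h$, when $x_i$ is uniform in $\{0,1\}$ and $f$ is unbiased on $\calC$, the rejection sampler with bound $\secpar$ produces a sample whose distribution is $2^{-\secpar}$-close to $\calC_h$. Precisely, unbiasedness gives $\Pr_{d \gets \calC_h}[f(d) = b] = 1/2$ for each $b \in \{0,1\}$, so conditioning on $f(d) = b$ yields a well-defined distribution, and averaging over uniform $b$ recovers $\calC_h$ exactly. The only deviation from $\calC_h$ arises when all $\secpar$ rejection samples miss, which occurs with probability $2^{-\secpar}$; conditioned on a miss, the output may deviate from $\calC_h$. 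By a union bound over the $n$ blocks (and extending histories inductively, as each embedded document updates the conditional distribution used for the next block), the total statistical distance between the hybrid's output and a fresh sample of length $n$ from $\calC$ is at most $n \cdot 2^{-\secpar} = \negl$. Combining (i) and (ii) and extending to polynomially many oracle queries by a standard hybrid over queries yields the claim.

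The only mildly delicate point is keeping the conditional channel distributions properly bookkept across the $n$ embedding steps, since each document drawn biases the history used for sampling the next one; but because unbiasedness of $f$ is required to hold for \emph{every} history $h \in \Sigma^*$, the per-block argument goes through identically at each step and the overall distribution is exactly $\calC_h$ in the idealized (unbounded-rejection) case. I expect no substantial obstacle beyond this careful accounting, and correctness of decoding follows from the same negligible-failure bound: with probability $1 - n \cdot 2^{-\secpar}$ every block satisfies $f(d_i) = x_i$, so $\Decode$ recovers $x = \Enc(\sk,\m)$ and then $\m$ by decryption.
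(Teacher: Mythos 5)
Your proof is correct and matches the standard two-step hybrid argument from \cite{HLvA02,vAH04}, which the paper cites rather than re-proving. One small tightening of step (ii): because the event that all $\secpar$ rejection samples miss, and the conditional distribution of the output in that event, are both independent of $x_i$, averaging over the uniform bit $x_i$ makes the per-symbol marginal \emph{exactly} $\calC_h$ (the $2^{-\secpar}$ ``failure'' mass recombines to reconstitute the unconditioned channel distribution), not merely $2^{-\secpar}$-close — though your looser negligible bound is fully sufficient for the conclusion.
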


Our first application of PRCs to steganography will be in showing that $\sf Steg$ is robust to errors when a PRC is used as the encryption scheme. To our knowledge, this gives the first robust stateless steganography scheme.

 Here, we define robustness of a steganography scheme against a channel $\calE$ analogously to robustness of a PRC. 
In particular, $\calE$ may be a deletion channel, rather than just making substitutions.
Robustness requires that given the output of the error channel applied to a stegotext, the hiddentext can be recovered with overwhelming probability.

\begin{definition}[Robustness to an error channel]
    Let $\calC$ be a steganographic channel with symbol alphabet $\Sigma$, and let $\calE : \Sigma^* \to \Sigma^*$ be an error channel.
    A steganography scheme $(\KeyGen,\Encode, \Decode)$ for $\calC$ is robust to $\calE$ if for all messages $\m$ and all histories $h$,
\[
\Pr_{\sk, \pk \gets \KeyGen(1^\secpar)} \left[\Decode(\sk,\calE(x),h) = \m : x \gets \Encode(\pk,\m,h) \right] = 1 - \negl.
\]
\end{definition}
This definition can be adapted for secret-key schemes by replacing $\pk$ with $\sk$.

We say that the \emph{rate} of a steganography scheme is the ratio of the number of message bits to the number of symbols of stegotext that are needed to decode the message.

\begin{theorem}[Robust stateless steganography] \label{theorem:stego}
Let \ifeprint\else\\ \fi $\PRC = (\KeyGen, \Encode, \Decode)$ be a (public-key) PRC that is robust to some binary channel $\calE$. Let $f: \Sigma \to \{0,1\}$ be a public function which is unbiased on $\calC$.
Then $\sf Steg[\PRC,f,\secpar]$ is a (public-key) steganography scheme, with the same rate as $\PRC$, that is robust to any $\calE'$ such that $f \circ \calE' = \calE \circ f$.
\end{theorem}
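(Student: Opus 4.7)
My plan is to prove the three assertions of the theorem — that $\Steg[\PRC, f, \secpar]$ is correct and secure, has the same rate as $\PRC$, and is robust to $\calE'$ — separately. Security comes essentially for free: a PRC is, by definition, an encryption scheme whose ciphertexts are pseudorandom under a chosen-plaintext attack, and $f$ is unbiased over $\calC$ by hypothesis, so \Cref{claim:stego-orig} applies verbatim with the PRC playing the role of the pseudorandom encryption scheme (and the same argument handles both the secret-key and public-key cases). The rate is immediate by inspection of the construction: each stegotext symbol $d_i$ corresponds to exactly one bit $x_i$ of the PRC codeword, so the rate of $\Steg[\PRC,f,\secpar]$ in message bits per stegotext symbol equals the rate of $\PRC$ in message bits per codeword bit.

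The main content is robustness. Fix a hiddentext $\m$ and history $h$, let $x \gets \PRC.\Encode(\pk, \m)$, and let $d \gets \Steg.\Encode(\pk, \m, h)$ be the resulting stegotext. Each $d_i$ is produced by rejection sampling with $\kappa = \secpar$ attempts from $\calC_{h \Vert d_{<i}}$ against $f$, targeting the bit $x_i$. Since $f$ is unbiased on $\calC$, each attempt independently matches $x_i$ with probability exactly $1/2$, so the probability of exhausting the $\secpar$ attempts at any particular index is $2^{-\secpar}$. A union bound over the $n$ codeword indices shows that $f(d_i) = x_i$ for every $i$ — equivalently, $f(d) = x$, where we extend $f$ to strings component-wise — except with negligible probability. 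I will condition on this event for the remainder of the argument.

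Now let $d' \gets \calE'(d)$. The decoder computes $\PRC.\Decode(\sk, f(d'))$, where $f$ is again applied component-wise. By the hypothesis $f \circ \calE' = \calE \circ f$ (interpreted as an equality of distributions on strings), the distribution of $f(d') = f(\calE'(d))$ equals that of $\calE(f(d)) = \calE(x)$. Since $\PRC$ is robust to $\calE$, we have $\PRC.\Decode(\sk, \calE(x)) = \m$ with overwhelming probability, which yields the robustness claim.

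The main obstacle, insofar as there is one, is spelling out the interpretation of $f \circ \calE' = \calE \circ f$ correctly — in particular, $f$ must be extended to strings component-wise, so that if $\calE'$ is a deletion channel on $\Sigma^*$ then $f \circ \calE'$ becomes a deletion channel on $\{0,1\}^*$ matching $\calE \circ f$. Once this correspondence is unpacked, the reduction is direct and presents no further difficulty, and the same proof works unchanged in the public-key and secret-key settings.
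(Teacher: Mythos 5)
Your proof matches the paper's proof of \Cref{theorem:stego} essentially line for line: security via \Cref{claim:stego-orig} with the PRC serving as the pseudorandom encryption scheme, robustness by first arguing that $f(c_i) = x_i$ for all $i$ with overwhelming probability (using unbiasedness of $f$ and $\kappa = \secpar$ rejection-sampling attempts), and then using the hypothesis $f \circ \calE' = \calE \circ f$ to reduce to robustness of $\PRC$ against $\calE$. Your handling of the rate and your explicit union bound over the $n$ indices make explicit what the paper leaves implicit, but the argument is the same.
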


\begin{remark*}
    We use $\calE \circ f$ to denote the channel where $f$ is applied individually to each symbol of the input, and then $\calE$ is applied to the resulting bitstring. Similarly, $f \circ \calE'$ is the channel where $\calE'$ is applied, and then $f$ is applied individually to each symbol of the resulting $\Sigma$-string. The condition that $f \circ \calE' = \calE \circ f$ means that $\calE'$ acts as $\calE$ in the binary representation. For instance, if $\calE'$ is a deletion channel over $\Sigma$, then $\calE$ is the corresponding deletion channel over $\{0,1\}$; if $\calE'$ is the channel that independently replaces each symbol with a random symbol with probability $p$, and $f$ is a random function, then $\calE$ is approximately $\BSC_{p/2}$.
\end{remark*}

\begin{proof}
    Steganographic security follows immediately from \Cref{claim:stego-orig} and the fact that a PRC is a pseudorandom encryption scheme.

    For robustness, let $\m$ be an arbitrary message and $h$ an arbitrary history, and let $c = c_1|| \ldots || c_n \gets \Steg.\Encode(\pk,\m,h)$.
    Recall that each $c_i$ is chosen as $RS^{M(h),f}(x_i,\secpar)$, where $x = x_1 || \ldots || x_n$ is the output of $\PRC.\Encode$.
    Since $f$ is unbiased, each sample $x'$ from $RS$ satisfies $f(x') = x_i$ independently with probability $1/2$, and therefore with overwhelming probability within $\secpar$ draws the sampler will output $c_i$ such that $f(c_i) = x_i$.

    Given $c' = \calE'(c)$, decoder first computes $x' = f(c'_1) || \ldots || f(c'_n)$.
    Observe that $x' = f \circ \calE'(c)$, which by assumption is equal to $\calE \circ f(c) = \calE(x)$, where $x$ is the output of $\PRC.\Encode$.
    By robustness of the PRC, $\PRC.\Decode(\sk,x',h) = \m$.
\end{proof}

Applying \Cref{theorem:stego} with our PRCs from \Cref{subsec:prcs-for-deletions}, we obtain a stateless public-key steganography scheme that is robust to random deletions and substitutions.

\begin{figure}
\centering
\begin{minipage}{0.48\textwidth}
\begin{algorithm}[H]
\caption{$\sf Steg.Encode$}
    \KwIn{Key $\pk$, message $\m$, history $h$.}
    Let $x \gets \Enc(\pk,\ m)$\;
    \For{$i = 1, \ldots, n$}{
    $c_i \gets RS^{M(h), f}(x_i, \kappa)$\;
    Set $h = h||c_i$\;
    }
    \Return{$c_1 || c_2 || \ldots || c_n$}
\end{algorithm}
\end{minipage}
\hspace{0.04\textwidth}
\begin{minipage}{0.46\textwidth}
\begin{algorithm}[H]
\caption{$\sf Steg.Decode$}
    \KwIn{Key $\sk$ and stegotext $c'$.}
    Parse $c'$ as $c'_1||c'_2|| \ldots ||c'_n$\;
    \For{$i = 1, \ldots, n$}{
        Set $x'_i = f(c'_i)$\;
    }
    Set $x' = x'_1||x'_2||\ldots ||x'_n$\;
    \Return{$\Dec(\sk, x')$}
\end{algorithm}
\end{minipage}
\caption{Encoding and decoding algorithms of ${\sf Steg}[(\KeyGen,\Enc,\Dec),f,\kappa]$.}
\label{fig:steg}
\end{figure}

\subsection{Stateless steganography from weaker modeling assumptions}

The assumption of the existence of an unbiased function $f$ over $\calC$ is quite strong. 
One way to relax this assumption is to assume instead that $f(\calC)$ meets some min-entropy requirement.
However, now $f$ may be unbalanced; for example, it could be the case that $\Pr_{c_i \gets \calC_h^n}[f(c_i) = 1] = 2/3$ for all $h$.
Now, the encoder in \Cref{fig:steg} would no longer be secure:
Consider sampling each symbol $c_i$ in the stegotext to \emph{always} satisfy $f(c_i) = x_i$. Then, since each $x_i$ is uniform in $\{0,1\}$, half of the symbols $c_i$ in the stegotext would satisfy $f(c_i) = 1$.
Therefore, one could distinguish between stegotexts and samples from $\calC$ using $f$.

A natural way (used by \cite{HLvA02}) to build a secure stateful steganography scheme under only this min-entropy assumption, is to let $c$ be an error-corrected version of the message and use in the encoder a rejection sampler that takes at most two samples, sometimes outputting $f(c_i) \neq x_i$.
This rejection sampler preserves the channel distribution as long as $c_i$ is random, which is not the case for generic error-correcting codes.
Therefore, \cite{HLvA02} relies on shared state to let the sender and receiver generate a fresh one-time pad per stegotext, letting $c$ be an error-corrected message with this one-time pad applied.
This shared state significantly simplifies the problem and is a strong assumption of its own.

Relaxing the assumption of an unbiased function poses more challenges for \emph{stateless} steganography, where this error-correction approach fails.
Although \cite{hopper2008provably} does construct a stateless steganography scheme under only a min-entropy assumption, this scheme has poor robustness. 
The idea behind this scheme is that the encoder first samples a symbol sequence $d$ that is long enough to have $\lambda$ min-entropy, and includes this sequence in the stegotext. 
Since both the encoder and decoder know $d$, $d$ can now act as their shared state, and the remainder of the stegotext is formed using a stateful scheme such as that of \cite{HLvA02}. 
The min-entropy assumption ensures that the state will never be used twice.

\begin{theorem} \label{theorem:stego-weaker-assumption}
    Let $\calC$ be a steganographic channel with alphabet $\Sigma$, and let $f : \Sigma \to \{0,1\}$ be a function. Suppose that there exists some $\alpha > 0$ such that for all histories $h \in \Sigma^*$, $\min_{b \in \{0,1\}}\Pr_{c \gets \calC_h}[f(c) = b] \geq \alpha$.
    Then for any $p \in (0,1/2)$, there exists $q \in (0,1/2)$ such that if $\PRC$ is any (public-key) PRC for every $q$-bounded channel, then ${\sf Steg}[\PRC,f,2]$ is a (public-key) stateless steganography scheme for $\calC$, with the same rate as $\PRC$, that is robust to any channel $\calE$ such that $f \circ \calE = \BSC_p \circ f$.
\end{theorem}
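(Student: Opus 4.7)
The plan is to show two things: first, that the scheme inherits steganographic security from the pseudorandomness of the PRC despite the bias of $f$; and second, that the induced bit-level channel from the PRC output $x$ to the decoder's observation $x' = f(\calE(c))$ is $q$-bounded for some constant $q \in (0,1/2)$ depending on $\alpha$ and $p$. Robustness then follows directly by appealing to PRC robustness.

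For security, I would analyze the marginal distribution of each stegotext symbol $c_i$ conditioned on a uniformly random input bit. Let $\mu_b(h) = \Pr_{c \gets \calC_h}[f(c) = b]$, and note that $\mu_0(h) + \mu_1(h) = 1$ and $\mu_b(h) \ge \alpha$. With $\kappa = 2$, the rejection sampler on input $x_i = b$ returns a sample from $\calC_h$ conditioned on $f(c) = b$ with probability $\mu_b(h)$, and an independent fresh sample from $\calC_h$ otherwise. A direct calculation gives $\Pr[c_i = d \mid x_i = b, h] = \mathbbm{1}[f(d)=b] \cdot \Pr_{\calC_h}[d] + (1-\mu_b(h)) \Pr_{\calC_h}[d]$, which on averaging over $x_i \sim \Ber(1/2)$ collapses to exactly $\Pr_{\calC_h}[d]$. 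Hence, if the $x_i$'s are replaced by uniform bits, the stegotext is distributed identically to $\calC$; steganographic security follows from pseudorandomness of the PRC, exactly as in the proof of \Cref{theorem:stego}.

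For robustness, I would compute the per-position error rate of the induced channel from $x$ to $x'$. Conditioned on $x_i = b$ and history $h$, the event $f(c_i) \neq b$ requires both rejection attempts to fail, so it occurs with probability $(1 - \mu_b(h))^2$. Averaging over uniform $x_i$ and using $\mu_1 = 1 - \mu_0$, the rejection-sampler error probability is $\tfrac{1}{2}[(1-\mu_0(h))^2 + \mu_0(h)^2] = \tfrac{1}{2} - \mu_0(h)(1-\mu_0(h)) \le \tfrac{1}{2} - \alpha(1-\alpha)$, since $\mu_0(h) \in [\alpha, 1-\alpha]$. Composing with $\BSC_p$ yields a per-position error probability at most
\[
    p + \Big(\tfrac{1}{2} - \alpha(1-\alpha)\Big)(1-2p) = \tfrac{1}{2} - \alpha(1-\alpha)(1-2p),
\]
which is a constant strictly below $\tfrac{1}{2}$ for any $p \in (0,1/2)$. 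Choosing $q := \tfrac{1}{2} - \tfrac{1}{2}\alpha(1-\alpha)(1-2p)$ leaves a constant gap for concentration.

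The final step, and the main technical obstacle, is promoting the per-position bound to $q$-boundedness of the induced channel (per the definition preceding \Cref{lemma:zero-bit-ldpc-decoding}), i.e., showing $\sum_i \mathbbm{1}[x_i \neq x'_i] \le qn$ with overwhelming probability when $x$ is uniform. Error probabilities at successive positions depend on the evolving history $h_i$, so the summands are not independent; however, conditioned on the full randomness used through step $i-1$, the error indicator at step $i$ is a $\{0,1\}$-valued random variable whose conditional mean is bounded by $\tfrac{1}{2} - \alpha(1-\alpha)(1-2p)$ uniformly in the history (this is precisely where the min-entropy hypothesis $\mu_b(h) \ge \alpha$ is needed). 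Applying Azuma's inequality (\Cref{theorem:azuma}) to the associated martingale yields the desired concentration, establishing $q$-boundedness. PRC robustness then gives $\PRC.\Decode(\sk, x') = \m$ with overwhelming probability, and the rate is preserved because each PRC bit is transmitted as exactly one stegotext symbol.
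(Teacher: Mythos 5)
Your proposal is correct and follows essentially the same route as the paper: show that averaging over uniform $x_i$ makes the two-sample rejection sampler output exactly the channel distribution (so security reduces to PRC pseudorandomness), bound the per-position probability that $f(c_i) = x_i$ away from $1/2$ using the min-entropy hypothesis $\mu_b(h) \ge \alpha$, compose with $\BSC_p$, and invoke a martingale/Azuma concentration argument to promote the per-position bound to $q$-boundedness for uniform $x$. The only cosmetic differences are that you get a slightly sharper per-position error bound $\tfrac{1}{2} - \alpha(1-\alpha)$ (the paper uses $\tfrac12 - \tfrac{\alpha}{2}$) and you fold the $\BSC_p$ composition into a single Azuma application, whereas the paper applies Azuma to the rejection-sampler errors and then a Chernoff bound to the composed channel; the underlying ideas are identical.
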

\begin{proof}
    We begin by showing that ${\sf Steg}[\PRC,f,2]$ is steganographically secure.
    As in \Cref{fig:steg}, suppose that we wish to encode a message $m$, let $x \from \PRC.\Encode(\pk, m)$, and let $c_i \gets RS^{M(h), f}(x_i, 2)$ be the $i^{\text{th}}$ symbol output by the steganography scheme.
    Let $c_i^1$ denote the first sample from $M(h)$ and let $c_i^2$ denote the second sample (which is irrelevant in the event that $f(c_i^1) = x_i$).

    By pseudorandomness of $x \from \PRC.\Encode(\pk, m)$, it suffices to show that for any $i \in [n]$, $h \in \Sigma^{i-1}$, and $c^* \in \Sigma$,
    \[
        \Pr_{\substack{x_i \from \{0,1\} \\ c_i \gets RS^{M(h), f}(x_i, 2)}}[c_i = c^*] = \Pr_{c \from M(h)}[c = c^*].
    \]
    Indeed,
    \ifeprint
    \begin{align*}
        \Pr_{\substack{x_i \from \{0,1\} \\ c_i \gets RS^{M(h), f}(x_i, 2)}}[c_i = c^*] &= \Pr_{\substack{x_i \from \{0,1\} \\ c_i^1 \gets M(h)}}[c_i^1 = c^* \land f(c_i^1) = x_i] + \Pr_{\substack{x_i \from \{0,1\} \\ c_i^1, c_i^2 \gets M(h)}}[c_i^2 = c^* \land f(c_i^1) \neq x_i] \\
        &= \frac{1}{2}\Pr_{c_i^1 \gets M(h)}[c_i^1 = c^*] + \frac{1}{2} \Pr_{c_i^2 \gets M(h)}[c_i^2 = c^*] \\
        &= \Pr_{c \from M(h)}[c = c^*].
    \end{align*}
    \else 
    \begin{align*}
        \Pr_{\substack{x_i \from \{0,1\} \\ c_i \gets RS^{M(h), f}(x_i, 2)}}[c_i = c^*] &= \Pr_{\substack{x_i \from \{0,1\} \\ c_i^1 \gets M(h)}}[c_i^1 = c^* \land f(c_i^1) = x_i]\\
        &\hspace{3cm} + \Pr_{\substack{x_i \from \{0,1\} \\ c_i^1, c_i^2 \gets M(h)}}[c_i^2 = c^* \land f(c_i^1) \neq x_i] \\
        &= \frac{1}{2}\Pr_{c_i^1 \gets M(h)}[c_i^1 = c^*] + \frac{1}{2} \Pr_{c_i^2 \gets M(h)}[c_i^2 = c^*] \\
        &= \Pr_{c \from M(h)}[c = c^*].
    \end{align*}
    \fi

    Now we turn to showing that ${\sf Steg}[\PRC,f,2]$ is robust to any channel $\calE$ such that $f \circ \calE = \BSC_p \circ f$, provided that $\PRC$ is robust to every $q$-bounded channel for appropriate choice of $q$.
    
    Again, pseudorandomness of $x \from \PRC.\Encode(\pk, \m)$ allows us to assume $x$ is random. Observe that the bits $f(c_i)$ are correlated with the bits $x_i$:
    \ifeprint
    \begin{align*}
        \Pr_{\substack{x_i \from \{0,1\} \\ c_i \gets RS^{M(h), f}(x_i, 2)}}[f(c_i) = x_i] &= \Pr_{\substack{x_i \from \{0,1\} \\ c_i^1 \gets M(h)}}[f(c_i^1) = x_i] + \Pr_{\substack{x_i \from \{0,1\} \\ c_i^1, c_i^2 \gets M(h)}}[f(c_i^2) = x_i \land f(c_i^1) \neq x_i] \\
        &= \frac{1}{2} + \frac{1}{2} \Pr_{\substack{x_i \from \{0,1\} \\ c_i^1, c_i^2 \gets M(h)}}[f(c_i^2) = x_i \mid f(c_i^1) \neq x_i] \\
        &\ge \frac{1}{2} + \frac{1}{2} \alpha.
    \end{align*}
    \else 
    \begin{align*}
        \Pr_{\substack{x_i \from \{0,1\} \\ c_i \gets RS^{M(h), f}(x_i, 2)}}[f(c_i) = x_i] &= \Pr_{\substack{x_i \from \{0,1\} \\ c_i^1 \gets M(h)}}[f(c_i^1) = x_i]\\
        &\hspace{2cm} + \Pr_{\substack{x_i \from \{0,1\} \\ c_i^1, c_i^2 \gets M(h)}}[f(c_i^2) = x_i \land f(c_i^1) \neq x_i] \\
        &= \frac{1}{2} + \frac{1}{2} \Pr_{\substack{x_i \from \{0,1\} \\ c_i^1, c_i^2 \gets M(h)}}[f(c_i^2) = x_i \mid f(c_i^1) \neq x_i] \\
        &\ge \frac{1}{2} + \frac{1}{2} \alpha.
    \end{align*}
    \fi
    For $i \in [n]$ let $\hat{c}$ be the content generated by ${\sf Steg}$ (before the error channel $\calE$ is applied), defined by $\hat{c} = \hat{c}_1 || \cdots || \hat{c}_n$ for $\hat{c}_i = f(c_i)$. By Azuma's inequality (\Cref{theorem:azuma}), $\wt(\hat{c} \oplus x) \le (1/2-\alpha/4) \cdot n$ with probability $1-\negl[n]$. The decoder for the steganography scheme will compute $(f \circ \calE)(c)$, which is distributed identically to $(\BSC_p \circ f)(c) = \BSC_p(\hat{c})$ since $f \circ \calE = \BSC_p \circ f$. By a Chernoff bound and straightforward computation,
    \[
        \wt(\BSC_p(\hat{c}) \oplus x) \le \left(\frac{1}{2} - \frac{\alpha}{8} (1-2p)\right) \cdot n
    \]
    with probability $1-\negl[n]$. Therefore, if $\PRC$ is robust to any $q$-bounded channel for $q = 1/2 - (\alpha/8) (1-2p)$, then ${\sf Steg}[\PRC,f,2]$ is robust to $\calE$.
\end{proof}

\newpage

\bibliographystyle{alpha}
\bibliography{references}

\appendix

\section{Related work}
\label{sec:full-related-work}

\paragraph{Comparison to code-based cryptography.} Our approach to constructing pseudorandom codes is closely related to code-based cryptography, and in particular the McEliece cryptosystem. The McEliece cryptosystem is a public-key encryption scheme in which the public key is a generator matrix for a linear code and the secret key is a trapdoor for efficient decoding. In this sense, our constructions of pseudorandom codes can be viewed as instantiations of the McEliece cryptosystem. However, there are two key differences between our setting and that of code-based cryptography that render code-based cryptography results inapplicable for our purposes.

The first difference is in the source of the errors, or noise. PRCs are required to correct uncontrolled errors introduced by an adversarial channel; in McEliece, all of the errors are introduced by an honest user for the purpose of securely sending the message over a perfect channel. Therefore, the channel itself is specified as part of the cryptosystem and the code only needs to be robust to \emph{this channel}.
In pseudorandom codes, it is crucial that we can correct from an arbitrary constant rate $p < 1/2$ of additional errors introduced by a noisy channel on a binary alphabet. 
Existing code-based cryptosystems (such as binary Goppa codes, Reed-Solomon codes, Reed-Muller codes, or Algebraic Geometry codes) either do not enjoy such strong robustness or rely on a larger alphabet.

The second difference is that we require noisy codewords to be \emph{pseudorandom}, rather than just hiding. If every (noisy) codeword satisfies some efficiently-computable property $P$, but a uniformly random string does not satisfy $P$, then the code cannot be a PRC. However, such a code might still define a secure encryption scheme, as long as $P$ does not distinguish between codewords for different messages.

Our primary construction of PRCs is based on low-density parity-check (LDPC) codes, with somewhat-higher $\Omega(\log n)$ density than is typically considered.
The work of \cite{MTSB13} consider ``moderate-density parity-check'' codes, with density $\Omega(\sqrt{n \log n})$.
While their codes suffice for code-based cryptography, they do not enjoy as strong robustness as ours because of the higher density.
In particular, they are not robust to any constant rate of substitutions.

\paragraph{PRC-like cryptographic tools.}
Error-correcting codes with some pseudorandomness properties have proven useful in cryptographic applications such as pseudorandom correlation generators \cite{boyle2019efficient} and oblivious linear evaluation (OLE) \cite{applebaum2017secure}.
However, in all of these applications, decoding requires new side information for each message, such as the locations of the errors.
The fundamental difficulty in constructing our pseudorandom codes is that the only extra information provided to the decoder is a single, unchanging secret key; without this key the messages must be pseudorandom.

\emph{Pseudorandom encodings}, defined and studied in \cite{agrikola2020pseudorandom}, are similar in name to PRCs but are very different objects.
A pseudorandom encoding for a distribution $X$ is a pair of unkeyed algorithms $(\Encode, \Decode)$ such that $\Encode(x)$ is pseudorandom (i.e., appears uniform) for a random $x \gets X$, and $\Decode(\Encode(x)) = x$ with overwhelming probability.
For pseudorandom encodings there is no robustness requirement.
Furthermore, pseudorandomness for PRCs requires that $\Encode(x)$ is pseudorandom for any $x$ of the adversary's choice.

\paragraph{Language watermarking schemes.}
Classical watermarking considers the problem of embedding a signal into a \emph{fixed} object, such as a given text, in such a way that the watermark is hard to remove, and the quality of the original object is preserved.
In that setting, studied specifically for natural language in \cite{topkara2005natural,atallah2001natural,atallah2002natural}, planting the watermark must alter the text, and the goal is to minimize some distance measure between the original text and the watermarked text.
In contrast, since generative models are randomized, the quality goal of watermarks for generative models 
is to preserve certain properties of the \emph{distribution} of the model's outputs.
Due to this difference, new techniques have been developed for watermarking AI-generated content, particularly text output by language models, which is the focus of our watermarking work in this paper.

A language model takes as input a \emph{prompt} and randomly samples a \emph{response} using an iterative process.
Given a prompt, it computes a probability distribution $\p_1$ for the first \emph{token} and samples a token $\t_1$ from this distribution.
At each subsequent step, it takes as input the token sequence $\t_1, \dots, \t_{i-1}$ output thus far, computes the distribution $\p_i$, and samples the $i^{\text{th}}$ token in the response $\t_i$ from $\p_i$.
It continues doing so until it samples a special ``done'' token, at which point it terminates and outputs the response. 
The randomness in this process is important for the usefulness of the model---a model should produce a wide variety of useful responses given the same prompt---and it is crucial for watermarking.

Recently, there have been several language model watermarking schemes proposed (e.g., \cite{scott,KGW+23,CGZ23,ZALW23,KTHL23,public}), which embed the watermark by changing the way that each $\t_{i}$ is sampled from $\p_i$. 
At a very high level, these watermark embedders give preference to certain tokens in the sampling process. The corresponding detectors check for the presence of more preferred tokens than would be expected in independently generated text.

The simplest of these schemes, \cite{ZALW23}, fixes a random partition of tokens into equal-sized green and red lists, and it embeds the watermark by increasing the probabilities of tokens on the green list when it samples its response.
The detector computes the fraction of green tokens, which should be appreciably greater than $\frac{1}{2}$ for watermarked text and roughly $\frac{1}{2}$ for all other text.
This red-green partition is used for every sampled token across all responses, which results in a significant change to the model's distribution. 
For example, if ``computer'' is on the red list, the model now prefers not to talk about computers.

\cite{scott,KGW+23} mitigate this distributional shift by generating the red-green partitions dynamically.
Rather than using the same red-green partition for every token, \cite{scott,KGW+23} select a partition for each token using a PRF evaluated at the previous $k$ tokens.
That is, these schemes use a PRF $F_\sk$ with secret key $\sk$, and add weight to tokens $\t_i$ such that $F_\sk(\t_{i-k}, \dots, \t_{i-1}, \t_i) = 1$.
This effectively generates a new red-green partition for each token, assuming that no prefix $\t_{i-k}, \dots, \t_{i-1}$ ever appears twice.
The detector computes the fraction of tokens on these green lists, which it can compute by evaluating the PRF, provided that the seeds are intact in the given text.
The length $k$ of the token sequence used as these seeds can be tuned to trade off between robustness and frequency of seed reuse.
Longer seeds result in less frequent seed reuse but make the watermark easier to remove, since the adversary can destroy every seed and prevent the detector from computing the green lists by changing only $1/k$ tokens in a watermarked response.

Although \cite{KGW+23} and \cite{scott} both generate randomness using this seeded PRF, they use this randomness to sample from $\p_i$ differently, achieving different quality guarantees as a result.
The watermarking algorithm in \cite{scott} samples tokens such that, for each token of the response, the expected watermarked distribution (over the randomness of $\sk$) is identical to the original model's distribution.
However, while individual tokens' distributions are preserved, the watermark introduces correlations between tokens' distributions as seeds can be reused (even within the same response).
For example, if the same prompt is queried multiple times, the watermark will result in the same bias for the first token of the response each time.
\cite{KGW+23} achieves a weaker guarantee and changes the distribution of the model's outputs.

\emph{Distortion-freeness}, a quality notion defined by \cite{KTHL23}, is the property that the expected distribution of a single response from the watermarked model is identical to the distribution of a single response from the original model.
\cite{KTHL23} constructs a distortion-free watermarking scheme with an interesting level of robustness.
To generate each response, this watermarking scheme samples a string $x^*$ from a fixed collection of seeds $x_1, \dots, x_\ell$.
The $i^{\text{th}}$ token in the response is chosen to be correlated with the $i^{\text{th}}$ bit of $x^*$.
The detector checks whether the given text is watermarked by computing its edit distance from each of the strings $x_1, \dots, x_\ell$, returning true if any of these distances is below some threshold.

This detection algorithm is quite slow in practice: It runs in time $O(n^2 \ell)$, where $n$ is the length of the random strings.
While using a smaller $\ell$ improves the detector's efficiency, it deteriorates the model's quality by reducing the response variability.
Our watermark avoids this tradeoff between variability of the model and detector efficiency, while still offering robustness.

The strongest quality guarantee is \emph{undetectability}, defined by \cite{CGZ23}, which is the quality notion we focus on in this work.
Undetectability requires that the watermarked model and original model are computationally indistinguishable to an adversary without knowledge of the detection key, even if the adversary can make an unbounded number of adaptive queries.
In contrast, distortion-freeness of \cite{KTHL23} says nothing about multiple responses from the model.
In particular, a watermark can be distortion-free but render the response to a given prompt entirely deterministic --- even if the original model had a great deal of variability on the same prompt.

Like \cite{KGW+23,scott}, the watermark in \cite{CGZ23} uses a PRF seeded with previously output tokens to generate the randomness used to bias the token sampler.
A crucial observation used in \cite{CGZ23} to avoid the seed reuse issue of \cite{KGW+23,scott}, is that if the tokens constituting the seed contain enough entropy, seed reuse becomes unlikely. 
Because the watermarking algorithm has access to the token distributions, it can compute the amount of entropy in a token sequence and use it as a seed only once this entropy has exceeded a certain threshold.
This allows \cite{CGZ23} to achieve undetectability and a robustness guarantee they call \emph{substring-completeness}, that any sufficiently high-entropy substring of a response will be detected as watermarked.
This is essentially the same robustness guarantee as \cite{KGW+23,scott}, but it is weaker than that of \cite{KTHL23,ZALW23}, since changing one token in every seed removes the watermark.

\cite{public} has similar quality to \cite{CGZ23}, and embeds a digital signature in the response to make it detectable by anyone with a public detection key. A separate secret key is required to embed the watermark.
Similarly to \cite{CGZ23}, their watermark generates the initial portion of the response from the original model. It then uses this portion as a seed for a PRF and encodes each bit of a digital signature by sampling a block of tokens such that the PRF evaluated on that block is equal to that bit.
Like \cite{CGZ23}, changing any token in the seed removes the watermark.

We compare some of these watermarking schemes in \Cref{tab:comparison}. We refer the reader to \cite{piet2023mark} for a more detailed empirical comparison of some of these schemes, and to \cite{kirchenbauer2023reliability} for an empirical study of the robustness of \cite{KGW+23}.

\begin{table}
    \centering
    \begin{threeparttable}
    \begin{tabular}{c|c|c|c}
         Paper & Single-query undetectable? & Undetectable? & Robust?\\
         \hline
         \cite{scott} & \cmark\tnote{*} & \xmark & \xmark \\
         \cite{KGW+23} & \xmark & \xmark & \xmark \\
         \cite{CGZ23} & \cmark & \cmark & \xmark \\
         \cite{ZALW23} & \xmark & \xmark & \cmark \\
         \cite{KTHL23} & \cmark & \xmark & \cmark \\
         \cite{public} & \cmark & \cmark & \xmark \\
         This work & \cmark & \cmark & \cmark
    \end{tabular}
    \caption{Comparison between various watermarking schemes for language models. ``Single-query undetectable'' means that a single query to the watermarked model is computationally indistinguishable from one to the original model, and ``undetectable'' is the multi-query analogue as defined in \cite{CGZ23}.
    In this table, ``robust'' means that the watermark is resilient to a constant rate of substitutions.\label{tab:comparison}}
    \begin{tablenotes} \footnotesize
      \item[*] \cite{scott} is single-query undetectable provided that every short contiguous sequence of response tokens has high enough entropy.
    \end{tablenotes}
    \end{threeparttable}
\end{table}

\paragraph{On removing watermarks.}
\cite{CGZ23} describe an attack that removes any undetectable watermark and preserves response quality, but this attack is very impractical because it involves querying the model once for each token in the response.
\cite{zhang2023watermarks} describe an attack that can remove any watermark in a way that preserves response quality, assuming that the attacker has access to a quality oracle and that random walks over the graph of responses mix sufficiently quickly.
These assumptions are quite strong, and it is not clear whether there are fast, reliable quality oracles that don't already yield a full language model.
Still, in light of these impossibility results, one cannot hope to have a watermarking scheme that is robust against all possible attacks --- indeed, an adversary with sufficient knowledge of the language could always write a high-quality text without even consulting the watermarked model.
Therefore, we instead define robustness against particular classes of adversaries, including those that delete and substitute tokens.

\paragraph{Other techniques for detecting machine-generated text.}
Another approach for detecting AI-generated text is to train a machine learning classifier to distinguish between AI-generated and natural text \cite{zellers2019defending,detectgpt,gptzero,openaidetector}.
However, these classifiers lack transparency, can be easily evaded, have high false-positive rates, and have unpredictable biases.
OpenAI retracted its classifier-based detector due to these issues.

Instead of relying on existing idiosyncrasies of AI-generated text, watermarks embed patterns themselves, making detection more reliable and transparent.
See, e.g., \cite{tang2023science,piet2023mark} for an overview of the various approaches to detecting AI-generated text.

\paragraph{Undetectable backdoors for models.}
\cite{backdoors} shows how to embed a hidden ``backdoor'' during training of a model. Using a secret key, one can ``activate'' the backdoor by slightly perturbing inputs to alter their classification under the backdoored model.
These backdoors are undetectable in the sense that, without the secret key, one cannot distinguish between an honestly trained or a backdoored model.
While not closely related to our work in technical content, \cite{backdoors} is similar in spirit as an application of cryptographic definitions and techniques to machine learning.

\paragraph{Universal steganography.}
Steganography was introduced by Simmons \cite{simmons1984prisoners}, who presented it as problem where two prisoners wish to communicate in the presence of a warden, hiding not only the content of their communication but also the fact that this communication is occurring.
\cite{HLvA02} first formalized steganography in the computational setting.
Here, there is some distribution of \emph{covertexts} in which the sender wishes to conceal its \emph{message}.
In \Cref{sec:stego}, we consider \emph{universal} steganography, where the covertext distribution is an arbitrary distribution to which the sender has sample access. The sender uses this sample access and a secret key to construct a \emph{stegotext} that encodes the message, which the receiver decodes using the secret key. 

Loosely speaking, steganographic security requires that an outside observer cannot distinguish stegotexts from covertexts, and furthermore cannot learn anything about the message.
There are several security variants in the literature, including information-theoretic security \cite{cachin1998information} and security against active attacks \cite{backes2005public,hopper2005steganographic}.
We focus on (computational) \emph{security against chosen message attacks} (CMA) as defined in \cite{HLvA02}, which is analogous to CPA security of encryption.
There are several constructions of CMA-secure secret-key steganography schemes under certain assumptions, including \cite{HLvA02,hopper2008provably,dedic2009upper}.
We present a comparison of these schemes' properties in \Cref{tab:stego-comparison}.
All of these schemes either make the very strong assumption that there is a known hash function that is \emph{unbiased} on the covertext distribution, or else rely on a synchronized shared state between the sender and receiver.

An additional desirable property of a stegosystem, and one that is challenging to achieve, is \emph{robustness}: Even if the stegotext is corrupted by an adversary, the receiver should be able to recover the message.
To the best of our knowledge, prior to this work there was no provably secure \emph{stateless} secret-key stegosystem with nontrivial robustness.

Our robustness notion is different from that of \cite{HLvA02}, called \emph{substitution-robustness}. 
In substitution-robustness, an adversary may make substitutions of some symbols of the stegotext before it is given to the receiver, which should still recover the message.
The set of substitutions that their adversary is allowed to make is parameterized by a relation $R$.
That is, the adversary can change any symbol $s$ to a symbol $s'$ provided that $(s, s') \in R$.
This definition breaks down when the alphabet is binary, since if $R$ is nontrivial, containing without loss of generality $(0,1)$, the adversary can change all stegotexts to the all-one string.
Furthermore, it does not capture an adversary that has a bound on the number of symbols it may change, but that can change each symbol to any other symbol.
Our definition is thus incomparable to theirs, and our channel may introduce deletions rather than just substitutions.

While \cite{HLvA02} constructs a robust steganography scheme under this relation-based definition, they assume that the sender and receiver can share some state; in their scheme, this state is a synchronized counter $N$ of the number of messages sent so far. See \Cref{subsec:techo-stego} for a discussion on synchronized states in steganography.

As with encryption, steganography has a public-key analogue, which was defined formally by \cite{vAH04}. That is, encoding is possible with a public key, and decoding requires a secret key. 
While \cite{vAH04} was the first to define and formally prove security of public-key steganography, public-key schemes existed in prior work \cite{And98,craver1998public}.
The main steganography construction in this work is public-key as well.

While we focus on steganography with provable security in the models of \ifeprint\else \\ \fi \cite{HLvA02,vAH04}, we note that there is a large body of work that constructs steganography schemes with heuristic guarantees.
We refer the interested reader to \cite{fridrich2009steganography}.

\begin{table}
    \centering
    \begin{threeparttable}
    \begin{tabular}{c|c|c|c|c}
         Scheme & Stateless? & No unbiased function? & Public-key? & Robust?\\
         \hline
         \cite{HLvA02} Construction 2\tnote{*} & \cmark & \xmark & \xmark & \xmark\\
         \cite{HLvA02} Constructions 3,4 & \xmark & \cmark & \xmark & \cmark\\
         \cite{hopper2008provably} $\texttt{NoState}$ Construction & \cmark & \xmark & \xmark & \xmark\\
         \cite{vAH04} Construction 2 & \cmark & \xmark & \cmark & \xmark\\
        This work & \cmark & \cmark & \cmark & \cmark 
    \end{tabular}
    \caption{Comparison of universal steganography schemes. ``Stateless'' means that the sender and receiver do not need to share a synchronized state. ``No unbiased function'' means that the scheme does not require a function that is unbiased over the covertext distribution; instead, schemes with a check mark in this column rely on an assumption about the entropy of the text. As in \Cref{tab:comparison}, ``robust'' means robustness to a constant rate of substitutions.} \label{tab:stego-comparison}
    \begin{tablenotes} \footnotesize
      \item[*] Although this scheme was also proposed by \cite{And98,cachin1998information}, we refer to the construction from \cite{HLvA02}, whose setting and terminology we follow.
    \end{tablenotes}
    \end{threeparttable}
\end{table}

\paragraph{Language model steganography.}
The formal setting of steganography from \ifeprint\else \\ \fi \cite{HLvA02} assumes that the sender has sample access to the covertext distribution.
It is unclear how to realize this assumption in practice: If the sender wishes to conceal its message in casual conversation, it must be able to sample a random casual conversation.
One solution is to use a language model as this sampler for the covertext distribution, and there are several \emph{language model steganography schemes} tailored to this setting where the sender interacts with a language model to craft its stegotexts \cite{meteor,de2022perfectly,ZieglerDR19,zamir2024excuse}.
This is a relaxation of universal steganography, since these language model steganography schemes leverage the sender's explicit access to the covertext distribution via the model.
Therefore, our universal steganography scheme in \Cref{sec:stego} is incomparable to these schemes.

Language model steganography and undetectable watermarks for language models are closely related, as both are concerned with secretly embedding a signal in the output of a language model.
Indeed, we note in \Cref{subsec:water-simple} that our watermarking scheme yields a language model steganography scheme, since undetectability implies steganographic secrecy. 
Our resulting steganography scheme has the strongest robustness of any existing scheme, and in particular is the only language model steganography scheme with robustness to a constant rate of random substitutions.

\cite{zamir2024excuse} presents a language model steganography scheme derived from the watermarking scheme of \cite{CGZ23}. This scheme is also stateless and relies on a minimal entropy assumption about the text, but it is not public-key or robust.

In Meteor \cite{meteor}, the sender and receiver share a generative language model, and they maintain a shared history of the prompt and all tokens output thus far by the model.
To encode a message, the sender queries the model to obtain the distribution $\p_i$ over the next token.
It then samples the next token in a way that encodes some information about the message.
Because the sender and receiver share the model description and the token history \emph{including the prompt}, the receiver can compute $\p_i$.
Meteor crucially uses the ability of the receiver to compute $\p_i$, in order to decode the message.
Any change to the text history at all (even removing the prompt) can destroy the receiver's ability to compute $\p_i$, and therefore the receiver's ability to decode the message.

\cite{de2022perfectly} constructs a steganography scheme using minimum entropy couplings, in the information theoretic setting of \cite{cachin1998information}.
In their scheme, the decoder also must know the prompt used, as it requires access to an explicit description of the covertext distribution which is determined by the prompt. 
Furthermore, the adversary is not allowed to tamper with the stegotexts.

\paragraph{Backdoored/trapdoor PRGs.}
A \emph{trapdoor} or \emph{backdoored} PRG \cite{trapdoor,dodis2015formal} is a pseudorandom generator that outputs a sequence of bits that are pseudorandom to an observer, but where a party holding a secret key can distinguish this sequence from random.
In the context of backdoored PRGs, the secret key is viewed as a potential vulnerability of the PRG.
A PRC is in particular a trapdoor/backdoored PRG, since codewords appear uniformly random to an outside observer but can be detected with a secret key.

However, a PRC comes with the additional property of robustness.
This is especially interesting in the context of backdoored PRGs, where an \emph{immunizer} may be applied to the PRG output in an attempt to thwart the adversary holding the backdoor.
For example, one might apply a hash to the PRG output; \cite{dodis2015formal} show that this immunizer is effective for certain PRGs.
In that work they consider three classes of immunizers: public immunizers where the adversary can construct the trapdoor PRG based on knowledge of the seed of the function to be applied, semi-private immunizers where the adversary does not know the seed when constructing the PRG but does know it at attack time, and private immunizers where the adversary does not know the seed at all.
Our PRCs show a strong impossibility result for immunizers against an adversary distinguishing PRG outputs from uniform randomness. For any immunizers in the class of channels that the PRC is robust to, the adversary can distinguish a single immunized output from uniform randomness with overwhelming probability. This applies even to private immunizers, since the randomness of the channel is not known to the code detection algorithm.

\end{document}